\newcommand{\mysection}{\setcounter{equation}{0}\section}
\renewcommand\+{+}
\begin{document}

\thispagestyle{empty}

\begin{center}
{\LARGE \bf Introduction to Solid State Physics}\\[7ex]
{\large \sc Frank Göhmann}\\[2ex]
Fakult\"at f\"ur Mathematik und Naturwissenschaften\\
Bergische Universit\"at Wuppertal
\end{center}

\clearpage
\section*{Preface}
This script is based on lecture notes prepared for the
regular Introduction to Theoretical Solid State Physics
at the University of Wuppertal held by the author in
the winter semesters of 2003/04, 2004/05, 2010/11,
2011/12, 2013/14, 2014/15 and in the summer semesters
of 2006 and 2020. Due to the prevailing Covid 19 pandemic
all teaching at the University of Wuppertal in the summer of
2020 went online. In order to support my students with
their home training programme I decided to typeset at least
the present part of my lecture notes. In regular semesters
I would have delivered 28-30 lectures, 90 minutes each.
Since the beginning of our semester was delayed due to
the pandemic, the number of lectures was restricted to 23.
For this reason I cut out two lectures on the Hartree-Fock
approximation and two lectures usually devoted to recall the
formalism of the second quantization. I condensed the
remaining material to fit into the available 23 time slots.
The missed out material as well as some of the material of
the regular lectures on Advanced Theoretical Solid State
Physics may follow on a later occasion, e.g., after the
next pandemic. Most of the lectures are complemented with
a few intermediate level homework exercises which are
considered an integral part of the course. These exercises
were discussed by the participants in a separate online
exercise session on a weekly basis. 

At the university of Wuppertal the introduction to theoretical
solid state physics is part of the Master course. Students
who attend this course are expected to have successfully
passed the basic courses in theoretical physics (Mechanics,
Electrodynamics, Quantum Mechanics, Statistical Mechanics)
and a course on Advanced Quantum Mechanics.

I would like to thank all those participants of my lectures
whose attention and whose questions helped to improve this
manuscript. Particular thanks are due to Saskia Faulmann and
Siegfried Spruck who read the entire text and pointed out a
number of typos and inaccuracies to me. When I started
teaching Theoretical Solid State Physics in 2003, I devised 
my lecture after lecture notes of my dear colleague Holger
Fehske whom I would like to thank at this point. My own
first lecture notes then gradually evolved over the years and
most probably will continue to evolve in future. I prefer
to think of the following typeset version as of a snapshot
taken in the year of the pandemic 2020.

\subsubsection*{What is solid state physics?}
\begin{itemize}
\item
Application of what we have learned heretofore (QM + StatMech)
to the description of `condensed matter', no new fundamental
theory.
\item
Arguably the most important branch of physics as far as the
daily life of non-physicists is concerned, since many 
important technologies owe their existence our knowledge of
solid state physics. Examples are the semiconductor technology,
lasers, magnetic and charge based information storage devices.
\item
Solid state physics is intellectually most challenging with
new questions arising from experiments year by year. Solids
show us the universe in a nutshell. Their behaviour reaches
from single-particle to highly collective. All tools of
modern theoretical physics are needed for at least an
approximate understanding, in particular, QFT, Feynman graphs,
high-performance computing, and non-perturbative many-body
techniques.
\end{itemize}
\subsubsection*{What are the goals of this lecture?}
\begin{itemize}
\item
Introduction to the basic concepts, meaning that the
emphasis is, in the first instance, on the single-particle
aspects.
\item
Service for Experimental Solid State Physics.
\item
Emphasis on the explanation of concepts and basic ideas,
not always quantitative, justification of the use of
simplified `model Hamiltonians'.
\item
Raise some understanding why many-body physics is mostly
phenomenology.
\item
Convey the following main idea: (Collective) elementary
excitations are `quasi-particles' characterized by their
dispersion relation $\pv \mapsto \e (\pv)$ and by certain
quantum numbers like spin and charge. The most important
two are `the phonon' (= quantized lattice vibration) and
`the electron' (= quantized charge excitation of the solid,
which has as much to do with the electron of elementary
particle physics as water waves have to do with water).
\end{itemize}

\newpage

\addtocontents{toc}{\protect\thispagestyle{myheadings}}

\tableofcontents

\newpage

\mysection{The Hamiltonian of the solid and its eigenvalue problem}
\subsection{Hamiltonian of the solid}
Solid state physics is quantum mechanics and statistical mechanics
of many ($\sim 10^{23}$) particles at `low energies' (typically
$\sim 1\, {\rm meV}$). Relativistic effects (retardation, spin-orbit
coupling, \dots) can often be neglected in the explanation of
phenomena in solids (apart from the fact the spin is a relativistic
effect). The large number of particles $N_p$ is best taken
care of by considering the thermodynamic limit $N_p \rightarrow \infty$
which typically brings about simplifications of the theory.

In solid state physics not only the electrons composing the solid
but also the ions are regarded as elementary. The only relevant
interaction is the Coulomb interaction. Thus, the Hamiltonian
of the solid is the Hamiltonian of non-relativistic ions and
electrons interacting via Coulomb interaction,
\begin{align} \label{hamsol}
     H & = \sum_{j=1}^L \frac{\|\Pv_j\|^2}{2 M_j}
           + \sum_{1 \le j < k \le L} \frac{Z_j Z_k e^2}{\|\Rv_j - \Rv_k\|}
	   && \text{$\leftarrow$ ions, charge $Z_j e$, mass $M_j$}
	   \notag \\[.5ex] & \qd
           + \sum_{j=1}^N \frac{\|\pv_j\|^2}{2 m}
           + \sum_{1 \le j < k \le N} \frac{e^2}{\|\rv_j - \rv_k\|}
	   && \text{$\leftarrow$ electrons, charge $-e$, mass $m$}
	   \notag \\[.5ex] & \mspace{123.mu}
           - \sum_{j=1}^L \sum_{k=1}^N \frac{Z_j e^2}{\|\Rv_j - \rv_k\|} \epp
	   && \text{$\leftarrow$ ion-electron interaction} \\[2ex] \notag
	   & \qqd \text{$\uparrow$ kinetic energy}
	     \qqd \text{$\uparrow$ Coulomb interaction}
\end{align}
Here $L$ is the number of ions and $N$ is the number of electrons.
The Hamiltonian of the solid is the same as in atomic and
molecular physics. The only difference is in the number of
constituents. A solid is a very large molecule. The numbers
that separate the sub-disciplines are
\begin{itemize}
\item
$L = 1$ atomic physics
\item
$L \sim 100$ molecular physics
\item
$L \sim 10^{23}$ solid state physics
\end{itemize}
There is no sharp distinction between a molecule and a solid. In between
molecules and solids are macro molecules. DNA, for instance, is composed
of about $2 \cdot 10^{11}$ atoms.
\subsection{Natural units}
A first understanding of the length and energy scales in solids
comes from a dimensional analysis that allows us to introduce natural
units. Asking, when the kinetic energy of an electron associated with
a given wavelength is of the same order of magnitude as the Coulomb
energy of two electrons one wavelength apart from each other,
\begin{equation} \label{coulombscale}
     \frac{\hbar^2}{2 m \ell^2} \sim E \sim \frac{e^2}{\ell} \epc
\end{equation}
we recover the typical length scale of atomic physics
\begin{equation}
    \ell \sim \frac{\hbar^2}{m e^2} = a_0 = 0,529 \cdot 10^{- 10}\, {\rm m}
         \sim \frac{1}{20} {\rm nm} \epc
\end{equation}
the Bohr radius. The Coulomb energy of two electrons at this
distance,
\begin{equation}
     E = \frac{e^2}{a_0} = \frac{m e^4}{\hbar^2}
                         = 2 {\rm Ry} = 27,2\, {\rm eV} \epc
\end{equation}
is twice the ionization energy of a hydrogen atom (which is one
Rydberg ($1\, {\rm Ry}$)) and thus a typical atomic binding energy.

Measuring all lengths in units of the Bohr radius (i.e., replacing
$\rv_j/a_0 \rightarrow \rv_j$, $\Rv_j/a_0 \rightarrow \Rv_j$)
and the energy in units of $2\, {\rm Ry}$ (by replacing
$H\bigl/(me^4/\hbar^2) \rightarrow H$) we obtain the Hamiltonian of
the solid in natural units,
\begin{equation} \label{hsolnatunits}
     H = - \2 \sum_{j=1}^L \frac{m}{M_j} \6_{R_j^\a}^2 + H_{\rm el} (\Rv) \epc
\end{equation}
where
\begin{multline}
     H_{\rm el} (\Rv) = \sum_{1 \le j < k \le L} \frac{Z_j Z_k}{\|\Rv_j - \Rv_k\|} \\[1ex]
                    - \2 \sum_{j=1}^N \6_{\rv_j^\a}^2
                    + \sum_{1 \le j < k \le N} \frac{1}{\|\rv_j - \rv_k\|}
		    - \sum_{j=1}^L \sum_{k=1}^N \frac{Z_j}{\|\Rv_j - \rv_k\|} \epp
\end{multline}
Here we have used the summation convention with respect to Greek indices
and the further convention that $\Rv = (\Rv_1, \dots, \Rv_L)$. Similarly,
we shall write $\rv = (\rv_1, \dots, \rv_N)$.

Our scale analysis shows us that the ionisation numbers $Z_j$ and
the mass ratios $m/M_j$ are the only parameters of the system. The variation
of these pure numbers is responsible for the rich phenomenology of molecules
and solids and, in fact, of the world around us as we perceive it with
our senses.

Note that the mass ratio
\begin{equation}
     \frac{m}{M_j} \sim 10^{-4}
\end{equation}
in (\ref{hsolnatunits}) is a small parameter. This fact turns
out to be of fundamental importance for the theory of solids
and determines much of the structure of the world around us.
\subsection{An eigenvalue problem for the ions}
If we naively send all masses $M_j$ in (\ref{hsolnatunits}) to
infinity, the kinetic energy of the ions goes to zero, the ions stop
moving. The corresponding ionic parts of the eigenfunctions
separate multiplicatively and become products of delta functions
of the form $\de (\Rv_j - \Rv^{(0)}_j)$. If the masses are large
but finite, the ions will still move, but slower than the
electrons. Their wavefunctions will not be delta functions, but
typically more localized than those of the electrons.

Our favorite classical example system of interacting point particles
of very different masses is the planetary system. The ratio of the
earth mass $m$ to the sun mass $M$, for instance, is about
$m/M = 1/3 \times 10^{-5}$. Earth and sun exert equal but oppositely
directed forces $\pm \Fv$ onto each other, $M \ddot \Xv = \Fv = -
m \ddot \xv$, if $\Xv$ and $\xv$ are the position vectors of sun
and earth, respectively. This means that sun experiences a much
smaller average acceleration than the earth. Consequentially, as
compared to the sun, the earth moves much faster and has a much
larger orbit around the center of mass of the sun-earth system.
In this case, as the mass ratio is so small, the center of mass
lies inside the sun. Hence, to a very good approximation, the
earth moves around the sun and follows it along its way through
the universe.

In a similar way we expect the electrons in a solid to follow the
slower motion of the more massive ions. Translated into the language
of quantum mechanics we expect that the joint motion of electrons
and ions can be approximately described by a product of an ionic
wave function times an electronic wave function calculated for
fixed positions of the ions. The latter would be interpreted as
a conditional probability amplitude for the electrons given the
positions of the ions. The product structure would mimic the
fact in probability theory that the joint probability $p(A \cap B)$
of two events $A$ and $B$ (corresponding to the wave function of
electrons and ions) is equal to $p(A|B) p(B)$, where $p(B)$ is
the probability of $B$ (corresponding to the ions) and $p(A|B)$
is the conditional probability of $A$ given $B$ (corresponding
to the electronic wave function for fixed positions of the ions).

We shall try to work out this idea more formally. Let us start with
the `electronic eigenvalue problem'
\begin{equation} \label{elevap}
     H_{\rm el} (\Rv) \ph(\rv|\Rv) = \e (\Rv) \ph(\rv|\Rv)
\end{equation}
which depends parametrically on the positions $\Rv$ of all ions.
For every $\Rv$ the eigenstates $\ph_n (\rv|\Rv)\bigr|_{n \in {\mathbb N}}$
of $H_{\rm el} (\Rv)$ corresponding to the eigenvalues $\e_n (\Rv)$
form a basis of the electronic Hilbert space. Hence, every solution
$\Ps (\rv, \Rv)$ of the full eigenvalue problem $H \Ps = E \Ps$ can
be expanded in terms of the $\ph_n$,
\begin{equation} \label{boexpansion}
     \Ps(\rv, \Rv) = \sum_{n \in {\mathbb N}} \ph_n (\rv|\Rv) \Ph_n (\Rv) \epp
\end{equation}

Assuming the $\ph_n$ to be known we want to derive an eigenvalue
problem for the $\Ph_n$ which will later be interpreted as the
ionic wave functions. For this purpose we insert (\ref{boexpansion})
into the full eigenvalue problem and write the result as
\begin{multline}
     (H - E) \Ps (\rv, \Rv) =
        \sum_{m \in {\mathbb N}} \biggl\{\ph_m (\rv|\Rv)
	   \biggl[- \2 \sum_{j=1}^L \frac{m}{M_j} \6_{R_j^\a}^2
	          + \e_m (\Rv) - E\biggr] \Ph_m (\Rv) \\[.5ex]
	   - \2 \sum_{j=1}^L \frac{m}{M_j} \biggl[
	   2 \bigl(\6_{R_j^\a} \ph_m\bigr) (\rv|\Rv) \6_{R_j^\a} +
	     \bigl(\6_{R_j^\a}^2 \ph_m\bigr) (\rv|\Rv)\biggr] \Ph_m (\Rv)\biggr\} = 0 \epp
\end{multline}
From here we obtain an equation for the $\Ph_n$ upon multiplication
by $\ph_n^* (\rv|\Rv)$ and integration over $\rv$,
\begin{equation} \label{evapphn1}
     \biggl[- \2 \sum_{j=1}^L \frac{m}{M_j} \6_{R_j^\a}^2
            + \e_n (\Rv) - E\biggr] \Ph_n (\Rv)
	    = \sum_{m \in {\mathbb N}} C_{nm} (\Rv) \Ph_m (\Rv) \epc
\end{equation}
where
\begin{subequations}
\begin{align}
     C_{nm} (\Rv) & = A_{nm} (\Rv) + B_{nm} (\Rv) \epc \\[1ex]
     A_{nm} (\Rv) & = \sum_{j=1}^L \frac{m}{M_j}
                         \int \rd^{3N} r \: \ph_n^* (\rv|\Rv) 
                         \bigl(\6_{R_j^\a} \ph_m\bigr) (\rv|\Rv) \6_{R_j^\a} \epc \\
     B_{nm} (\Rv) & = \2 \sum_{j=1}^L \frac{m}{M_j}
                         \int \rd^{3N} r \: \ph_n^* (\rv|\Rv) 
                         \bigl(\6_{R_j^\a}^2 \ph_m\bigr) (\rv|\Rv) \epp
\end{align}
\end{subequations}
If there is no external magnetic field we may assume that the
$\ph_n$ are real. Then, using that
\begin{equation}
     \int \rd^{3N} r \: \ph_n (\rv|\Rv) 
        \bigl(\6_{R_j^\a} \ph_n \bigr) (\rv|\Rv) =
     \2 \6_{R_j^\a} \int \rd^{3N} r \: \ph_n^2 (\rv|\Rv) = 0 \epc
\end{equation}
we see that $A_{nn} (\Rv) = 0$. Introducing the notation
\begin{equation}
     T = - \2 \sum_{j=1}^L \frac{m}{M_j} \6_{R_j^\a}^2
\end{equation}
we can therefore rewrite (\ref{evapphn1}) in the form
\begin{equation} \label{evapphn2}
     \biggl[T + \e_n (\Rv) - B_{nn} (\Rv) - E\biggr] \Ph_n (\Rv)
	    = \sum_{m \in {\mathbb N}, m \ne n} C_{nm} (\Rv) \Ph_m (\Rv) \epp
\end{equation}
This equation still contains the full information about the ion
system. The ions are now coupled through their Coulomb interaction
(contained in $\e_n (\Rv)$) and through the electrons, whose degrees
of freedom have been formally integrated out. Equation (\ref{evapphn2})
will turn out to be an appropriate starting point for a perturbative
analysis of the ion system with $m/M_j$ taken as a small parameter.

\mysection{Born-Oppenheimer approximation}
\label{lect:bo}
\subsection{More scaling arguments}
Classically we have a general idea, how the time scale, e.g.\ for the
motion of a particle of mass $m$ with one degree of freedom
in a potential $V$, depends on the mass. The Lagrangian of such
a system is
\begin{equation}
     L = \frac{M (\6_t x)^2}{2} - V(x)
       = \frac{m}{2} \Bigl(\6_\frac{t}{\sqrt{M/m}} x\Bigr)^2 - V(x) \epp
\end{equation}
Denoting by $x_\m (t)$ a trajectory of a particle with mass $\m$,
we see that
\begin{equation}
     x_M (t) = x_m \biggl(\frac{t}{\sqrt{M/m}}\biggr)
\end{equation}
is a trajectory of a particle of mass $M$. The motion slows
down if $M > m$. Heavier particles (of the same energy) move
slower. In particular, oscillators oscillate with lower frequency
if their mass is enlarged.

In quantum mechanics we have to consider the stationary Schrödinger
equation, a time-independent problem. Hence, we are rather interested
in how the spatial behaviour of the eigenfunctions varies with mass.
For the bounded motion around an equilibrium position described
by a quadratic minimum of the potential we may recourse to the
harmonic oscillator
\begin{equation} \label{hoham}
     H = \frac{p^2}{2M} + \frac{K x^2}{2} \epp
\end{equation}
Comparing kinetic and potential energy in a similar scaling argument
as in (\ref{coulombscale}),
\begin{equation}
     \frac{\hbar^2}{2 m \ell^2} \sim E \sim \frac{K \ell^2}{2} \epc
\end{equation}
we find the intrinsic length scale
\begin{equation} \label{defell}
     \ell = \biggl(\frac{\hbar^2}{m K}\biggr)^\frac{1}{4} \epp
\end{equation}
Comparing the extension $L$ of an eigenfunction of a heavy particle
of mass $M$ with the extension $\ell$ of a lighter particle of
mass $m$ we obtain a ratio of
\begin{equation}
     \frac{L}{\ell} = \biggl(\frac{m}{M}\biggr)^\frac{1}{4} \epp
\end{equation}
This means that the larger the mass the more localized becomes the
wave function. On the other hand, in a more localized wave function
the particle is closer to the origin and the harmonic approximation
is better justified.
\subsection{Exercise 1. The heavy harmonic oscillator}
The same conclusions as above can be drawn from the solution of the
eigenvalue problem of the harmonic oscillator (\ref{hoham}).
\begin{enumerate}
\item
Determine the full width at half height of the ground state
wave function of the harmonic oscillator. How does it depend on
the mass of the oscillator?
\item
Consider the oscillator with a quartic correction term
\begin{equation}
     H = \frac{p^2}{2m} + \frac{K x^2}{2} + \frac{K' x^4}{4!} \epc
\end{equation}
where $K' > 0$. Show that the correction can be taken into
account perturbatively if the mass is large. For this purpose
use $\ell$ as defined in (\ref{defell}) as a small parameter.
Calculate the correction to the ground state energy in first
oder perturbation theory. How does it depend on the mass of
the oscillator?
\end{enumerate}
\subsection{Application to the ionic motion and adiabatic decoupling}
\label{sec:adiadecoupling}
Let $M$ be a typical ion mass, e.g.\ the arithmetic average of
all ion masses $M = \bigl\< \{M_j\}\bigr\>$. Then
\begin{equation}
     \k = \biggl(\frac{m}{M}\biggr)^\frac{1}{4}
\end{equation}
is an intrinsic length parameter for the bounded motion of 
the ions.

We expect that the eigenvalue problem (\ref{evapphn2}) has solutions
which, on the scale of the electronic wave functions, are strongly
localized around certain equilibrium positions $\Rv^{(0)}$.
To take account of this expectation we introduce new coordinates
$\uv$ on the scale of the electronic wave functions and relative
to this equilibrium position, setting
\begin{equation} \label{coordinateselscale}
     \Rv = \Rv^{(0)} + \k \uv \epp
\end{equation}
For the wave functions we shall write
\begin{equation}
     \widetilde \Ph_n (\uv) = \Ph_n (\Rv) \epp
\end{equation}
Using the new coordinates (\ref{coordinateselscale}) we can make the
$\k$ dependence of the operators in (\ref{evapphn2}) explicit. For
this purpose we define the rescaled operators
\begin{subequations}
\begin{align} \label{deftu}
     T_\uv & = - \2 \sum_{j=1}^L \frac{M}{M_j} \6_{u_j^\a}^2 \epc \\
     \widetilde A_{nm} (\Rv) &
        = \sum_{j=1}^L \frac{M}{M_j}
	     \int \rd^{3N} r \: \ph_n^* (\rv|\Rv) 
	        \bigl(\6_{R_j^\a} \ph_m\bigr) (\rv|\Rv) \6_{u_j^\a} \epc \\
     \widetilde B_{nm} (\Rv) &
        = \2 \sum_{j=1}^L \frac{M}{M_j}
	     \int \rd^{3N} r \: \ph_n^* (\rv|\Rv) 
	        \bigl(\6_{R_j^\a}^2 \ph_m\bigr) (\rv|\Rv) \epc
\end{align}
\end{subequations}
which remain finite for $\k \rightarrow 0$. With these definitions the
eigenvalue problem (\ref{evapphn2}) assumes the form
\begin{multline} \label{evapphn3}
     \biggl[\k^2 T_\uv + \e_n (\Rv^{(0)} + \k \uv)
              - \k^4 \widetilde B_{nn} (\Rv^{(0)} + \k \uv) - E\biggr]
	        \widetilde \Ph_n (\uv) \\[1ex]
	    = \k^3 \sum_{m \in {\mathbb N}, m \ne n}
	                 \bigl(\widetilde A_{nm} (\Rv^{(0)} + \k \uv)
	                 + \k \widetilde B_{nm} (\Rv^{(0)} + \k \uv) \bigl)
			   \widetilde \Ph_m (\uv) \epp
\end{multline}

Recall that we assume that the wave functions $\Ph_n$ are strongly
localized around $\Rv^{(0)}$, implying that the redefined functions
$\widetilde \Ph_n$ are strongly localized around $\uv = 0$. For this
reason it makes sense to expand the operators in (\ref{evapphn3})
which act on the functions $\widetilde \Ph_n$ in a Taylor series
in $\k \uv$ and to solve the resulting eigenvalue problem perturbatively
for small~$\k$. The latter means to look for solutions in the form
of formal series in $\k$,
\begin{subequations}
\begin{align}
     E & = E^{(0)} + \k E^{(1)} + \dots \epc \\[1ex]
     \widetilde \Ph_n (\uv) & = \widetilde \Ph_n^{(0)} (\uv)
                              + \k \widetilde \Ph_n^{(1)} (\uv) + \dots
\end{align}
\end{subequations}
Inserting these perturbation series into (\ref{evapphn3}) and performing
the Taylor expansion we obtain
\begin{multline}
     \big(\k^2 T_\uv + \e_n (\Rv^{(0)}) + \k \e_n^{(1)} (\uv)
          + \k^2 \e_n^{(2)} (\uv) + \k^3 \e_n^{(3)} (\uv)
	  + \k^4 \e_n^{(4)} (\uv) + \\[.5ex] \mspace{54.mu}
	  \dots - \k^4 \widetilde B_{nn} (\Rv^{(0)})
	  - \dots - E^{(0)} - \k E^{(1)} - \k^2 E^{(2)} - \k^3 E^{(3)}
	  - \k^4 E^{(4)} - \dots \bigr) \\[1ex] \mspace{216.mu} \times
     \bigl(\widetilde \Ph_n^{(0)} (\uv) + \k \widetilde \Ph_n^{(1)} (\uv)
           + \k^2 \widetilde \Ph_n^{(2)} (\uv)
           + \k^3 \widetilde \Ph_n^{(3)} (\uv)
	   + \dots \bigr) \\[1ex]
     = \k^3 \sum_{m \in {\mathbb N}, m \ne n}
                  \bigl(\widetilde A_{nm} (\Rv^{(0)}) + \dots \bigl)
       \bigl(\widetilde \Ph_m^{(0)} (\uv) + \k \widetilde \Ph_m^{(1)} (\uv)
             + \k^2 \widetilde \Ph_m^{(2)} (\uv) + \dots \bigr) \epp
\end{multline}

Here we compare the coefficients in front of the powers of $\k$ order
by order. To the order $\k^0$ we obtain
\begin{equation}
     (\e_n (\Rv^{(0)}) - E^{(0)}) \widetilde \Ph_n^{(0)} (\uv) = 0 \epp
\end{equation}
Let $\Ph_\ell^{(0)} (\uv) \ne 0$ for some $\ell \in {\mathbb N}$,
$\e_\ell (\Rv^{(0)})$ non-degenerate. Then
\begin{equation}
     E^{(0)} = \e_\ell (\Rv^{(0)}) \epc \qd
     \widetilde \Ph_n^{(0)} (\uv) = 0 \qd \forall n \ne \ell \epp
\end{equation}
For $n \ne \ell$ we conclude for the higher orders in $\k$ that
\begin{subequations}
\begin{align}
     {\cal O} (\k): && \bigl(\e_n (\Rv^{(0)}) - \e_\ell (\Rv^{(0)})\bigr)
                       \widetilde \Ph_n^{(1)} (\uv) & = 0 
		       \qd \then\ \widetilde \Ph_n^{(1)} (\uv) = 0 \epc \\[1ex]
     {\cal O} (\k^2): && \bigl(\e_n (\Rv^{(0)}) - \e_\ell (\Rv^{(0)})\bigr)
                         \widetilde \Ph_n^{(2)} (\uv) & = 0
		         \qd \then\ \widetilde \Ph_n^{(2)} (\uv) = 0 \epc \\[1ex]
     {\cal O} (\k^3): && \bigl(\e_n (\Rv^{(0)}) - \e_\ell (\Rv^{(0)})\bigr)
                         \widetilde \Ph_n^{(3)} (\uv)
                            & = \widetilde A_{n \ell} (\Rv^{(0)})
			      \widetilde \Ph_\ell^{(0)} (\uv) \notag \\[1ex]
     && \then\ \widetilde \Ph_n^{(3)} (\uv) & =
                 \frac{\widetilde A_{n \ell} (\Rv^{(0)}) \widetilde \Ph_\ell^{(0)} (\uv)}
		      {\e_n (\Rv^{(0)}) - \e_\ell (\Rv^{(0)})} \epp
\end{align}
\end{subequations}
The latter equation means that at order $\k^3 \sim 1/1000$ the
wave function $\Ps (\rv|\Rv)$ (\ref{boexpansion}) of the coupled
electron ion system ceases to be a simple product of two factors.

For $n = \ell$ we find at order $\k$ that
\begin{equation}
     \bigl(\e^{(1)} (\uv) - E^{(1)}\bigr) \widetilde \Ph_\ell^{(0)} (\uv) = 0 \qd
     \then\ E^{(1)} = \e_\ell^{(1)} (\uv)
        = \sum_{j=1}^L \frac{\6 \e_\ell (\Rv^{(0)})}{\6 R_j^\a} \uv_j^\a \epp
\end{equation}
But $E^{(1)}$ must be independent of $\uv$ which can only hold if
\begin{equation}
     \frac{\6 \e_\ell (\Rv^{(0)})}{\6 R_j^\a} = 0 \epp
\end{equation}
Then, necessarily,
\begin{equation}
     E^{(1)} = 0 \epp
\end{equation}
For higher orders of $\k$ we remain with the equation
\begin{align}
     \big(T_\uv + & \e_\ell^{(2)} (\uv) + \k \e_\ell^{(3)} (\uv)
	  + \k^2 \e_\ell^{(4)} (\uv) - \k^2 \widetilde B_{nn} (\Rv^{(0)})
	  - E^{(2)} - \k E^{(3)} - \k^2 E^{(4)}\bigr) \notag
	     \\[1ex] & \mspace{288.mu} \times
     \bigl(\widetilde \Ph_\ell^{(0)} (\uv) + \k \widetilde \Ph_\ell^{(1)} (\uv)
           + \k^2 \widetilde \Ph_\ell^{(2)} (\uv) \bigr) \notag \\[1ex]
     & = \k^4 \sum_{m \in {\mathbb N}, m \ne \ell}
                  \widetilde A_{\ell m} (\Rv^{(0)}) \widetilde \Ph_m^{(3)} (\uv) 
		  + {\cal O} (\k^5) \notag \\[1ex]
     & = \k^4 \sum_{m \in {\mathbb N}, m \ne \ell}
                 \frac{\widetilde A_{\ell m} (\Rv^{(0)})
		       \widetilde A_{m \ell} (\Rv^{(0)}) \widetilde \Ph_\ell^{(0)} (\uv)}
		      {\e_m (\Rv^{(0)}) - \e_\ell (\Rv^{(0)})} + {\cal O} (\k^5) \epp
\end{align}
Conceiving this as an equation up to the order $\k^2$, we see that the
right hand side can be consistently neglected.
\subsection{Summary and interpretation}
\begin{enumerate}
\item
Within the scheme of the above perturbation theory the solutions
$\Ph_{n, \ell}$, $E_{n, \ell}$ of the eigenvalue problem
\begin{equation} \label{boevap}
     \bigl(T + \e_n (\Rv) - B_{nn} (\Rv) - E\bigr) \Ph (\Rv) = 0
\end{equation}
consistently determine the eigenfunctions of the solid in product form
\begin{equation} \label{bowavefun}
     \Ps_{n, \ell} (\rv, \Rv) = \ph_n (\rv|\Rv) \Ph_{n, \ell} (\Rv)
\end{equation}
up to the fourth order expansion of $\e_n (\Rv)$ in $\k$ and
up to the zeroth order expansion of $B_{nn} (\Rv)$ in $\k$.
$\ph_n (\rv|\Rv)$ is interpreted as a conditional probability
amplitude and $\Ph_{n, \ell} (\Rv)$ as an ionic wave function. The
product form then means that the electrons follow the motion
of the ions.
\item
The approximation (\ref{boevap}), (\ref{bowavefun}) is called
the Born-Oppenheimer approximation and originated in molecular
physics \cite{BoOp27,BoHu54}. Another common name is `the adiabatic
approximation'.
\item
Within the Born-Oppenheimer approximation the equilibrium positions
of the ions are determined by the condition
\begin{equation}
     \frac{\6 \e_n (\Rv^{(0)})}{\6 R_j^\a} = 0 \epp
\end{equation}
In solids the ions arrange themselves in regular lattices.
\item
Consistently up to the second order in $\k$ the eigenvalue
problem (\ref{boevap}) takes the form
\begin{equation} \label{lathamharmonic}
     \bigl(T_\uv + \e_n^{(2)} (\uv) - E\bigr) \widetilde \Ph^{(0)} (\uv) = 0
\end{equation}
where
\begin{equation} \label{harmpot}
     \e_n^{(2)} (\uv) = \2 \sum_{j,k=1}^L
                        \frac{\6^2 \e_n (\Rv^{(0)})}{\6 R_j^\a \6 R_k^\be}
			u_j^\a u_k^\be \epp
\end{equation}
This is called the harmonic approximation. Since only
$\widetilde \Ph^{(0)}$ is taken into account in the harmonic
approximation, it is consistent to consider the electronic
wave function only to lowest order $\ph (\rv|\Rv^{(0)})$ as
well. In this approximation the electrons move in a static
lattice determined by the equilibrium positions of the ions
(like the earth was approximately moving around a static
sun in our entrance example).
\item
Notice that, in spite of the ratio of electron to ion mass
being very small, our actual expansion parameter $\k \sim 1/10$
is only moderately small.
\item
Our perturbative analysis becomes questionable, if the electronic
levels are degenerate or close to degenerate (it is well justified
for the electronic ground state of an insulator, but problematic
for a metal).
\item
Many properties of solids can be understood qualitatively and
often also quantitatively within the harmonic approximation.
Most of the treatment of solids in this lecture will be based
on it. It explains e.g.\ the scattering of light or neutrons,
the propagation of sound and the specific heat.
\item
An example of an effect which cannot be explained within the
harmonic approximation is the thermal extension of a solid. It
is a higher order effect and therefore small. Still it can
be understood within the adiabatic approximation if we proceed
to the fourth order expansion of $\e_n (\Rv)$.
\end{enumerate}
\subsection{Exercise 2. A simple application of Born-Oppenheimer}
A heavy particle ($M$) and a light particle ($m \ll M$) move inside an infinitely
high potential well of width $L$. The particles experience an attractive interaction
described by the potential $W(r-R) = -\lambda \cdot \delta (r-R)$, where $\lambda > 0$
and $R$ and $r$ are the positions of the heavy and light particle, respectively.

Calculate the spectrum of the corresponding Hamiltonian
\begin{equation}
  H = - \frac{\hbar^2}{2m} \frac{\partial^2}{\partial r^2}
      - \frac{\hbar^2}{2M} \frac{\partial^2}{\partial R^2}
      - \lambda \cdot \delta (r-R)
\end{equation}
in  Born-Oppenheimer approximation:
\begin{enumerate}
\item
In step one transform the Hamiltonian into a dimensionless form and
neglect the kinetic energy of the heavy particle. Determine the (unnormalized)
eigenfunctions and an equation for the energy eigenvalues $\varepsilon (R)$ of
the light particle. It may turn out to be useful to distinguish the cases
$\varepsilon<0$, $\varepsilon=0$ and $\varepsilon>0$. Recall the relation
\begin{equation}
  \frac{\partial \varphi}{\partial r}(R+0) -\frac{\partial \varphi}{\partial r}(R-0) =
  - \frac{2m\lambda}{\hbar^2} \varphi(R) \epc
\end{equation}
for the jump of the derivative of the wave function caused by the $\delta$-function.
\item
In step two sketch the energy $\varepsilon(R)$ of the light particle for the
ground state and for the first excited state as a function of the position of
the heavy particle. What is the meaning of $\varepsilon(R)$ and how should we
include (qualitatively) the energy levels of the heavy particle into our picture?
\end{enumerate}
\mysection{Crystal lattices}
Within the adiabatic approximation the electrons and ions form bound states,
molecules or solids, in which the ions oscillate around the minima
$\Rv^{(0)}$ of the effective potentials $\e_n (\Rv)$ determined by the
mutual Coulomb interaction of the ions and the energy eigenvalues of
the electrons for fixed ion positions. We derived this statement in the
previous lecture presupposing that such minima exist. The argument would have
been more convincing if we would have been able to prove the existence
of minima starting with the Hamiltonian (\ref{hamsol}). Such undertaking
seems to be out of reach with our current methods. Still, in nature, atoms
always form bound states at low enough temperatures (the only notable exception
being helium which condenses into a `super fluid' before solidification).
If many atoms are put together in stoichiometric ratios they form periodic
structures called crystals. Crystals are formed, since (i) two atoms prefer
a certain binding length and (ii) no direction is preferred in the large
(this is very much like packing balls into a box and carefully shaking it
such that the balls find their equilibrium positions).

In this and in the following lecture we introduce the terminology and
certain mathematical structures needed for the description of crystal
lattices.
\subsection{The Bravais lattice}
The most important structural feature of a crystal is that it can be thought
of as being generated by periodic repetitions of a finite elementary structure,
the unit cell, in three space directions.

Let $\av_1, \av_2, \av_3 \in {\mathbb R}^3$, $\<\av_1, \av_2 \times \av_3\> =
\det (\av_1, \av_2, \av_3) \ne 0$. Then
\begin{equation} \label{bravaislattice}
     B = \bigl\{\xv \in {\mathbb R}^3 \big| \xv = \ell_1 \av_1 + \ell_2 \av_2
                + \ell_3 \av_3, \ell_j \in {\mathbb Z} \bigr\} \epc
\end{equation}
conceived as an Abelian group, is called a (3d) Bravais lattice. $B$ is
called the Bravais lattice of a crystal, if $B$ is the group of all
translations which map the (infinitely extended) crystal onto itself.
Similarly we can define Bravais lattices in any number of dimensions.
In the examples and exercises we will frequently work with $d = 1$ and
$d = 2$.

\begin{remark}
We shall develop part of the theory of infinite crystals which is
expected to give a realistic description, if the ratio of the number
of ions at the surface to the number of ions in the bulk is small.
For a typical macroscopic total number of $L \sim 10^{24}$ ions,
of the order of $L^{2/3} \sim 10^{16}$ of them are at the surface,
and the ratio is $\sim 10^{-8}$.
\end{remark}

Bravais lattice vectors $\av_j$, $j = 1, 2, 3$, that generate 
the Bravais lattice as an Abelian group with respect to vector addition
are called primitive (lattice) vectors. They are not unique.
\begin{lemma} \label{lem:primvec}
Let $\av_j$, $j = 1, 2, 3$, a set of primitive vectors of a Bravais
lattice. Then
\[
     \bv_i = \sum_{j=1}^3 m_{ij} \av_j,\ i = 1, 2, 3,\ \text{primitive}\
        \Leftrightarrow\ m_{ij} \in {\mathbb Z}\ \text{and}\ |\det m| = 1 \epp
\]
\end{lemma}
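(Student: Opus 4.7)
The plan is to prove the equivalence by establishing the two implications separately, exploiting that ``primitive'' means precisely ``generates $B$ as a $\mathbb{Z}$-module'' and that the $\av_j$ already provide such a generating set.

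For the forward direction ($\Rightarrow$), I would argue as follows. Assume the $\bv_i$ are primitive. Since each $\bv_i \in B$, and the $\av_j$ generate $B$ as an Abelian group, each $\bv_i$ must be an integer linear combination of the $\av_j$; the uniqueness of such a decomposition (which follows from the linear independence of the $\av_j$ as vectors in $\mathbb{R}^3$) forces $m_{ij} \in \mathbb{Z}$. Conversely, because the $\bv_i$ are themselves primitive, I can expand each $\av_i$ in them with some integer matrix $n$, so that $\av_i = \sum_k n_{ik} \bv_k = \sum_{k,j} n_{ik} m_{kj} \av_j$. Again by uniqueness, $nm = \one$ as $3\times 3$ integer matrices. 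Taking determinants gives $\det(n) \det(m) = 1$ with both factors in $\mathbb{Z}$, hence $|\det m| = 1$.

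For the backward direction ($\Leftarrow$), suppose $m_{ij} \in \mathbb{Z}$ with $|\det m| = 1$. I would first check that the $\bv_i$ are linearly independent in $\mathbb{R}^3$, which is immediate from $\det m \neq 0$ together with the linear independence of the $\av_j$; in particular $\<\bv_1, \bv_2 \times \bv_3\> \neq 0$, so the $\bv_i$ are legitimate candidates for primitive vectors of \emph{some} Bravais lattice $B'$. It remains to show $B' = B$. The inclusion $B' \subseteq B$ is clear since each $\bv_i \in B$. For the reverse inclusion, I invoke Cramer's rule: because $m$ is an integer matrix with $\det m = \pm 1$, its inverse $m^{-1}$ also has integer entries. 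Writing $\av_i = \sum_j (m^{-1})_{ij} \bv_j$, each $\av_j$ lies in $B'$, so $B \subseteq B'$.

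The only genuinely delicate point in this plan is the uniqueness of integer-coefficient expansions in the forward direction; I would justify it once by noting that the $\av_j$ are $\mathbb{R}$-linearly independent, which upgrades uniqueness over $\mathbb{Q}$ (and hence over $\mathbb{Z}$) for free. Everything else reduces to elementary linear algebra over $\mathbb{Z}$, and the argument does not use anything beyond what has been introduced in the section on the Bravais lattice.
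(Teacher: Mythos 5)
Your proof is correct and follows essentially the same route as the paper's: the forward direction extracts an integer matrix $n$ with $nm=\one$ and concludes $\det m = \pm 1$ from $\det n \det m = 1$ in $\mathbb{Z}$, and the backward direction uses Cramer's rule to see that $m^{-1}$ is an integer matrix, so the $\av_j$ lie in the lattice generated by the $\bv_i$. You merely spell out the details (integrality of $m_{ij}$, uniqueness of the expansion, linear independence of the $\bv_i$) that the paper's terse proof leaves implicit.
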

\begin{proof}
$\then$: primitive $\then\ m_{ij}^{-1} \in {\mathbb Z}$ $\then \det m^{-1}
= 1/\det m \in {\mathbb Z}\ \then$ (since $\det m \in {\mathbb Z}$):
$\det m = \pm 1$.

\noindent $\Leftarrow$: Cramer's rule and $\det m = \pm 1\ \then\ m_{ij}^{-1}
\in {\mathbb Z}\ \then$ primitive.
\end{proof}

Since $|\det m| = 1$ it follows that
\begin{equation}
     V_u = |\det (\bv_1, \bv_2, \bv_3)| = |\det (\av_1, \av_2, \av_3)| \epc
\end{equation}
the volume of the parallelepiped spanned by a set of primitive vectors,
is independent of their choice.

By definition, the unit cell of a crystal with Bravais lattice $B$ is a
simply connected finite volume (of size~$V_u$) which covers ${\mathbb R}^3$
through translation by $B$. We often use a special unit cell, the Wigner-Seitz
cell, which reflects the symmetries of the Bravais lattice, and is defined as
\begin{equation}
     W = \bigl\{\xv \in {\mathbb R}^3 \big| \|\xv\| \le \|\xv - \gv\|\
                \forall \gv \in B \setminus \{0\}\bigr\} \epp
\end{equation}
Geometrically this is the set of points in ${\mathbb R}^3$ for which the
closest Bravais lattice point is the origin. It can be constructed by
drawing lines from the origin to the neighbouring sites in the Bravais
lattice and erecting the perpendicular bisectors on these lines.

Exercise: Draw the Wigner-Seitz cells for 2d Bravais lattices composed
of equilateral squares and triangles.
\subsection{The reciprocal lattice}
The reciprocal lattice which we shall define now is a lattice which, in
a sense, is dual to the Bravais lattice. It is one of the most important
notions in solid state physics and will accompany us throughout this
lecture.

Given a Bravais lattice $B$ and a set of primitive vectors $\av_j$,
$j = 1, 2, 3$, generating it, we would like to know how to decompose
any $\xv \in {\mathbb R}^3$ with respect to the $\av_j$, i.e.,
we would like to know its coordinates with respect to the basis
$\{\av_1, \av_2, \av_3\}$. Suppose that $\bv_j$, $j = 1, 2, 3$, exist
such that
\begin{equation} \label{dualbases}
     \<\av_j, \bv_k\> = 2\p \de_{jk} \epp
\end{equation}
Then
\begin{equation}
     2 \p x_j = \<\xv, \bv_j\> \epc \qd \xv = \frac{\<\xv,\bv_j\> \av_j}{2 \p} \epc
\end{equation}
where we have employed the summation convention in the second equation. If
(\ref{dualbases}) is satisfied, then $\{\av_j\}_{j=1}^3$ and
$\{\bv_j\}_{j=1}^3$ are called reciprocal (dual) to each other. The
lattice
\begin{equation} \label{reciprocallattice}
     \overline{B}
        = \bigl\{\xv \in {\mathbb R}^3 \big| \xv = \ell_1 \bv_1 + \ell_2 \bv_2
                 + \ell_3 \bv_3, \ell_j \in {\mathbb Z} \bigr\}
\end{equation}
is called the reciprocal lattice (associated with the Bravais lattice
$B$).

It is easy to solve (\ref{dualbases}) for the $\bv_j$. For this
purpose we rewrite it in matrix form
\begin{equation} \label{dualbasesmf}
     (\av_1, \av_2, \av_3)^t (\bv_1, \bv_2, \bv_3) = 2 \p I_3 \epc
\end{equation}
where $I_3$ is the $3 \times 3$ unit matrix, and use Cramer's rule,
\begin{equation}
     (\bv_1, \bv_2, \bv_3) = 2 \p \bigl((\av_1, \av_2, \av_3)^{-1}\bigr)^t
        = \frac{2\p}{V_u} (\av_2 \times \av_3, \av_3 \times \av_1, \av_1 \times \av_2) \epp
\end{equation}
Without restriction of generality we have assumed here that
$\det(\av_1, \av_2, \av_3) > 0$.

\subsection{Properties of the reciprocal lattice} \label{subsec:propreclat}
\begin{enumerate}
\item
Involutivity. Equation (\ref{dualbases}) implies that $\overline{\overline{B}} = B$.
The reciprocal of the reciprocal lattice is the original Bravais lattice.
\item
Brillouin zone. The Wigner-Seitz cell of the reciprocal lattice is
called the (first) Brillouin zone. It plays an important role in
solid state physics.
\item
Volume of the Brillouin zone. Fix a set of primitive reciprocal
lattice vectors $\bv_j$ and denote the volume of the parallelepiped
spanned by these vectors by
\begin{equation}
     V_R = \det (\bv_1, \bv_2, \bv_3) \epp
\end{equation}
Taking the determinant on the left and right hand side of the first
equation (\ref{dualbasesmf}) we see that
\begin{equation} \label{volbz}
     V_R = \frac{(2\p)^3}{V_u}
\end{equation}
is the volume of the unit cells of the reciprocal lattice, which is
the same as the volume of the Brillouin zone.
\item
Lattice planes. A lattice plane $S \subset {\mathbb R}^3$ associated
with a Bravais lattice $B$ is a plane for which $\exists\ \av_1, \av_2,
\av_3 \in B$ primitive, $\exists\ \ell \in {\mathbb Z}$ such that
\begin{equation}
     \xv_{\ell; m, n} = \ell \av_1 + m \av_2 + n \av_3 \in S \epc \qd
                        \forall\ m, n \in {\mathbb Z} \epp
\end{equation}
Let $\bv_1 = 2 \p (\av_2 \times \av_3)/V_u$ the reciprocal to $\av_1$. Then 
\begin{equation} \label{perpformlp}
     S = \bigl\{\xv \in {\mathbb R}^3 \big| \<\xv, \bv_1\> = 2 \p \ell\bigr\} \epp
\end{equation}
Thus, for every lattice plane there is a primitive reciprocal
lattice vector $\bv_1 \in \overline{B}$ and an $\ell \in {\mathbb Z}$
such that (\ref{perpformlp}) holds. If we vary $\ell$ in
(\ref{perpformlp}) we obtain a family of equidistant lattice planes.
Conversely, given any primitive vector $\bv_1 \in \overline{B}$
and any $\ell \in {\mathbb Z}$, the plane $S$ defined by
(\ref{perpformlp}) is a lattice plane. Thus, families of lattice
planes are in one-to-one correspondence with primitive vectors of
the reciprocal lattice.
\item
Miller indices. Lattice planes play an important role in the spectroscopy
of solids, We shall see in the course of this lecture that waves
impinging on a crystal are reflected as if they were reflected by
families of lattice planes. In spectroscopy the families of lattice
planes are usually labeled by the so-called Miller indices defined
relative to a fixed triple $\{\bv_1, \bv_2, \bv_3\}$ of primitive
vectors of the reciprocal lattice. According to lemma~\ref{lem:primvec}
every primitive vector $\bv \in \overline{B}$ can be uniquely presented
as an integer linear combination
\begin{equation}
     \bv = m_1 \bv_1 + m_2 \bv_2 + m_3 \bv_3 \epc \qd
           m_j \in {\mathbb Z} \epp
\end{equation}
The triple $(m_1, m_2, m_3)$ is then the Miller index of the family
of lattice planes associated with $\bv$. Accordingly, one speaks of
the $(m_1, m_2, m_3)$ plane (e.g., of the $(1,0,0)$ plane). By convention,
a bar is used instead of a minus sign, such that e.g.\ $(1,-1,0) =
(1, \bar 1, 0)$. As a reference triple $\{\bv_1, \bv_2, \bv_3\}$
one usually uses primitive vectors of minimal possible length.
\end{enumerate}

\subsection{Exercise 3. Real space interpretation of the Miller indices}
Let $m_j \ne 0$, $j = 1, 2, 3$, the Miller indices of a family of
lattice planes. Convince yourself that they indicate in which three
points
\begin{equation}
     \xv_j = \frac{\av_j}{m_j} \epc \qd j = 1, 2, 3 \epc
\end{equation}
the lines along the $\av_j$ directions cut the lattice plane with
$\ell = 1$ in (\ref{perpformlp}). Note that in the excluded cases 
with one or two of the $m_j$ being equal to zero there is no intersection
in the corresponding direction. The lines are parallel to the lattice
plane.

\subsection{Lattice periodic functions}
\label{subsec:fourier}
Many quantities that characterize the state of a crystal have the same
periodicity as the crystal itself. For this reason we will often have
to deal with periodic functions whose periods are the primitive vectors
of a Bravais lattice. These are often conveniently described by their Fourier
series.

Let us recall Fourier series in one spatial dimension. In this case the
unit cell is necessarily an interval $[0,a]$, where $a > 0$ is called the
lattice spacing or lattice constant, and the primitive vector is equal 
to $a$. A lattice periodic function $f: {\mathbb R} \rightarrow {\mathbb C}$
is a function satisfying
\begin{equation}
     f(x) = f(x + a) \qd \forall x \in {\mathbb R} \epp
\end{equation}

A natural basis for the expansion of such functions can be constructed as
follows. Let $\ph \in [0, 2\p)$ and
\begin{equation}
     z = \re^{\i \ph} = \re^{\i (\ph + 2\p)}
       = \re^{\i \frac{2 \p}{a} (\frac{\ph a}{2 \p} + a)} \epp
\end{equation}
Setting
\begin{equation}
     x = \frac{\ph a}{2 \p} \epc \qd b = \frac{2 \p}{a} \epc \qd k_m = m b
\end{equation}
we see that
\begin{equation}
     a b = 2 \p
\end{equation}
implying that $b$ generates the reciprocal lattice, and that the functions
\begin{equation}
     z^m = \re^{\i k_m x} \epc \qd m \in {\mathbb Z} \epc
\end{equation}
are linear independent and periodic with period $a$.

If the series
\begin{equation} \label{fseries}
     f(x) = \sum_{m \in {\mathbb Z}} A_{k_m} \re^{\i k_m x}
\end{equation}
converges uniformly, $f$ is periodic with period $a$ and
continuous, and
\begin{equation}
     \frac{1}{a} \int_0^a \rd x \: \re^{- \i k_n x} f(x)
        = \sum_{m \in {\mathbb Z}} A_{k_m} \frac{1}{a}
	              \int_0^a \rd x \: \re^{\i (k_m - k_n)x} = A_{k_n} \epp
\end{equation}
By means of the latter formula we can associate a sequence of Fourier
coefficients $(A_{k_n})_{n \in {\mathbb Z}}$ and a Fourier series
(\ref{fseries}) with every complex valued function $f$ that is
integrable on $[0,a]$. In the following we will understand Fourier
series in such a formal sense. But when we will be dealing with concrete
examples, we will have in mind that the convergence properties of
Fourier series are a delicate matter and that, in general, neither
uniform nor pointwise convergence is guaranteed.

In order to construct Fourier series that have the periodicity
defined by a Bravais lattice~$B$, we fix a set of primitive
vectors $\{\av_1, \av_2, \av_3\}$ and a set of reciprocal
vectors $\{\bv_1, \bv_2, \bv_3\}$ satisfying (\ref{dualbases}).
Then
\begin{equation}
     z_j = \re^{\i \<\bv_j, \xv\>} = \re^{\i \<\bv_j, \xv + \av_k\>} \epc
           \qd j, k = 1, 2, 3 \epc
\end{equation}
and the monomials
\begin{equation}
     z_1^\ell z_2^m z_3^n = \re^{\i \<\ell \bv_1 + m \bv_2 + n \bv_3, \xv\>} \epc
        \qd \ell, m, n \in {\mathbb Z} \epc
\end{equation}
are linear independent and periodic with all $\av \in B$ being
periods. In analogy with the 1d case we may define $\kv_{\ell m n}
= \ell \bv_1 + m \bv_2 + n \bv_3$ and the Fourier series
\begin{equation}
     f(\xv) = \sum_{\ell, m, n \in {\mathbb Z}}
              A_{k_{\ell m n}} \re^{\i \<\kv_{\ell m n}, \xv\>}
            = \sum_{\kv \in \overline{B}} A_\kv \re^{\i \<\kv, \xv\>} \epc
\end{equation}
where
\begin{equation}
     A_\kv = \frac{1}{V_u} \int_{U} \rd^3 x \:
                         \re^{- \i \<\kv, \xv\>} f(\xv)
\end{equation}
and the integral is over a unit cell $U$ of volume $V_u$.

With these remarks on Fourier series we have obtained an interpretation
of the reciprocal lattice. The reciprocal lattice is a lattice in
Fourier space dual to the real space Bravais lattice.
\mysection{Crystal symmetries}
\subsection{The crystal lattice}
We define a (physical) crystal lattice as the set $M$ of the positions
(measured as expectation values) of the ions of a solid in its ground
state. The translational symmetry of the crystal (lattice) is described
by the corresponding Bravais lattice. In general there are several ions
in a unit cell of the Bravais lattice. The positions of the ions inside
a unit cell determine the so-called lattice basis. If every unit cell
contains only one ion, the crystal is called simple. In this case it
can be identified with its Bravais lattice. A crystal that is not simple
is called a crystal with basis.
\subsection{The Euclidean group}
The Euclidean group is the group of all maps ${\mathbb R}^3 \mapsto
{\mathbb R}^3$ which leave the distance between any two arbitrary
points invariant. It consists of all pairs $(A, \av)$ of orthogonal
transformations $A \in O(3)$ and translations $\av \in {\mathbb R}^3$,
\begin{equation}
     (A, \av) \xv = A \xv + \av \epp
\end{equation}
Then
\begin{equation}
     (A, \av) (B, \bv) \xv = (A, \av)(B \xv + \bv)
        = A B \xv + A \bv + \av = (AB, A \bv + \av) \xv \epc
\end{equation}
implying that two group elements are multiplied according to the rule
\begin{equation}
     (A, \av) (B, \bv) = (AB, A \bv + \av) \epp
\end{equation}
\subsection{The symmetry group of a crystal}
The symmetry group $R$ of a crystal $M$ is defined as
\begin{equation}
     R = \bigl\{(A, \av) \in E(3) \big| (A, \av) M = M \bigr\} \epc
\end{equation}
i.e., as the subgroup of $E(3)$ that leaves $M$ invariant. If
we identify $\bv$ with $(\id, \bv)$ for every $\bv \in B$, the
Bravais lattice of $M$, we see that $B$ is a subgroup of $R$
(since it leaves $M$ invariant and is a group).
\begin{remark}
$(\id, \av) \in R$ implies that $\av \in B$, but $(A, \av) \in R$,
$A \ne \id$ does not imply that $\av \in B$. In crystals with basis
smaller translations may occur which are parts of so-called
glide reflections or screw rotations. For an example see
Figure~\ref{fig:glidereflection}.

\begin{figure}
\begin{center}
\includegraphics[width=.45\textwidth]{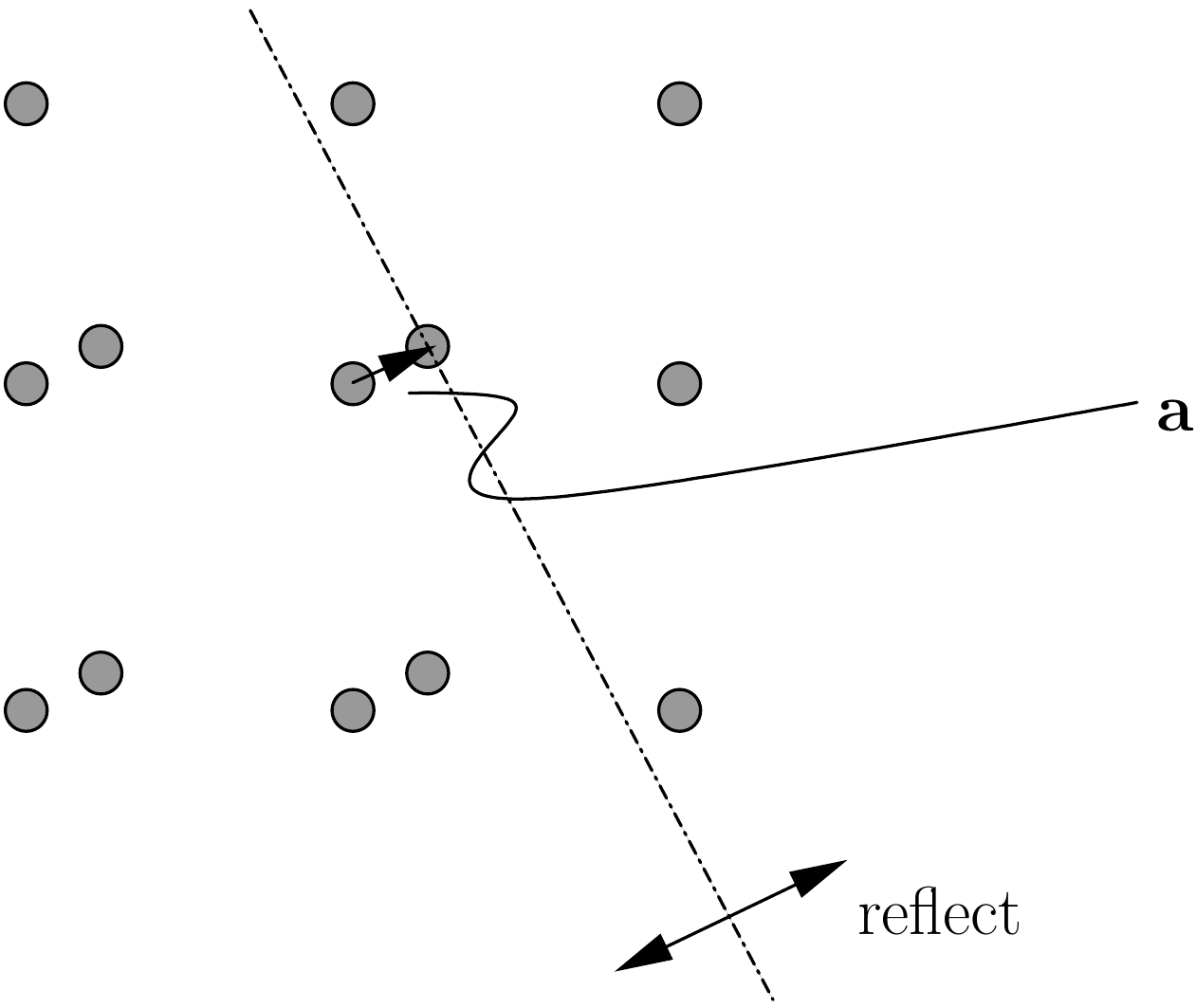}
\caption{\label{fig:glidereflection} 2d example of a symmetry
involving a translation which is not in the Bravais lattice.
A glide reflection: glide by $\av$ and reflect at the dashed
line. Clearly, reflecting without gliding is not a symmetry
of the sketched point configuration, but leaves its Bravais
lattice invariant.}
\end{center}
\end{figure}

\end{remark}
\begin{lemma} \label{lem:bravaisnormal}
Let $R$ be the symmetry group of a crystal with Bravais lattice $B$.
Then $B \subset R$ is a normal subgroup of $R$.
\end{lemma}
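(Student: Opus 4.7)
The plan is to show normality directly from the definition: for every $(A,\av) \in R$ and every $\bv \in B$, the conjugate $(A,\av)(\id,\bv)(A,\av)^{-1}$ lies in $B$. The key observation is that conjugating a pure translation by any Euclidean motion again yields a pure translation, and pure translations in $R$ are exactly the elements of $B$.

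First I would work out the inverse in $E(3)$. From the composition rule $(A,\av)(B,\bv) = (AB, A\bv+\av)$ it is straightforward to check that $(A,\av)^{-1} = (A^{-1}, -A^{-1}\av)$. Next I would compute the conjugate of a pure translation: a short calculation gives
\begin{equation*}
  (A,\av)(\id,\bv)(A,\av)^{-1} = (A, A\bv+\av)(A^{-1}, -A^{-1}\av) = (\id,\, A\bv) \epp
\end{equation*}
Thus conjugation by $(A,\av)$ sends the pure translation by $\bv$ to the pure translation by $A\bv$, with no rotational part and no residual translation from $\av$.

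Now the point is to identify this conjugate as an element of $B$. Since $(A,\av), (\id,\bv) \in R$ and $R$ is a group, the element $(\id, A\bv)$ lies in $R$. By definition of the Bravais lattice (recalled in the remark just above the lemma), a pure translation $(\id, \cv)$ belongs to $R$ if and only if $\cv \in B$. Applying this to $\cv = A\bv$ gives $A\bv \in B$, so $(\id,A\bv) \in B$ under the identification $\bv \leftrightarrow (\id,\bv)$. This proves $(A,\av) B (A,\av)^{-1} \subset B$ for every $(A,\av) \in R$, which is normality.

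There is no real obstacle; the only subtle point is the logical role of the remark preceding the lemma, which asserts that the pure translations in $R$ are precisely the elements of $B$ (and not, e.g., only a subgroup containing the glide/screw translations as well). As a byproduct the argument shows that the linear part $A$ of any crystal symmetry $(A,\av)$ maps $B$ into itself, which is the geometric content of the statement and will be useful later when analysing point groups.
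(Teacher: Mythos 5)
Your proof is correct and follows essentially the same route as the paper: the paper writes the key step as the commutation relation $(A,\av)(\id,\bv) = (\id, A\bv)(A,\av)$ rather than conjugating with the explicit inverse, but this is the same calculation, and both arguments then invoke the fact (from the preceding remark) that a pure translation leaving the crystal invariant must lie in $B$, giving $A\bv \in B$ and hence normality.
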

\begin{proof}
Let $\bv \in B$, $(A, \av) \in R$. $\then\ (A, \av) (\id, \bv) =
(A, A \bv + \av) = (\id, A \bv) (A, \av)\epc\ \then\ (\id, A \bv) M = M\
\then\ A \bv \in B$. Thus, $(A, \av) (\id, \bv) (A, \av)^{-1} \in
(\id, B)$ for every $b \in B$, meaning that the Bravais lattice
is an invariant (= normal) subgroup of $R$.
\end{proof}
\subsection{The point group of a crystal}
For a crystal $M$ with symmetry group $R$ define
\begin{equation}
     R_0 =\bigl\{A \in O(3) \bigr| \exists\, g \in R\ \text{s.\ th.}\
                 g = (A, \av)\bigr\}
\end{equation}
the set of all `$O(3)$ parts' of $R$.
\begin{lemma}
$R_0$ is a subgroup of $O(3)$, the so-called point group of the
crystal.
\end{lemma}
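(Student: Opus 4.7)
The plan is to verify the three subgroup axioms (identity, closure under products, closure under inverses) directly from the multiplication rule $(A, \av)(B, \bv) = (AB, A\bv + \av)$ in the Euclidean group, exploiting the fact that $R$ is already known to be a group.

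First I would observe that $(\id, 0) \in R$ trivially (the identity map preserves every subset of ${\mathbb R}^3$), so $\id \in R_0$. Next, to establish closure under products, I would pick $A, B \in R_0$ and, by definition of $R_0$, choose (possibly non-unique) translation vectors $\av, \bv \in {\mathbb R}^3$ with $(A, \av), (B, \bv) \in R$. Since $R$ is a group, the product $(A, \av)(B, \bv) = (AB, A\bv + \av)$ also lies in $R$, and its $O(3)$ part is $AB$, so $AB \in R_0$.

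For inverses, given $A \in R_0$ with $(A, \av) \in R$, I would compute the inverse in $E(3)$: since $(A, \av)(A^{-1}, -A^{-1}\av) = (\id, 0)$, we have $(A, \av)^{-1} = (A^{-1}, -A^{-1}\av)$, which lies in $R$ because $R$ is a group. Hence $A^{-1} \in R_0$. The inclusion $R_0 \subset O(3)$ is built into the definition.

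There is no real obstacle here, just a bookkeeping subtlety worth flagging: the assignment $A \mapsto \av$ is not canonical, since the Bravais lattice translations give many possible $\av$'s for the same $A$ (by Lemma~\ref{lem:bravaisnormal}, if $(A, \av) \in R$ then so is $(A, \av + \bv)$ for every $\bv \in B$). However, the argument above only needs the \emph{existence} of some such $\av$ for each $A \in R_0$, so this ambiguity is harmless and the subgroup axioms go through unchanged.
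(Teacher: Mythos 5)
Your proof is correct and follows essentially the same route as the paper's: verify identity, closure under products via the multiplication rule $(A,\av)(B,\bv) = (AB, A\bv+\av)$, and closure under inverses via $(A,\av)^{-1} = (A^{-1}, -A^{-1}\av)$, using only the existence of some translation part for each $A \in R_0$. The remark about the non-uniqueness of $\av$ is a harmless extra observation that the paper leaves implicit.
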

\begin{proof}
(i)\ $A, B \in R_0\ \then\ \exists\, \av, \bv \in {\mathbb R}^3\
\text{s.\ th.}\ (A, \av), (B, \bv) \in R\ \then\
(AB, A \bv + \av) \in R\ \then\ AB \in R_0$. (ii) $(A, \av)^{-1}
= (A^{-1}, - A^{-1} \av) \in R\ \then\ A^{-1} \in R_0$. (iii)
$\id \in R_0$. (i)-(iii) $\then\ R_0$ is a group.
\end{proof}
\begin{remark}
In general, $R_0 M \ne M$, i.e.\, the point group of $M$ does not
necessarily leave $M$ invariant.
\end{remark}
We have seen in the proof of lemma~\ref{lem:bravaisnormal} that
$\bv \in B, (A, \av) \in R\ \then\ A \bv \in B$. Since, $A \in R_0$,
it follows that $R_0 B \subset B$, $B$ is invariant under $R_0$.
This has two important implications:
\begin{enumerate}
\item
The set of all point groups must be restricted, since not all
subgroups of $O(3)$ can leave a Bravais lattice invariant.
\item
Bravais lattices can be classified according to the point
groups which leave them invariant.
\end{enumerate}
\subsection{Remarks on point groups}
\begin{enumerate}
\item
All point groups are subgroups of $O(3)$ by construction. Thus,
$A \in R_0\ \then\ \det A = \pm 1$. If $\det A = 1\ \forall\, A
\in R_0$, then $R_0$ is called a point group of the first kind,
otherwise a point group of the second kind. Point groups of
the first kind consist of only rotations.
\item
The inversion $i \in O(3)$ is defined by $i\, \xv = - \xv\
\forall\, \xv \in {\mathbb R}^3$. $\then\ \det i = - 1$. For
point groups of the second kind we distinguish point groups
containing $i$ from point groups not containing $i$.
\item
Since $\bv \in B\ \then\ - \bv \in B$ for every Bravais lattice
vector, the symmetry groups of the Bravais lattices are point
groups of the second kind which do contain $i$.
\item
Which rotations are possible? The fact that any point group must
leave a Bravais lattice invariant restricts the possible rotation
angles. It means that every rotation $D \in R_0$ must map
primitive vectors on vectors in $B$,
\begin{equation}
     D \av_j = m_{jk} \av_k \epc \qd m_{jk} \in {\mathbb Z} \epp
\end{equation}
Define a basis transformation $M$ in ${\mathbb R}^3$ by
\begin{align} \label{tracepgrotation}
     & M \av_j = \ev_j \epc \qd \then\ M D M^{-1} \ev_j = m_{jk} \ev_k \epc \notag \\
     \then & \tr D = \tr M D M^{-1} = m_{jk} \<\ev_j, \ev_k\> = m_{jj} \in {\mathbb Z} \epp
\end{align}
Recall how the rotation angle is related to the trace of a rotation
matrix. The trace is invariant under coordinate transformation.
Hence, we may calculate it in a coordinate system in which the
axis of rotation is the $z$-axis. Denoting the rotation angle
by $\ph$ we obtain
\begin{equation} \label{tracerotationangle}
     \tr D = \tr \begin{pmatrix}
                    \cos(\ph) & \sin(\ph) & \\
                    - \sin(\ph) & \cos(\ph) & \\
		    & & 1
                 \end{pmatrix} = 1 + 2 \cos(\ph) \epp
\end{equation}
Combining (\ref{tracepgrotation}) and (\ref{tracerotationangle}) we
conclude that allowed angles $\ph$ must satisfy the condition
$2 \cos(\ph) \in {\mathbb Z}$, or
\begin{equation}
     \cos(\ph) = 0, \pm \2, \pm 1 \epp
\end{equation}
Thus, the only admissible values of $\ph \in [0,2\p)$ are
\begin{equation} \label{allowedangles}
     \ph = 0, \frac{\p}{3}, \frac{\p}{2}, \frac{2 \p}{3}, \p \epp
\end{equation}
The corresponding rotation axes are called $6$-fold, $4$-fold,
$3$-fold, $2$-fold.
\item
Point groups do not only contain only rotation axes of finite order,
they are also finite groups. Comparison with the known finite
subgroups of $SO(3)$ leaves 11 point groups of the first kind
compatible with (\ref{allowedangles}). From these we can construct
altogether $32$ point groups. Their number is, in particular, finite.
\end{enumerate}
\subsection{Classification of all point and space groups}
The full symmetry groups $R$ of crystals $M$ are discrete subgroups
of $E(3)$ which contain a Bravais lattice $B$ as a normal subgroup.
Their number is finite as well. In mathematics (and crystallography)
such groups are called space groups. They have been completely
classified and can be described by symmetry elements like rotations,
reflections, glide reflections and screw rotations. The table gives
an overview over the group theoretic classification of the Bravais
lattices and crystals.
\begin{center}
\addtolength{\tabcolsep}{14pt}
\begin{tabular}{@{}lll@{}}
\toprule
& Bravais lattices & crystals \\
\midrule
point groups & 7 crystal systems (4 in 2d) & 32 crystal classes (13 in 2d) \\[1ex]
space groups & 14 Bravais classes (5 in 2d) & 230 (17 in 2d) \\
\bottomrule
\end{tabular}
\end{center}
It is very instructive to have a look at least at the pictorial
representations of the crystal systems and Bravais classes.
They are shown in Figure~\ref{fig:crystalclasses}.

\begin{figure}
\begin{center}
\renewcommand{\arraystretch}{1.3}
\begin{tabular}{@{}llcccc@{}}
\toprule
Crystal system & Group & primitive & base-centered & body-centered & face-centered \\
\midrule
Triclinic & $C_i$ &
\raisebox{-34pt}{\includegraphics[width=.12\textwidth]{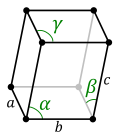}} &&& \\[7ex]
Monoclinic & $C_{2h}$ &
\raisebox{-34pt}{\includegraphics[width=.12\textwidth]{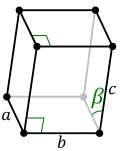}} &
\raisebox{-34pt}{\includegraphics[width=.12\textwidth]{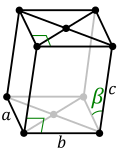}} && \\[8ex]
Orthorhombic & $D_{2h}$ &
\raisebox{-34pt}{\includegraphics[width=.11\textwidth]{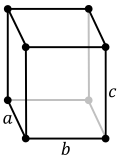}} &
\raisebox{-34pt}{\includegraphics[width=.11\textwidth]{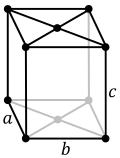}} &
\raisebox{-34pt}{\includegraphics[width=.11\textwidth]{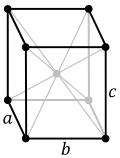}} &
\raisebox{-34pt}{\includegraphics[width=.11\textwidth]{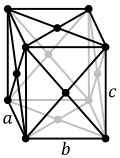}} \\[8ex]
Tetragonal & $D_{4h}$ &
\raisebox{-36pt}{\includegraphics[width=.11\textwidth]{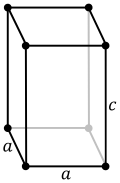}} &&
\raisebox{-36pt}{\includegraphics[width=.11\textwidth]{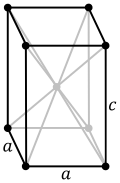}} & \\[8ex]
Rhombohedral & $D_{3d}$ &
\raisebox{-24pt}{\includegraphics[width=.12\textwidth]{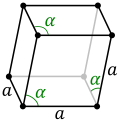}} &&& \\[6ex]
Hexagonal & $D_{6h}$ &
\raisebox{-24pt}{\includegraphics[width=.14\textwidth]{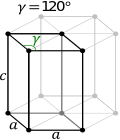}} &&& \\[6ex]
Cubic & $O_{h}$ &
\raisebox{-34pt}{\includegraphics[width=.11\textwidth]{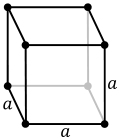}} & &
\raisebox{-34pt}{\includegraphics[width=.11\textwidth]{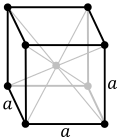}} &
\raisebox{-34pt}{\includegraphics[width=.11\textwidth]{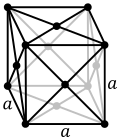}} \\
\bottomrule
\end{tabular}
\end{center}
\caption{\label{fig:crystalclasses} The crystal systems and Bravais classes
(from Wikipedia, the free encyclopedia). A parallelepiped representing the
point-group symmetry of a Bravais lattice has six parameters, three lengths
of its edges and three angles. If all edge lengths and all angles are mutually
distinct, the symmetry is minimal (only the inversion). This is the triclinic
crystal class in the table. Considering all possible degeneracies (two right
angles, to equal edge lengths etc.) one runs through all the listed symmetry
classes, the crystal systems. Some of them can be realized by several Bravais
lattices, giving the different Bravais classes. The second column in the
table contains the name of the point group in so-called Schönflies notation.}
\end{figure}

\subsection{Exercise 4. Lattice planes in the cubic face-centered lattice}
\label{ex:latplan}
As we have seen in section~\ref{subsec:propreclat} all families of lattice
planes of a Bravais lattice can be characterized by normal vectors, which
can be expanded in a basis $\bv_1, \bv_2, \bv_3$ of primitive vectors of
the reciprocal lattice. For a (perpendicular) distance $d$ of the planes
the reciprocal lattice vector $\kv = \sum_{i=1}^3 m_i \bv_i$ is of length
$2 \pi / d$. Since the $m_j$ have no common divisor, $\kv$ is the shortest
reciprocal vector perpendicular to the planes.

\begin{enumerate}
\item \label{3i}
Show that the density of lattice points per unit area in the lattice planes is $d/V_u$,
where $V_u$ if the volume of the unit cell spanned by $\av_1, \av_2, \av_3$.
\item
Show that the reciprocal lattice of the face-centered cubic lattice with lattice
constant $a$ is a body-centered cubic lattice with lattice constant $4\pi / a$.
By definition the lattice constant $a$ is the edge length of the cube which
envelops the unit cell of the face-centered cubic lattice with primitive vectors
\[
   \av_1 = \frac{a}{2} ( \ev_y + \ev_z) \epc \qd
   \av_2 = \frac{a}{2} ( \ev_x + \ev_z) \epc \qd
   \av_3 = \frac{a}{2} ( \ev_x + \ev_y) \epp
\]
$\ev_x, \ev_y, \ev_z$ is the canonical orthonormal basis of $\mathbb{R}^3$.
The corresponding primitive vectors for the body-centered cubic lattice
with lattice constant $a'$ are
\[
   \av_1 = \frac{a'}{2} ( - \ev_x + \ev_y + \ev_z) \epc \qd
   \av_2 = \frac{a'}{2} ( \ev_x - \ev_y + \ev_z) \epc \qd
   \av_3 = \frac{a'}{2} ( \ev_x + \ev_y - \ev_z) \epp
\]
\item
Find the Miller indices $(m_1, m_2, m_3)$ of that plane of the face-centered
cubic lattice which has the highest density of lattice points. Here it may
be helpful to use the connection between the density and the reciprocal lattice
vector $\kv$.
\end{enumerate}
\subsection{Exercise 5. Face centered tetragonal structure}
Why does the face centered tetragonal structure not appear in the list
of the 14 Bravais classes? How does this lattice fit into one of the 14 Bravais
classes?
\mysection{The action of the Bravais lattice on states}
\subsection{Shift operators, lattice momentum and Bloch's theorem}
\label{sec:bloch}
For a set of primitive vectors $\{\av_1, \av_2, \av_3\} \in B$ define the
corresponding shift operators $U_{\av_j}$, acting on a single-particle
space of states, by
\begin{equation}
     U_{\av_j} \Ps (\xv) = \Ps (\xv + \av_j) \epp
\end{equation}
Then
\begin{equation}
     U_{\av_j}^{-1} = U_{\av_j}^+ \epc \qd
     [U_{\av_j}, U_{\av_k}] = 0 \epc \qd j, k = 1, 2, 3 \epp \label{comu}
\end{equation}
For any Bravais lattice vector $\Rv = \ell \av_1 + m \av_2 + n \av_3$
the operator
\begin{equation}
     U_\Rv = U_{\av_1}^\ell U_{\av_2}^m U_{\av_3}^n
\end{equation}
is therefore uniquely defined and naturally acts as
\begin{equation}
     U_\Rv \Ps (\xv) = \Ps (\xv + \Rv)
\end{equation}
on single-particle wave functions.

Equation (\ref{comu}) implies that the $U_{\av_j}$ have a joint system of
eigenfunctions. If $\Ps$ is such an eigenfunction, then
\begin{equation}
     U_{\av_j} \Ps (\xv) = \Ps (\xv + \av_j) = \om (\av_j) \Ps (\xv) \epp
\end{equation}
Since $U_{\av_j}$ is unitary, $|\om (\av_j)| = 1$ implying that
$\exists \, v_j \in {\mathbb R}$ such that $\om (\av_j) = \re^{\i 2 \p v_j}$.
Let $\Rv = \ell \av_1 + m \av_2 + n \av_3$, $\kv = v_1 \bv_1 + v_2 \bv_2
+ v_3 \bv_3$. Then
\begin{equation} \label{blochform}
     \Ps (\xv + \Rv) = U_\Rv \Ps (\xv) = \re^{\i 2 \p(\ell v_1 + m v_2 + n v_3)} \Ps(\xv)
        = \re^{\i \<\kv, \Rv\>} \Ps(\xv) \epp
\end{equation}

Thus, for every common eigenfunction $\Ps$ of the three generators $U_{\av_j}$
of lattice translations $\exists\, \kv \in {\mathbb R}^3$ such that
(\ref{blochform}) holds for all $\Rv \in B$. The vector $\kv$ is a triple
of quantum numbers characterizing the eigenstates of the lattice
translation operators in very much the same manor as the momentum $\pv$
is a triple of quantum numbers that characterize the eigenstates of the
operator of infinitesimal translations, the momentum operator. For this
analogy $\kv$ is called the lattice momentum.

Let $\gv \in \overline B$, $\Rv \in B$. Then $\<\gv, \Rv\> = m 2 \p$ for some
$m \in {\mathbb Z}$ and $\re^{\i \<\kv + \gv, \Rv\>} = \re^{\i \<\kv, \Rv\>}$.
This means that $\kv$ and $\kv + \gv$ characterize the same eigenstate
of the lattice translation operator, or that the lattice momentum is defined
only modulo reciprocal lattice vectors. For this reason we may restrict the
domain of definition of $\kv$ to any unit cell of the reciprocal lattice. 
This domain is conventionally taken as the first Brillouin zone, which
explains the importance of the latter.

In the language of group theory the lattice momenta $\kv$ label the
irreducible representations of the Bravais lattice. Since the Bravais
lattice is an Abelian group, all of its irreducible representations
must be one-dimensional. They act by multiplication with complex numbers
as can be seen in equation (\ref{blochform}).

The Hamiltonian of the solid (\ref{hamsol}) is invariant under any
infinitesimal translation. For this reason the center of mass
momentum of the solid is conserved. In nature the translation
symmetry is affected by a mechanism called spontaneous symmetry
breaking. After separating the center of mass motion the ground
state of the Hamiltonian (\ref{hamsol}) is less symmetric than
the Hamiltonian itself. Instead of the full translation symmetry
it exhibits a discrete translation symmetry with an underlying
Bravais lattice $B$.  Effective Hamiltonians describing the dynamics
above the ground state again have the reduced symmetry described
by a Bravais lattice. The simplest example for a class of such
effective Hamiltonians is the Hamiltonian a single electron in a
lattice periodic potential. The study of this class of Hamiltonians
is called band theory. We will have a closer look at it below.
In any case, single-particle Hamiltonians $H$ which are invariant
under the action of a Bravais lattice,
\begin{equation}
     [H, U_{\av_j}] = 0 \epc \qd j = 1, 2, 3 \epc
\end{equation}
play an important role in solid state physics. As we have seen,
their wave functions can be labeled by the lattice momentum
quantum numbers. This is the statement of Bloch's theorem.
\begin{theorem}
Bloch \cite{Bloch29}. The eigenfunctions of a single-particle Hamiltonian $H$,
periodic with respect to a Bravais lattice $B$, can be labeled
by lattice momenta $\kv \in BZ$, where $BZ \subset \overline B$
is the Brillouin zone associated with $B$. An eigenfunction
$\Ps_\kv$ of $H$ then has the following properties with respect to
translations by Bravais lattice vectors,
\begin{equation}
     \Ps_\kv (\xv + \Rv) = \re^{\i \<\kv, \Rv\>} \Ps_\kv (\xv)
\end{equation}
for all $\Rv \in B$.
\end{theorem}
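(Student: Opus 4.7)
The plan is to combine three ingredients already established in the excerpt: the mutual commutativity of the shift operators $U_{\av_j}$, $j=1,2,3$ (eq.~(\ref{comu})), their commutation with $H$ (which is the hypothesis), and their unitarity. Together these make $\{H, U_{\av_1}, U_{\av_2}, U_{\av_3}\}$ a commuting family of normal operators, and the spectral theorem then supplies a joint eigenbasis. Bloch's theorem will emerge by reading off how such a joint eigenfunction transforms under translations by an arbitrary $\Rv \in B$.

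First I would pick a joint eigenfunction $\Ps$ of all three $U_{\av_j}$ and of $H$. Since each $U_{\av_j}$ is unitary, its eigenvalue has modulus one and can be written as $\om(\av_j) = \re^{\i 2\p v_j}$ with $v_j \in \mathbb{R}$. Introducing $\kv = v_1 \bv_1 + v_2 \bv_2 + v_3 \bv_3$ in terms of the reciprocal basis satisfying (\ref{dualbases}), and decomposing an arbitrary $\Rv = \ell \av_1 + m \av_2 + n \av_3 \in B$ so that $U_\Rv = U_{\av_1}^\ell U_{\av_2}^m U_{\av_3}^n$, the eigenvalue relations telescope into $U_\Rv \Ps(\xv) = \re^{\i \<\kv, \Rv\>} \Ps(\xv)$, which is precisely the Bloch form (\ref{blochform}).

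It remains to restrict the label $\kv$ to the first Brillouin zone. For any $\gv \in \overline{B}$ the duality relation yields $\<\gv, \Rv\> \in 2\p \mathbb{Z}$, so $\kv$ and $\kv + \gv$ produce identical phase factors $\re^{\i \<\kv, \Rv\>}$ for every $\Rv \in B$ and therefore label the same one-dimensional irreducible representation of the Abelian group $B$. Choosing the unique representative from the Wigner--Seitz cell of $\overline{B}$, i.e., from $BZ$, pins $\kv$ down and gives the statement of the theorem.

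The main obstacle, as I see it, is the existence of the joint eigenbasis in the degenerate case: if an eigenspace of $H$ has dimension greater than one, the $U_{\av_j}$ need not act on it as multiples of the identity, and one has to invoke the spectral theorem for a commuting family of unitary operators to simultaneously diagonalize them within each eigenspace of $H$. A fully rigorous treatment for operators with continuous spectrum requires direct-integral decompositions, but at the level of this lecture the commuting-normal-operators argument is the standard and conceptually complete justification.
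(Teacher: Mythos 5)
Your argument is correct and follows essentially the same route as the lecture notes: joint eigenfunctions of the commuting unitary shift operators $U_{\av_j}$ and of $H$, unitarity forcing modulus-one eigenvalues written as phases, the duality relation (\ref{dualbases}) assembling these into $\re^{\i \<\kv,\Rv\>}$, and the observation that $\kv$ is defined only modulo $\overline{B}$ so it may be restricted to the Brillouin zone. Your added remark on handling degenerate eigenspaces of $H$ via simultaneous diagonalization is a welcome sharpening of a point the notes pass over silently.
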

Let $\Ps_{\kv, \a}$ an eigenstate with lattice momentum $\kv$
of a lattice periodic Hamiltonian~$H$. Here we denote all other
quantum numbers needed to specify the state by $\a$. According
to Bloch's theorem $u_{\kv, \a} (\xv) =  \re^{- \i \<\kv, \xv\>}
\Ps_{\kv, \a} (\xv)$ is a lattice periodic function, $u_{\kv, \a}
(\xv + \Rv) = u_{\kv, \a} (\xv)$. This implies the following
corollary to Bloch's theorem.
\begin{corollary}
The eigenfunctions of a lattice periodic single-particle Hamiltonian
$H$ are of the form
\begin{equation}
     \Ps_{\kv, \a} (\xv) = \re^{\i \<\kv, \xv\>} u_{\kv, \a} (\xv) \epc
\end{equation}
where $\kv \in BZ$ is a lattice momentum vector and $u_{\kv, \a}$
is a lattice periodic function.
\end{corollary}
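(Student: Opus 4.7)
The plan is to define $u_{\kv, \alpha}$ by the obvious formula suggested in the paragraph preceding the statement, namely
\[
     u_{\kv, \alpha} (\xv) := \re^{- \i \<\kv, \xv\>} \Ps_{\kv, \alpha} (\xv),
\]
and then verify that this $u_{\kv, \alpha}$ is lattice periodic. Once that is checked, the claimed form of $\Ps_{\kv, \alpha}$ follows by multiplying both sides by $\re^{\i \<\kv, \xv\>}$, and the corollary is proved.

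To carry this out, I would start from Bloch's theorem: since $\Ps_{\kv, \alpha}$ is an eigenfunction of the lattice periodic Hamiltonian with lattice momentum $\kv$, for any $\Rv \in B$ one has $\Ps_{\kv, \alpha} (\xv + \Rv) = \re^{\i \<\kv, \Rv\>} \Ps_{\kv, \alpha} (\xv)$. Then I compute directly
\[
     u_{\kv, \alpha} (\xv + \Rv)
        = \re^{- \i \<\kv, \xv + \Rv\>} \Ps_{\kv, \alpha} (\xv + \Rv)
        = \re^{- \i \<\kv, \xv\>} \re^{- \i \<\kv, \Rv\>} \re^{\i \<\kv, \Rv\>} \Ps_{\kv, \alpha} (\xv)
        = u_{\kv, \alpha} (\xv),
\]
so that $u_{\kv, \alpha}$ is indeed invariant under translations by every $\Rv \in B$, i.e.\ lattice periodic. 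Substituting back gives $\Ps_{\kv, \alpha} (\xv) = \re^{\i \<\kv, \xv\>} u_{\kv, \alpha} (\xv)$, as required.

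Honestly, there is no real obstacle here: the corollary is essentially a repackaging of Bloch's theorem, and the only thing that could go wrong is miskeeping track of the signs in the exponentials. That the lattice momentum $\kv$ may be chosen in the Brillouin zone is inherited verbatim from Bloch's theorem and needs no separate argument.
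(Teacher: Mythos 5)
Your proposal is correct and follows essentially the same route as the paper: define $u_{\kv,\a}(\xv) = \re^{-\i\<\kv,\xv\>}\Ps_{\kv,\a}(\xv)$ and use Bloch's theorem $\Ps_{\kv,\a}(\xv+\Rv) = \re^{\i\<\kv,\Rv\>}\Ps_{\kv,\a}(\xv)$ to check that $u_{\kv,\a}$ is lattice periodic. Your explicit verification of the cancellation of the phase factors is exactly the short argument the paper leaves implicit.
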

Hence, we may think of the eigenfunctions of a lattice periodic
Hamiltonian as of amplitude-modulated plane waves, for which the
modulation has the periods of the corresponding Bravais lattice.

\subsection{Periodic boundary conditions}
\label{sec:pbcs}
For the calculation of thermodynamic quantities in the framework
of statistical mechanics (in particular) it is necessary to count
states. For this reason we prefer systems of finite size which
have a discrete spectrum. In solid state physics this can be
enforced by introducing `boundaries' (by putting the system into
a box). After having counted the states one considers the limit,
when the systems size goes to infinity (the thermodynamic limit).

In general, boundaries are incompatible with lattice translations
induced by a Bravais lattice. They break the translational
symmetry and invalidate Bloch's theorem. A way out of this dilemma
is by employing periodic boundary conditions.

For a $d$-dimensional systems periodic boundary conditions can
be realized by starting with a parallelepiped and identifying opposite
faces. This amounts to bending the parallelepiped to a torus
in $d+1$ dimensions. For this reason periodic boundary conditions
are also sometimes called toroidal boundary conditions.

Imposing periodic boundary conditions on a 1d system of $L$ sites
with lattice constant $a$ we find for a state with lattice momentum
$k$ that
\begin{equation}
     \Ps_k (x + aL) = \Ps_k (x) = \re^{\i k a L} \Ps_k (x) \epp
\end{equation}
For this to hold, the lattice momentum must be restricted to
the values
\begin{equation}
     k = m \frac{2 \p}{a L} \epc
\end{equation}
where $m \in {\mathbb Z}$ in such a way that $k$ lies in the
Brillouin zone. The reciprocal lattice is generated by $b = 2\p/a$.
The Brillouin zone is the interval $BZ = [-\p/a, \p/a)$, and
$k \in BZ\ \Leftrightarrow\ -L/2 \le m < L/2$. Thus, for every
$L \in {\mathbb N}$ there are $L$ inequivalent $k$s in the
Brillouin zone.

The argument is similar in any number of dimensions. In order to
obtain the lattice momentum quantization condition, e.g.\ in 3d,
we expand $\kv$ in a basis of primitive vectors of the reciprocal
lattice, $\kv = v_1 \bv_1 + v_2 \bv_2 + v_3 \bv_3$. Then, for a
state $\Ps_\kv$ of lattice momentum~$\kv$,
\begin{equation}
     \Ps_\kv (\xv + L \av_j) = \re^{\i \<\kv, L \av_j\>} \Ps_\kv (\xv)
        = \Ps_\kv (\xv) \epc
\end{equation}
requiring that $\<\kv, L \av_j\> = 2\p v_j L = 2 \p m_j$ for
$m_j \in {\mathbb Z}$. Restricting $\kv$ to the first Brillouin zone
means
\begin{equation}
     v_j = \frac{m_j}{L} \mod 1\ \Leftrightarrow\
           - \frac{L}{2} \le m_j < \frac{L}{2} \epc
\end{equation}
i.e., there are $L^3$ inequivalent lattice momenta in the first
Brillouin zone. Let us rephrase this statement in the following form.
\begin{lemma}
There are as many lattice momenta in the Brillouin zone that are
compatible with periodic boundary conditions as unit cells in the
crystal.
\end{lemma}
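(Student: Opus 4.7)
The plan is to show directly that both counts — unit cells and allowed lattice momenta — equal $L^d$ (I will present it in $d=3$, with the lower-dimensional cases being identical in form). Since the argument is essentially a re-packaging of the quantization calculation just carried out in Section~\ref{sec:pbcs}, I will present it as a clean bijection rather than a new computation.

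First I would set up the counting of unit cells. Periodic boundary conditions are imposed by considering a parallelepiped $V_L$ spanned by $L \av_1, L \av_2, L \av_3$ and identifying opposite faces. Since the unit cell spanned by $\av_1, \av_2, \av_3$ has volume $V_u$ and $V_L$ has volume $L^3 V_u$, the parallelepiped is tiled by exactly $L^3$ translates of the unit cell under the finite group $B/L B \cong ({\mathbb Z}/L {\mathbb Z})^3$. Hence the crystal contains $L^3$ unit cells.

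Next I would count the allowed lattice momenta. Expanding $\kv = v_1 \bv_1 + v_2 \bv_2 + v_3 \bv_3$ with respect to primitive reciprocal vectors and using $\<\bv_j, \av_k\> = 2 \p \de_{jk}$, the PBC condition $\Ps_\kv(\xv + L\av_j) = \Ps_\kv(\xv)$ together with Bloch's theorem forces $\re^{\i 2\p L v_j} = 1$, i.e.\ $v_j = m_j/L$ for some $m_j \in {\mathbb Z}$. To restrict $\kv$ to one unit cell of $\overline{B}$ (e.g.\ the Brillouin zone), we further require $v_j \in [-\tfrac{1}{2}, \tfrac{1}{2})\ \mathrm{mod}\ 1$, which selects $-L/2 \le m_j < L/2$ and hence $L$ inequivalent values of each $m_j$. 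The product gives $L^3$ inequivalent lattice momenta.

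Comparing the two counts yields the claim. There is really no obstacle here beyond being careful about the ``$\mathrm{mod}\ 1$'' identification on the $v_j$; the mildly delicate point is to justify that any two $\kv$'s differing by a reciprocal lattice vector $\gv \in \overline{B}$ indeed label the same Bloch representation, which was already established via $\re^{\i \<\kv + \gv, \Rv\>} = \re^{\i \<\kv, \Rv\>}$ in Section~\ref{sec:bloch}. Once this is invoked, the correspondence $(m_1, m_2, m_3) \in ({\mathbb Z}/L {\mathbb Z})^3 \leftrightarrow \kv \in BZ$ is a bijection with the same index set that enumerates the $L^3$ unit cells, completing the proof.
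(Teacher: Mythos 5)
Your proposal is correct and follows essentially the same route as the paper: the lemma there is simply a rephrasing of the preceding quantization computation, in which $\kv = (m_1 \bv_1 + m_2 \bv_2 + m_3 \bv_3)/L$ with $-L/2 \le m_j < L/2$ yields $L^3$ inequivalent momenta in the Brillouin zone, matched against the $L^3$ unit cells of the periodically repeated crystal. Your only addition is to make the unit-cell count explicit via the quotient $B/LB \cong ({\mathbb Z}/L{\mathbb Z})^3$, which the paper leaves implicit in the construction of the parallelepiped spanned by $L\av_1, L\av_2, L\av_3$.
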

\begin{remark}
In 3d periodic boundary conditions cannot be physically realized.
However, the density of states of a macroscopically large system
($\sim 10^{23}$ particles) is practically independent of the
boundary conditions. As long as we are not interested in the
boundaries themselves, periodic boundary conditions are justified.
\end{remark}
\begin{remark}
Mathematically periodic boundary conditions imply that we are
dealing with functions that are periodic with periods of the
large parallelepiped spanned by $L \av_1$, $L \av_2$, $L \av_3$.
Hamiltonians must be defined in a way that is compatible with
this periodicity. On the corresponding space of states the
generators $U_{\av_j}$ of translations by primitive vectors turn
into generators of the cyclic group of order $L$,
\begin{equation}
     U_{\av_j}^L = \id \epc \qd j = 1, 2, 3 \epc
\end{equation}
which equivalently might have served as a starting point for
introducing periodic boundary conditions.
\end{remark}

\mysection{Phonons -- spectrum and states}
\subsection{The Hamiltonian of the lattice vibrations in harmonic approximation}
In lecture~\ref{lect:bo} we have discussed the Born-Oppenheimer
approximation. We have seen that, in this approximation, the
motion of the much heavier ions decouples from the motion
of the electrons up to fourth order in an expansion in
the deviations $\uv$ of the ion positions from their equilibrium
values $\Rv^{(0)}$. In most of this lecture we shall be dealing
with ideal solids. By definition, the equilibrium positions of
the ions $\Rv^{(0)}$ in an ideal solid are the points of a crystal
lattice. Solids in nature can be very close to ideal. The idealization
of a perfect crystal is a good starting point to describe real solids.

Consider a crystal with Bravais lattice $B$ and $N$ ions per
unit cell. In such a crystal it makes sense to label the
coordinates of the vector $\uv$ as $u^\a (\Rv)$, where
$\Rv \in B$ and
\begin{equation}
     \a = (r,j) \in I = \{1, \dots, N\} \times \{x, y, z\} \epp
\end{equation}
Here $r = 1, \dots, N$ counts the ions in a unit cell, and
$j = x, y, z$ denotes their Cartesian coordinates. For
the dimensionless ion masses (cf.\ Section~\ref{sec:adiadecoupling})
we introduce the notation
\begin{equation}
     \m_\a = \m_{(r,j)} = \frac{M_r}{M}
\end{equation}
Then the operator $T_\uv$, equation (\ref{deftu}), of the kinetic
energy of the ions takes the form
\begin{equation} \label{phononkineticenergy}
     T_\uv = - \2 \sum_{\Rv \in B} \sum_{\a \in I}
                  \frac{1}{\m_\a} \6_{u^\a (\Rv)}^2 \epp
\end{equation}

We shall treat the motion of the ions within the harmonic
approximation (\ref{lathamharmonic}). This will allow us
to develop a rather simple and general theory which
nevertheless describes many of the experimental observations
quite accurately. Within the harmonic approximation the
potential energy of the ions can be written as
\begin{equation} \label{phononpotentialenergy}
     V(\uv) = \2 \sum_{\Rv, \Sv \in B} \sum_{\a, \be \in I}
              u^\a (\Rv) K_{\a \be} (\Rv, \Sv) u^\be (\Sv) \epc
\end{equation}
where
\begin{equation} \label{deffm}
     K_{\a \be} (\Rv, \Sv) =
        \frac{\6^2 \e_0 (\Xv)}{\6 X^\a (\Rv) \6 X^\be (\Sv)}
	      \biggr|_{\Xv = \Rv^{(0)}}
\end{equation}
is the so-called force matrix (see (\ref{harmpot})).

Thus, the Hamiltonian of the lattice vibrations in harmonic
approximation is
\begin{equation} \label{hharmvib}
     H = - \2 \sum_{\Rv \in B} \sum_{\a \in I}
              \frac{1}{\m_\a} \6_{u^\a (\Rv)}^2
         + \2 \sum_{\Rv, \Sv \in B} \sum_{\a, \be \in I}
              u^\a (\Rv) K_{\a \be} (\Rv, \Sv) u^\be (\Sv) \epp
\end{equation}
It is a quadratic form in the position operators of the
ions and the corresponding derivatives. We will diagonalize
this quadratic form. This will reduce the spectral problem
of the Hamiltonian to the spectral problem of independent
harmonic oscillators describing the quantized normal modes
of the ideal harmonic solid. In order to control the number
of the normal modes, we will employ periodic boundary
conditions as introduced in the previous lecture.
\subsection{Implications of the translational invariance}
Fix a set of primitive vectors $\{\av_1, \av_2, \av_3\}
\subset B$. Define the action of $B$ on functions $v$ on $B$
obeying periodic boundary conditions by
\begin{equation}
     U_{\av_j} v (\Rv) = v (\Rv + \av_j) \epp
\end{equation}
The space spanned by such functions is a finite dimensional
vector space. For the action of $U_{\av_j}$ on this space there
is an $L \in {\mathbb N}$ such that $U_{\av_j}^L = \id$,
$j = 1, 2, 3$. Thus, if $\om_j$ is an eigenvalue of $U_{\av_j}$
we must have $|\om_j| = 1$ (because of unitarity) and $\om_j^L = 1$.
It follows that
\begin{equation}
     \om_j = \re^\frac{\i m_j 2 \p}{L} = \re^{\i \<\kv, \av_j\>}
\end{equation}
for some $\kv = (m_1 \bv_1 + m_2 \bv_2 + m_3 \bv_3)/L \in BZ$.
As we have seen in the previous lecture there are altogether
$L^3$ such vectors.

It is easy to find the corresponding eigenfunctions of the
shift operators. If $v_\kv$ is an eigenfunctions with lattice
momentum $\kv$, and $\Rv = \ell \av_1 + m \av_2 + n \av_3 \in B$,
then
\begin{multline} \label{shiftev}
     v_\kv (\Rv) = v_\kv (\ell \av_1 + m \av_2 + n \av_3) =
              U_{\av_1}^\ell U_{\av_2}^m U_{\av_3}^n v_\kv (0) \\[1ex]
            = \om_1^\ell \om_2^m \om_3^n v_\kv (0)
	    = \re^{\i \<\kv, \Rv\>} v_\kv (0)
\end{multline}
for all $\Rv \in B$. This determines $v_\kv$ up to normalization.
We fix the normalization by setting
\begin{equation} \label{fouriernorm}
     v_\kv (0) = \frac{1}{\sqrt{L^3}} \epp
\end{equation}
All joint eigenfunctions of the $U_{\av_j}$ are of this form, and
all these functions are joint eigenfunctions of the $U_{\av_j}$.
Hence, they form a basis of on the space of complex valued functions
on $B$.
\begin{remark}
\begin{enumerate}
\item
We have just invented the (discrete) Fourier transformation.
\item
With the choice (\ref{fouriernorm}) the usual Hermitian scalar
product of two eigenfunctions takes the values
\begin{equation} \label{fourierortho}
     \<v_\kv, v_\qv\> = \sum_{\Rv \in B} v_\kv^* (\Rv) v_\qv (\Rv)
        = \frac{1}{L^3} \sum_{\Rv \in B} \re^{- \i \<\kv - \qv, \Rv\>}
	= \de_{\kv, \qv}
\end{equation}
for any two $\kv, \qv \in BZ$.
\end{enumerate}
\end{remark}
\subsection{Block diagonalization of the force matrix}
We will first of all diagonalize the force matrix 
$K_{\a \be} (\Rv, \Sv)$ defined in (\ref{deffm}). For this purpose
and for later use as well we list its main properties.
\begin{enumerate}
\item
Symmetry. From its very definition as a second derivative matrix
and from the commutativity or the partial derivatives we get at once
that
\begin{equation} \label{kmsymmetry}
     K_{\a \be} (\Rv, \Sv) = K_{\be \a} (\Sv, \Rv) \epp
\end{equation}
\item
Translation symmetry.
\begin{equation}
     K_{\a \be} (\Rv + \av, \Sv + \av) = K_{\a \be} (\Rv, \Sv) \qd
        \forall\ \av \in B \epp
\end{equation}
Inserting here $\av = - \Sv$ we obtain
\begin{equation}
     K_{\a \be} (\Rv, \Sv) = K_{\a \be} (\Rv - \Sv, 0)
        = K_{\a \be} (\Rv - \Sv) \epc
\end{equation}
where the second equation is a definition. The meaning is that
forces between ions depend only on the relative positions of
unit cells.
\item
In crystal lattices with inversion symmetry we have in addition
that
\begin{equation}
     K_{\a \be} (\Rv) = K_{\a \be} (- \Rv) \epp
\end{equation}
Combining this with (\ref{kmsymmetry}) we see that the force
matrix in crystal lattices with inversion symmetry is symmetric
in every unit cell,
\begin{equation}
     K_{\a \be} (\Rv) = K_{\be \a} (\Rv) \epp
\end{equation}
\end{enumerate}
The force matrix defines an operator on the space of functions
with periodic boundary conditions on $B$,
\begin{equation}
     \widehat K_{\a \be} f (\Rv) = \sum_{\Sv \in B} K_{\a \be} (\Rv - \Sv) f (\Sv) \epp
\end{equation}
It is easy to see that $\widehat K_{\a \be}$ commutes with the
shift operators $U_{\av_j}$, $j = 1, 2, 3$,
\begin{multline}
     U_{\av_j} \widehat K_{\a \be} f (\Rv)
        = \sum_{\Sv \in B} K_{\a \be} (\Rv + \av_j - \Sv) f (\Sv) \\
        = \sum_{\Sv \in B} K_{\a \be} (\Rv - \Sv) f (\Sv + \av_j)
	= \widehat K_{\a \be} U_{\av_j} f (\Rv) \epc \qd
	  \Leftrightarrow\ [\widehat K_{\a \be}, U_{\av_j}] = 0 \epp
\end{multline}
Hence, $U_{\av_j}$ and $\widehat K_{\a \be}$ possess a joint system
of eigenfunctions. Since the $v_\kv$, equation (\ref{shiftev}),
form already an orthonormal basis of non-degenerate eigenfunctions,
they must be also eigenfunctions of $\widehat K_{\a \be}$,
\begin{equation} \label{fmonfourier}
     \widehat K_{\a \be} v_\kv (\Rv) = \k_{\a \be} (\kv) v_\kv (\Rv) \epp
\end{equation}

In order to block diagonalize the quadratic form
(\ref{phononpotentialenergy}) we expand the displacements $u^\a (\Rv)$
into their Fourier modes,
\begin{equation} \label{dispfourier}
     u^\a (\Rv) = \sum_{\kv \in BZ} \x_\kv^\a v_\kv (\Rv) \epp
\end{equation}
Using that $u^\a (\Rv) \in {\mathbb R}$ and $v_\kv^* = v_{-\kv}$
we see that
\begin{equation} \label{fourierreality}
     {\x_\kv^\a}^* = \x_{-\kv}^\a \epp
\end{equation}
Inserting (\ref{dispfourier}) into (\ref{phononpotentialenergy})
and making use of (\ref{fmonfourier}), (\ref{fourierreality}) we
obtain
\begin{multline}
     V(\uv) = \2 \sum_{\a, \be \in I} \bigl\<u^\a, \widehat K_{\a \be} u^\be\bigr\>
       = \2 \sum_{\a, \be \in I} \sum_{\qv, \kv \in BZ} {\x_\qv^\a}^* \x_\kv^\be
            \bigl\<v_\qv, \widehat K_{\a \be} v_\kv\bigr\> \\
       = \2 \sum_{\qv \in BZ} \sum_{\a, \be \in I}
	    \x_{-\qv}^\a \k_{\a \be} (\qv) \x_\qv^\be \epc
\end{multline}
the block diagonal form of the potential energy.
\subsection{Transformation of the kinetic energy}
Let us now apply the same transformation to the kinetic energy
operator. First of all
\begin{equation}
     \<u^\a, v_{-\kv}\> = \sum_{\qv \in BZ} \x_{-\qv}^\a \<v_\qv, v_{-\kv}\>
        = \x_\kv^\a = \sum_{\Sv \in B} u^\a (\Sv) v_{-\kv} (\Sv) \epc
\end{equation}
implying that
\begin{equation}
     \frac{\6}{\6 u^\a (\Rv)}
        = \sum_{\kv \in BZ} \frac{\6 \x_\kv^\a}{\6 u^\a (\Rv)}
	                    \frac{\6}{\6 \x_\kv^\a}
        = \sum_{\kv \in BZ} v_{-\kv} (\Rv) \frac{\6}{\6 \x_\kv^\a} \epp
\end{equation}
It follows that
\begin{multline}
     T_\uv = - \2 \sum_{\a \in I} \frac{1}{\m_\a} \sum_{\Rv \in B}
                  \sum_{\qv, \kv \in BZ} v_{-\qv} (\Rv) v_{-\kv} (\Rv)
		        \frac{\6}{\6 \x_\qv^\a} \frac{\6}{\6 \x_\kv^\a} \\[1ex]
           = - \2 \sum_{\a \in I} \frac{1}{\m_\a}
                  \sum_{\qv, \kv \in BZ} \<v_\qv, v_{-\kv}\>
		        \frac{\6}{\6 \x_\qv^\a} \frac{\6}{\6 \x_\kv^\a}
           = - \2 \sum_{\qv \in BZ} \sum_{\a \in I} \frac{1}{\m_\a}
		        \frac{\6}{\6 \x_{-\qv}^\a} \frac{\6}{\6 \x_\qv^\a} \epp
\end{multline}
Here we have used (\ref{fourierortho}) in the second equation.
We see that the lattice Fourier transformation has diagonalized
$T_\uv$.

In the next step we want to completely diagonalize the
force matrix while keeping the diagonal form of the kinetic
energy operator. To achieve the latter goal, we first
rescale the complex coordinates, setting
\begin{equation}
     \h_\qv^\a = \sqrt{\m_\a}\, \x_\qv^\a\ \then\
     \frac{\6}{\6 \h_\qv^\a} = \frac{1}{\sqrt{\m_\a}} \frac{\6}{\6 \x_\qv^\a} \epp
\end{equation}
Further defining
\begin{equation}
     \widetilde \k_{\a \be} (\qv) = \frac{\k_{\a \be} (\qv)}{\sqrt{\m_\a \m_\be}}
\end{equation}
we obtain the following form of the Hamiltonian (\ref{hharmvib}),
\begin{equation}
     H = \2 \sum_{\qv \in BZ} \biggl\{ - \sum_{\a \in I}
            \frac{\6}{\6 \h_{-\qv}^\a} \frac{\6}{\6 \h_\qv^\a}
	  + \sum_{\a, \be \in I} \h_{-\qv}^\a \widetilde \k_{\a \be} (\qv) \h_\qv^\be
	    \biggr\} \epp
\end{equation}
\subsection{\boldmath Properties of the matrix $\widetilde \k (\qv)$}
Before we can proceed we have to understand the properties
of the matrix $\widetilde \k (\qv)$.
\begin{enumerate}
\item
$\widetilde \k (\qv)$ is Hermitian, since
\begin{align}
     \k_{\a \be}^* (\qv) & = \bigl\<v_\qv, \widehat K_{\a \be} v_\qv\bigr\>^* \notag \\[1ex]
        & = \sum_{\Rv, \Sv \in B} v_\qv (\Rv) K_{\a \be} (\Rv - \Sv) v_{-\qv} (\Sv) \notag \\
        & = \sum_{\Rv, \Sv \in B} v_\qv (\Rv) K_{\be \a} (\Sv - \Rv) v_{-\qv} (\Sv) \notag \\
        & = \sum_{\Rv, \Sv \in B} v_{-\qv} (\Rv) K_{\be \a} (\Rv - \Sv) v_\qv (\Sv)
          = \bigl\<v_\qv, \widehat K_{\be \a} v_\qv\bigr\> = \k_{\be \a} (\qv) \epp
\end{align}
Here we have used the symmetry of the force matrix in the third
equation.
\item
$\widetilde \k (\qv)$ is non-negative. This follows, since the potential
energy $V$ is assumed to have a total minimum for $\uv = 0$ with $V = 0$.
Hence
\begin{equation}
     V (\uv) = \2 \sum_{\a, \be \in I}
                  \h_{-\qv}^\a \widetilde \k_{\a \be} (\qv) \h_\qv^\be \ge 0 \epp
\end{equation}
\item
$\widetilde \k (\qv)$ and $\widetilde \k (-\qv)$ are similar matrices.
First of all
\begin{equation}
     \k_{\a \be}^* (\qv)
        = \sum_{\Rv, \Sv \in B} v_\qv (\Rv) K_{\a \be} (\Rv - \Sv) v_{-\qv} (\Sv)
	= \k_{\a \be} (- \qv) \epc
\end{equation}
since the force matrix is real. Then also
\begin{equation}
     \widetilde \k_{\a \be}^* (\qv) = \widetilde \k_{\a \be} (- \qv) \epp
\end{equation}

According to (i) and (ii) the matrix $\widetilde \k (\qv)$ can be
diagonalized by a unitary transformation and has a non-negative
spectrum $\{\om_\a^2 (\qv)\}_{\a \in I}$ for every $\qv \in BZ$.
Let $\{\yv_\a (\qv)\}_{\a \in I}$ the set of corresponding orthonormal
eigenvectors. Then
\begin{equation}
     \widetilde \k^* (\qv) \yv_\a (\qv)^* = \om_\a^2 (\qv) \yv_\a (\qv)^*
        = \widetilde \k (-\qv) \yv_\a (\qv)^*
\end{equation}
implying that $\{\om_\a^2 (\qv)\}_{\a \in I}$ is the spectrum of
$\widetilde \k (-\qv)$. Hence, $\widetilde \k (\qv)$ and
$\widetilde \k (-\qv)$ are similar matrices.

Since, on the other hand,
\begin{equation}
     \widetilde \k (-\qv) \yv_\a (-\qv) = \om_\a^2 (-\qv) \yv_\a (-\qv)
\end{equation}
by definition of the eigenvectors and eigenvalues, the identification
\begin{equation} \label{ytreversal}
     \yv_\a (-\qv) = \yv_\a (\qv)^*
\end{equation}
(which is one possible choice of indexing the eigenvectors of
$\widetilde \k (-\qv)$ once the eigenvectors of $\widetilde \k (\qv)$
are given) implies that
\begin{equation} \label{omegaeven}
     \om_\a^2 (\qv) = \om_\a^2 (-\qv) \epp
\end{equation}
\end{enumerate}

\subsection{Exercise 6. Classical harmonic chain with various boundary
conditions}
The lectures on lattice vibrations will be accompanied by a set
of exercises on the classical harmonic chain with broken translation
invariance. We shall study the influence of fixed and open boundary
conditions and of a mass defect. There is a good deal to learn from these
exercises, namely something about the irrelevance of the boundary
conditions as far as bulk thermodynamic properties are concerned,
but also something about the typical effects of impurities, such
as the appearance of localized states and impurity levels inside
the band gap.

We start with an important technical device, the transfer matrix, and
with the effect of fixed and open boundary conditions. For this purpose
consider $N$ masses $m_1,\ldots,m_N$ coupled to a linear, harmonic chain
by $N-1$ springs with force constants $k>0$. Periodic boundary conditions
are realized by an additional identical spring connecting the masses $m_1$
and $m_N$. For fixed boundary conditions, the masses $m_1$ and $m_N$ are
coupled with springs of spring constants $k$ to a rigid wall, while open
boundaries are realized if the masses $m_1$ and $m_N$ are not at all
coupled to each other.

In this exercise we shall consider the particular case of equal masses 
$m_1=\ldots=m_N=m$. We want to analyze the harmonic chain by the so-called
transfer matrix method. Upon slight modifications, it is possible to treat
the different boundary conditions in a similar way.
\begin{enumerate}
\item
Find the equations of motion for the deviations $x_n$ from the equilibrium
positions in the periodic case. Employing the ansatz $x_n(t)=\re^{i\omega t}x_n$
the equations of motion imply an eigenvalue problem of the form
$T\xv=\Omega^2\xv$ with $\xv=(x_1,\ldots,x_N)^t$. Obtain the
matrix $T$, and show how $\Omega$ depends on $\omega,m$ and $k$.
\item
Show, that, with the substitution $\psi(n)=x_n,\varphi(n)=\psi(n-1)$, the
eigenvalue problem in (i) can be reformulated as
\[
     \begin{pmatrix}\psi(n+1)\\\varphi(n+1)\end{pmatrix}
        = L_n(\Omega^2)\begin{pmatrix}\psi(n)\\\varphi(n)\end{pmatrix},\
	  L_n(\Omega^2)=L(\Omega^2)=\begin{pmatrix}2-\Omega^2 & -1\\1 & 0\end{pmatrix}.
\]
In the periodic case $L_n(\Omega^2)$ is independent of the site index. Calculate
the eigenvalues and the corresponding eigenvectors of $L(\Omega^2)$. Because
$L(\Omega^2)$ acts like a translation operator, it is useful to write the
eigenvalues in the form $\re^{\pm i\kappa}$. Diagonalize the equation
\[
     \begin{pmatrix}\psi(n+1)\\\varphi(n+1)\end{pmatrix}
        =L^n(\Omega^2)\begin{pmatrix}\psi(1)\\\varphi(1)\end{pmatrix} \epp
\]
How can $\Omega^2$ be expressed in terms of $\kappa$?
\item
The periodic boundary conditions turn into $\psi(N+1)=\psi(1)$ and
$\varphi(N+1)=\varphi(1)$. From this determine all possible eigenfrequencies $\omega$!
\item
Which modification is required for fixed boundaries? Determine all possible
eigenfrequencies $\omega$ in this case.
\item
Show that the modifications necessary for open boundaries lead to the equation
\[
     (1-\Omega^2,-1)L^{N-2}(\Omega^2)\begin{pmatrix}1-\Omega^2\\1\end{pmatrix}=0 \epp
\]
With this, calculate again all possible eigenfrequencies $\omega$. What is the
physical meaning of the solution $\omega=0$? How are the eigenfrequencies of the
open chain and of the chain with fixed boundaries connected with each other?
\end{enumerate}

\mysection{Phonons -- spectrum and states continued}
\subsection{Reduction of the Hamiltonian to a diagonal quadratic form}
For every $\qv \in BZ$ we define a unitary $3N \times 3N$ matrix
\begin{equation}
     Y (\qv) = (\yv_{(1,x)} (\qv), \dots, \yv_{(N,z)} (\qv)) \epp
\end{equation}
This matrix diagonalizes $\widetilde \k (\qv)$,
\begin{equation}
     \widetilde \k (\qv) = 
        Y(\qv) \diag\bigl(\om_{(1,x)}^2, \dots, \om_{(N,z)}^2\bigr) Y^+ (\qv)
\end{equation}
and, because of (\ref{ytreversal}), has the property that
\begin{equation} \label{matytreversal}
     Y(\qv)^* = Y(- \qv) \epp
\end{equation}
Letting
\begin{equation}
     \xv_\qv = Y^+ (\qv)\, {\bm \h}_\qv\
        \Leftrightarrow\ x_\qv^\a = \bigl\<\yv_\a (\qv), {\bm \h}_\qv\bigr\>
\end{equation}
we see that
\begin{equation}
     \sum_{\a, \be \in I} \h_{-\qv}^\a \widetilde \k_{\a \be} (\qv) \h_\qv^\be
        = {\bm \h}_{- \qv}^t \widetilde \k (\qv) {\bm \h}_\qv
	= \sum_{\a \in I} \om_\a^2 (\qv) x_{- \qv}^\a x_\qv^\a \epc
\end{equation}
while for the kinetic term
\begin{multline}
     \sum_{\a \in I} \frac{\6}{\6 \h_{-\qv}^\a} \frac{\6}{\6 \h_\qv^\a}
        = \sum_{\a, \be, \g \in I}
	  \frac{\6 x_{- \qv}^\be}{\6 \h_{-\qv}^\a} \frac{\6 x_\qv^\g}{\6 \h_\qv^\a}
	  \frac{\6}{\6 x_{-\qv}^\be} \frac{\6}{\6 x_\qv^\g} \\
        = \sum_{\a, \be, \g \in I}
	  {Y^+}_\a^\be (- \qv) {Y^+}_\a^\g (\qv)
	  \frac{\6}{\6 x_{-\qv}^\be} \frac{\6}{\6 x_\qv^\g}
        = \sum_{\a \in I}
	  \frac{\6}{\6 x_{-\qv}^\a} \frac{\6}{\6 x_\qv^\a} \epp
\end{multline}
Here we have used (\ref{matytreversal}) in the last equation.
Altogether, we have transformed $H$ into a diagonal quadratic form,
\begin{equation} \label{hamdiagqform}
     H = \2 \sum_{\qv \in BZ} \sum_{\a \in I}
               \biggl\{ - \frac{\6}{\6 x_{-\qv}^\a} \frac{\6}{\6 x_\qv^\a}
	                + \om_\a^2 (\qv) x_{- \qv}^\a x_\qv^\a \biggr\} \epp
\end{equation}
The term in the bracket can be interpreted as the Hamiltonian
of a 1d harmonic oscillator with `complex coordinates'.
\subsection{Zero modes}
\label{sec:zeromodes}
Before going on we have to discuss the question whether the
functions $\om_\a (\qv)$ can be zero. What we can say is
that there are always at least three values of $\a$ for which
$\om_\a (0) = 0$. These special `modes' are connected with the
center of mass motion of the solid. Their existence can be
inferred from the translation invariance of the Hamiltonian
(\ref{hharmvib}) which is inherited from the full Hamiltonian
(\ref{hsolnatunits}) of the solid. For (\ref{hharmvib}) 
translation invariance means invariance under the transformation
\begin{equation}
    u^{(r,j)} (\Rv) \mapsto u^{(r,j)} (\Rv) + \eps^j \epc
\end{equation}
for all $\eps^j \in {\mathbb R}$ and every $(r, j) \in I$.
This transformation leaves the kinetic energy (\ref{phononkineticenergy})
trivially invariant. For the potential energy (\ref{phononpotentialenergy})
we infer that
\begin{multline}
     \6_{\eps^j} V \bigr|_{\eps^j = 0} = \\
        \2 \sum_{\Rv, \Sv \in B} \sum_{r, s = 1}^N \sum_{\ell = x,y,z}
	   \bigl(K_{(r,j)(s,\ell)} (\Rv - \Sv) u^{(s,\ell)} (\Sv) +
	         u^{(r,\ell)} (\Rv) K_{(r,\ell) (s,j)} (\Rv - \Sv)\bigr) \\
        = \sum_{\Rv, \Sv \in B} \sum_{r, s = 1}^N \sum_{\ell = x,y,z}
	   K_{(r,j)(s,\ell)} (\Rv - \Sv) u^{(s,\ell)} (\Sv) = 0
\end{multline}
for arbitrary $u^{(s,\ell)} (\Sv) \in {\mathbb R}$. Here we have
used the symmetry (\ref{kmsymmetry}) of the force matrix in the
second equation. Setting all but one of the displacements equal to
zero and this one equal to one we obtain the relation
\begin{equation} \label{kmtransinv}
     \sum_{\Rv \in B} \sum_{r = 1}^N K_{(r,j)(s,\ell)} (\Rv) = 0
\end{equation}
for the force matrix, which holds for all $(s,\ell) \in I$ and
$j = x, y, z$. On the other hand
\begin{equation}
     \k_{\a \be} (0) = \bigl\<v_0, \widehat K_{\a \be} v_0\bigr\>
        = \frac{1}{L^3} \sum_{\Rv, \Sv \in B} K_{\a \be} (\Rv - \Sv)
        = \sum_{\Rv \in B} K_{\a \be} (\Rv) \epp
\end{equation}
Setting $\a = (r, j), \be = (s, \ell)$ summing over $r$ and using
(\ref{kmtransinv}) we conclude that
\begin{equation} \label{kappatransinv}
     \sum_{r = 1}^N \sqrt{\m_r \m_s} \; \widetilde\k_{(r,j)(s,\ell)} (0) = 0
\end{equation}
for all $(s,\ell) \in I$ and $j = x, y, z$. Thus, there are three
independent linear relations between the rows of the matrix
$\widetilde \k (0)$ which therefore has at least a threefold eigenvalue
zero.  We may order the spectrum of $\widetilde \k_{\a \be} (0)$ in such
a way that the corresponding eigenvectors are $\yv_{(1,j)} (0)$, $j =
x, y, z$. The corresponding `normal coordinates' are $x_0^{(1,j)}
= \<\yv_{(1,j)} (0),{\bm \h}_0\>$. Using this notation, the Hamiltonian
(\ref{hamdiagqform}) splits into
\begin{equation} \label{hamdiagcmsplit}
     H = T_{\rm cm} +
         \2 \sum_{\substack{\qv \in BZ, \a \in I\\ (\qv, \a) \ne (0, (1, j))}}
               \biggl\{ - \frac{\6}{\6 x_{-\qv}^\a} \frac{\6}{\6 x_\qv^\a}
	                + \om_\a^2 (\qv) x_{- \qv}^\a x_\qv^\a \biggr\} \epc
\end{equation}
where
\begin{equation}
     T_{\rm cm} = - \2 \sum_{j = x, y, z} \frac{\6^2}{\6 \bigl(x_0^{(1,j)}\bigr)^2}
\end{equation}
can be interpreted a the kinetic energy of the center of mass
motion of the crystal.

The center of mass motion of the crystal is unbounded.
If there were any other `zero modes', i.e., a higher than
threefold degeneracy of the eigenvalue zero, then there would
be another eigenvector $\yv_\a (\qv)$ corresponding to
unbounded motion. This would necessarily involve an unbounded
relative motion of different parts of the crystal, meaning
that the crystal would disintegrate. In the following we
shall exclude this possibility and concentrate on stable
crystals. We shall also discard the center of mass motion.
Then we remain with the Hamiltonian of the proper lattice
vibrations which we denote
\begin{equation} \label{phham}
     H_{\rm ph} = H - T_{\rm cm} =
         \2 \sum_{(\qv, \a) \in Q}
               \biggl\{ - \frac{\6}{\6 x_{-\qv}^\a} \frac{\6}{\6 x_\qv^\a}
	                + \om_\a^2 (\qv) x_{- \qv}^\a x_\qv^\a \biggr\} \epp
\end{equation}
Here we have introduced the notation
\begin{equation}
     Q = \bigl\{ (\qv, \a) = BZ \times I \big| \om_\a (\qv) \ne 0\bigr\} \epp
\end{equation}
The subindex `$\rm ph$' refers to `phonon' which is the
name of a quantized normal mode of the lattice.

\subsection{Diagonalization of the phonon Hamiltonian}
To accomplish a complete diagonalization of the Hamiltonian 
$H_{\rm ph}$ we introduce the operators
\begin{subequations}
\label{aadagger}
\begin{align}
     a_\qv^\a & = \sqrt{\frac{\om_\a (\qv)}{2}} x_\qv^\a
                + \frac{1}{\sqrt{2 \om_\a (\qv)}} \frac{\6}{\6 x_{- \qv}^\a} \epc \\[1ex]
     {a^+}_\qv^\a & = \sqrt{\frac{\om_\a (\qv)}{2}} x_{- \qv}^\a
                    - \frac{1}{\sqrt{2 \om_\a (\qv)}} \frac{\6}{\6 x_\qv^\a}
\end{align}
\end{subequations}
for all $(\qv, \a) \in Q$. They satisfy the commutation relations
(exercise: check it!)
\begin{equation} \label{heisenbergalg}
     \bigl[a_\qv^\a, a_\kv^\be\bigr] = 0
        = \bigl[{a^+}_\qv^\a, {a^+}_\kv^\be\bigr] \epc \qd
     \bigl[a_\qv^\a, {a^+}_\kv^\be\bigr] = \de_{\qv, \kv} \de^\a_\be \epp
\end{equation}
Inverting (\ref{aadagger}) we obtain
\begin{equation} \label{xdxaadagger}
     x_\qv^\a = \frac{a_\qv^\a + {a^+}_{- \qv}^\a}{\sqrt{2 \om_\a (\qv)}} \epc \qd
     \frac{\6}{\6 x_\qv^\a} = \sqrt{\frac{\om_\a (\qv)}{2}}
                              \bigl( a_{- \qv}^\a - {a^+}_\qv^\a\bigr) \epp
\end{equation}

The latter equation implies that
\begin{multline} \label{localphham}
     \om_\a^2 (\qv) x_{- \qv}^\a x_\qv^\a
        - \frac{\6}{\6 x_{- \qv}^\a} \frac{\6}{\6 x_{\qv}^\a} \\
	= \frac{\om_\a (\qv)}{2} \bigl(a_{- \qv}^\a + {a^+}_\qv^\a\bigr)
	                         \bigl(a_\qv^\a + {a^+}_{- \qv}^\a\bigr)
	- \frac{\om_\a (\qv)}{2} \bigl(a_\qv^\a - {a^+}_{- \qv}^\a\bigr)
	                         \bigl(a_{- \qv}^\a - {a^+}_\qv^\a\bigr) \\[1ex]
        = \om_\a (\qv) \bigl({a^+}_{- \qv}^\a a_{- \qv}^\a +
	                     {a^+}_\qv^\a a_\qv^\a + 1\bigr) \epc
\end{multline}
whenever $\om_\a (\qv) \ne 0$. Here we have used the evenness of the
functions $\om_\a$, equation (\ref{omegaeven}), and the commutation
relations (\ref{heisenbergalg}). Inserting (\ref{localphham}) into
(\ref{phham}) and using once more that $\om_\a$ is an even function
of $\qv$ we arrive at
\begin{equation} \label{phhamaadagger}
     H_{\rm ph} = \sum_{(\qv, \a) \in Q}
                        \om_\a (\qv) \Bigl({a^+}_\qv^\a \, a_\qv^\a + \tst{\2}\Bigr) \epp
\end{equation}
Thus, $H_{\rm ph}$ is decomposed into a sum of independent harmonic
oscillators.

\subsection{Creation and annihilation operators in terms of
the original displacements}
Going step by step backwards, we express the creation and annihilation
operators of the phonons in term of the original displacement
variables and their associated momentum operators
\begin{equation}
     p^\a (\Rv) = - \i \frac{\6}{\6 u^\a (\Rv)} \epp
\end{equation}
We obtain
\begin{align} \label{xintermsofu}
     x_\qv^\a & = \sum_{\be \in I} {Y^+}^\a_\be (\qv) \h^\be_\qv
                = \sum_{\be \in I} {Y^+}^\a_\be (\qv) \sqrt{\m_\be}\, \x_\qv^\be \notag \\
              & = \sum_{\be \in I} {Y^+}^\a_\be (\qv) \sqrt{\m_\be}\, \< u^\be, v_{- \qv}\>
	        = \frac{1}{\sqrt{L^3}}
		  \sum_{\Rv \in B} \sum_{\be \in I} \re^{- \i \<\qv, \Rv\>}
		  \sqrt{\m_\be}\, {Y^+}^\a_\be (\qv) u^\be (\Rv)
\end{align}
and
\begin{align}
     \frac{\6}{\6 x_{- \qv}^\a} &
        = \sum_{\be \in I} \frac{\6 \h_{- \qv}^\be}{\6 x_{- \qv}^\a}
	                   \frac{\6}{\6 \h_{- \qv}^\be}
        = \sum_{\be \in I} \frac{Y^\be_\a (- \qv)}{\sqrt{\m_\be}} 
	                   \frac{\6}{\6 \x_{- \qv}^\be}
        = \sum_{\be \in I} \frac{{Y^+}^\a_\be (\qv)}{\sqrt{\m_\be}} 
	                   \frac{\6}{\6 \x_{- \qv}^\be} \notag \\
        & = \frac{1}{\sqrt{L^3}}
	    \sum_{\Rv \in B} \sum_{\be \in I} \re^{- \i \<\qv, \Rv\>}
	    \frac{{Y^+}^\a_\be (\qv)}{\sqrt{\m_\be}}\, \i p^\be (\Rv) \epp
\end{align}
Inserting the latter two equations into the definitions
(\ref{aadagger}) of the annihilation and creation operators
we obtain
\begin{subequations}
\label{aadintermsofu}
\begin{align}
     a_\qv^\a & = \frac{1}{\sqrt{L^3}}
		  \sum_{\Rv \in B} \sum_{\be \in I} \re^{- \i \<\qv, \Rv\>}
		  {Y^+}^\a_\be (\qv)
		  \Biggl[ \sqrt{\frac{\m_\be \om_\a (\qv)}{2}} u^\be (\Rv)
		         + \frac{\i p^\be (\Rv)}{\sqrt{2 \m_\be \om_\a (\qv)}}
			 \Biggr] \epc \\ \label{adagger}
     {a^+}_\qv^\a & = \frac{1}{\sqrt{L^3}}
		  \sum_{\Rv \in B} \sum_{\be \in I} \re^{\i \<\qv, \Rv\>}
		  Y^\be_\a (\qv)
		  \Biggl[ \sqrt{\frac{\m_\be \om_\a (\qv)}{2}} u^\be (\Rv)
		         - \frac{\i p^\be (\Rv)}{\sqrt{2 \m_\be \om_\a (\qv)}}
			 \Biggr] \epp
\end{align}
\end{subequations}
In this form it is obvious that $a_\qv^\a$ and ${a^+}_\qv^\a$ are
mutually adjoint operators.
\subsection{Construction of the eigenstates}
Let
\begin{equation} \label{groundx}
     \Ps_0 (\uv) = \exp\biggl\{- \2 \sum_{(\kv, \be) \in Q}
                                 \om_\be (\kv) x_{- \kv}^\be x_\kv^\be\biggr\} \epp
\end{equation}
Then
\begin{multline}
     \frac{\6 \Ps_0 (\uv)}{\6 x_{- \qv}^\a}
        = \biggl( - \frac{\om_\a (\qv) x_\qv^\a}{2}
                  - \frac{\om_\a (- \qv) x_\qv^\a}{2} \biggr) \Ps_0 (\uv)
        = - \om_\a (\qv) x_\qv^\a \Ps_0 (\uv) \\
	\Leftrightarrow\ a_\qv^\a \Ps_0 (\uv) = 0
\end{multline}
for all $(\qv, \a) \in Q$. Hence,
\begin{equation} \label{phgsenergy}
     H_{\rm ph} \Ps_0 (\uv) = E_0 \Ps_0 (\uv) \epc \qd
     E_0 = \2 \sum_{(\qv, \a) \in Q} \om_\a (\qv) \epp
\end{equation}
$\Ps_0$ is the ground state, since it is the ground state for
every 1d harmonic oscillator in the sum (\ref{phhamaadagger}).

Comparing (\ref{groundx}) and (\ref{phham}) and recalling the
original definition (\ref{phononpotentialenergy}) of the harmonic
potential, we obtain the ground state wave function as a function
of the displacements $\uv$ of the ions,
\begin{equation} \label{phonongs}
     \Ps_0 (\uv) = \re^{- V(\uv)}
        = \exp \biggl\{ - \2 \sum_{\Rv, \Sv \in B} \sum_{\a, \be \in I}
	                     u^\a (\Rv) K_{\a \be} (\Rv, \Sv) u^\be (\Sv) \biggr\} \epc
\end{equation}
which is a natural generalization of the 1d case.

A general phonon state is generated by the multiple action of
phonon creation operators ${a^+}_\qv^\a$ on the ground state
which, for this reason, is also sometimes called the phonon
vacuum (exercise: repeat the construction of excited states
for the 1d harmonic oscillator based on the Heisenberg algebra
(\ref{heisenbergalg})). Such states are parameterized by
maps $Q \rightarrow {\mathbb N}_0$, $(\qv, \a) \mapsto n_\qv^\a$.
Accordingly we shall denote them as
\begin{equation} \label{phonones}
     \Ps_n (\uv) = \prod_{(\qv, \a) \in Q}
                   {\bigl({a^+}_\qv^\a}\bigr)^{n_\qv^\a} \; \Ps_0 (\uv) \epp
\end{equation}
It follows from the commutation relations (\ref{heisenbergalg})
that
\begin{equation} \label{phononspectrum}
     H_{\rm ph} \Ps_n (\uv) = E_n \Ps_n (\uv) \epc \qd
     E_n = \sum_{(\qv, \a) \in Q} \om_a (\qv) \bigl(n_\qv^\a + \tst{\2}\bigr) \epp
\end{equation}

The eigenvectors $\yv_\a (\qv)$ of $\widetilde \k (\qv)$, together
with the corresponding eigenfrequencies $\om_\a (\qv)$, determine
the creation operators ${a^+}_\qv^\a$, equation (\ref{adagger}),
since $Y_\a^\be (\qv) = (\yv_\a)^\be$. In analogy with the expression
for the quantized electro-magnetic field we shall call them polarization
vectors.

Let us summarize the insight we have gained so far in the following
\begin{theorem}
In order to obtain the spectrum (\ref{phononspectrum}) and
the eigenstates (\ref{phonongs}), (\ref{phonones}) of the
vibrational motion of the ions in a solid in harmonic approximation,
it suffices to calculate the dispersion relations $\om_\a (\qv)$
and the polarization vectors $\yv_\a (\qv)$. For this purpose one
first calculates the matrix
\begin{equation} \label{fmfourier}
     \widetilde \k_{\a \be} (\qv)
        = \sum_{\Rv \in B} \frac{\re^{- \i \<\qv,\Rv\>} K_{\a \be} (\Rv)}
	                        {\sqrt{\m_\a \m_\be}}
\end{equation}
and then its eigenvectors $\yv^\a (\qv)$ and eigenvalues
$\om_\a^2 (\qv)$.
\end{theorem}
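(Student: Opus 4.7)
The plan is to trace through the diagonalization procedure that has been built up in the preceding subsections and verify that, once $\om_\a(\qv)$ and $\yv_\a(\qv)$ are known, both the spectrum (\ref{phononspectrum}) and the eigenstates (\ref{phonongs}), (\ref{phonones}) can be reconstructed. Since the theorem is essentially a summary statement, the proof is a matter of assembling the pieces in the right order and checking that formula (\ref{fmfourier}) coincides with the definition of $\widetilde \k_{\a \be}(\qv)$ used earlier.

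First I would begin from the harmonic Hamiltonian (\ref{hharmvib}) and apply the discrete Fourier transform $u^\a(\Rv)=\sum_\kv \x_\kv^\a v_\kv(\Rv)$ followed by the mass rescaling $\h_\qv^\a = \sqrt{\m_\a}\,\x_\qv^\a$. This brings $H$ to the block-diagonal form with blocks labeled by $\qv \in BZ$, the block matrix being $\widetilde \k(\qv)$, while the kinetic part stays diagonal in $\qv$ and in the Cartesian/ion index $\a$. To establish (\ref{fmfourier}), I would compute
\begin{equation}
     \k_{\a \be}(\qv) = \bigl\<v_\qv,\widehat K_{\a\be} v_\qv\bigr\>
     = \frac{1}{L^3}\sum_{\Rv,\Sv \in B} \re^{-\i\<\qv,\Rv-\Sv\>} K_{\a\be}(\Rv-\Sv)
     = \sum_{\Rv \in B}\re^{-\i\<\qv,\Rv\>} K_{\a\be}(\Rv),
\end{equation}
using translation invariance of the force matrix to turn the double sum into a single one (the sum over $\Sv$ cancels the factor $1/L^3$). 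Division by $\sqrt{\m_\a\m_\be}$ then yields (\ref{fmfourier}) directly.

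Next I would diagonalize each $\widetilde \k(\qv)$. The properties established in the previous lecture (Hermiticity, non-negativity, the symmetry $\widetilde \k(-\qv)=\widetilde \k(\qv)^*$) guarantee the existence of the orthonormal polarization vectors $\yv_\a(\qv)$ satisfying (\ref{ytreversal}) and eigenvalues $\om_\a^2(\qv)\ge 0$ obeying $\om_\a(\qv)=\om_\a(-\qv)$. Setting $\xv_\qv=Y^+(\qv)\bm\h_\qv$ brings $H$ into the diagonal quadratic form (\ref{hamdiagqform}). After excluding the three zero modes associated with center-of-mass motion as in Section~\ref{sec:zeromodes}, the remainder is $H_{\rm ph}$ in the form (\ref{phham}). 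Introducing the operators (\ref{aadagger}), whose commutation relations (\ref{heisenbergalg}) follow from the canonical commutators between $x_\qv^\a$ and $\6/\6 x_\kv^\be$, and using (\ref{localphham}), I obtain the oscillator form (\ref{phhamaadagger}).

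From here the spectrum and eigenstates are obtained by the standard oscillator construction. The state $\Ps_0$ of (\ref{groundx}) is annihilated by every $a_\qv^\a$, which gives $H_{\rm ph}\Ps_0=E_0\Ps_0$ with $E_0=\2\sum \om_\a(\qv)$; expressing the sum in the exponent of (\ref{groundx}) back in the original displacements via (\ref{xintermsofu}) reproduces (\ref{phonongs}). Excited states arise by applying the creation operators (\ref{adagger}) $n_\qv^\a$ times, yielding (\ref{phonones}) with energies (\ref{phononspectrum}). The only mildly delicate step is the bookkeeping in (\ref{localphham}), where one must use the evenness $\om_\a(\qv)=\om_\a(-\qv)$ and the commutators (\ref{heisenbergalg}) to combine the two pieces; everything else is routine once (\ref{fmfourier}) has been identified.
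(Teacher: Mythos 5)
Your proposal is correct and follows essentially the same route as the text itself: the theorem is a summary of the preceding construction (Fourier transform, mass rescaling, diagonalization of $\widetilde \k (\qv)$, removal of the zero modes, oscillator algebra), and your verification of (\ref{fmfourier}) via $\k_{\a \be} (\qv) = \bigl\<v_\qv, \widehat K_{\a \be} v_\qv\bigr\>$ with translation invariance collapsing the double sum is exactly the computation the lecture performs (at $\qv = 0$ in the zero-mode discussion, and implicitly for general $\qv$). No gaps; the assembly of the standard oscillator construction for the ground state and excited states matches the paper's argument.
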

The input here is the force matrix. In applications it comes
from quantum chemical calculations or from simple heuristic
models. Note that in (\ref{fmfourier}) every matrix element
$\widetilde \k_{\a \be} (\qv)$ is represented as a (finite)
Fourier series (cf.\ section~\ref{subsec:fourier}) defining
it as a periodic
function in reciprocal space with periods in $\overline B$. We
expect the forces between ions to decay rapidly with distance
and the convergence of the Fourier series (\ref{fmfourier})
in the thermodynamic limit to be uniform, implying that the
limit function is differentiable in $\qv$.

\mysection{Phonons -- examples and general properties}
\subsection{Example 1 -- the harmonic chain}
To start with we reconsider the harmonic chain within the framework
of the general theory. This is a 1d problem with one ion per
unit cell, thus no indices $\a, \be$ are required and $\m = 1$.
Denoting the lattice spacing by $a$ we obtain
\begin{equation}
     q = \frac{n}{L} \cdot \frac{2\p}{a} \epc \qd R = \ell a
\end{equation}
for the quantized lattice momenta $q$ and the Bravais lattice
vectors $R$.

The model force matrix is
\begin{equation}
     K(R,S) = K(\ell a, m a) = K\bigl((\ell - m) a\bigr)
        = \om_0^2 (2 \de_{\ell, m} - \de_{\ell, m+1} - \de_{\ell, m-1}) \epc
\end{equation}
where we have to keep the periodic boundary conditions in mind.
This is a $1 \times 1$ matrix. The polarization vector is $y = 1$
and
\begin{multline}
     \widetilde \k (q) = \sum_\ell \re^{- \i \frac{n 2 \p}{L} \ell} \om_0^2 \,
                         (2 \de_{\ell, 0} - \de_{\ell, 1} - \de_{\ell, -1})
		       = \om_0^2 2 \bigl(1 - \cos(2 \p n/L)\bigr) \\
		       = 2 \om_0^2 \bigl(1 - \cos(q a)\bigr)
		       = 4 \om_0^2 \sin^2 (qa/2) = \om^2 (q) \epp
\end{multline}
It follows that
\begin{equation}
     \om (q) = 2 \om_0 |\sin(qa/2)|
\end{equation}
which is called the dispersion relation of the harmonic chain
with nearest-neighbour interactions.
\subsection{Example 2 -- diatomic chain with alternating forces}
\begin{figure}
\begin{center}
\includegraphics[width=.95\textwidth]{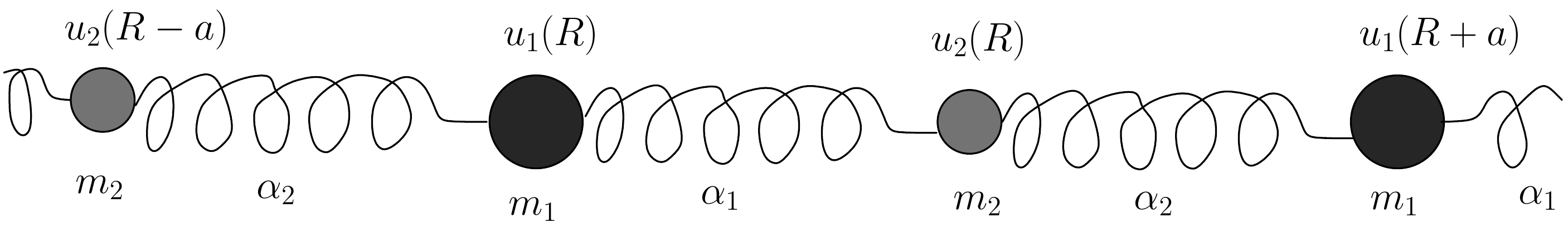}
\caption{\label{fig:diatomic_chain} Sketch of a diatomic chain
with alternating masses $m_1$ and $m_2$ and alternating force
constants $\a_1$ and $\a_2$.}
\end{center}
\end{figure}
The classical model for this configuration is given by the
equations of motion
\begin{align}
     \m_1 \ddot u_1 (R) = \a_1 \bigl(u_2 (R) - u_1 (R)\bigr)
                          - \a_2 \bigl(u_1 (R) - u_2 (R - a)\bigr) \epc \notag \\[1ex]
     \m_2 \ddot u_2 (R) = \a_2 \bigl(u_1 (R + a) - u_2 (R)\bigr)
                          - \a_1 \bigl(u_2 (R) - u_1 (R)\bigr) \epc
\end{align}
where
\begin{equation}
     \m_j = \frac{2 m_j}{m_1 + m_2} \epc \qd j = 1, 2 \epc
\end{equation}
are the dimensionless masses and $\a_1$, $\a_2$ dimensionless
force constants (see Figure~\ref{fig:diatomic_chain}). The
corresponding force matrix is ($\Fv = - \grad V$)
\begin{equation}
     K(R,S) = \begin{pmatrix}
                 (\a_1 + \a_2) \de_{R,S} & - \a_1 \de_{R,S} - \a_2 \de_{R,S+a} \\[1ex]
		 - \a_1 \de_{R,S} - \a_2 \de_{R,S-a} & (\a_1 + \a_2) \de_{R,S}
              \end{pmatrix} \epp
\end{equation}
It follows that
\begin{equation}
     \k (q) = \begin{pmatrix}
                 \a_1 + \a_2 & - \a_1 - \a_2 \re^{- \i q a} \\[1ex]
		 - \a_1 - \a_2 \re^{\i q a} & \a_1 + \a_2
              \end{pmatrix}
\end{equation}
and
\begin{equation}
     \widetilde \k (q)
          = \begin{pmatrix}
               \frac{\a_1 + \a_2}{\m_1} &
	       \frac{- \a_1 - \a_2 \re^{- \i q a}}{\sqrt{\m_1 \m_2}} \\[1ex]
	       \frac{- \a_1 - \a_2 \re^{\i q a}}{\sqrt{\m_1 \m_2}} &
	       \frac{\a_1 + \a_2}{\m_2}
            \end{pmatrix} \epp
\end{equation}
As it should be, this is a Hermitian $2 \times 2$ matrix.
We have to calculate its eigenvalues and eigenvectors.

For the eigenvalues $\la_\pm$ of a $2 \times 2$ matrix $A$
we have the general formula (exercise: check it!)
\begin{equation}
      \la_\pm = \frac{\tr A}{2} \pm \sqrt{\biggl(\frac{\tr A}{2}\biggr)^2 - \det A} \epp
\end{equation}
For the matrix $\widetilde \k (\qv)$ we calculate
\begin{subequations}
\begin{align}
     \tr \widetilde \k (q) & = \frac{2 (\a_1 + \a_2)}{\m_1 \m_2} \epc \\[1ex]
     \det \widetilde \k (q) &
        = \frac{\a_1 \a_2}{\m_1 \m_2} 4 \sin^2 \Bigl(\frac{qa}{2}\Bigr) \epp
\end{align}
\end{subequations}
It follows that
\begin{multline} \label{dispdiat}
     \om_\pm^2 (q) = \frac{\a_1 + \a_2}{\m_1 \m_2} \pm
        \sqrt{\biggl(\frac{\a_1 + \a_2}{\m_1 \m_2}\biggr)^2
	      - \frac{\a_1 \a_2}{\m_1 \m_2} 4 \sin^2 \Bigl(\frac{qa}{2}\Bigr)} \\[1ex]
        = \frac{(\a_1 + \a_2)(m_1 + m_2)^2}{4 m_1 m_2}
	  \Biggl\{1 \pm \sqrt{1 - \frac{4 \a_1 \a_2}{(\a_2 + \a_2)^2}
	                          \frac{4 m_1 m_2}{(m_1 + m_2)^2}
				  \sin^2 \Bigl(\frac{qa}{2}\Bigr)} \Biggr\} \epp
\end{multline}
Recalling that $4 x y \le (x + y)^2\ \Leftrightarrow\ 0 \le (x - y)^2$
for all $x, y \in {\mathbb R}$ we may conclude that
\begin{equation}
     0 < \frac{4 \a_1 \a_2}{(\a_2 + \a_2)^2}\frac{4 m_1 m_2}{(m_1 + m_2)^2} \le 1
\end{equation}
as it must be for $\om_\pm^2 (q)$ to be real.
\begin{figure}
\begin{center}
\includegraphics[width=.75\textwidth]{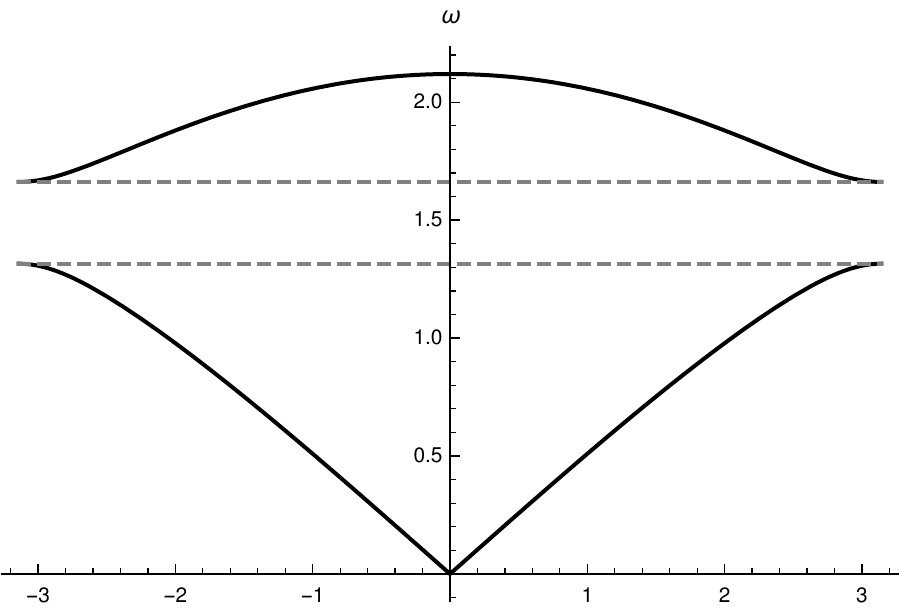}
\raisebox{70pt}{$\begin{array}{l} \om_+ \\[15pt] \om_- \\[92pt] q \end{array}$}
\caption{\label{fig:dispersion_diatomic} Dispersion relation (\ref{dispdiat})
of the diatomic chain. Upper branch $\om_+ (q)$, lower branch $\om_- (q)$.
Parameters $m_1 = 1,5$, $m_2 = 2,0$, $\a_1 = 1,3$, $\a_2 = 0,9$, $a = 1$.
The lower branch is called the acoustic branch, the upper branch is called
the optical branch. The branches are separated by the `band gap' which is the
distance between the dashed lines.
}
\end{center}
\end{figure}

The two branches $\om_\pm$ of the dispersion relation are sketched in
Figure~\ref{fig:dispersion_diatomic}. Note that the lower branch is
going to zero linearly as $q$ goes to zero,
\begin{equation}
     \om_- (q) = v_s q + {\cal O}(q^2) \epc \qd
          v_s = \sqrt{\frac{\a_1 \a_2}{2(\a_1 + \a_2)}} \, a \epp
\end{equation}
This is the limit of long wave lengths. For this reason the branch
is called the acoustic branch, $v_s$ is the sound velocity.

The branch $\om_+$ is called the optical branch. The normal modes at
small $q$ in the acoustic branch correspond to motions when the
two atoms in the unit cell move in phase, while small values of
$q$ in the optical branch correspond to motions when the
two atoms move against each other. This can be seen by looking at
the polarization vectors, which we leave as an exercise. As we
see in Figure~\ref{fig:dispersion_diatomic} the frequencies in
the optical branch are higher that in the acoustic branch. In real
solids typical optical branches correspond to frequencies in the
infrared.

The numbers
\begin{equation}
     W_\pm = \max \om_\pm (q) - \min \om_\pm (q)
\end{equation}
are called the band width of the optical and acoustic branches.
They quantify the ranges of available frequencies or bands.
In real solids, like in our Figure~\ref{fig:dispersion_diatomic},
optical bands have typically smaller band widths than acoustic
bands. The number
\begin{equation}
     g = \min \om_+ (q) - \max \om_- (q)
\end{equation}
is called the band gap. It corresponds to a range of `forbidden
frequencies'. As we shall see the existence of bands and
band gaps explain many of the characteristic feature of
solids observed in experiments.

It is interesting to see, how the monoatomic chain of example 1
is recovered for $\a_1 = \a_2 = \a$ and $m_1 = m_2$. For this special
choice of parameters
\begin{equation}
     \om_\pm^2 (q) = 2 \a \bigl(1 \pm \cos\bigl(\tst{\frac{qa}{2}}\bigr)\bigr) \epc
\end{equation}
since $|q| \le \p/a$. This can be rewritten as
\begin{subequations}
\begin{align}
     \om_-^2 (q) & = 4 \a \sin^2 \bigl(\tst{\frac{qa}{4}}\bigr) \epc \\[1ex]
     \om_+^2 (q) & = 4 \a \cos^2 \bigl(\tst{\frac{qa}{4}}\bigr)
        = 4 \a \sin^2 \bigl( (q \pm \tst{\frac{2\p}{a}}) \tst{\frac{a}{4}}\bigr) \epp
\end{align}
\end{subequations}
We see that $\om_+$ is the same function as $\om_-$, shifted
by $\pm \frac{2\p}{a}$. The band gaps have vanished, and by
shifting $\om_+$ on the interval $[-\p/a,0]$ by $2\p/a$ to the
right and the same function on the interval $[0,\p/a]$ by
$2\p/a$ to the left, we obtain the function $\om_-$
on the doubled Brillouin zone $[-2\p/a,2\p/a]$. Thus, we
have two equivalent descriptions, two branches of the dispersion
relation on the original Brillouin zone $[-\p/a,\p/a]$ or
one branch on $[-2\p/a,2\p/a]$. The doubling of the Brillouin
zone corresponds to a bisection of the unit cell in the
Bravais lattice (exercise: draw the pictures!).

\subsection{General properties of phonons in a 3d lattice}
\begin{enumerate}
\item
There are $3 N$ branches $\om_\a (\qv)$ of the dispersion relation
for a crystal lattice with $N$ atoms per unit cell.
\item
In the thermodynamic limit the lattice momenta $\qv$ densely
fill the Brillouin zone, and the matrix $\widetilde \k (\qv)$
as defined in equation (\ref{fmfourier}) becomes a continuously
differentiable function of $\qv$. Then, due to the implicit function
theorem, the functions $\om_\a^2 (\qv)$ become differentiable
functions of $\qv$. As also follows from (\ref{fmfourier})
they are naturally extended as periodic functions on the
reciprocal lattice $\overline B$.
\item
There are precisely three acoustic branches for which $\om_\a (0) = 0$.
All other branches are optical branches with $\min \om_\a > 0$.
Since any continuous functions assumes its extremum on compact
sets, every phonon band has finite band width.
\item
Of the three acoustic branches one has longitudinal, the others
have transversal polarization. In general the longitudinal
acoustic modes are faster than the transversal acoustic modes
(reversing force is larger for pressure waves than for shear waves,
becomes clear when thinking about transition to fluid).
\item
The dispersion relations of the phonons are invariant under the
action of the point group $R_0$ of the crystal,
\begin{equation}
     \om_\a (\qv) = \om_\a (G \qv)
\end{equation}
for all $(\qv, \a) \in Q$, for all $G \in R_0$.
\end{enumerate}
\begin{proof}
Let $G \in R_0 \subset O(3)$. Then $G$ acts naturally,
as a rotation or a rotation followed by an inversion, on the
Bravais lattice and on the displacements $u^{(r,j)} (\Rv)$,
$j = x, y, z$, of the individual ions from their equilibrium
positions. The latter action combines into the action
of a representation $D$ of $R_0$ on the vectors of displacement
$\uv$,
\begin{equation}
     {u'}^\a (G \Rv) = \sum_{\be \in I} D^\a_\be (G) u^\be (\Rv) \epp
\end{equation}
This transformation leaves the kinetic energy and the potential
energy of the Hamiltonian (\ref{phham}) of the harmonic crystal
separately invariant, as all ions simultaneously undergo the same
$O(3)$ transformation, which does not affect their relative
displacements. The crystal lattice is not necessarily invariant
under this transformation, but the effect on the crystal lattice
is at most a translation, since any point group operation can be
seen as a combination of a space group operation and a translation.
This is another way of understanding the point group invariance of
the Hamiltonian (\ref{phham}).

Let us work out the consequences of the invariance for the
representation $D$. First of all, setting $\Rv' = G \Rv$,
\begin{equation}
     \frac{\6}{\6 u^\a (\Rv)}
        = \sum_{\be \in I} \frac{\6 {u'}^\be (\Rv')}{\6 u^\a (\Rv)}
	                   \frac{\6}{\6 {u'}^\be (\Rv')}
        = \sum_{\be \in I} D^\be_\a (G) \frac{\6}{\6 {u'}^\be (\Rv')} \epc
\end{equation}
and hence
\begin{multline}
     T_\uv = - \2 \sum_{\Rv \in B} \sum_{\a \in I}
                  \frac{1}{\m_\a} \frac{\6^2}{\6 {u^\a (\Rv)}^2} \\[1ex]
           = - \2 \sum_{\Rv \in B}
	          \sum_{\a \in I} \frac{1}{\m_\a}
               \sum_{\be, \g \in I} D^\be_\a (G) D^\g_\a (G)
	          \frac{\6}{\6 {u'}^\be (\Rv')} \frac{\6}{\6 {u'}^\g (\Rv')} \\[1ex]
           = - \2 \sum_{\Rv \in B} \sum_{\be, \g \in I}
	          \Biggl[\sum_{\a \in I} \frac{D^\be_\a (G) D^\g_\a (G)}{\m_\a}\Biggr]
	          \frac{\6}{\6 {u'}^\be (\Rv)} \frac{\6}{\6 {u'}^\g (\Rv)} \epc
\end{multline}
where we have used the invariance of $B$ under $R_0$ in the third
equation. Form invariance of the kinetic energy now means that
\begin{equation}
     \sum_{\a \in I} \frac{D^\be_\a (G) D^\g_\a (G)}{\m_\a}
        = \frac{\de^{\be \g}}{\m_\be} \epp
\end{equation}
Similarly, the potential energy transforms like
\begin{multline}
     V(\uv) = \2 \sum_{\Rv, \Sv \in B} \sum_{\a, \be \in I}
              u^\a (\Rv) K_{\a \be} (\Rv - \Sv) u^\be (\Sv) \\[1ex]
            = \2 \sum_{\Rv, \Sv \in B} \sum_{\a, \be \in I}
              u^\a (G \Rv) K_{\a \be} \bigl(G(\Rv - \Sv)\bigr) u^\be (G \Sv) \\[1ex]
            = \2 \sum_{\Rv, \Sv \in B} \sum_{\a, \be \in I}
              {u'}^\a (\Rv)
	      \Biggl[\sum_{\g, \de \in I} D^\g_\a (G)
	             K_{\g \de} \bigl(G(\Rv - \Sv)\bigr) D^\de_\be (G)
		     \Biggr] {u'}^\be (\Sv) \epp
\end{multline}
Then form invariance of this expression implies that
\begin{equation} \label{fmpttransform}
     K_{\a \be} (\Rv - \Sv) =
        \sum_{\g, \de \in I} D^\g_\a (G)
	      K_{\g \de} \bigl(G(\Rv - \Sv)\bigr) D^\de_\be (G) \epp
\end{equation}

It follows that
\begin{multline} \label{kappapttrans}
     \sum_{\g, \de \in I} D^\g_\a (G) \k_{\g \de} (G \qv) D^\de_\be (G)
        = \sum_{\Rv \in B} \re^{- \i \<G \qv, \Rv\>} 
          \sum_{\g, \de \in I} D^\g_\a (G) K_{\g \de} (\Rv) D^\de_\be (G) \\[1ex]
        = \sum_{\Rv \in B} \re^{- \i \<G \qv, G \Rv\>} 
          \sum_{\g, \de \in I} D^\g_\a (G) K_{\g \de} (G \Rv) D^\de_\be (G)
	= \k_{\a \be} (\qv) \epp
\end{multline}
Here we have used (\ref{fmpttransform}) and the fact that $\<G \qv, G \Rv\> =
\<\qv,\Rv\>$ in the last equation. Equation (\ref{kappapttrans}) is
equivalent to saying that
\begin{equation}
     \widetilde \k_{\a \be} (\qv) = 
        \sum_{\g, \de \in I} {\cal D}^\g_\a (G)
	      \widetilde \k_{\g \de} (G \qv) {\cal D}^\de_\be (G) \epc
\end{equation}
where
\begin{equation}
     {\cal D}^\a_\be (G) = \sqrt{\frac{\m_\a}{\m_\be}} D^\a_\be (G) 
\end{equation}
and therefore
\begin{equation}
     \sum_{\be \in I} {\cal D}^\a_\be (G) {{\cal D}^t}^\be_\g (G)
        = \sum_{\be \in I} {\cal D}^\a_\be (G) {\cal D}^\g_\be (G)
        = \sqrt{\m_\a \m_\g} \sum_{\be \in I} \frac{D^\a_\be (G) D^\g_\be (G)}{\m_\be}
	= \de^{\a \g} \epp
\end{equation}
This means that $\widetilde \k (\qv)$ and $\widetilde \k (G \qv)$ are
similar matrices which implies our claim.
\end{proof}

\subsection{Exercise 7. Linear chain with a mass defect}
For equal masses the solution of the periodic chain in Exercise 6 leads to
the acoustic phonons of the simple one-dimensional lattice. If the masses
are different, however, there is, in general, no simple closed solution,
not even of the one-dimensional problem.

In the following we shall study the influence of a mass defect. We assume that
all masses but one are equal to $m$ and that the remaining mass is equal
to $m (1 + \m)$. As in Exercise 6 the ansatz of an harmonic time dependence
leads to an eigenvalue problem of the form
\begin{equation} \label{eig}
  \binom{\psi(1)}{\varphi(1)} =\binom{\psi(N+1)}{\varphi(N+1)}
     = L_N\cdots L_1 \binom{\psi(1)}{\varphi(1)} \epc \qd
       L_n=\binom{2-\Omega_n^2 \quad -\!\!1}{ 1 \qquad\quad 0} \epp
\end{equation}
Here $\psi(n)$ and $\varphi(n)$ are defined as in Exercise 6, and
$\Omega_n^2= m_n\omega^2/k$ with $m_1 = m(1+\mu)$ and $m_j = m$ for $j=2,\dots,N$. 

\begin{enumerate}
\item
Solve the eigenvalue problem \eqref{eig} and show that $\kappa$ in
$\Omega^2=4\sin^2({\kappa/2})$ satisfies the transcendental equation
\begin{equation} \label{tanGl}
  \tan{\left(\frac{N\kappa}{2}\right)}+\mu\tan{\left(\frac{\kappa}{2}\right)}=0
\end{equation}
or $\re^{\i \k} = \pm 1$ or $\re^{\i N \k} = 1$.
\item
In order to solve \eqref{tanGl} graphically we transform it into a more
convenient form. For this purpose set $z=e^{i\kappa}$. First prove that
\begin{equation}
  \frac{1}{z^N-1} = \frac{1}{\prod_{j=1}^{N}(z-z_j)}
    = \frac{1}{N}\sum_{j=1}^{N}\frac{z_j}{z-z_j}
\end{equation}
and obtain an analogous relation for $z_j^{-1}$. Here the $z_j$, $j=1,\ldots,N$,
are the $N$th roots of unity. Then show that \eqref{tanGl} is equivalent to
\begin{equation} \label{OmegaGl}
     \frac{N}{\mu}+\sum_{j=1}^{N}\frac{\Omega^2}{\Omega^2-\Omega_j^2} = 0 \epc
\end{equation}
where the $\Omega_j$ are the eigenfrequencies of the problem with equal masses.
Now discuss (\ref{OmegaGl}) graphically. Which equations for the frequencies
do you obtain for the particular cases when $\mu = -1$ or $\mu = \infty$?
\item
How many solution do you obtain from (\ref{OmegaGl})? Where are the missing
solutions and what is their interpretation?
\end{enumerate}
\subsection{Exercise 8. Mass defect in the thermodynamic limit }
In exercise 7 the consequence of a mass defect on the spectrum of the harmonic
chain with periodic boundaries was analyzed. All eigenfrequencies except the
translation mode are either determined by 
\begin{equation}\label{bestgl}
  \sum_{j=1}^N \frac{\Omega^2}{\Omega^2 - \Omega_j^2} = -\frac{N}{\mu},
\end{equation}
wherein $N$ is the number of masses and $\Omega_j^2$ are the eigenvalues of the
periodic chain without mass defect, or they agree with one of the $\Omega_j^2$.
For a negative mass defect $-1<\mu<0$ there is one mode, i.e.\ one solution
of \eqref{bestgl}, outside of the domain $[0,4]$ of the $\Omega_j^2$.
\begin{enumerate}
\item
Calculate the eigenfrequency of this mode in the thermodynamic limit
$N\to\infty$ directly from \eqref{bestgl}.
\item
Consider the infinite harmonic chain with $m_j=m,j\neq 0$ and $m_0=m(1+\mu)$. The
deviations from the equilibrium positions of the masses are denoted $x_n$. Use the ansatz
\begin{equation}
x_n(t)=\exp(-q\left|n\right|-i\omega t)
\end{equation}  
and determine how $q$ and $\omega$ have to be chosen, to get a solution of the equation
of motion of the chain. Why does this solution only make sense for negative mass defects?
\end{enumerate}
\subsection{Exercise 9. Harmonic oscillations of a two-dimensional lattice}
Consider a two-dimensional square lattice composed of identical ions of mass $m$
under periodic boundary condition. Every ion interacts with nearest and next-nearest
neighbours. The spring constants of the harmonic potential are given by $\beta_1$
for nearest neighbours and by $\beta_2$ for next-nearest neighbours. All other
interactions are assumed to be negligible. Furthermore all motions of the ions are
confined to the lattice plane. Set up the force-matrix $K_{\alpha \beta}(\Rv,\Sv)$
and compute $\kappa(\qv)$. Diagonalize $\kappa(\qv)$. How does the frequency depend
on the wave vector $\qv$? Plot the dispersion relation in $(q,0)$- and in
$(q,q)$-direction.%

\mysection{Statistical mechanics of the harmonic crystal}
\subsection{Partition function and free energy}
The thermodynamic properties of the harmonic crystal are completely
determined by the canonical partition function
\begin{equation} \label{phpartfun}
     Z_{\rm ph} = \tr \Bigl\{ \re^{- \frac{H_{\rm ph}}{T}} \Bigr\}
                = \re^{- \frac{F_{\rm ph}}{T}} \epp
\end{equation}
Here $F_{\rm ph}$ is the free energy of the harmonic crystal,
and we are using units such that the Boltzmann constant $k_B = 1$.
Inserting (\ref{phhamaadagger}) into (\ref{phpartfun}) we obtain
\begin{multline} \label{evalpf}
     Z_{\rm ph} = \prod_{(\qv, \a) \in Q} \tr \biggl\{
        \exp\biggl\{- \frac{\om_\a (\qv)}{T}
	             \Bigl({a^+}_\qv^\a \, a_\qv^\a + \tst{\2}\Bigr) \biggr\} \biggr\} \\
        = \prod_{(\qv, \a) \in Q} \biggl\{ \re^{- \frac{\om_\a (\qv)}{2T}}
	      \sum_{k=0}^\infty \re^{- \frac{\om_\a (\qv) k}{T}} \biggr\}
	= \prod_{(\qv, \a) \in Q} \frac{1}{2 \sh \bigl(\frac{\om_\a (\qv)}{2T}\bigr)} \epp
\end{multline}
It follows that
\begin{equation} \label{phfreeenergy}
     F_{\rm ph} = E_0 + T \sum_{(\qv, \a) \in Q}
                          \ln \Bigl(1 - \re^{- \frac{\om_\a (\qv)}{T}}\Bigr) \epp
\end{equation}
Here we have used (\ref{phpartfun}) and (\ref{phgsenergy}).
\subsection{The Bose-Einstein distribution}
From (\ref{evalpf}) and (\ref{phfreeenergy}) we also conclude that
\begin{multline}
     \frac{\6 F_{\rm ph}}{\6 \om_\a (\qv)} =
        - \frac{T}{Z_{\rm ph}} \frac{\6 Z_{\rm ph}}{\6 \om_\a (\qv)} = 
	\frac{\tr \Bigl\{
	          \Bigl({a^+}_\qv^\a \, a_\qv^\a + \tst{\2}\Bigr)
		  \re^{- \frac{H_{\rm ph}}{T}} \Bigr\}}
	     {\tr \Bigl\{ \re^{- \frac{H_{\rm ph}}{T}} \Bigr\}} \\[1ex]
     = \bigl\<{a^+}_\qv^\a \, a_\qv^\a + \tst{\2}\bigr\>
     = \2 + \frac{1}{\re^\frac{\om_\a (\qv)}{T} - 1} \epp
\end{multline}
Recalling that
\begin{equation}
     \hat n_\qv^\a = {a^+}_\qv^\a \, a_\qv^\a
\end{equation}
is the occupation number operator that measures the occupancy of
the mode $(\qv, \a)$, we may interpret
\begin{equation} \label{boseeinstein}
     \< \hat n_\qv^\a \> = \frac{1}{\re^\frac{\om_\a (\qv)}{T} - 1}
\end{equation}
as a function of $T$ as the average occupancy of the mode
$(\qv, \a)$ in a canonical ensemble of temperature $T$. Seen as a
functions of $\qv$ and $\a$ this functions measures the distribution
at temperature $T$ of the quanta of vibration energy over the modes.
Equation (\ref{boseeinstein}) defines the famous Bose-Einstein
distribution.

In physics we call excitations of a `quantum field', that carry energy
and momentum, particles. The particles associated with the quantized
lattice vibrations are called phonons. Within the approximation
of the harmonic crystal the phonons do not interact. According
to (\ref{boseeinstein}) they form an ideal gas of non-conserved Bosons,
very much like the photons which are the quanta of the electro-magnetic
field. If the number of phonons would be conserved, a chemical potential
that would control their number would appear in (\ref{boseeinstein}).

\subsection{The density of states}
In the theory of ideal quantum gases the free energy and all
derived thermodynamic quantities in the thermodynamic limit are usually
written as functionals of the density of states. We would like to
briefly recall its definition and its use in approximating sums
like in (\ref{phfreeenergy}) by integrals. Usually the latter is done
in a two-step procedure. In step one sums over lattice momenta are
converted into integrals. In step two integrals over momenta are
transformed into integrals over energies by means of the density
of states.

Recall that the volume of the Brillouin zone is (cf.\ equation
(\ref{volbz})) $V_R = (2 \p)^3/V_u$, where $V_u$ is the volume
of the unit cell, and that there are $L^3$ lattice momenta in
the Brillouin zone, if we introduce periodic boundary conditions
as described in section~\ref{sec:pbcs}. Then the volume of the
crystal is $V = L^3 V_u$. The volume per lattice momentum in the
Brillouin zone or volume element is
\begin{equation}
     \rd^3 q = \frac{V_R}{L^3} = \frac{(2\p)^3}{V_u L^3} = \frac{(2\p)^3}{V} \epp
\end{equation}
Thus, the free energy (\ref{phfreeenergy}) can be approximated as
\begin{equation} \label{freeenergyintbz}
     F_{\rm ph} = E_0 + \frac{VT}{(2\p)^3}
                        \sum_{\a \in I} \int_{BZ} \rd^3 q \:
                        \ln \Bigl(1 - \re^{- \frac{\om_\a (\qv)}{T}}\Bigr) \epp
\end{equation}

We would like to rewrite the integral on the right hand side
of (\ref{freeenergyintbz}) as an integral over energies. For
this purpose we define the counting function
\begin{equation} \label{countfun}
     f_\a (\om) = \frac{1}{V} \sum_{\qv \in BZ} \Th(\om - \om_\a (\qv))\
		  \xrightarrow[\scriptscriptstyle V \rightarrow \infty]{\mspace{72.mu}}\
                  \frac{1}{(2\p)^3} \int_{BZ} \rd^3 q \: \Th (\om - \om_\a (\qv)) \epc
\end{equation}
where $\a \in I$ and $\Th$ is the Heaviside step function. The density
of states of the $\a$th phonon branch is equal to the number of
states in $[\om, \om + \D \om]$ divided by $V \D \om$ or, for
$\D \om \rightarrow 0$,
\begin{equation} \label{densstatesbranch}
     g_\a (\om) = f_\a' (\om)
        = \frac{1}{(2\p)^3} \int_{BZ} \rd^3 q \: \de (\om - \om_\a (\qv)) \epp
\end{equation}
Introducing the total density of states as
\begin{equation} \label{deftotalds}
     g(\om) = \sum_{\a \in I} g_\a (\om)
\end{equation}
we can rewrite the expression (\ref{freeenergyintbz}) for the free
energy in the form
\begin{equation} \label{phfreeends}
     F_{\rm ph} = E_0 + VT \int_0^\infty \rd \om \:
                           g(\om) \ln \bigl(1 - \re^{- \frac{\om}{T}}\bigr) \epp
\end{equation}
This form will be the starting point for the discussion of
the specific heat of the harmonic crystal below.

Before entering this discussion we will have a closer look at 
the density of states $g (\om)$. The integral on the right hand
side of (\ref{densstatesbranch}) can be interpreted as
representing the `area' of a surface $S(\om)$ in reciprocal
space, where $\om$ is implicitly defined by
\begin{equation}
     \s(\qv_0) = \om - \om_\a (\qv_0) = 0 \epp
\end{equation}
In the vicinity of this surface we introduce local coordinates
\begin{equation}
     \D \qv = \nv \D q_\perp + \tv_1 \D q_{1 \parallel} + \tv_2 \D q_{2 \parallel} \epc
\end{equation}
where $\nv$ is a unit vector normal to the surface and $\tv_j
\D q_{1 \parallel}$, $j = 1, 2$, are parallel unit vectors. Then
\begin{equation}
     \s (\qv_0 + \D \qv) = - \<\grad \om_\a (\qv_0), \D \qv\> + \dots
        = - \|\grad \om_\a (\qv_0)\| \D q_\perp + \dots
\end{equation}
and thus
\begin{multline}
     g_\a (\om) = \frac{1}{(2\p)^3}
                  \int_{S(\om)} \rd S \int \rd \D q_\perp
		                \de(- \|\grad \om_\a (\qv_0)\| \D q_\perp + \dots) \\[1ex]
                = \frac{1}{(2\p)^3}
		  \int_{S(\om)} \frac{\rd S}{\|\grad \om_\a (\qv_0)\|} \epp
\end{multline}
This formula can be used to actually calculate the density
of states, given the dispersion relations of the phonons.
\subsection{Van-Hove singularities}
\label{sec:vanh}
The density of states $g(\om)$ exhibits singularities where
$\grad \om = 0$. These are called van-Hove singularities.
Their character depends on the space dimension and is
different for 3d, 2d and 1d lattices. We can understand them
by expanding $\om$ in the vicinity of a critical point $\kv_0$,
$\grad \om (\kv_0) = 0$, where
\begin{equation}
     \om = \om_0 + \2 \bigl\< (\kv - \kv_0), \G (\kv - \kv_0)\bigr\>
\end{equation}
and $\G$ is the matrix of the second derivatives of $\om$ at $\kv_0$.
Setting $\xv = \kv - \kv_0$ we have $\grad \om = \G (\kv - \kv_0)
= \G \xv$. Hence, for values of $\om$ close to a critical
point,
\begin{equation}
     g_\a (\om) = \frac{1}{(2\p)^3}
		  \int_{\<\xv,\G \xv\> = 2 (\om - \om_0)}
		  \frac{\rd S}{\|\G \xv\|} \epp
\end{equation}
The matrix $\G$ can be diagonalized by an orthogonal transformation.
Such a transformation leaves the surface element $\rd S$ invariant.
Hence, we may assume that $\G = \diag(\g_1, \g_2, \g_3)$. Thus,
\begin{equation}
     g_\a (\om) = \frac{1}{(2\p)^3}
		  \int_{\substack{\g_1 x_1^2 + \g_2 x_2^2 + \g_3 x_3^2 \\ = 2 (\om - \om_0)}}
		  \frac{\rd S}{\sqrt{\g_1^2 x_1^2 + \g_2^2 x_2^2 + \g_3^2 x_3^2}} \epp
\end{equation}

When discussing this integral, we have distinguish several cases.
The critical point may be a minimum, a maximum or a saddle point
depending on the signature of $\G$ which is defined as
\begin{equation}
     \sign \G = \diag (\sign \g_1, \sign \g_2, \sign \g_3) \epp
\end{equation}
We denote the four different cases by $m = (+, +, +)$, $M = (-, -, -)$,
$S_1 = (+, +, -)$ and $S_2 = (+, -, -)$. Clearly $m$ corresponds
to a minimum, $M$ to a maximum, and $S_1$, $S_2$ to two kinds of
saddle points. In 3d the singularities are of square root type in
all four cases. The details will be worked out in
exercise 10.
\begin{theorem}
Van Hove \cite{vanHove53}. In 3d the density of states $g$ has at
least one singularity of type $S_1$, and one of type $S_2$. The
derivative at the upper edge of the spectrum is $- \infty$.
\end{theorem}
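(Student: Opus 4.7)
The plan is to combine a Morse-theoretic counting argument on the Brillouin zone with the local van-Hove analysis already outlined in section~\ref{sec:vanh}. The key geometric fact is that, with periodic boundary conditions and in the thermodynamic limit, each branch $\om_\a$ is a continuous (in fact, smooth in a generic sense) function on the Brillouin zone, which is topologically a 3-torus $T^3$ because opposite faces of the parallelepiped spanned by $\bv_1, \bv_2, \bv_3$ are identified (cf.\ section~\ref{sec:pbcs}).

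Assuming, as is generic, that $\om_\a$ is a Morse function on $T^3$, I would invoke the Morse inequalities: the number $c_k$ of critical points of index $k$ must be at least the $k$-th Betti number of $T^3$. Since $b_0(T^3) = 1$, $b_1(T^3) = 3$, $b_2(T^3) = 3$, $b_3(T^3) = 1$, one obtains at least one minimum (type $m$), at least three index-1 saddles (type $S_1$), at least three index-2 saddles (type $S_2$), and at least one maximum (type $M$). In particular, $S_1$ and $S_2$ saddles are unavoidable. Each such critical point $\qv_0$ with signature of $\G = \G(\qv_0)$ of type $S_1$ or $S_2$ contributes, by the local analysis of $g_\a$ around $\qv_0$ carried out in section~\ref{sec:vanh}, a square-root kink in $g_\a$ and therefore in the total density of states $g = \sum_\a g_\a$ (\ref{deftotalds}); since contributions from different critical points at different values of $\om$ cannot cancel, $g$ inherits at least one $S_1$- and one $S_2$-type singularity.

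For the upper edge of the spectrum, I would argue as follows. Let $\om_M = \max_{\qv, \a} \om_\a(\qv)$. This maximum is attained at some critical point $\qv_0$ of some branch $\om_{\a_0}$ with signature $M = (-,-,-)$. Near $\qv_0$ we may expand as in section~\ref{sec:vanh},
\begin{equation}
     \om_{\a_0}(\qv_0 + \xv) = \om_M - \tst{\2}\bigl(|\g_1| x_1^2 + |\g_2| x_2^2 + |\g_3| x_3^2\bigr) + \dots,
\end{equation}
so that the level surface $\om_{\a_0}(\qv) = \om$ for $\om$ slightly below $\om_M$ is, in suitable orthogonal coordinates, a small ellipsoid of linear size $\sim \sqrt{\om_M - \om}$. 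Substituting $x_i = \sqrt{2(\om_M - \om)/|\g_i|}\, y_i$ in the surface integral formula for $g_{\a_0}$ shows that $g_{\a_0}(\om) = g_{\a_0}(\om_M) - C\sqrt{\om_M - \om} + o(\sqrt{\om_M - \om})$ with some constant $C > 0$ depending on $\g_1, \g_2, \g_3$. Differentiating, $g'_{\a_0}(\om) \sim C/(2\sqrt{\om_M - \om}) \cdot (-1) \to -\infty$ as $\om \to \om_M^-$, and since all other branches have $g_\a(\om) \equiv 0$ in a left neighborhood of $\om_M$ (their maxima being strictly below $\om_M$, or contributing an analogous divergent term of the same sign if they happen to share the maximum), the total density of states satisfies $g'(\om) \to -\infty$ as $\om \to \om_M^-$.

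The main obstacle is the Morse-function hypothesis: generic dispersion relations are Morse, but high crystal symmetry can produce degenerate critical manifolds, so one must either invoke a small symmetry-preserving perturbation argument or work with equivariant Morse theory for the point group $R_0$, observing that the same Betti-number lower bounds survive. A secondary (but routine) point is to check that saddles of different branches do not conspire to cancel singularities in $g$; this is handled by noting that the singular parts have a definite sign determined by the signature of $\G$, so cancellation would require accidental coincidences of critical values across branches that can be excluded generically.
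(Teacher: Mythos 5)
The lecture notes themselves do not prove this theorem: the text only derives the local expansion of $g_\a$ around a critical point, classifies the four signatures $m$, $M$, $S_1$, $S_2$ (with the square-root behaviour deferred to exercise 10), and then quotes van Hove's paper for the global existence statement. Your Morse-theoretic argument supplies exactly that missing global ingredient, and it is in fact van Hove's original route: each branch $\om_\a$ is periodic with respect to $\overline{B}$, hence descends to a smooth function on the 3-torus, and the Morse inequalities with Betti numbers $1,3,3,1$ force, generically, at least three index-1 ($S_1$) and three index-2 ($S_2$) critical points per branch; your caveats about symmetry-induced degenerate critical points and about possible cancellations between branches are the same ones the original argument has to address, and your resolution (genericity, definite sign of each singular part) is the standard one. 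So the comparison is simply: the notes give only the local classification plus a citation, while your proposal turns that local analysis into an actual proof of the existence claim. One further point you may want to acknowledge is that smoothness of $\om_\a$ on the whole torus fails at band touchings, so strictly the Morse count applies to branches without contacts or after a genericity argument there as well.

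One detail in your treatment of the upper edge needs repair. Near the global maximum $\om_M$ the shrinking ellipsoidal level surfaces give a \emph{positive} local contribution $C\sqrt{\om_M - \om}$, $C>0$, for $\om < \om_M$, and since $\om_M$ is the top of the entire spectrum there is nothing else left at these energies, so $g(\om) = C\sqrt{\om_M - \om} + o\bigl(\sqrt{\om_M - \om}\bigr)$ with $g(\om_M) = 0$. Your written expansion $g_{\a_0}(\om) = g_{\a_0}(\om_M) - C\sqrt{\om_M - \om}$ has the wrong sign: it would make the density of states negative just below the edge, and differentiating it literally gives $+\infty$, contradicting the $-\infty$ you then assert. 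With the correct sign one gets $g'(\om) = -C\bigl/\bigl(2\sqrt{\om_M - \om}\bigr) \rightarrow -\infty$ as $\om \rightarrow \om_M^-$, which is the claimed statement; the fix is purely local and does not affect the rest of your argument.
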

\subsection{Exercise 10. Van-Hove singularities}
The density of states per unit volume of the $\a$th phonon branch in a crystal
lattice in $d$ dimensions is given by 
\begin{equation}
  g_\a (\omega) = \frac{1}{(2\pi)^d} \int_{S_\a (\omega)}
                \frac{\rd S}{\|\grad \om_\a (\kv)\|} \epc
\end{equation}
where $S_\a (\om)$ is the surface $\om_\a = \text{const.}$ in the reciprocal
space. The total density of states per unit volume is the sum over all
branches.

Determine the four distinct types of singularities of the density of states
for space dimension $d=3$. For this purpose expand $\omega_\a$ close to
a critical point $\omega_0$,
\begin{equation}
    \om_\a - \om_0 = \g_1 x_1^2 + \g_2 x_2^2 + \g_3 x_3^2 \epc
\end{equation}
and discuss the four different cases associated with different choices
of the relative sign of the coefficients $\g_1$, $\g_2$ and $\g_3$.
Hint: the saddle point cases require the introduction of a cut-off.
\enlargethispage{1ex}

\subsection{Exercise 11. Density of states in linear chains}
The dimensionless dispersion relations of the monoatomic linear chain and of the linear
chain with alternating masses with $0\leq \kappa < 2\pi$ and ratio of masses
$\mu=m/M$ are given by
\begin{align}
  \Omega^2(\kappa) & = 4\sin^2{(\kappa/2)} \qquad \text{and}\\
  \Omega^2(\kappa) & = 1+\mu\pm\sqrt{(1+\mu)^2-4\mu\sin^2{(\kappa/2)}} \epp
\end{align}
Calculate and sketch the densities of states. Which types of singularities do appear?

\mysection{Specific heat of the harmonic crystal}
Given the free energy of the phonons as a functional of the density
of states (\ref{phfreeends}) we can calculate their internal
energy,
\begin{multline} \label{phinten}
     E_{\rm ph} = F_{\rm ph} + T S_{\rm ph}
        = F_{\rm ph} - T \frac{\6 F_{\rm ph}}{\6 T} \\[1ex]
	= E_0 - V T^2 \frac{\6}{\6 T}
	        \int_0^\infty \rd \om \: g(\om) \ln \bigl(1 - \re^{- \frac{\om}{T}}\bigr)
	= E_0 + V \int_0^\infty \rd \om \: \frac{g(\om) \om}{\re^{\frac{\om}{T}} - 1} \epp
\end{multline}
Here $S_{\rm ph}$ in the first equation is the entropy of the
harmonic lattice vibrations. The integrand on the right hand side
of the last equation has a clear interpretation as the ``density of
states $g(\om)$ $\times$ energy $\om$ $\times$ thermal occupation.''

The quantity that is measured in experiments is the specific heat
\begin{equation}
     C_V = \frac{\6 E_{\rm ph}}{\6 T}
         = V \int_0^\infty \rd \om \: g(\om)
	     \biggl(\frac{\om/2T}{\sh(\om/2T)}\biggr)^2 \epp
\end{equation}
It is a linear functional of the density of states. Due to
(\ref{deftotalds}) the specific heat of the phonons is the
sum of the contributions from all branches of the dispersion
relation. From any model for the force matrix we can calculate
the dispersion $\om_\a$, then the density of states $g_\a$
and finally the specific heat by means of the above equation.

\subsection{Low-temperature specific heat}
Many bulk characteristic properties of solids at low and high
temperatures are rather universal. A prime example, which we
shall consider now, is provided by the contribution of the
phonons to the specific heat. In order to understand its low-$T$
behaviour we use (\ref{phinten}) to present it as
\begin{equation} \label{cvforlow}
     C_V = - V  \frac{\6}{\6 T} T^2 \frac{\6}{\6 T}
	        \int_0^\infty \rd \om \: g(\om) \ln \bigl(1 - \re^{- \frac{\om}{T}}\bigr) \epp
\end{equation}
The density of states has a Taylor series expansion around
$\om = 0$. Let $\om_0 > 0$ be its radius of convergence and
fix $\de$ such that $0 < \de < \om_0$. $\then$
\begin{multline} \label{bosesommerfeld}
     \int_0^\infty \rd \om \: g(\om) \ln \bigl(1 - \re^{- \frac{\om}{T}}\bigr)
        = \int_0^\de \rd \om \: g(\om) \ln \bigl(1 - \re^{- \frac{\om}{T}}\bigr)
	  + {\cal O} \bigl(T^\infty\bigr) \\[1ex]
        = \sum_{n=0}^\infty \frac{g^{(n)} (0) T^{n+1}}{n!}
	     \int_0^{\de/T} \rd x \: x^n \ln\bigl(1 - \re^{-x}\bigr)
	  + {\cal O} \bigl(T^\infty\bigr) \epp
\end{multline}
Using the Taylor expansion of the logarithm we can estimate the
integral on the right hand side as
\begin{multline} \label{zetagamma}
     \int_0^{\de/T} \rd x \: x^n \ln\bigl(1 - \re^{-x}\bigr)
        = \int_0^\infty \rd x \: x^n \ln\bigl(1 - \re^{-x}\bigr)
	  + {\cal O} \bigl(T^\infty\bigr) \\[1ex]
        = - \sum_{k=1}^\infty \frac{1}{k} \int_0^\infty \rd x \: x^n \re^{-kx}
	  + {\cal O} \bigl(T^\infty\bigr)
        = - \sum_{k=1}^\infty \frac{1}{k^{n+2}} \int_0^\infty \rd x \: x^n \re^{-x}
	  + {\cal O} \bigl(T^\infty\bigr) \\[1ex]
        = - \z(n+2) \G(n+1) + {\cal O} \bigl(T^\infty\bigr) \epc
\end{multline}
where $\G$ is the gamma function and $\z$ Riemann's zeta function.
Inserting (\ref{zetagamma}) and (\ref{bosesommerfeld}) into
(\ref{cvforlow}) we obtain the low-$T$ expansion of the specific heat,
\begin{equation} \label{phcvlowt}
     C_V = V \sum_{n=0}^\infty g^{(n)} (0) (n + 1)(n + 2) \z(n + 2) T^{n+1} \epc
\end{equation}
which holds up to exponentially small corrections in the temperature.

Equation (\ref{phcvlowt}) holds separately for every phonon branch. If we
replace $g$ by $g_\a$, we obtain the contribution $C_{V, \a}$ of
the phonon branch number $\a$ to the specific heat. For every optical
branch the density of states at $\om = 0$ is identically zero and
so are all the coefficients in its Taylor expansion around this point.
Thus,
\begin{equation}
     C_{V, \a} = {\cal O} \bigl(T^\infty\bigr)
\end{equation}
for every optical branch. In other words, the low-$T$ specific heat of
the phonons is entirely determined by the three acoustic branches.

Consider an acoustic phonon branch with (isotropic) sound velocity $v$.
Then
\begin{equation}
     \om (\qv) = v \|\qv\|
\end{equation}
for small $\qv$. The corresponding counting function (\ref{countfun})
for small $\om$ is
\begin{equation}
     f(\om) = \frac{1}{(2\p)^3} \int_{BZ} \rd^3 q \: \Th (\om - v\|\qv\|)
            = \frac{1}{2\p^2} \int_0^{\om/v} \rd q \: q^2 \epp
\end{equation}
Hence, for each acoustic branch with sound velocity $v_\a$, the density
of states at small $\om$ is
\begin{equation}
     g(\om) = \frac{\om^2}{2 \p^2 v^3} \epp
\end{equation}
According to (\ref{phcvlowt}) the contribution to the specific heat is
\begin{equation}
     C_{V, \a} = \frac{2 \p^2 V T^3}{15 v_\a^3} + {\cal O} \bigl(T^4\bigr) \epp
\end{equation}
Here we have used that $\z(4) = \p^4/90$. Summing over the three
acoustic branches we obtain the total low-$T$ specific heat of
the ions in harmonic approximation,
\begin{equation} \label{phtthreelaw}
     C_V = \frac{2 \p^2 V T^3}{5 \<v^3\>_{\rm h}}
           + {\cal O} \bigl(T^4\bigr) \epc
\end{equation}
where $\<\cdot\>_{\rm h}$ stands for the harmonic mean of the
sound velocities.

Equation (\ref{phtthreelaw}) is the $T^3$ law for the specific heat
of insulators (in conductors the electrons contribute significantly
to $C_V$). In simple solids it holds up to temperatures of about
$10 {\rm K}$. Notice that (\ref{phtthreelaw}) implies that the
low-$T$ specific heat can be determined by measuring the sound
velocities, or, taking it the other way round, that the average
sound velocity can be obtained from a specific heat measurement.

\subsection{Internal energy and specify heat at high temperatures}
Recall that, for $|x| < 2 \p$,
\begin{equation} \label{xctgx}
     \frac{x}{\re^x - 1} = 1 - \frac{x}{2}
                           + \sum_{n=1}^\infty \frac{B_{2n}}{(2n)!} x^{2n} \epc
\end{equation}
where the $B_{2n}$ are the Bernoulli numbers. This allows us to 
derive a convergent high-$T$ expansion of the internal energy of
the phonon gas. Inserting (\ref{xctgx}) into (\ref{phinten}) we obtain
\begin{multline} \label{phintenhightea}
     E_{\rm ph} =
        E_0 + V T \int_0^\infty \rd \om \: g(\om) 
	           \biggl\{ 1 - \frac{\om}{2T}
		            + \sum_{n=1}^\infty \frac{B_{2n}}{(2n)!}
			    \Bigl(\frac{\om}{T}\Bigr)^{2n} \biggr\} \\[1ex]
        = E_0 + 3 N_{\rm at} \biggl\{ T - \frac{\<\om\>_g}{2}
		            + \sum_{n=1}^\infty \frac{B_{2n}}{(2n)!}
                              \frac{\<\om^{2n}\>_g}{T^{2n - 1}} \biggr\} \epp
\end{multline}
Here $N_{\rm at} = N L^3$ is the total number of ions and
$\< \cdot \>_g$ denotes the average with respect to the probability
density
\begin{equation}
     p_g (\om) = \frac{g(\om)}{\int_0^\infty \rd \om \: g(\om)} \epp
\end{equation}
Thus, $\<\om^m\>_g$ is the $m$th moment of the probability density.
In the derivation of (\ref{phintenhightea}) we have used that
\begin{equation} \label{phdsnorm}
     \int_0^\infty \rd \om \: g(\om)
        = \lim_{\om \rightarrow + \infty} \sum_{\a \in I} f_\a (\om)
	= \frac{3 N L^3}{V}
\end{equation}
which follows from (\ref{countfun}).

Equation (\ref{phintenhightea}) implies the high-$T$ series
representation
\begin{equation}
     C_V = - 3 N_{\rm at} \sum_{n=0}^\infty \frac{(2n - 1) B_{2n}}{(2n)!}
             \<\om^{2n}\>_g T^{- 2n}
	 = 3 N_{\rm at} - \frac{N_{\rm at} \<\om^2\>_g}{4 T^2}
	   + {\cal O} \bigl(T^{-4}\bigr)
\end{equation}
for the specific heat. Here we have used that $B_0 = 1$ and
$B_2 = 1/6$. The leading order contribution corresponds to
the Dulong-Petit law
\begin{equation}
     C_V = 3 N_{\rm at}
\end{equation}
indicating a constant heat capacity at high temperature, whence
the name `heat capacity'. Note that this high-temperature limit
is approached monotonously from below.

\subsection{Debye interpolation}
We have seen that the specific heat of the phonon gas shows
universal behaviour at low and high temperatures. Let us seek
for a simple model interpolating between these two universal
regimes. For this purpose we take the low-frequency form of
the density of states and cut it off at a frequency $\om_D$
in such a way that the normalization condition (\ref{phdsnorm})
is satisfied. Let us further introduce an effective sound velocity
\begin{equation}
     \overline v = \bigl\<v^3\bigr\>_{\rm h}^\frac{1}{3} \epp
\end{equation}
Then our model density of states is
\begin{equation}
     g_D (\om) = \frac{3 \om^2}{2 \p^2 \overline{v}^3} \Th(\om_D - \om) \epc
\end{equation}
where the cut-off frequency $\om_D$ is fixed by the condition
\begin{equation}
     \int_0^\infty \rd \om \: g_D (\om)
	= \frac{\om_D^3}{2 \p^2 \overline{v}^3}
	= \frac{3 N_{\rm at}}{V}
\end{equation}
implying that
\begin{equation}
     \om_D = \overline{v} \biggl(\frac{6 \p^2 N_{\rm at}}{V}\biggr)^\frac{1}{3}
           = N^\frac{1}{3} \overline v \, r_R \epc
\end{equation}
where
\begin{equation}
     r_R = \biggl(\frac{V_R}{4 \p/3}\biggr)^\frac{1}{3}
\end{equation}
is a reciprocal length parameter sometimes called the `radius
of the Brillouin zone'. $g_D$ is called Debye density of
states and the frequency $\om_D$ the Debye frequency. These
notions go back to the Dutch-American noble laureate Peter
Debye \cite{Debye12}.

The internal energy for the Debye density of states is
\begin{equation}
     E_{\rm ph} = E_0 + \frac{9 N_{\rm at} T^4}{\om_D^3}
                        \int_0^\frac{\om_D}{T} \rd \om \: \frac{\om^3}{\re^\om - 1} \epp
\end{equation}
Defining the Debye function
\begin{equation}
     D(x) = \frac{3}{x^3} \int_0^x \rd t \: \frac{t^3}{\re^t - 1}
\end{equation}
we may recast the internal energy as
\begin{equation}
     E_{\rm ph} = E_0 + 3 N_{\rm at} T D(\om_D/T) \epc
\end{equation}
while the corresponding specific heat takes the form
\begin{equation}
     C_V = 3 N_{\rm at}
           \biggl\{D \Bigl(\frac{\om_D}{T}\Bigr)
                           - \frac{\om_D}{T} D' \Bigl(\frac{\om_D}{T}\Bigr) \biggr\} \epp
\end{equation}
\begin{figure}
\begin{center}
\includegraphics[width=.62\textwidth]{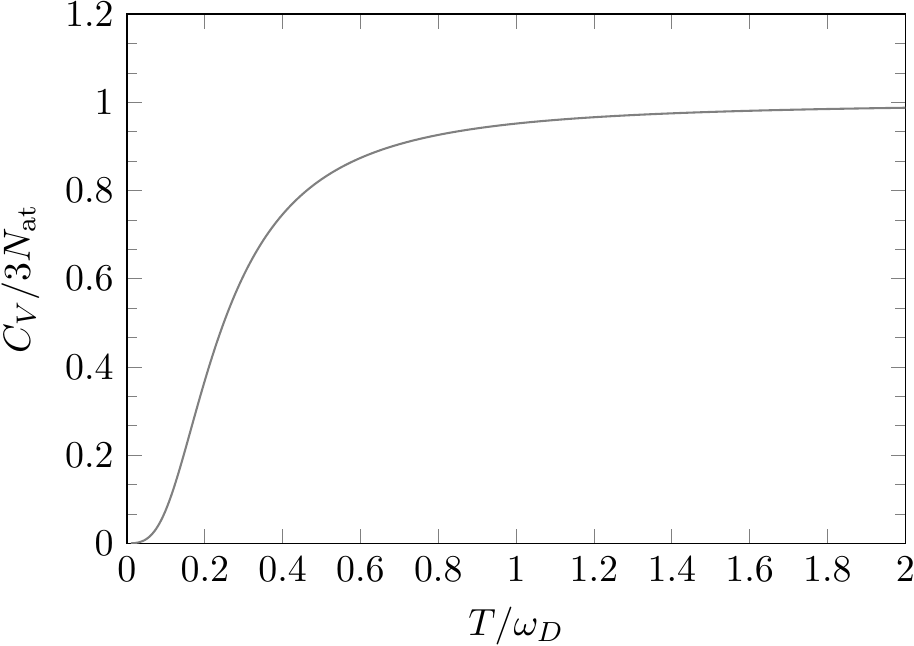}
\caption{\label{fig:cv_debye} Debye model of the specific heat of
the phonon gas in solids.
}
\end{center}
\end{figure}%
This is called the Debye formula or the Debye interpolation
formula. According to this formula the specific heat of the
phonon gas is a universal function of $\om_D/T$ (see
Figure~\ref{fig:cv_debye}). With respect to the phonon gas
contribution to the specific heat different solids are
distinguished by a single parameter, the Debye frequency.
Because it is simple and at the same time covers the basic
features of the temperature dependence of the specific
heat, the Debye model is prevailing in experimental solid
state physics. 
\subsection{The anomaly of the harmonic crystal}
Usually the basic tasks in statistical physics are to derive
the specific heat and the equation of state, expressing the
pressure $p = - \6 F/\6 V$ as a function of $T$ and $V$. For
the harmonic crystal one can find the statement that the
pressure is temperature independent,
\begin{equation} \label{pdoesnotdependont}
     \Bigl(\frac{\6 p}{\6 T}\Bigr)_V = 0 \epc
\end{equation}
in several places in the textbook literature. This should be
related to the scale invariance of the equations of motion in
harmonic potentials, but at the moment I do not know of any
really convincing derivation in the general case.

Equation (\ref{pdoesnotdependont}) implies several thermodynamic
anomalies. Since
\begin{equation}
     \Bigl(\frac{\6 V}{\6 T}\Bigr)_p
        = - \frac{\bigl(\frac{\6 p}{\6 T}\bigr)_V}
	         {\bigl(\frac{\6 p}{\6 V}\bigr)_T} = 0 \epc
\end{equation}
it follows, for instance, that the thermal expansion coefficient
\begin{equation}
     \a = \frac{1}{V} \Bigl(\frac{\6 V}{\6 T}\Bigr)_p = 0 \epp
\end{equation}
This contradicts our experience that solids usually expand
when they are heated. We conclude that the thermal expansion of
solids is a feature that must be attributed the higher order
corrections to the harmonic crystal. Another anomaly implied by
(\ref{pdoesnotdependont}) is, for instance, $C_V = C_p$, the
coincidence of the specific heats at constant pressure and
constant volume.

The simplest model system for which we can verify
(\ref{pdoesnotdependont}) is a classical chain of particles
of equal masses $m$ that interact with their nearest neighbours
through a pair potential $V(r)$, where $r$ is the distance between
the particles. Let us consider $N+1$ such particles with
coordinates $x_n$, $n = 0, \dots, N$, and nearest-neighbour
distances $r_n = x_n - x_{n-1}$, $n = 1, \dots, N$. Then the
Hamiltonian of the system is
\begin{equation} \label{hamnnsprings}
     H = \sum_{n=0}^N \frac{p_n^2}{2m} + \sum_{n=1}^N V(r_n) \epc
\end{equation}
where the $p_n$ are the momenta canonically conjugate to the $x_n$.
Note that this system has no periodic boundary conditions.
Instead of applying periodic boundary conditions for the
interaction we apply an external mechanic pressure to control
the length of the system. This can be achieved by adding a
term $(x_N - x_0)p$ to the Hamiltonian (\ref{hamnnsprings}).
As a configuration space $K$ for the particles we consider a ring
of finite length $L$ on which the `springs' connecting
neighbouring particles can be arbitrarily expanded by moving
them several times relative to each other around the ring, but
the center of mass of all particles is confined to the ring.
This construction is necessary in order to regularize the
integral
\begin{equation} \label{znnchain}
     Z = \frac{1}{(N+1)!}
         \int_{{\mathbb R}^{N+1}} \frac{\rd^{N+1} p}{(2\p \hbar)^{N+1}}
         \int_K \rd^{N+1} x \:
	 \exp \biggl\{- \frac{1}{T} \bigl(H + (x_N - x_0) p\bigr) \biggr\}
\end{equation}
representing the classical partition function. It would be
otherwise divergent. By construction
\begin{equation}
     - T \6_p \ln Z = \< x_N - x_0\> = \ell
\end{equation}
is the average length of the system at given $T$ and $p$ which
justifies the interpretation of $p$ as the pressure.

It is not difficult to calculate the integral on the right hand
side of (\ref{znnchain}). The momentum integration reduces to
Gaussian integrals, whereas the remaining integrals over the
configuration space can be dealt with after introducing Jacobi
coordinates
\begin{equation}
     s = \frac{1}{N+1} \sum_{n=0}^N x_n \epc \qd r_n = x_n - x_{n-1} \epc
         \qd n = 1, \dots, N \epp
\end{equation}
This transformation is linear, and it is easy to see that
its Jacobi matrix equals one. Taking this into account we obtain
\begin{equation}
     Z =  \frac{L}{(N+1)!} \biggl(\frac{mT}{2 \p \hbar^2}\biggr)^\frac{N+1}{2}
          \biggl(\int_{- \infty}^\infty \rd r \: \re^{- (V(r) + pr)/T} \biggr)^N \epp
\end{equation}
It follows that
\begin{equation}
     \ell = - N T \6_p
        \ln \biggl(\int_{- \infty}^\infty \rd r \: \re^{- (V(r) + pr)/T} \biggr) \epp
\end{equation}

In general the latter expression does depend on $T$, but if we
insert the harmonic potential
\begin{equation}
     V(r) = \frac{m \om_0^2}{2} (r - a)^2 \epc
\end{equation}
where $a$ is the `rest length' of the potential, then
\begin{equation}
     \ell = N \biggl(a - \frac{p}{m \om_0^2}\biggr)
\end{equation}
which we interpret as the thermodynamic length of the system.
If we now solve for the pressure $p$, we end up with
\begin{equation}
      p = m \om_0^2 \biggl(a - \frac{\ell}{N}\biggr)
\end{equation}
which, indeed, does not depend on $T$. Note that, unlike the
pressure of gases, the pressure here can have either sign,
a negative pressure contracting the chain, if it is expanded
over its equilibrium length.

\mysection{Neutron scattering}
\subsection{Thermal neutrons}
Neutron scattering experiments in solids are performed with
so-called thermal neutrons. These are neutrons with energies
between $5$ - $10\, {\rm meV}$ ($T \simeq 60$ - $1000 {\rm K}$,
$\la \simeq 0,4$ - $0,1\, {\rm nm}$). Thermal neutrons are one of
the most important probes for studying the properties of
solids and fluids. This has several reasons.
\begin{enumerate}
\item
The neutron is electrically neutral. It penetrates deeply into
the solid, can come close to the atomics nuclei and is
scattered by nuclear forces.
\item
Because of their relatively large mass the de Broglie wave length
of thermal neutrons is of the order of magnitude of atomic
distances in solids and fluids.
\item
Their energy is of the order of magnitude of the energy of
elementary excitations in solids. If a neutron is
scattered inelastically its relative change of energy is
generally large enough to be resolved experimentally. Hence,
neutrons do not only resolve the structure of solids, but
can be also used to measure their excitation spectra, e.g.\
the dispersion relations of phonons.
\item
Neutrons carry a magnetic moment which is sensitive against
intra-atomic magnetic fields. Therefore they can be used to
measure short-wavelength magnetic structures (antiferromagnetism!)
and magnetic excitations.
\end{enumerate}
The traditional source of neutrons are nuclear reactors.
Note that neutrons which come out of a nuclear reactor have
a Maxwell velocity distribution leading to a neutron current
per unit time for neutrons with velocity $v$ of
\begin{equation} \label{neutroncurrent}
     \Ph (v) \rd v = \frac{\r}{\sqrt{\p}} \Bigl(\frac{m}{2T}\Bigr)^\frac{3}{2}
                     v^3 \re^{- \frac{m v^2}{2T}} \rd v \epc
\end{equation}
where $m$ is the neutron mass and $\r$ their density
(exercise: derive (\ref{neutroncurrent})).
 
\subsection{Scattering experiments and cross sections}
\begin{figure}
\begin{center}
\includegraphics[width=\textwidth]{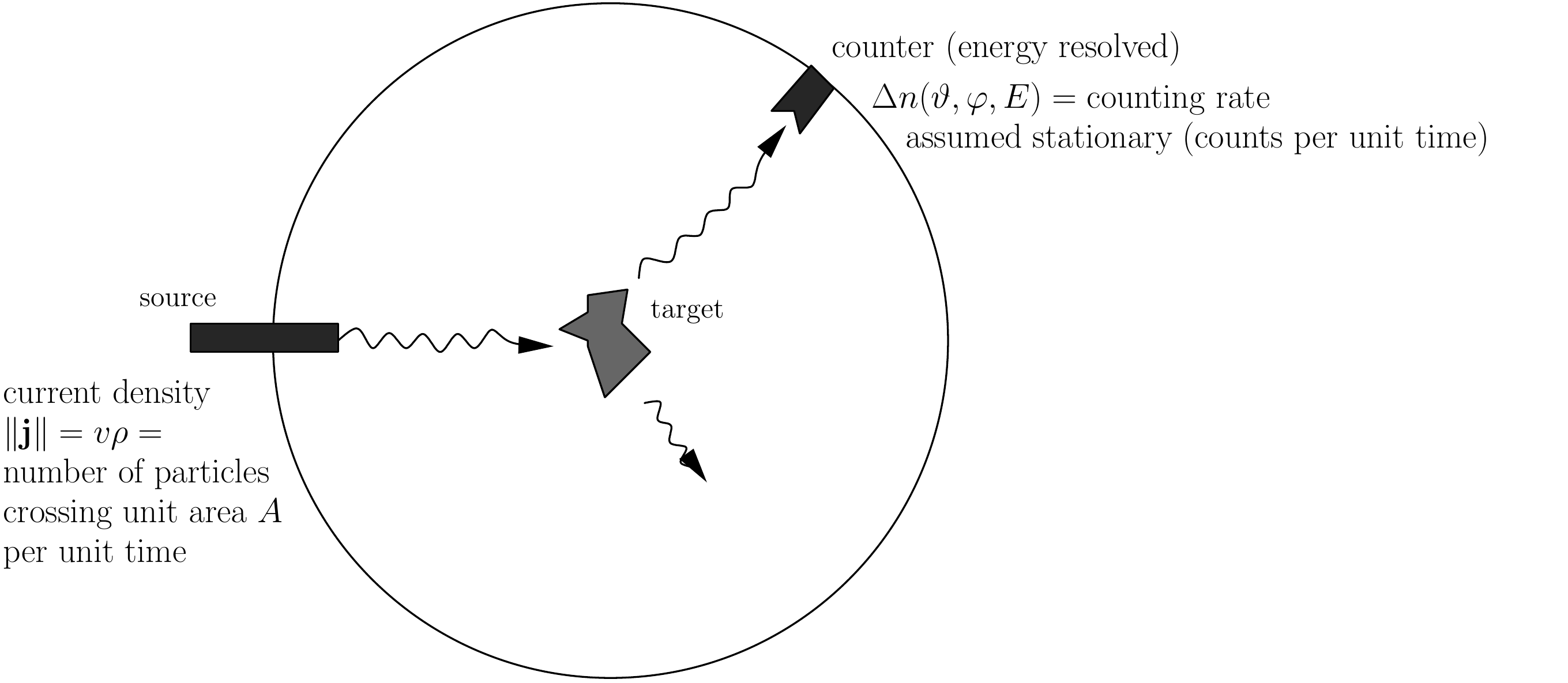}
\caption{\label{fig:cross_section} 
Schematic view of a typical scattering experiment. An incident
particle current hits a target and is scattered in all directions.
The currents of the scattered particles through a sphere centered 
around the target are measured. They determine the differential
cross section.
}
\end{center}
\end{figure}
The typical geometry of a scattering experiment is sketched in
Figure~\ref{fig:cross_section}. A stationary current of incident
particles impinges on a target and is scattered. Detectors in
equal distance from the target measure the current scattered in
direction $(\dh, \ph)$, where $\dh$, $\ph$ are spherical coordinates.
We denote the counting rate at $(\dh, \ph)$ for particles with
energies in an infinitesimal interval around $E$ by $\D n(\dh, \ph, E)$.
In a stationary situation the counting rate is expected to
be proportional to the current density $\|\jv\|$ of the incident
current $\jv$. Hence, the counting rate normalized by the current
density of the incident current,
\begin{equation}
     \D \s = \frac{\D n}{\|\jv\|} \epc
\end{equation}
is expected to be independent of the current and to characterize
the target. If we further divide by the solid angle
$\D \Om = \sin (\dh) \D \dh \D \ph$ and by the width $\D E$
of the energy interval, we obtain a quantity called the 
differential cross section,
\begin{equation} \label{defdiffcross}
     \frac{\D \s}{\D \Om \D E} = \frac{\D n}{{\D \Om \D E} \|\jv\|} \epp
\end{equation}
In the limit of infinite energy and angle resolution this function
is denoted
\begin{equation}
     \frac{\rd^2 \s}{\rd \Om \rd E} (\dh, \ph, E) \epp
\end{equation}

\subsection{Cross section and transition rate}
We shall assume that the incident neutron beam is monochromatic
and coherent. Quantum mechanically it is then represented by a
plane wave
\begin{equation}
     \Ps (\xv) = \frac{\re^{\i \<\kv, \xv\>}}{\sqrt{V}}
\end{equation}
with wave vector $\kv$ and `normalization volume $V$' (we shall use
units such that $\hbar = 1$). The corresponding current is
\begin{equation} \label{curdens}
     \jv (\xv) = - \frac{\i}{2m}
                   \bigl(\Ps^* (\xv) \6_\xv \Ps (\xv)
		         - \Ps(\xv) \6_\xv \Ps^* (\xv)\bigr)
               = \frac{\kv}{m V} \epp
\end{equation}

The interaction of the neutrons with the target determines the
transition rate $P(\kv, \kv')$ for a transition from an incoming
wave with wave vector $\kv$ to a scattered wave with wave vector
$\kv'$ under the influence of the perturbation caused by the target
(recall the time-dependent perturbation theory from the quantum
mechanics lecture). Since $\frac{\rd^3 k'}{(2\p)^3/V}$ is the
number of states in the volume $\rd^3 k'$ around $\kv'$ we can
express the counting rate as
\begin{equation} \label{counttrans}
     \rd n = P(\kv, \kv') \frac{\rd^3 k'}{(2\p)^3/V}
           = P(\kv, \kv') \frac{V k'^2 \rd k' \rd \Om}{(2\p)^3}
           = \frac{m V}{(2\p)^3} P(\kv, \kv') k' \rd E \rd \Om \epp
\end{equation}
Using (\ref{curdens}) and (\ref{counttrans}) in (\ref{defdiffcross})
we have expressed the differential cross section in terms of
the transition rate,
\begin{equation} \label{cstr}
     \frac{\rd^2 \s}{\rd \Om \rd E} (\dh, \ph, E)
        = \frac{k'}{k} \frac{(mV)^2}{(2 \p)^3} P(\kv, \kv') \epp
\end{equation}
This formula connects the basic measurable quantity on the left
hand side with a quantity depending on the microscopic properties
of the target on the right hand side.

Let us recall how the transition rate appears in quantum mechanics.
It is usually first encountered in the context of time-dependent
perturbation theory and Fermi's `golden rule' which states in our
case that
\begin{equation} \label{fermigold}
     P(\kv, \kv') = 2 \p \sum_f \de(\e_i - \e_f) |\<\Ps_f, U \Ps_i\>|^2
\end{equation}
is the rate for transitions $\Ps_i \rightarrow \Ps_f$, where $\Ps_i =
\Ps_\kv (\xv) \Ph_i$ is a fixed initial state and $\Ps_f =
\Ps_{\kv'} (\xv) \Ph_f$ runs through all possible final states
with fixed $\kv'$. Here $\Ph_i$ and $\Ph_f$ denote eigenstates
of the ions (recall that the neutrons interact with the ions!).
Hence, the energies of initial and final states $\e_i$ and $\e_f$
are
\begin{equation}
     \e_i = E_i + \frac{k^2}{2m} \epc \qd
     \e_f = E_f + \frac{k'^2}{2m} \epc \qd
\end{equation}
where $E_i$ and $E_f$ are the energies of the ionic states. Introducing
the notation $\om = (k^2 - k'^2)/2m$ the energy difference in
(\ref{fermigold}) takes the form
\begin{equation}
     \e_i - \e_f = E_i - E_f + \om \epp
\end{equation}
The potential $U$ that describes the interaction of the neutrons 
with the ions is of the form
\begin{equation}
     U(\xv) = \sum_{\Rv \in B} \sum_{r=1}^N
                 u \bigl(\xv - \xv_r (\Rv)\bigr) \epc
\end{equation}
where the $\xv_r (\Rv)$ are the position vectors of the ions.
In order to simplify the notation we suppress from now on
the sum over the ion positions in the unit cell. It will
always come with the sum over the Bravais lattice an can be
restored at any later stage if necessary.

\subsection{Form factor and structure factor}
Substituting the explicit form of the potential and of the
factorized wave functions we can calculate the matrix elements in
(\ref{fermigold}),
\begin{multline}
     \<\Ps_i, U \Ps_f\> =
        \frac{1}{V} \sum_{\Rv \in B} \int \rd^3 x \re^{- \i \<\qv, \xv\>}
	  \int \biggl[\prod_{\Sv \in B} \rd^3 x(\Sv)\biggr] \:
	     \Ph_i^*\, u \bigl(\xv - \xv (\Rv)\bigr)\, \Ph_f \\
        = \frac{1}{V} \sum_{\Rv \in B} \int \rd^3 x \re^{- \i \<\qv, \xv\>} u(\xv)
	  \int \biggl[\prod_{\Sv \in B} \rd^3 x(\Sv)\biggr]\:
	     \Ph_i^* \re^{- \i \<\qv, \xv(\Rv)\>} \Ph_f \epc
\end{multline}
where $\qv = \kv - \kv'$. We see that the integrals factorize into
a factor that depends on the interaction potential between the
neutrons and the ions and a factors that depends on the crystal
structure. Denoting
\begin{subequations}
\begin{align}
     & b(\qv) = \int \rd^3 x \re^{- \i \<\qv, \xv\>} u(\xv) \epc \\
     & \bigl\< \Ph_i, \re^{- \i \<\qv, \xv(\Rv)\>} \Ph_f \bigr\> =
	  \int \biggl[\prod_{\Sv \in B} \rd^3 x(\Sv)\biggr]\:
	     \Ph_i^* \re^{- \i \<\qv, \xv(\Rv)\>} \Ph_f
\end{align}
\end{subequations}
we obtain the following factorized form of the transition rate,
\begin{equation} \label{transitionrate}
     P(\kv, \kv') = \frac{2 \p}{V^2} |b(\qv)|^2 
        \sum_f \de(\e_i - \e_f)
	\Bigl|\sum_{\Rv \in B}
	      \bigl\< \Ph_i, \re^{- \i \<\qv, \xv(\Rv)\>} \Ph_f \bigr\>\Bigr|^2 \epp
\end{equation}
Here $|b(\qv)|^2$ is called the atomic form factor, since it depends
only on the interaction of the individual ions with the neutrons.
The sum over $f$ divided by $N_{\rm at}$ is called the dynamic
structure factor $S(\qv, \om)$. It contains the information about
the structure and the dynamics of the ions. Inserting the expression
(\ref{transitionrate}) for the transition rate into (\ref{cstr})
we obtain a formula for the differential cross section,
\begin{equation}
     \frac{\rd^2 \s}{\rd \Om \rd E} (\dh, \ph, E)
        = \frac{k'}{k} \frac{m^2}{(2 \p)^2} |b(\qv)|^2
	  \sum_f \de(\e_i - \e_f)
	  \Bigl|\sum_{\Rv \in B} \bigl\< \Ph_i,
	        \re^{- \i \<\qv, \xv(\Rv)\>} \Ph_f \bigr\>\Bigr|^2 \epp
\end{equation}

Note that this formula is very general.
\begin{enumerate}
\item
With slight modifications it holds for fluids as well.
\item
It solely relies on Fermi's golden rule (= time dependent perturbation
theory + Born approximation).
\end{enumerate}

\subsection{Rewriting the structure factor}
Using the `Fourier inversion formula'
\begin{equation}
     \de (\om) = \int_{-\infty}^\infty \frac{\rd t}{2 \p} \: \re^{- \i \om t}
\end{equation}
we obtain the following expression for the dynamic structure factor.
\begin{multline} \label{sqomzero}
     S(\qv, \om) =
        \frac{1}{N_{\rm at}}
	\int_{-\infty}^\infty \frac{\rd t}{2 \p} \:
	   \sum_f \re^{- \i (E_i - E_f + \om) t} \mspace{-4mu}
	   \sum_{\Rv, \Sv \in B} \mspace{-4mu}
	      \bigl\< \Ph_i, \re^{- \i \<\qv, \xv(\Sv)\>} \Ph_f \bigr\>
	      \bigl\< \Ph_f, \re^{\i \<\qv, \xv(\Rv)\>} \Ph_i \bigr\> \\[1ex]
        = \frac{1}{N_{\rm at}}
	  \int_{-\infty}^\infty \frac{\rd t}{2 \p} \:
	  \re^{- \i \om t} \sum_{\Rv, \Sv \in B}
	      \bigl\< \Ph_i, \re^{- \i \<\qv, \xv(\Sv)\>}
	              \re^{\i \<\qv, \xv(\Rv,t)\>} \Ph_i \bigr\> \epp
\end{multline}
Here we have used that
\begin{multline}
     \re^{- \i (E_i - E_f) t}
        \bigl\< \Ph_f, \re^{\i \<\qv, \xv(\Rv)\>} \Ph_i \bigr\> \\[1ex] =
        \bigl\< \Ph_f, \re^{\i H t} \re^{\i \<\qv, \xv(\Rv)\>}
	        \re^{- \i H t} \Ph_i \bigr\> =
        \bigl\< \Ph_f, \re^{\i \<\qv, \xv(\Rv, t)\>} \Ph_i \bigr\> \epc
\end{multline}
where $H$ is an effective Hamiltonian for the ion-ion interaction.
The expectation value under the sum on the right hand side of (\ref{sqomzero})
is called a dynamical two-point correlation function. The appearance
of such a correlation function is generic. As we shall see with further
examples below, most spectroscopic experiments and transport experiments in
solids measure two- or four-point correlation functions.

\mysection{Neutron scattering continued}
\subsection{Disorder}
So far we have assumed that the scattering potentials of all nuclei are
equal. This is only true if the solid (the fluid) consists of a single
isotope and if all nuclear spins are aligned (nuclear spin ferromagnet)
or zero. In reality the latter conditions are almost never satisfied.
We typically rather have to deal with mixtures of different isotopes
and with nuclear spin paramagnets. This means that we have to modify
our above derivation, replacing
\begin{equation}
     u(\xv - \xv(\Rv)) \rightarrow u_\Rv (\xv - \xv(\Rv)) \epc \qd
     b(\qv) \rightarrow b_\Rv (\qv) \epp
\end{equation}
In particular $b_\Rv (\qv)$ remains under the sum over the Bravais lattice
in the expression for the differential cross section, which now reads
\begin{multline}
     \frac{\rd^2 \s}{\rd \Om \rd E} (\dh, \ph, E) \\[1ex]
        = \frac{k'}{k} \frac{m^2}{(2 \p)^2}
	  \int_{-\infty}^\infty \frac{\rd t}{2 \p} \:
	  \re^{- \i \om t} \sum_{\Rv, \Sv \in B}
	      b_\Rv (\qv) b_{\Sv}^* (\qv)
	      \bigl\< \Ph_i, \re^{- \i \<\qv, \xv(\Sv)\>}
	              \re^{\i \<\qv, \xv(\Rv,t)\>} \Ph_i \bigr\> \epp
\end{multline}

In order to simplify this expression again, we assume that the
potentials $u_\Rv$ are randomly distributed. We shall indicate
the disorder average by brackets $\< \cdot \>_{\rm d}$. We
assume that the distribution underlying the average is such
that the mean value of $b_\Rv (\qv)$ is translation invariant
and, for this reason, use the notation
\begin{equation}
     \<b(\qv)\>_{\rm d} = \<b_\Rv (\qv)\>_{\rm d} \epp
\end{equation}
It further reasonable to assume that potentials at different
lattice sites are uncorrelated, implying that
\begin{equation}
     \<b_\Rv (\qv) b_{\Sv}^* (\qv)\>_{\rm d} = 
        \<b_\Rv (\qv)\>_{\rm d} \<b_{\Sv}^* (\qv)\>_{\rm d} = 
        |\<b(\qv)\>_{\rm d}|^2
\end{equation}
for $\Rv \ne \Sv$. Let us further introduce the averaged coherent
and incoherent atomic form factors
\begin{equation}
     \s_{\rm coh} = \frac{m^2}{\p} |\<b(\qv)\>_{\rm d}|^2 \epc \qd
     \s_{\rm inc} = \frac{m^2}{\p} \bigl(\<|b(\qv)|^2\>_{\rm d}
                                        - |\<b(\qv)\>_{\rm d}|^2 \bigr) \epp
\end{equation}
Averaging the differential cross section over the disorder then
results in
\begin{multline}
     \frac{\rd^2 \s}{\rd \Om \rd E} (\dh, \ph, E)
        = \frac{k'}{k}
	  \int_{-\infty}^\infty \frac{\rd t}{2 \p} \:
	  \re^{- \i \om t} \biggl\{ \sum_{\Rv, \Sv \in B}
	  \frac{\s_{\rm coh}}{4 \p}
	     \bigl\< \Ph_i, \re^{- \i \<\qv, \xv(\Sv)\>}
	             \re^{\i \<\qv, \xv(\Rv,t)\>} \Ph_i \bigr\> \\[1ex]
	  + \sum_{\Rv \in B}
	    \frac{\s_{\rm inc}}{4 \p}
	    \bigl\< \Ph_i, \re^{- \i \<\qv, \xv(\Rv)\>}
	            \re^{\i \<\qv, \xv(\Rv,t)\>} \Ph_i \bigr\> \biggr\} \epp
\end{multline}

So far we have assumed that the ions are in a pure initial state
$\Ph_i$. This is not realistic in experiments on a macroscopic
sample. In a macroscopic sample the ions have to be described by
a density matrix. If the crystal can exchange energy with its
environment it will be the density matrix of the canonical ensemble
in which each state $\Ph_i$ is occupied with probability $\re^{- E_i/T}/Z$,
where $Z$ is the canonical partition function. Performing the
canonical ensemble average and denoting the canonical expectation
values by $\< \cdot \>_T$ we obtain the final formula for the
theory of neutron scattering.
\subsection{Summary of the theory of neutron scattering -- general case}
First of all the dynamic structure factor at finite temperature
becomes
\begin{equation} \label{sqom}
     S(\qv, \om)
        = \frac{1}{N_{\rm at}}
	  \int_{-\infty}^\infty \frac{\rd t}{2 \p} \:
	  \re^{- \i \om t} \sum_{\Rv, \Sv \in B}
	      \bigl\<\re^{- \i \<\qv, \xv(\Sv)\>}
	              \re^{\i \<\qv, \xv(\Rv,t)\>}\bigr\>_T \epp
\end{equation}
The differential cross section splits into a coherent part and
an incoherent part,
\begin{equation}
     \frac{\rd^2 \s}{\rd \Om \rd E}
        = \frac{\rd^2 \s}{\rd \Om \rd E}\biggr|_{\rm coh}
	  + \frac{\rd^2 \s}{\rd \Om \rd E}\biggr|_{\rm inc} \epc
\end{equation}
where
\begin{subequations}
\begin{align}
     \frac{\rd^2 \s}{\rd \Om \rd E}\biggr|_{\rm coh} & =
        \frac{\s_{\rm coh}}{4 \p} \frac{k'}{k} N_{\rm at} S(\qv, \om) \epc \\[1ex]
     \frac{\rd^2 \s}{\rd \Om \rd E}\biggr|_{\rm inc} & =
        \frac{\s_{\rm inc}}{4 \p} \frac{k'}{k}
	  \int_{-\infty}^\infty \frac{\rd t}{2 \p} \:
	  \re^{- \i \om t} \sum_{\Rv \in B}
	      \bigl\<\re^{- \i \<\qv, \xv(\Rv)\>}
	              \re^{\i \<\qv, \xv(\Rv,t)\>}\bigr\>_T \epp
\end{align}
\end{subequations}

These formulae are widely used in order to analyze the data of
neutron scattering experiments. Before specializing them to
crystal structures we would like make a few general comments.
\begin{enumerate}
\item
In our derivation of (\ref{sqom}) we have used the letters $\Rv$
and $\Sv$ in $\xv (\Rv)$, $\xv(\Sv)$ as mere particle labels,
which is the reason why equation (\ref{sqom}), if properly read,
also defines the dynamic structure factor of a fluid or a glass.
When we used the notation $\Rv \in B$ we had in mind to apply
the formula to a mono-atomic (simple) lattice, but so far $B$
was rather an index set which may refer to any labeling of the
ions. Similar formulae hold, in particular, for a lattice with
basis.
\item
The factors $\s_{\rm coh}$ and $\s_{inc}$ depend, in the experimentally
relevant range of neutron wave lengths of the order of inter-atomic
distances only weakly on $\qv$, since the potentials $u_\Rv$ vary on
a scale of the size of the nuclei. For this reason $\s_{\rm coh}$ and
$\s_{\rm inc}$ can often be treated as constants.
\item
The most interesting contribution to the differential cross section
is the dynamic structure factor $S(\qv, \om)$. It is completely
determined by the properties of the sample and independent of the
properties of the neutrons.
\item
The incoherent cross section sums contributions from the same
nuclei at different times. It carries no information about
the structure of the sample. If all $b_\Rv (\qv)$ are identical,
the incoherent part vanishes. Its occurance is a direct consequence
of the disorder in the system.
\end{enumerate}
\subsection{Theory of neutron scattering -- specialization to crystal structures}
What happens if we specialize (\ref{sqom}) to a mono-atomic crystal
is, that the Hamiltonian is then invariant under the action of
the Bravais lattice $B$, implying that
\begin{equation}
     \bigl\<\re^{- \i \<\qv, \xv(\Sv)\>} \re^{\i \<\qv, \xv(\Rv,t)\>}\bigr\>_T
        = \bigl\<\re^{- \i \<\qv, \xv(0)\>} \re^{\i \<\qv, \xv(\Rv - \Sv,t)\>}\bigr\>_T \epp
\end{equation}
Inserting this into (\ref{sqom}) and also using that,
$\xv (\Rv) = \Rv + \uv(\Rv)$, where $\uv(\Rv)$ is the
deviation from the equilibrium position at $\Rv$, we obtain
\begin{equation}
     S(\qv, \om)
        = \int_{-\infty}^\infty \frac{\rd t}{2 \p} \:
	  \re^{- \i \om t} \sum_{\Rv \in B} \re^{\i \<\qv,\Rv\>}
	      \bigl\<\re^{- \i \<\qv, \uv(0)\>}
	              \re^{\i \<\qv, \uv(\Rv,t)\>}\bigr\>_T \epp
\end{equation}
The dynamic structure factor of the ions in a crystal lattice
is the spatio-temporal Fourier transform of the dynamical
two-point function $\bigl\<\re^{- \i \<\qv, \uv(0)\>}
\re^{\i \<\qv, \uv(\Rv,t)\>}\bigr\>_T$.

Similarly, the incoherent part of the differential cross section
simplifies to
\begin{equation}
     \frac{\rd^2 \s}{\rd \Om \rd E}\biggr|_{\rm inc} =
        \frac{\s_{\rm inc}}{4 \p} \frac{k'}{k} N_{\rm at}
	  \int_{-\infty}^\infty \frac{\rd t}{2 \p} \:
	  \re^{- \i \om t} \bigl\<\re^{- \i \<\qv, \uv(0)\>}
	                          \re^{\i \<\qv, \uv(0,t)\>}\bigr\>_T \epc
\end{equation}
the Fourier transformation in time of the auto correlation
function $ \bigl\<\re^{- \i \<\qv, \uv(0)\>}
\re^{\i \<\qv, \uv(0,t)\>}\bigr\>_T$.

\subsection{Dynamic structure factor of the harmonic crystal}
The harmonic approximation brings more simplifications about. If
$A$ and $B$ are any two operators that depend linearly on the
deviations $\uv (\Rv)$ of the ions from their equilibrium position
and linearly on the conjugate momentum $\pv(\Rv)$, then
\begin{equation} \label{anexpidentity}
     \bigl\<\re^A \re^B\bigr\>_T = \re^{\2 \<A^2 + 2AB + B^2\>_T} \epc
\end{equation}
where the canonical averages are calculated with the Hamiltonian
of the harmonic crystal (see exercise~\ref{ex:mermin}). Applying
this formula to $A = - \i \<\qv, \uv(0)\>$ and $B = \i \<\qv, \uv(\Rv,t)\>$
and denoting
\begin{equation} \label{defdebyewaller}
     2 W(\qv) = \bigl\<\<\qv, \uv(0)\>^2\bigr\>_T
              = \bigl\<\<\qv, \uv(\Rv,t)\>^2\bigr\>_T \epc
\end{equation}
we obtain the following formula for the dynamic structure factor
of the harmonic crystal,
\begin{equation} \label{sqomharm}
     S(\qv, \om)
        = \frac{\re^{- 2 W(\qv)}}{2 \p} \int_{-\infty}^\infty \rd t \:
	  \re^{- \i \om t} \sum_{\Rv \in B} \re^{\i \<\qv,\Rv\>}
	       \re^{\< \<\qv, \uv(0)\> \<\qv, \uv(\Rv,t)\> \>_T} \epp
\end{equation}
The function $W(\qv)$ is called the Debye-Waller factor.

\subsection{Expansion into multi-phonon processes}
A harmonic crystal has $3 N_{\rm at}$ eigenstates which can be
excited independently. Denote the corresponding occupation numbers
$n_1, \dots, n_{3 N_{\rm at}}$. Transition matrix elements can be
classified according to the differences in occupation numbers
between the initial state $\{n_i\}$ and final state $\{n_i'\}$.
\begin{description}
\item[Elastic processes.]
$n_i = n_i'$, all occupation numbers remain unaltered, no exchange
of energy between crystal and neutron.
\item[Single-phonon processes.]
$\exists\ \a \in \{1, \dots, 3 N_{\rm at}\}$ such that $n_i = n_i'\
\forall\ i \ne \a, n_\a' = n_\a \pm 1$, the occupation number of
a single mode is changed due to the scattering process.
\item[Multi-phonon processes.]
The occupation numbers of several modes are changed.
\end{description}
We shall see below that the $n$th term in the expansion
\begin{equation} \label{multiphexp}
     \re^{\< \<\qv, \uv(0)\> \<\qv, \uv(\Rv,t)\> \>_T} =
        1 + \bigl\< \dots \bigr\>_T + \2 \bigl\< \dots \bigr\>_T^2 + \dots
\end{equation}
can be identified with the $n$-phonon processes.

\subsection{Coherent elastic neutron scattering}
The contribution of the first term in (\ref{multiphexp}) to the
dynamic structure factor is
\begin{equation} \label{sqomelast}
     S_0 (\qv, \om)
        = N_{\rm at} \re^{- 2 W(\qv)} \de (\om)
	  \sum_{\Kv \in \overline B} \de_{\Kv, \qv} \epp
\end{equation}
Recall that the argument of the delta function is $\om =
(k^2 - k'^2)/2m$. Hence, the condition $\om = 0$ imposed
by the delta function in (\ref{sqomelast}) is the condition
of energy conservation for the scattered neutrons which
justifies the interpretation of the zeroth order term as
the elastic scattering term. Since $k, k' > 0$, the
zeroth order structure factor (\ref{sqomelast}) can only
be non-vanishing if $k = k'$. Consequentially, we obtain
the expression
\begin{equation}
     \frac{\rd^2 \s}{\rd \Om \rd E}\biggr|_{\rm coh} =
        \frac{\s_{\rm coh}}{4 \p} N_{\rm at}^2
        \re^{- 2 W(\kv - \kv')} \de (\om)
	\sum_{\Kv \in \overline B} \de_{\Kv, \kv - \kv'}
\end{equation}
for the elastic contribution to the coherent differential
cross section of the harmonic crystal. The sum over Kronecker
deltas on the right hand side means that elastic scattering
takes place, if the wave vectors $\kv$ of the incident neutron
beam and $\kv'$ of the scattered beam differ by a reciprocal
lattice vector $\Kv$,
\begin{equation} \label{bragg}
     \kv - \kv' = \Kv \in \overline B \epp
\end{equation}
This is the famous Bragg condition which also holds in
$X$-ray diffraction experiments.
\subsection{Interpretation of the Bragg condition}
The interpretation of the Bragg condition in reciprocal space
is depicted in Figure~\ref{fig:bragg_reci}.%
\begin{figure}
\begin{center}
\setlength{\tabcolsep}{3em}
\begin{tabular}{@{}cc@{}}
\raisebox{2pt}{\includegraphics[width=.28\textwidth]{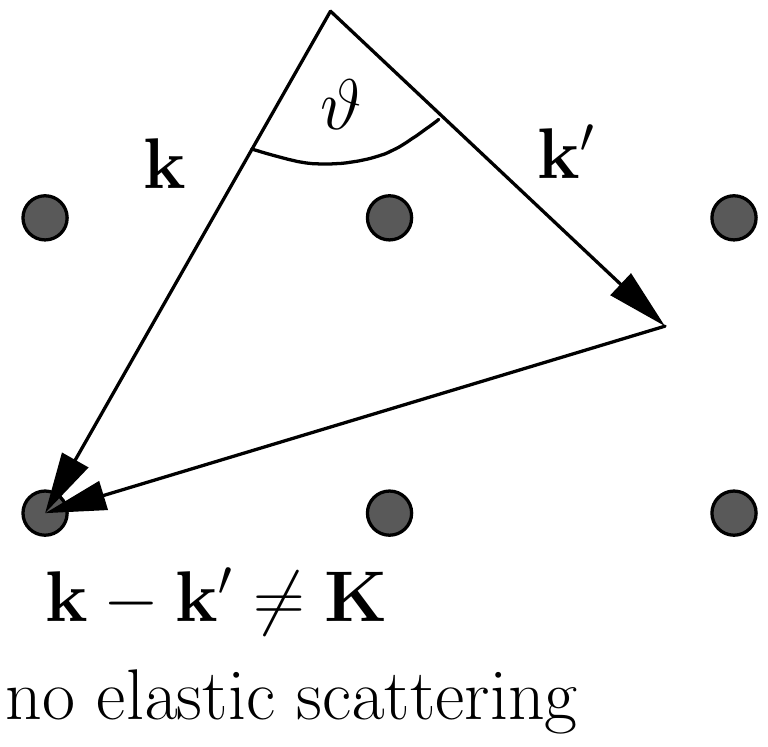}} &
\includegraphics[width=.52\textwidth]{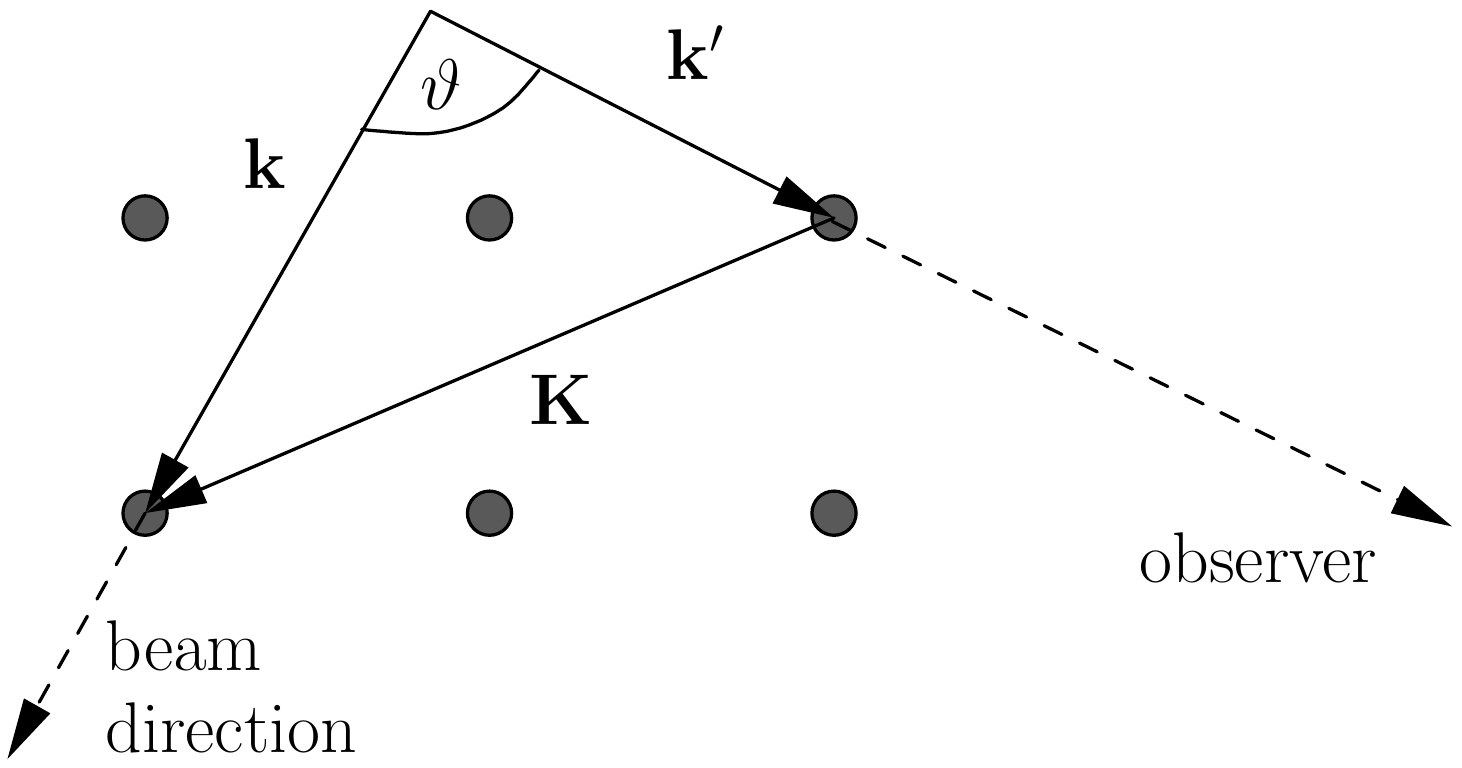}
\end{tabular}
\caption{\label{fig:bragg_reci} 
Illustration of the Bragg condition (\ref{bragg}) in the reciprocal
space.
}
\end{center}
\end{figure}
The Bragg condition is often formulated as a relation between
the distance $d$ of lattice planes in the original Bravais lattice,
the angle $\dh$ between incident and scattered neutron and
the wave length $\la = 2\p/k$ of the incident neutron. Recalling
from section~\ref{subsec:propreclat} that the reciprocal lattice
vector $\Kv$ corresponds to a family of lattice planes perpendicular
to $\Kv$ in such a way that $K = \|\Kv\| = n \cdot 2\p/d$ for some
$n \in {\mathbb N}$ and taking into account that elastic scattering
implies $k = k'$, we conclude that\\
\begin{minipage}{.6\textwidth}
\begin{multline}
     K^2 = k^2 +k'^2 - 2 k k' \cos(\dh) \\ = k^2 4 \sin^2(\dh/2)
         = (n 2\p/d)^2\\[2ex] \then\ n \la = 2 d \sin(\dh/2) \epp
\end{multline}

Relative to the Bravais lattice we can interpret the Bragg condition
in the following way.
\begin{enumerate}
\item
The scattering occurs as if it would happen at the lattice
planes according to the reflection law of geometric optics.
\item
Waves which are reflected by parallel planes interfere constructively,
meaning the difference in optical distance between two `rays'
reflected from consecutive lattice planes must be an integer
multiple of the wave length,
\begin{multline}
     2 \ell - b = 2 \ell - 2a \cos(\dh/2) \\
        = 2 \ell - 2 \ell \cos^2 (\dh/2) = 2 \ell \sin^2 (\dh/2) \\[1ex]
	= 2 d \sin(\dh/2) = n \la \epp
\end{multline}
\end{enumerate}
\end{minipage} \hspace{4ex}
\begin{minipage}{.36\textwidth}
\includegraphics[width=.88\textwidth]{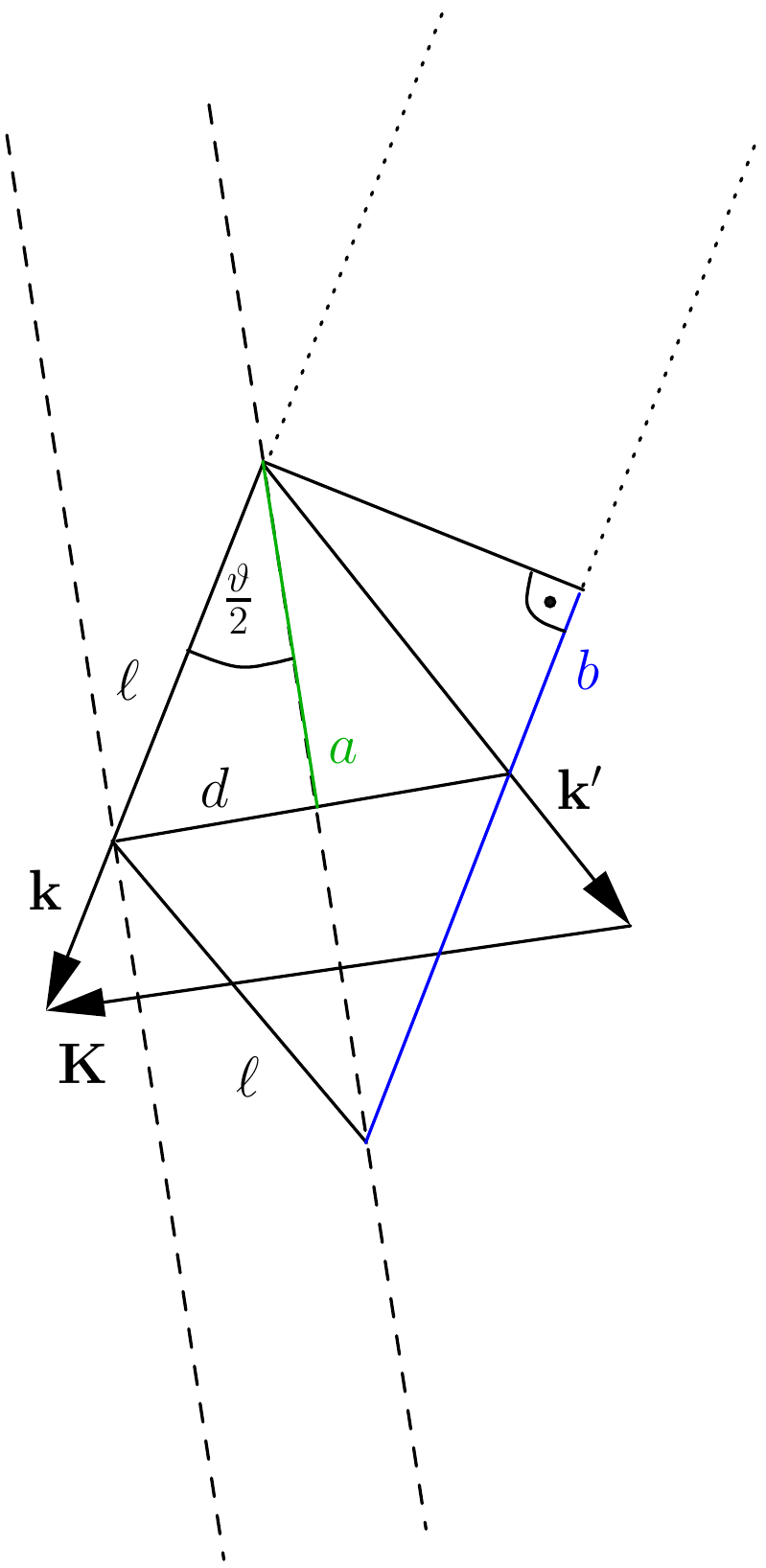}
\end{minipage}

\subsection{Exercise 12. A proof of equation (\ref{anexpidentity})}
\label{ex:mermin}
In the derivation of the formula (\ref{sqomharm}) for the dynamic
structure factor of the harmonic crystal we have used equation
(\ref{anexpidentity}). Here we would like to provide a step-by-step
derivation (cf.\ \cite{Mermin66}).

Consider a system of harmonic oscillators with Hamiltonian
\begin{equation} \label{harmloseoscis}
     H = \sum_i \om_i \bigl(a_i^+ a_i + \tst{\2}\bigr) \epc
\end{equation}
where the $a_i$ and $a_j^+$ are annihilation and creation
operators with commutation relations $[a_i, a_j] = 0 = [a_i^+, a_j^+]$,
$[a_i, a_j^+] = \de_{ij}$. Let
\begin{equation}
     A = \sum_i \bigl(\a_i a_i + \be_i a_i^+\bigr) \epc \qd
     B = \sum_i \bigl(\g_i a_i + \de_i a_i^+\bigr) \epc
\end{equation}
where $\a_i, \be_j, \g_k, \de_l \in {\mathbb C}$. Denote the canonical
ensemble average with Hamiltonian (\ref{harmloseoscis}) by $\< \cdot \>$.
Our goal is to prove that
\begin{equation} \label{expaexpb}
     \bigl\<\re^A \re^B\bigr\> = \re^{\2 \<A^2 + 2AB + B^2\>} \epp
\end{equation}
The proof will be based on the formula
\begin{equation} \label{bkhb}
     \re^A \re^B = \re^{\2 [A,B]} \re^{A + B}
\end{equation}
which should be familiar from the introductory lecture on quantum
mechanics and follows from the Baker-Campbell-Hausdorff formula
(exercise: recall the derivation of both formulae). It holds whenever
$\ad_A [A,B] = \ad_B [A,B] = 0$ (where the adjoint action of an
operator $X$ on any operator $Y$ is defined by $\ad_X Y = [X,Y]$).
\begin{enumerate}
\item
Define
\begin{equation}
     C = \sum_i c_i a_i \epc \qd D = \sum_i d_i a_i^+ \epp
\end{equation}
Use (\ref{bkhb}) to show that
\begin{equation}
     \bigl\<\re^C \re^D\bigr\> = \re^{[C,D]} \bigl\<\re^{D} \re^{C}\bigr\>
        = \re^{[C,D]} \bigl\<\re^{\re^{\ad_{H/T}} C} \re^{D}\bigr\> \epp
\end{equation}
\item
Let
\begin{equation}
     C_n = \re^{n \ad_{H/T}} C \epc \qd n \in {\mathbb N}_0 \epp
\end{equation}
Show that
\begin{equation} \label{oprec}
     \bigl\<\re^C \re^D\bigr\>
        = \re^{[\sum_{k=0}^n C_k, D]} \bigl\<\re^{C_{n+1}} \re^{D}\bigr\> \epp
\end{equation}
\item
Show that (formally) $\lim_{n \rightarrow \infty} C_n = 0$ and
that the series $\sum_{k=0}^\infty C_n$ has a formal limit.
Denoting this limit by $S$ and using that $\bigl\<\re^{D}\bigr\> = 1$,
equation (\ref{oprec}) then implies that
\begin{equation}
      \bigl\<\re^C \re^D\bigr\> = \re^{[S, D]} \epp
\end{equation}
\item
Show in a similar way as above that
\begin{equation}
     \<CD\> = [S,D] \epp
\end{equation}
Thus,
\begin{equation} \label{expbc}
     \bigl\<\re^C \re^D\bigr\> = \re^{\<CD\>} \epp
\end{equation}
\item
Use (\ref{bkhb}) and (\ref{expbc}) to prove (\ref{expaexpb}).
\end{enumerate}

\mysection{Inelastic neutron scattering}
\subsection{Debye-Waller factor}
If we solve equations (\ref{aadintermsofu}) for $u^\a (\Rv)$ we obtain
\begin{equation} \label{uintermsofaad}
     u^\a (\Rv) = \frac{1}{\sqrt{L^3}}
		  \sum_{(\qv, \be) \in Q} \re^{\i \<\qv, \Rv\>} Y^\a_\be (\qv)
		  \frac{a_\qv^\be + {a^+}_{- \qv}^\be}{\sqrt{2 \m_\a \om_\be (\qv)}} \epc
\end{equation}
the deviations of the ions from their equilibrium positions in
a center of mass frame of reference (the zero modes are excluded
in the summation, see section~\ref{sec:zeromodes}).

Let us assume for simplicity that we are dealing with a mono-atomic
lattice. Then the indices $\a, \be$ in (\ref{uintermsofaad}) take
values $x, y, z$, the dimensionless masses reduce to $\m_\a = 1$,
and $L^3 = N_{\rm at}$. The scalar products defining the Debye-Waller
factor (\ref{defdebyewaller}) become
\begin{equation}
     \<\qv, \uv(0)\> = \frac{1}{\sqrt{N_{\rm at}}}
		  \sum_{\substack{\kv \in BZ\\ \kv \ne 0}} \sum_{\a = x, y, z}
		  \frac{\<\qv, \yv_\a (\kv)\>}{\sqrt{2 \om_\a (\kv)}}
		  \bigl(a_\kv^\a + {a^+}_{- \kv}^\a\bigr) \epc
\end{equation}
where the $\yv_\a (\kv)$ are the polarization vectors defined below
(\ref{phononspectrum}). It follows that
\begin{multline} \label{dwfinite}
     2 W(\qv) = \bigl\< \<\qv, \uv(0)\>^2 \bigr\>_T \\[1ex]
        = \frac{1}{N_{\rm at}}
	  \sum_{\substack{\kv, \kv' \in BZ\\ \kv, \kv' \ne 0}} \sum_{\a, \be = x, y, z}
	     \frac{\<\qv, \yv_\a (\kv)\>}{\sqrt{2 \om_\a (\kv)}}
	     \frac{\<\qv, \yv_\be (\kv')\>}{\sqrt{2 \om_\be (\kv')}}
	     \bigl\<\bigl(a_\kv^\a + {a^+}_{- \kv}^\a\bigr)
	     \bigl(a_{\kv'}^\be + {a^+}_{- \kv'}^\be\bigr)\bigr\>_T \\[1ex]
        = \frac{1}{N_{\rm at}}
	  \sum_{\substack{\kv \in BZ\\ \kv \ne 0}} \sum_{\a = x, y, z}
	     \frac{\<\qv, \yv_\a (\kv)\> \<\qv, \yv_\a (- \kv)\>}{2 \om_\a (\kv)}
	     \cth \Bigl( \frac{\om_\a (\kv)}{2 T} \Bigr) \\[1ex]
        = \frac{1}{N_{\rm at}}
	  \sum_{\substack{\kv \in BZ\\ \kv \ne 0}} \sum_{\a = x, y, z}
	     \frac{|\<\qv, \yv_\a (\kv)\>|^2}{2 \om_\a (\kv)}
	     \cth \Bigl( \frac{\om_\a (\kv)}{2 T} \Bigr) \epp
\end{multline}
Here we have used (cf.\ (\ref{boseeinstein}))
\begin{multline}
     \bigl\<\bigl(a_\kv^\a + {a^+}_{- \kv}^\a\bigr)
            \bigl(a_{\kv'}^\be + {a^+}_{- \kv'}^\be\bigr)\bigr\>_T
        = \de^{\a \be} \de_{\kv, - \kv'}
	   \bigl\< a_\kv^\a {a^+}_\kv^\a + {a^+}_\kv^\a a_\kv^\a \bigr\>_T \\[1ex]
        = \de^{\a \be} \de_{\kv, - \kv'} \bigl\< 2 \hat n_\kv^\a + 1\bigr\>_T
        = \de^{\a \be} \de_{\kv, - \kv'} \cth \Bigl( \frac{\om_\a (\kv)}{2 T} \Bigr)
\end{multline}
in the second equation and (\ref{ytreversal}) in the third equation.
Taking the thermodynamic limit in (\ref{dwfinite}) we end up with
\begin{equation} \label{dwtl}
     2 W(\qv) = \frac{V_u}{(2\p)^3} \sum_{\a = x, y, z}
		\int_{BZ} \rd^3 k \:
                \frac{|\<\qv, \yv_\a (\kv)\>|^2}{2 \om_\a (\kv)}
		\cth \Bigl( \frac{\om_\a (\kv)}{2 T} \Bigr) \epp
\end{equation}

Here two remarks are in order.
\begin{enumerate}
\item
In general (\ref{dwtl}) cannot be rewritten by means of the density
of states, since the polarization vectors $\yv_\a (\kv)$ depend
on $\kv$ not necessarily through $\om_\a (\kv)$.
\item
The Debye-Waller factor describes the weakening of the coherent
scattering due to thermal fluctuation and as a function of the
change of momentum $\qv$, i.e.\ as a function of the scattering
angle. It appears as a prefactor $\re^{- 2 W(\qv)}$ in the dynamic
structure factor. Since $W(\qv)$ is quadratic in $\qv$, the
weakening is larger for larger $\qv$. In elastic scattering, where
$\qv$ must equal a reciprocal lattice vector $\Kv$, the scattering is
strong in the direction corresponding to minimal $\Kv$ or
maximal distance between the associated family of lattice planes.
Recall (cf. exercise~\ref{ex:latplan}) that this family has maximal
density of lattice points within a plane.

As a function of the temperature the scattered intensity weakens
with growing $T$, since
\begin{equation}
     \cth \Bigl( \frac{\om_\a (\kv)}{2 T} \Bigr) \sim \frac{2T}{\om_\a (\kv)}
\end{equation}
as $T \rightarrow \infty$, leading to a linear temperature dependence
of $W(\qv)$ and to a suppression of the scattered intensity 
exponentially in $T$.
\end{enumerate}

\subsection{Exercise 13. Debye-Waller factor of a cubic lattice}
For a cubic lattice consisting of $N$ atoms the Debye-Waller factor can
be written as
\begin{equation}
     2W(q) = \frac{V q^2}{N} \int_0^\infty \frac{\rd \omega}{\omega^2}
             E(\omega) g(\omega)\epc
\end{equation}
wherein $g(\omega)$ is the density of states (per branch), $q$
is the difference of momenta and $E(\omega)$ is the average of
the energy of the phonons with frequency $\omega$,
\begin{equation}
  E(\omega) =  \omega \left[ n(\omega,T) + \frac{1}{2} \right]\epc \qd
  n(\omega,T) = \left[ \exp( \omega / T) - 1\right]^{-1} \epp
\end{equation}
\begin{enumerate}
\item
Calculate the Debye-Waller factor for a cubic Bravais lattice by using the Einstein
model for the phonons. This is a model for an optical phonon branch, where all
ions oscillate with the same frequency $\omega_0$; thus all $N$ states are
located at $\omega_0$, and the density of states is given by $g(\omega) =
(N/V) \delta(\omega - \omega_0)$. Determine $2W(q)$ for $ T\ll \omega_0 $ or
$T \gg \omega_0$, respectively.
\item
What is the Debye-Waller factor for the Debye model? What follows for $2W(q)$
in the limits $ T \ll \omega_D $ and $T \gg \omega_D $?
\end{enumerate}

\subsection{Coherent inelastic scattering -- the single-phonon contribution}
Let us now consider the contribution of the second term in (\ref{multiphexp})
to the dynamic structure factor,
\begin{equation} \label{sqomone}
     S_1 (\qv, \om)
        = \frac{\re^{- 2 W(\qv)}}{2 \p} \int_{-\infty}^\infty \rd t \:
	  \re^{- \i \om t} \sum_{\Rv \in B} \re^{\i \<\qv,\Rv\>}
	       \< \<\qv, \uv(0)\> \<\qv, \uv(\Rv,t)\> \>_T \epp
\end{equation}
As we shall see, this term corresponds to the single-phonon contributions.
This can be understood from a closer inspection of the two-point
function on the right hand side. We infer from (\ref{uintermsofaad}) that
\begin{equation}
     \<\qv, \uv(\Rv)\> = \frac{1}{\sqrt{N_{\rm at}}}
		  \sum_{\substack{\kv \in BZ\\ \kv \ne 0}} \sum_{\a = x, y, z} \re^{\i \<\kv,\Rv\>}
		  \frac{\<\qv, \yv_\a (\kv)\>}{\sqrt{2 \om_\a (\kv)}}
		  \bigl(a_\kv^\a + {a^+}_{- \kv}^\a\bigr) \epp
\end{equation}
This evolves in time into
\begin{multline}
     \<\qv, \uv(\Rv,t)\> = \re^{\i Ht} \<\qv, \uv(\Rv)\> \re^{- \i Ht} \\[1ex]
        = \frac{1}{\sqrt{N_{\rm at}}}
		  \sum_{\substack{\kv \in BZ\\ \kv \ne 0}} \sum_{\a = x, y, z} \re^{\i \<\kv,\Rv\>}
		  \frac{\<\qv, \yv_\a (\kv)\>}{\sqrt{2 \om_\a (\kv)}}
		  \bigl(a_\kv^\a \re^{- \i \om_\a (\kv)t}
		        + {a^+}_{- \kv}^\a \re^{\i \om_\a (\kv)t}\bigr) \epp
\end{multline}
Hence, performing a similar calculation as in (\ref{dwfinite}),
\begin{align}
     \< \<& \qv, \uv(0)\> \<\qv, \uv(\Rv,t)\> \>_T \\[1ex]
        & = \frac{1}{N_{\rm at}}
	  \sum_{\substack{\kv, \kv' \in BZ\\ \kv, \kv ' \ne 0}}
	  \sum_{\a, \be = x, y, z} \re^{\i \<\kv,\Rv\>}
	     \frac{\<\qv, \yv_\a (\kv)\>}{\sqrt{2 \om_\a (\kv)}}
	     \frac{\<\qv, \yv_\be (\kv')\>}{\sqrt{2 \om_\be (\kv')}}
	     \mspace{180.mu} \notag \\[-1ex] & \mspace{180.mu} \times
	     \de^{\a \be} \de_{\kv, - \kv'}
	     \bigl(\<\hat n_\kv^\a\>_T \re^{- \i \om_\a (\kv) t} +
	           \<\hat n_\kv^\a + 1\>_T \re^{\i \om_\a (\kv) t}\bigr)
	     \notag \\[1ex]
        & = \frac{1}{N_{\rm at}}
	  \sum_{\substack{\kv \in BZ\\ \kv \ne 0}} \sum_{\a = x, y, z} \re^{\i \<\kv,\Rv\>}
	     \frac{|\<\qv, \yv_\a (\kv)\>|^2}{2 \om_\a (\kv)}
             \biggl\{\frac{\re^{- \i \om_\a (\kv) t}}{\re^{\om_\a (\kv)/T} - 1}
                     + \frac{\re^{\i \om_\a (\kv) t}}{1 - \re^{- \om_\a (\kv)/T}} \biggr\}
	     \notag \\
        & = \frac{1}{N_{\rm at}}
	  \sum_{\substack{\kv \in BZ\\ \kv \ne 0}} \sum_{\a = x, y, z}
	     \frac{|\<\qv, \yv_\a (\kv)\>|^2}{2 \om_\a (\kv)}
             \biggl\{\frac{\re^{\i (\<\kv, \Rv\> - \om_\a (\kv) t)}}
	                  {\re^{\om_\a (\kv)/T} - 1}
                     + \frac{\re^{- \i(\<\kv,\Rv\> - \om_\a (\kv) t)}}
		            {1 - \re^{- \om_\a (\kv)/T}} \biggr\} \epp
\end{align}
Here we have substituted $- \kv$ for $\kv$ in the second sum in
the last equation and have used that $\om_\a (\kv)$ and
$|\<\qv, \yv_\a (\kv)\>|^2$ are even functions of $\kv$.

Inserting the latter equation into (\ref{sqomone}) we obtain the
single-phonon contribution to the dynamic structure factor in the
form
\begin{equation}
     S_1 (\qv, \om) = S_{1, +} (\qv,\om) + S_{1, -} (\qv, \om) \epc
\end{equation}
where
\begin{multline} \label{sqomonep}
     S_{1, +} (\qv, \om)
        = \frac{\re^{- 2 W(\qv)}}{2 \p} \int_{-\infty}^\infty \rd t \:
	  \re^{- \i \om t} \sum_{\Rv \in B} \re^{\i \<\qv,\Rv\>} \\[-1ex]
          \mspace{135.mu} \times \frac{1}{N_{\rm at}}
	  \sum_{\substack{\qv' \in BZ\\ \qv' \ne 0}} \sum_{\a = x, y, z}
	     \frac{|\<\qv, \yv_\a (\qv')\>|^2}{2 \om_\a (\qv')}
                    \frac{\re^{- \i(\<\qv',\Rv\> - \om_\a (\qv') t)}}
		         {1 - \re^{- \om_\a (\qv')/T}} \\[1ex]
        = \re^{- 2 W(\qv)}
	  \sum_{\substack{\qv' \in BZ\\ \qv' \ne 0}} \sum_{\a = x, y, z}
	     \frac{|\<\qv, \yv_\a (\qv')\>|^2}{2 \om_\a (\qv')}
                    \frac{\de(\om - \om_\a (\qv'))}
		         {1 - \re^{- \om_\a (\qv')/T}}
          \sum_{\Kv \in \overline B} \de_{\Kv, \qv - \qv'}
\end{multline}
and
\begin{multline} \label{sqomonem}
     S_{1, -} (\qv, \om) \\
        = \re^{- 2 W(\qv)}
	  \sum_{\substack{\qv' \in BZ\\ \qv' \ne 0}} \sum_{\a = x, y, z}
	     \frac{|\<\qv, \yv_\a (\qv')\>|^2}{2 \om_\a (\qv')}
                    \frac{\de(\om + \om_\a (\qv'))}
		         {\re^{\om_\a (\qv')/T} - 1}
          \sum_{\Kv \in \overline B} \de_{\Kv, \qv + \qv'} \epp
\end{multline}

These are those contributions to the dynamic structure factor, where
precisely one phonon is excited (emitted) or absorbed. We shall denote
the corresponding contributions to the differential cross section
\begin{equation}
     \frac{\rd^2 \s}{\rd \Om \rd E}\biggr|_{\rm coh, 1 \pm} =
        \frac{\s_{\rm coh}}{4 \p} \frac{k'}{k} S_{1, \pm} (\qv, \om) \epp
\end{equation}
Like the elastic cross section these cross sections are `discontinuous'
and describe a pattern of bright spots at certain energy and
momentum transfers. The delta functions and Kronecker deltas in
(\ref{sqomonep}) and (\ref{sqomonem}) force the dispersion relations
of the scattered neutron and the phonons to match in the following
sense.
\begin{description}
\item[Emission of a phonon.]
In $S_{1, +}$ the momenta of the neutron and the phonon involved
in the scattering process must satisfy
\begin{equation}
     \qv = \kv - \kv' = \qv' + \Kv \epc
\end{equation}
where $\kv - \kv'$ is the momentum transfer to the lattice and
$\qv'$ is the momentum taken by the phonon, i.e.\ the phonon
takes the momentum transferred to the lattice and reduced to
the Brillouin zone. For the energies we have the matching condition
\begin{equation} \label{resemission}
     \om = \frac{k^2 - k'^2}{2m} = \om_\a (\qv') = \om_\a (\qv - \Kv)
         = \om_\a (\kv - \kv') \epc
\end{equation}
where we have used the periodicity of $\om_\a$ with respect
to the reciprocal lattice in the last equation. Since $\om_\a \ge 0$
we see that $k \ge k'$ in this process. The lattice absorbs energy,
`a phonon is emitted.'

The temperature dependence of the emission is encoded in the
factor $1/(1 - \re^{- \om_\a (\qv')/T})$ in $S_{1, +}$. This
factor decreases as $T$ decreases but stays finite for
$T \rightarrow 0+$.
\item[Absorption of a phonon.]
For $S_{1, -}$ the momentum balance is
\begin{equation}
     \qv = \kv - \kv' = - \qv' + \Kv
\end{equation}
which we interpret such that the neutron takes momentum $\qv'
- \Kv$. The relation for the exchange of energy between the
scattered neutron an the phonon involved in the process is
\begin{equation} \label{resabsorption}
     - \om = \frac{k'^2 - k^2}{2m} = \om_\a (\qv') = \om_\a (\Kv - \qv)
         = \om_\a (\kv' - \kv) \epp
\end{equation}
Here $k' \ge k$, and energy is absorbed by the neutron, which
is interpreted as a phonon being absorbed in the process.

In this case the temperature dependence comes in through a
factor $1/(\re^{\om_\a (\qv')/T} - 1)$ which vanishes as
$T \rightarrow 0+$. At zero temperature no phonon remains
in the system and there is nothing left to be absorbed.
\end{description}
\subsection{Measuring dispersion relations of phonons}
\begin{figure}
\begin{center}
\includegraphics[width=.82\textwidth]{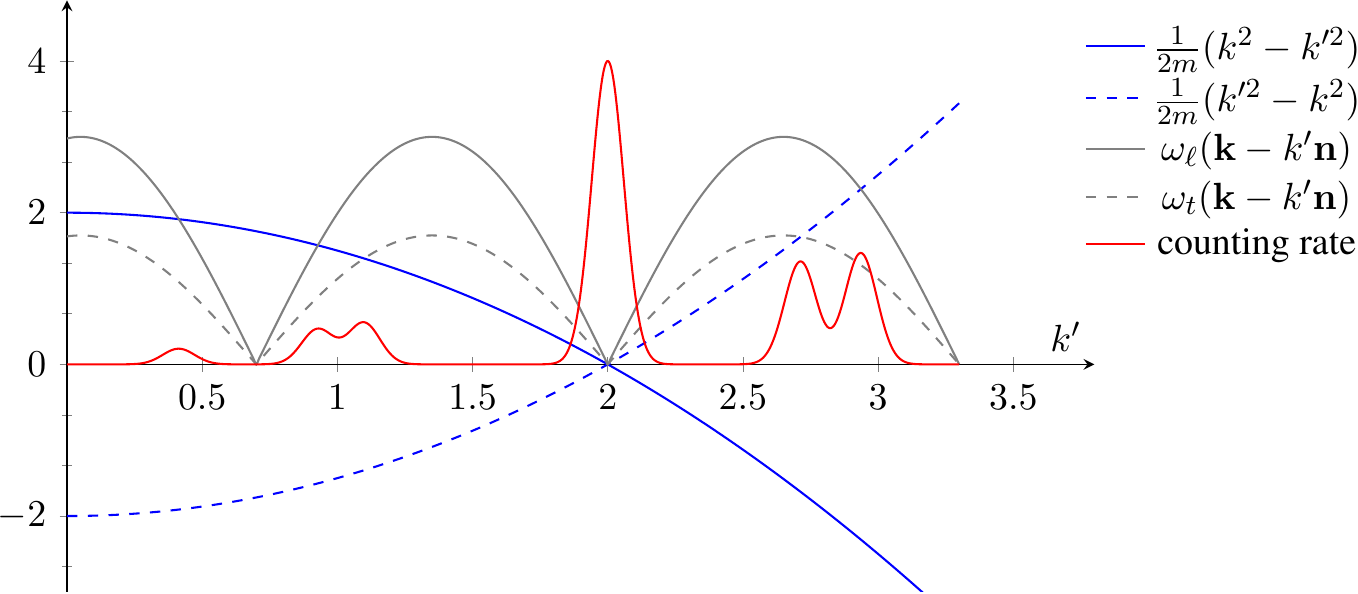}
\caption{\label{fig:inelaneu} 
Schematic picture explaining how phonon dispersion relations
can be inferred from the differential cross section.
}
\end{center}
\end{figure}
Dispersion relations of phonons can be inferred from the
differential cross section. Assume that a beam of mono-energetic
neutrons impinges on a crystal and the velocity distribution
of the neutrons scattered in a fixed direction $\nv$ is
measured . Strong scattering in direction $\nv$ takes place
whenever $\kv' = k' \nv$ satisfies one of the resonance
conditions (\ref{resemission}) or (\ref{resabsorption}) for
emission or absorption. The situation is then as sketched
in Figure~\ref{fig:inelaneu}. In the example we have assumed
a simple lattice with a longitudinal acoustic phonon with
dispersion $\om_\ell$ (gray line) and two degenerate transversal
acoustic phonons with dispersion $\om_t$ (dashed gray line).  
The blue line represents the energy loss of the scattered
neutron if it has momentum $k'$. The scattering is likely
only if there is a phonon of momentum $\kv - k' \nv$ modulo
reciprocal lattice vectors, which can be emitted at this
energy. Similarly, the dashed blue line shows the energy
gain of the neutron if a phonon is absorbed in the scattering
process. One of the resonance conditions is satisfied if
one of the dispersion curves intersects with the blue
or dashed blue line. The red line represents schematically
the corresponding cross-section. If the energy of the incident
neutrons is varied the blue and dashed blue lines sweep
over the dispersion curves of the phonons and the full
dispersion relation in $\nv$ direction is mapped out.

\mysection{Electronic excitations in solids}
\subsection{Hamiltonian of the electrons in adiabatic approximation}
We saw in section~\ref{lect:bo} that the adiabatic principle implies
a decoupling of the lattice and electronic degrees of freedom. To
leading order, the dynamics of the electrons is governed by the
Hamiltonian
\begin{multline} \label{hell}
     H_{\rm el} (\Rv) =
        - \2 \sum_{j=1}^N \6_{\rv_j^\a}^2
	+ \sum_{j=1}^N
	  \underbrace{\sum_{k=1}^L \frac{- Z_k}{\|\rv_j - \Rv_k\|}}_{V_{\rm I} (\rv_j)}
	+ \underbrace{\sum_{1 \le j < k \le N} \frac{1}{\|\rv_j - \rv_k\|}}_{V_{\rm elel}}
	+ M \epc
\end{multline}
where the ions are fixed to their equilibrium positions $\Rv_k$ and
\begin{equation}
     M = \sum_{1 \le j < k \le L} \frac{Z_j Z_k}{\|\Rv_j - \Rv_k\|}
\end{equation}
is the electro-static energy of the ions, the so-called Madelung
energy. The latter plays no role for the dynamics of the electrons.
Thus, to leading order in the adiabatic approximation, the electrons
in a crystal are described by a repulsive Coulomb gas with interaction
$V_{\rm elel}$ that is filled into the periodic potential $V_{\rm I}$
generated by the ions sitting on their equilibrium positions.

Because of the mutual Coulomb interaction of the electrons, this is
still an interacting many-body quantum system, and imperturbable optimism
is required to believe that it can ever be solved exactly or numerically
with sufficient accuracy. At least at the current stage of our
knowledge drastic further approximations are necessary in order to
be able to make any quantitative prediction.

\subsection{Reduction to a single-particle problem}
\label{sec:reductionone}
\begin{enumerate}
\item
Only the valence electrons (electrons outside closed shells) contribute
significantly to the typical properties of solids, because they are
responsible for the chemical bonding and are distributed over the solid.
For this reason we shall interpret $V_{\rm I}$ as the potential of
ions with completely filled shells (Coulomb with reduced charge number $Z_j$;
example sodium, alkali metal, one valence electron, $Z = 1$).
\item
If the electron-electron interaction $V_{\rm elel}$ could be neglected,
the `electronic problem' would be reduced to the problem of a 
single particle in the periodic potential $V_{\rm I}$.
\item
Instead of simply neglecting the electron-electron interaction, we
shall try to split it into an `effective single-particle contribution'
which modifies the periodic potential $V_{\rm I}$ and a residual
`many-body contribution' which for the understanding of certain
quantities can be neglected in many cases, at least in the first
instance. Some ideas about the systematic derivation of an effective
single-particle description will be explained below.
\item
An intuitive explanation of how the reduction to a single-particle
problem is possible is through the notions of `screening' and
`mean fields'. The full many-body Hamiltonian $H_{\rm el}$ is
invariant under the action of the Bravais lattice associated with
the equilibrium positions of the ions, $[H_{\rm el}, U_\Rv] = 0\
\forall\ \Rv \in B$. Hence, following the same reasoning as for
the derivation of Bloch's theorem in section~\ref{sec:bloch},
we see that the lattice momentum $\kv \in BZ$ is a good quantum
number for the many-body eigenfunctions and that those transform like
\begin{equation}
     \Ps_\kv (\rv_1 + \Rv, \dots, \rv_N + \Rv)
          = \re^{\i \<\kv, \Rv\>} \Ps_\kv (\rv_1, \dots, \rv_N) \epc
\end{equation}
$\Rv \in B$, under the action of the Bravais lattice. It follows
that
\begin{equation} \label{manyprobdens}
     |\Ps_\kv (\rv_1 + \Rv, \dots, \rv_N + \Rv)|^2 = |\Ps_\kv (\rv_1, \dots, \rv_N)|^2
\end{equation}
for all $\Rv \in B$. The electronic charge density associated with
this state,
\begin{multline}
     D(\xv) = - e \int \rd^{3N} r \: |\Ps_\kv (\rv_1, \dots, \rv_N)|^2
                  \sum_{j=1}^N \de(\xv - \rv_j) \\
            = - e N \int \rd^{3(N-1)} r \: |\Ps_\kv (\xv, \rv_2, \dots, \rv_N)|^2 \epc
\end{multline}
clearly inherits the invariance under translations by Bravais lattice
vectors, since
\begin{multline}
     D(\xv + \Rv)
        = - e \int \rd^{3N} r \: |\Ps_\kv (\rv_1, \dots, \rv_N)|^2
	      \sum_{j=1}^N \de(\xv - \rv_j + \Rv) \\
        = - e \int \rd^{3N} r \: |\Ps_\kv (\rv_1 + \Rv, \dots, \rv_N +\Rv)|^2
	      \sum_{j=1}^N \de(\xv - \rv_j) = D(\xv) \epc
\end{multline}
where we have used (\ref{manyprobdens}) in the third equation.
We imagine this as the charge density `screening' the ionic
potential $V_{\rm I}$ and combining together with it to an effective
periodic potential which would be felt by an additional electron
inserted as a probe into the solid. This `screened periodic potential'
provides an intuitive single-particle description for the electron
gas in a crystal: a particle moving in the mean field of all
other particles.
\item
Another way of thinking about the existence of an effective
single-particle description of the solid is the following. Consider
the excitations of the full many-particle system (\ref{hell}).
If there are excitations with charge quantum numbers $\pm e$
which do have a dispersion, i.e.\ excitations for which a
definite change of (lattice) momentum of the many-body system always
comes with one and the same definite change of energy, then we
call them quasi-particle excitations or simply particles.
It is always possible to define a single-particle Hamiltonian
(in momentum representation) that has exactly the same dispersion
relation (the same spectrum) as the full many-particle Hamiltonian
(\ref{hell}). This Hamiltonian may be seen as an effective
single-particle Hamiltonian of the full system. How well it
describes the full system depends on the details. Unlike e.g.\
in the harmonic crystal which realizes an ideal gas of
phonons, the two- and multi-particle excitations of the full
electronic system are not just superpositions of single-particle
excitations. Still, the effective interactions between the
quasi-particles may be weak, in which case the effective
single-particle description will give a good description of
at least the thermodynamic properties of the system.

It is beyond the current capabilities of theoretical physics
to prove the existence of quasi particles for the Hamiltonian
(\ref{hell}). But many experiments show that solids generically
admit quasi-particle electronic excitations with charge
quantum numbers $\mp e$. These are called `electrons' and
`holes' (the holes are the solid-state analogues of the
positrons). Experiments show in addition that the interaction
between several electrons or holes or between the electrons and
the holes can often be neglected in the first instance.
\end{enumerate}

\subsection{Particles in a periodic potential}
The above discussion should provide enough motivation to study
the general problem of a particle moving in a periodic
potential. The corresponding Hamiltonian is
\begin{equation} \label{hamper}
     H = - \2 \6_\xv^2 + V(\xv)
\end{equation}
with $\xv \in {\mathbb R}^3$ and $V(\xv) = V(\xv + \Rv)$ for all
$\Rv \in B$. The study of this Hamiltonian will lead us to the
extraordinarily successful band model of solids.

Let us start with some general remarks.
\begin{enumerate}
\item
The one dimensional Kronig-Penney model with potential
\begin{equation}
     V(x) = V_0 \sum_{\ell \in {\mathbb Z}} a \, \de(x - \ell a) \epc
\end{equation}
where $a$ is the lattice constant and $V_0$ the strength of the
interaction, is the only simple but non-trivial model of particles
in a periodic potential which admits a closed analytic solution.
The next simple case $V (x) = - V_0 \cos (2\p x/a)$ involves Mathieu
functions. Its understanding requires already some mathematical
effort.
\item
There is no other choice than trying to understand the general
properties of particles in a periodic potential by the common
means of mathematics or theoretical physics. As far as the mathematical
part is concerned it would be instructive to study Floquet theory
as part of the theory of ordinary differential equations. For
time limitations we refrain from touching this interesting subject
and rather go ahead with typical methods of theoretical physics 
which are based on perturbation theory.
\end{enumerate}

\subsection{Exercise 14. Kronig-Penney model}
The Kronig-Penney model is a simple one-dimensional model for the understanding
of the band structure in solids. Consider an electron of mass $m$ moving in
a periodic potential
\begin{equation}
     V(x) = V_0 \sum_{n=-\infty}^{\infty} a \, \delta(x-na) \epp
\end{equation}
Here $V_0$ is the strength of the potential and $a$ the lattice constant.
Note that either sign of $V_0$ makes sense. For $V_0 > 0$ the potential is
repulsive, for $v_0 < 0$ it is attractive.
\begin{enumerate}
\item
Introduce dimensionless units such that the time-independent Schr\"odinger
equation takes the form
\begin{equation} \label{sg}
  H \varphi(y) = \biggl[ -\frac{\partial^2}{\partial y^2}
                 + 2c \sum_{n=-\infty}^{\infty} \delta(y-n) \biggr] \varphi(y)
                = q^2\varphi(y)\epp
\end{equation}
Which connections exist between $x$ and $y$, $c$ and $V_0$, and between
$q^2$ and the energy $E$?
\item
Because of the periodicity of the potential, the Hamiltonian \eqref{sg} commutes
with the shift operator defined by $T\varphi(y) = \varphi(y+1)$. Thus, $H$ and
$T$ have a common system of eigenfunctions, i.e., the eigenvalue equations
\begin{subequations} \label{simul}
\begin{gather}
  H\varphi(y) = q^2\varphi(y) \epc \\
  T\varphi(y) = \textrm{e}^{i k}\varphi(y)
\end{gather}
\end{subequations}
can be solved simultaneously. Determine the solutions of \eqref{simul} as
a function of $k$ and $q$. Which equation connects $k$ with $q$? For the
calculation it is sufficient to consider the Schr\"odinger equation with
the general solution $\varphi(y) = A \textrm{e}^{i qy} + B \textrm{e}^{-i qy}$
in the interval $[0,1]$. Note that $q$ can take real as well as imaginary values.
\item 
Discuss the dispersion relation $\cos(k) = \cos(q) + (c/q)\sin(q)$ following
from (ii) graphically. Observe that $k$ has to be real in order for $\varphi(y)$ to
be bounded and normalizable. Therefore the eigenstates of $H$ and $T$ are
restricted on certain energy bands.
\end{enumerate}

\subsection{Almost free electrons}
The qualitative features of the motion of particles can be understood
from time-independent perturbation theory. Since $V(\xv)$ is periodic,
it has the Fourier series representation (cf. section~\ref{subsec:fourier})
\begin{equation} \label{elpotfou}
     V(\xv) = \sum_{\substack{\gv \in \overline B\\ \gv \ne 0}}
                 V_\gv \re^{\i \<\gv, \xv\>} \epc \qd
     V_\gv = \frac{1}{V_u} \int_{U} \rd^3 x \:
                         \re^{- \i \<\gv, \xv\>} V(\xv) \epp
\end{equation}
Here we took the liberty to set $V_0 = 0$ which just fixes the
zero point of the energy. In order to be able to apply perturbation
theory we assume that $V(\xv)$ is a weak periodic potential in the
sense that $V_\gv = \la v_\gv$, where $|v_\gv|$ is uniformly bounded
in $\la$ and $|\la| \ll 1$.

Since $[H, U_\Rv] = 0$ for all $\Rv \in B$ the assumptions of the
Bloch theorem (cf.\ section~\ref{sec:bloch}) are fulfilled. Hence,
the eigenfunctions $\Ps_\kv$ of $H$ have definite lattice momentum
$\kv \in BZ$ and are of the form
\begin{equation} \label{elblochfun}
     \Ps_\kv (\xv) = \re^{\i \<\kv, \xv\>} u_\kv (\xv) \epc
\end{equation}
where $u_\kv (\xv) = u_\kv (\xv + \Rv)$ for all $\Rv \in B$ and
thus has a Fourier series representation
\begin{equation}
     u_\kv (\xv) = \sum_{\gv \in \overline B}
                   u_{\kv, \gv} \re^{- \i \<\gv, \xv\>} \epc \qd
     u_{\kv, \gv} = \frac{1}{V_u} \int_{U} \rd^3 x \:
                         \re^{\i \<\gv, \xv\>} u_\kv (\xv) \epp
\end{equation}
Inserting this back into (\ref{elblochfun}) we see that
\begin{equation} \label{elblochfoufun}
     \Ps_\kv (\xv) =
        \sum_{\gv \in \overline B} u_{\kv, \gv} \re^{\i \<\kv - \gv, \xv\>} \epp
\end{equation}

If we substitute this representation into the Schr\"odinger equation
for $H$ and use (\ref{elpotfou}), we obtain
\begin{equation}
     \sum_{\gv'' \in \overline B} \biggl(\2 \|\kv - \gv''\|^2 - \e (\kv)\biggr)
        u_{\kv, \gv''} \re^{- \i \<\gv'',\xv\>} +
     \sum_{\substack{\gv', \gv'' \in \overline B\\ \gv' \ne 0}}
        \la v_{\gv'} u_{\kv, \gv''} \re^{\i \<\gv' - \gv'',\xv\>} = 0 \epp
\end{equation}
Here we multiply by $\re^{\i \<\gv,\xv\>}/V_u$ and integrate over
$\xv$ over the unit cell. Then
\begin{equation} \label{fouriereva}
     \biggl(\2 \|\kv - \gv\|^2 - \e (\kv)\biggr) u_{\kv, \gv} +
     \la \sum_{\substack{\gv' \in \overline B\\ \gv' \ne \gv}}
         v_{\gv' - \gv} u_{\kv, \gv'} = 0 \epp
\end{equation}
This is an eigenvalue problem for the vector $\uv_\kv$ of the Fourier
components $u_{\kv, \gv}$, $\gv \in \overline B$. We will study it
perturbatively in $\la$.

The reference point are free electrons, $\la = 0$. Then
\begin{equation}
     \biggl(\2 \|\kv - \gv\|^2 - \e (\kv)\biggr) u_{\kv, \gv} = 0 \qd
        \forall\ \gv \in \overline B \epp
\end{equation}
Clearly, the solutions of this eigenvalue problem are
\begin{equation}
     \e_\hv (\kv) = \2 \|\kv - \hv\|^2 \epc \qd u_{\kv, \gv} = \de_{\gv, \hv}
\end{equation}
for all $\hv \in \overline B$. They are parameterized by reciprocal
lattice vectors.

Example: free electrons in 1d. If $a > 0$ is the lattice constant, then
\begin{equation}
     \overline B = \Bigl\{g \in {\mathbb R} \Big|
                          g = n \frac{2\p}{a}\epc\ n \in {\mathbb Z}\Bigr\} \epc \qd
     BZ = \Bigl[- \frac{\p}{a}, \frac{\p}{a}\Bigr] \epp
\end{equation}
It follows that
\begin{equation}
     \e_0 (k) = \frac{k^2}{2} \epc \qd
     \e_{\pm \frac{2\p}{a}} (k) = \2 \Bigl(k \mp \frac{2\p}{a}\Bigr)^2 \epc \dots \epc
\end{equation}
where $k \in BZ$, are the different branches of the dispersion
relation. The situation is sketched in Figure~\ref{fig:dipsfreeelec}.
\begin{figure}
\begin{center}
\includegraphics[width=.82\textwidth]{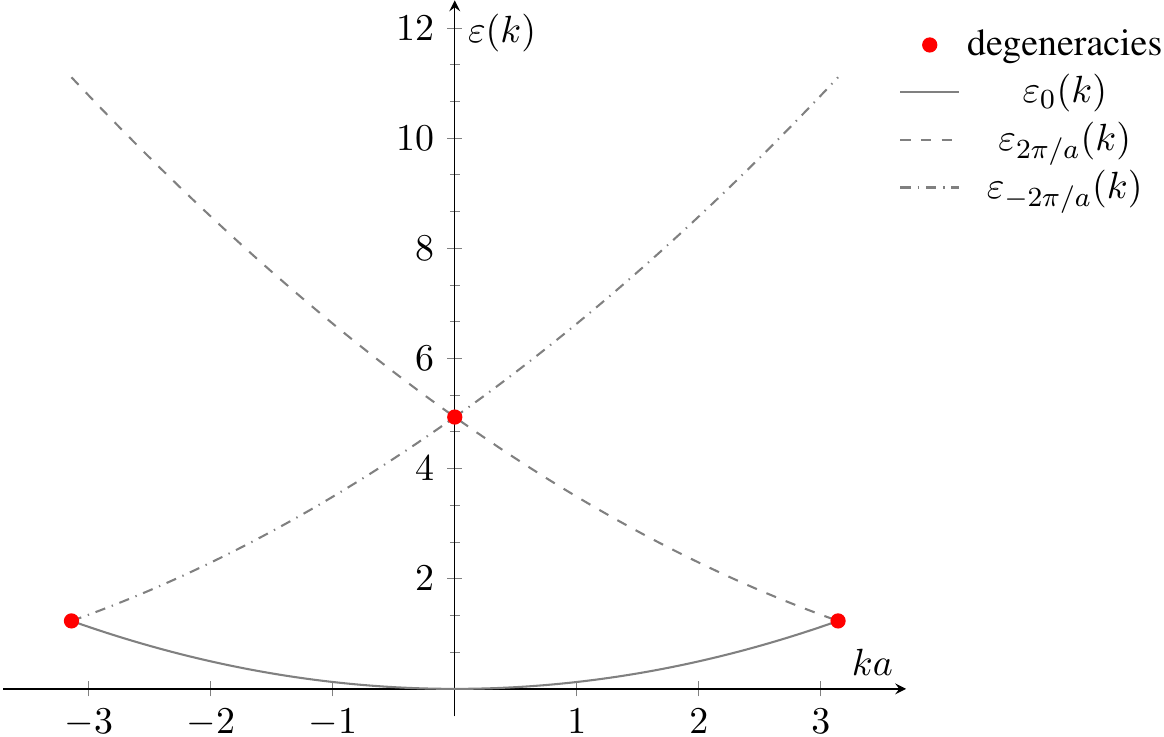}
\caption{\label{fig:dipsfreeelec} 
The dispersion relation of free particles in 1d, conceived as
particles in a periodic potential. The parabolic dispersion
relation splits up in branches, obtained by `back-folding'
into the Brillouin zone. The picture shows the three lowest
lying branches for the dispersion $\e(k) = k^2/2m$ with $m = 4$.
The red dots denotes degenerate points in the spectrum which
belonged to different momenta in the non-periodic picture,
but to the same lattice momentum in the periodic picture.
}
\end{center}
\end{figure}

For non-zero potential we assume that the eigenvalues and the
Fourier coefficients of the wave functions can be expanded in an
asymptotic series in $\la$,
\begin{subequations}
\label{fourierpert}
\begin{align} \label{fourierperten}
     \e_\hv (\kv) & = \2 \|\kv - \hv\|^2 + \la \e^{(1)}_\hv (\kv) 
                      + \la^2 \e^{(2)}_\hv (\kv)
		      + {\cal O} \bigl(\la^3\bigr) \epc \\[1ex]
		      \label{fourierpertwv}
     u_{\kv, \gv} & = \de_{\gv, \hv} 
                      + \la u^{(1)}_{\kv, \gv}
		      + {\cal O} \bigl(\la^2\bigr) \epp
\end{align}
\end{subequations}
We choose the normalization of $\uv_\kv$ such that $u^{(1)}_{\kv, \hv} = 0$.
Then (\ref{fourierpert}) in (\ref{fouriereva}) for $\gv = \hv$ implies
\begin{equation} \label{etwoone}
     - \la \e^{(1)}_\hv (\kv) - \la^2 \e^{(2)}_\hv (\kv)
        + \la \sum_{\substack{\gv' \in \overline B\\ \gv' \ne \hv}}
          v_{\gv' - \hv} \la u^{(1)}_{\kv, \gv'} = {\cal O} \bigl(\la^3\bigr) \epp
\end{equation}
Thus,
\begin{equation}
     \e^{(1)}_\hv (\kv) = 0 \epc \qd
     \e^{(2)}_\hv (\kv) = 
        \sum_{\substack{\gv \in \overline B\\ \gv \ne \hv}}
        v_{\gv - \hv} \la u^{(1)}_{\kv, \gv} \epc
\end{equation}
whence, for $\gv \ne \hv$,
\begin{multline}
     \biggl(\2 \|\kv - \gv\|^2 - \2 \|\kv - \hv\|^2
           - \la^2 \e_\hv^{(2)}(\kv) - \dots \biggr)
     \bigl(\la u^{(1)}_{\kv, \gv} + \dots \bigr) + \\[1ex]
        \la \sum_{\substack{\gv' \in \overline B\\ \gv' \ne \gv}}
        v_{\gv' - \gv} \bigl(\de_{\gv', \hv} + \la u^{(1)}_{\kv, \gv'} + \dots \bigr) = 0 \epc
\end{multline}
implying that
\begin{equation}
     u^{(1)}_{\kv, \gv} = \frac{2 v_{\hv - \gv}}{\|\kv - \hv\|^2 - \|\kv - \gv\|^2}
\end{equation}
if $\|\kv - \gv\| \ne \|\kv - \hv\|$. If we insert this back into
(\ref{etwoone}) and use that $v_{-\gv} = v_\gv^*$ we obtain
the second order corrections to the energies,
\begin{equation} \label{scdorderpertenel}
     \e_\hv (\kv) = \2 \|\kv - \hv\|^2
        + \la^2 \sum_{\substack{\gv \in \overline B\\ \gv \ne 0}}
                  \frac{2 |v_{\gv}|^2}{\|\kv - \hv\|^2 - \|\kv - \hv - \gv\|^2}
        + {\cal O} \bigl(\la^3\bigr) \epp
\end{equation}
As we see, for a weak periodic potential, the energy in second
order perturbation theory can be expressed in terms of the
Fourier coefficients of the potential.

If
\begin{equation} \label{singcon}
     \|\kv - \hv\| = \|\kv - \hv - \gv\|
\end{equation}
for some $\gv \in \overline B$, $\gv \ne 0$ and some $\kv \in BZ$,
then the expression (\ref{scdorderpertenel}) is singular at
that specific value of $\kv$. In order to interpret this problem
set $\hv = 0$. Then (\ref{singcon}) reduces to
\begin{equation} \label{singconbz}
     \|\kv\| = \|\kv - \gv\| \epp
\end{equation}
If $\gv$ runs through the nearest-neighbour sites of the origin
in $\overline B$, this equation describes the boundaries of the
Brillouin zone. Hence, we expect that the exact solution of the
problem exhibits a strong deviation of the dispersion relation
from the dispersion relation of free electrons at the 
boundary of the Brillouin zone. For $\hv \ne 0$ additional
singular manifolds appear due to back-folding into the
Brillouin zone.

For the interpretation of the perturbative result we further
recall that the perturbation theory it applied to each individual
energy level, in our case for fixed $\hv$ and $\kv$.
Equation (\ref{scdorderpertenel}) is valid for all $\hv$,
$\kv$ which do not satisfy (\ref{singcon}). If (\ref{singcon})
is satisfied for a pair $\hv$, $\kv$, we have to modify our
calculation, applying the scheme of degenerate perturbation
theory instead.

\mysection{Particles in a periodic potential}
\subsection{Degenerate levels}
Fix $\kv$ and assume that there are precisely two vectors
$\hv, \hv' \in \overline B$, $\hv \ne \hv'$, such that
\begin{equation}
     \|\kv - \hv\| = \|\kv - \hv'\| \epp
\end{equation}
Then (\ref{fouriereva}) has two degenerate zeroth order solutions
with energies
\begin{equation}
     \e_\hv (\kv) = \2  \|\kv - \hv\|^2 = \2 \|\kv - \hv'\|^2 = \e_{\hv'} (\kv) \epp
\end{equation}
and Fourier coefficients
\begin{equation}
     u_{\kv, \gv} = a \de_{\gv, \hv} + b \de_{\gv, \hv'}
\end{equation}
where $\av, \bv \in {\mathbb C}$ are to be determined. The
space of solutions is two-dimensional. We assume again an
asymptotic dependence of the dispersion relations on the
interaction parameter $\la$ as in (\ref{fourierperten}).
The corresponding ansatz for the Fourier coefficients
(\ref{fourierpertwv}) has to be modified due to the degeneracy,
\begin{equation}
     u_{\kv, \gv} = a \de_{\gv, \hv} + b \de_{\gv, \hv'} + {\cal O} (\la) \epp
\end{equation}
Substituting this into (\ref{fouriereva}) for $\gv = \hv, \hv'$
and comparing coefficients at order $\la$ we obtain
\begin{align}
     - \e^{(1)} (\kv) a + v_{\hv' - \hv} b = {\cal O} (\la) \epc \notag \\[1ex]
     - \e^{(1)} (\kv) b + v_{\hv - \hv'} a = {\cal O} (\la) \epp
\end{align}
The solvability condition for this homogeneous system implies
that $\e^{(1)} (\kv) = \pm |v_{\hv - \hv'}|$. Thus, the two
degenerate energy levels split into
\begin{equation}
     \e^\pm (\kv) = \2 \|\kv - \hv\|^2 \pm \la |v_{\hv - \hv'}|
                    + {\cal O} (\la^2) \epp
\end{equation}

By way of contrast to the non-degenerate case, the corrections
to the eigenstates are now of first order in $\la$. This
means that close to the Brillouin zone boundaries the
analytic structure of the corrections change. As mentioned
above, this makes only sense if we consider a finite system
under periodic boundary conditions. Still, we take it as
another indication that the strongest effect on the dispersion
relation of a free particle by a weak periodic perturbation is
at the boundaries of the Brillouin zone.

The possible values of the coefficients $a, b$ follow from
\begin{multline}
     \mp a_\pm |v_{\hv - \hv'}| + b_\pm v_{\hv' - \hv} = 0 \ \Leftrightarrow\
        \frac{a_\pm}{b_\pm} = \pm \frac{v_{\hv' - \hv}}{|v_{\hv - \hv'}|}
	= \pm \re^{2 \i \de} \\[1ex] \Leftarrow\
	\begin{pmatrix} a_+ \\ b_+ \end{pmatrix}
	= \frac{1}{\sqrt{2}} \begin{pmatrix} \re^{\i \de} \\ \re^{- \i \de} \end{pmatrix} \epc \qd
	\begin{pmatrix} a_- \\ b_- \end{pmatrix}
	= \frac{1}{\sqrt{2}} \begin{pmatrix} \re^{\i \de} \\ - \re^{- \i \de} \end{pmatrix} \epp
\end{multline}
\begin{remark}
We have considered a two-fold degeneracy. At special symmetric
points at the boundaries of the Brillouin zone in more than one
dimension higher degeneracies (like four or six) may occur.
\end{remark}

\subsection{\boldmath Example $d = 1$}
\begin{enumerate}
\item
Let $h = 0$ and $h' = 2\p/a$. Then a degeneracy occurs at $k = \p/a$,
and
\begin{equation}
     \e^\pm \Bigl(\frac{\p}{a}\Bigr)
        = \2 \Bigl(\frac{\p}{a}\Bigr)^2 \pm \la |v_{2\p/a}| \epp
\end{equation}
\item
Let $h = 2\p/a$ and $h' = - 2\p/a$. Then the spectrum is degenerate at
$k = 0$,
\begin{equation}
     \e^\pm (0) = \2 \Bigl(\frac{2\p}{a}\Bigr)^2 \pm \la |v_{4\p/a}| \epp
\end{equation}
\end{enumerate}
The two cases are sketched in Figure~\ref{fig:onedbandgap}. We observe
the `opening of band gaps' at the degenerate points in the dispersion
relation. Band gaps are most characteristic for the phenomenology of
crystalline solids.
\begin{figure}
\begin{center}
\includegraphics[width=.82\textwidth]{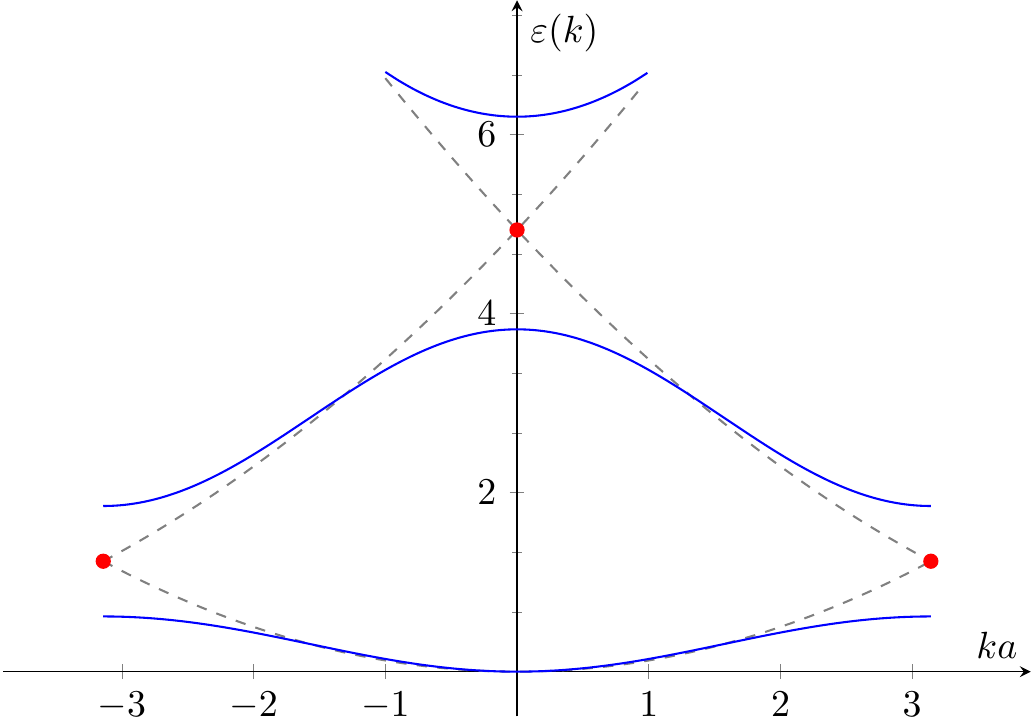}
\caption{\label{fig:onedbandgap} 
The opening of band gaps at degenerate points of the dispersion
relation schematically for almost free, periodically perturbed
electrons in 1d. The dashed lines represent the dispersion of
free particles parameterized by the lattice momentum and are
the same as in Figure~\ref{fig:dipsfreeelec}. The blue line
represent the deformation of the free-particle dispersion under
the influence of a periodic perturbation.
}
\end{center}
\end{figure}

In order to develop more intuition we discuss the wave functions
connected with the first case above. These are
\begin{multline}
     \Ps^\pm (x) = a_\pm \re^{\i \p x/a} + b_\pm \re^{- \i \p x/a} + {\cal O} (\la)
        = \frac{1}{\sqrt{2}}
	  \bigl(\re^{\i (\p x/a + \de)} \pm \re^{- \i (\p x/a + \de)}\bigr)
	  + {\cal O} (\la) \\[1ex]
        = \sqrt{2} \begin{cases} \cos(\p x/a + \de) \\
	                         \i \sin(\p x/a + \de) \end{cases} + {\cal O} (\la) \epp
\end{multline}
It follows that
\begin{equation} \label{pspmsquarealmost}
     |\Ps^\pm (x)|^2 = 1 \pm \cos \bigl(2(\p x/a + \de)\bigr) \epp
\end{equation}
The periodic potential $V$, on the other hand, has the Fourier series
representation
\begin{multline} \label{potentialalmost}
     V(x) = \la \bigl(v_{- 2\p/a} \re^{- i 2 \p x/a} + v_{2 \p/a} \re^{\i 2 \p x/a}\bigr)
            + \text{higher Fourier modes} \\[1ex]
          = 2 \la |v_{2 \p /a}| \cos \bigl(2(\p x/a + \de)\bigr)
            + \text{higher Fourier modes} \epp
\end{multline}
Drawing (\ref{pspmsquarealmost}) and (\ref{potentialalmost}) in
the same picture and choosing $\la < 0$ (attractive effective
potential near origin) we see that for $\Ps^+$ the particle is
in the average in the valleys of the potential, whereas it is
more on the hills for $\Ps^-$. Accordingly, $\e^+ (\p/a) <
\e^- (\p/a)$ in this case.

\subsection{The tight-binding method}
\label{sec:tbmethod}
So far we have studied the formation of energy bands starting from
free electrons (plane waves) subject to a periodic perturbation.
Now we would like to turn to the opposite extreme. We shall start
with atomic wave functions and ask what happens, when the atoms are
brought close to each other. For simplicity we assume a mono-atomic
lattice with $N_{\rm at}$ atoms.
\begin{enumerate}
\item
Let $\Ph_a (\xv)$ an atomic wave function of an electron. In order
to satisfy Bloch's theorem we consider the linear combination
\begin{equation} \label{tbwf}
     \ph_{a \kv} (\xv) = \frac{1}{\sqrt{N_{\rm at}}}
                     \sum_{\Rv \in B} \re^{\i \<\kv, \Rv\>} \Ph_a (\xv - \Rv) \epc
\end{equation}
$\kv \in BZ$.
\item
Further define
\begin{subequations}
\begin{align}
     j(\Rv - \Rv') & = \int_V \rd^3 x \: \Ph_a^* (\xv - \Rv') \Ph_a (\xv - \Rv) \epc \\[1ex]
     h(\Rv - \Rv') & = \int_V \rd^3 x \: \Ph_a^* (\xv - \Rv') (H \Ph_a) (\xv - \Rv) \epc
\end{align}
\end{subequations}
where $H$ is the one-particle Hamiltonian (\ref{hamper}). Since
the atomic wave functions decay exponentially fast with the
distance from the nucleus, we expect these functions to behave as
\begin{subequations}
\begin{align}
     j(\Rv - \Rv') & \sim \de_{\Rv, \Rv'}  \epc \\[1ex]
     h(\Rv - \Rv') & \sim \de_{\Rv, \Rv'} h(0) \epc
\end{align}
\end{subequations}
if the interatomic distance becomes large. It follows that, asymptotically
for large interatomic distances,
\begin{subequations}
\begin{align}
     \<\ph_{a \kv}, \ph_{a \kv'}\> & = \de_{\kv, \kv'} \epc \\[1ex]
     \<\ph_{a \kv}, H \ph_{a \kv'}\> & = h(0) \de_{\kv, \kv'} \epp
\end{align}
\end{subequations}
Thus, for large interatomic distances, the functions $\ph_{a \kv}$
are a set of approximate eigenfunctions of $H$ corresponding
to an $N_{\rm at}$-fold degenerate atomic level.
\item
This highly degenerate level splits under the influence of the
mutual perturbations of the atoms, if they come closer to each
other. In order to take into account the perturbation we determine
the norms and energy expectation value in these states,
\begin{multline}
     \|\ph_{a \kv}\|^2 = \int_V \rd^3 x \: |\ph_{a \kv} (\xv)|^2
        = \frac{1}{N_{\rm at}} \sum_{\Rv, \Rv' \in B} \re^{\i \<\kv,\Rv - \Rv'\>}
	  j(\Rv - \Rv') \\
        = 1 + \sum_{\substack{\Rv \in B\\ \Rv \ne 0}} \re^{\i \<\kv,\Rv\>} j(\Rv) \epp
\end{multline}
Note that the sum on the right hand side vanishes for large lattice
spacing. For the expectation value of the energy we obtain in a
similar way
\begin{multline}
     E(\kv) = \int_V \rd^3 x \: \frac{\ph_{a \kv}^* (\xv) H \ph_{a \kv} (\xv)}{\|\ph_{a \kv}\|^2}
        = \frac{1}{\|\ph_{a \kv}\|^2 N_{\rm at}} \sum_{\Rv, \Rv' \in B} \re^{\i \<\kv,\Rv - \Rv'\>}
	  h(\Rv - \Rv') \\[1ex]
        = \frac{1}{\|\ph_{a \kv}\|^2} \sum_{\Rv \in B} \re^{\i \<\kv,\Rv\>} h(\Rv)
        = \frac{1}{\|\ph_{a \kv}\|^2} \biggl(h(0) +
	      \sum_{\substack{\Rv \in B\\ \Rv \ne 0}} \re^{\i \<\kv,\Rv\>} h(\Rv) \biggr) \epc
\end{multline}
where again the sum in the brackets on the right hand side vanishes for
large lattice spacing.
\item
Let us now assume that the functions $j(\Rv)$ and $h(\Rv)$ decrease
rapidly with increasing distance from the origin in $B$. Then
\begin{subequations}
\begin{align}
     1 \gg j(\Rv_{\rm nn}) \gg j(\Rv_{nnn}) \gg \dots \epc \\[1ex]
     |h(0)| \gg h(\Rv_{\rm nn}) \gg h(\Rv_{nnn}) \gg \dots \epc
\end{align}
\end{subequations}
where `${\rm nn}$' refers to nearest neighbours to the origin,
`${\rm nnn}$' to next-to-nearest neighbours etc. It follows that
\begin{equation}
     E(\kv) = h(0) + \sum_{\Rv \in \{\Rv_{\rm nn}\} \subset B} \re^{\i \<\kv, \Rv\>}
                     \bigl(h(\Rv) - h(0) j(\Rv)\bigr) + \text{${\rm nnn}$ terms} \epp
\end{equation}
This formula describes the so-called `tight-binding bands'
which give a realistic description of bands with `pronounced
atomic character' for which the electrons are close to 
the atoms. We shall denote
\begin{equation}
     t(\Rv) = h(\Rv) - h(0) j(\Rv) \epp
\end{equation}
We conclude from the inversion symmetry of the Bravais lattice 
that $t(\Rv) = t(- \Rv)$. For this reason tight-binding bands
are sums over cosines.
\item
Example fcc lattice: The fcc lattice has 12 nearest neighbours
to the origin located at
\begin{equation}
     \Rv = \frac{a}{2}
           \begin{cases}
	      (\pm 1, \pm 1, 0) \\ (0, \pm 1, \pm 1) \\ (\pm 1, 0, \pm 1)
	   \end{cases} \epc
\end{equation}
and $t(\Rv) = t$ by symmetry. Hence,
\begin{multline}
     E(\kv) = h(0) + t \sum_{\s, \s' = \pm} \bigl(
                       \re^{\i a(\s k_x + \s' k_y)/2}
                       + \re^{\i a(\s k_y + \s' k_z)/2}
                       + \re^{\i a(\s k_x + \s' k_z)/2} \bigr) \\[1ex]
            = h(0) + 4 t \biggl\{\cos\Bigl(\frac{ak_x}{2}\Bigr)
	                         \cos\Bigl(\frac{ak_y}{2}\Bigr) +
				 \cos\Bigl(\frac{ak_y}{2}\Bigr)
				 \cos\Bigl(\frac{ak_z}{2}\Bigr) \\ +
				 \cos\Bigl(\frac{ak_z}{2}\Bigr)
				 \cos\Bigl(\frac{ak_x}{2}\Bigr) \biggr\} \epp
\end{multline}
Note that the band width is proportional to $t$. This means that
small overlaps of the wave function induces narrow bands. The
tight-binding model is thus expected to provide a good description
of narrow energy bands in solids.
\item
We close this section with two remarks. First, the tight binding
bands can be better justified by introducing so-called Wannier
orbitals instead of atomic orbitals (see below).

Second, perhaps the most important insight we can gain from the
above reasoning is the intuitive physical picture. Under the
influence of the mutual interaction an $N_{\rm at}$-fold degenerate
energy level splits into a band with $N_{\rm at}$ states. This means
that bands can be classified according to the character of the
underlying atomic orbitals as $s, p, d, f$ bands. Our calculation
above makes sense for an isolated $s$-orbital. For $p, d, f$ orbitals
we cannot start with a single atomic state $\Ph_a$, but have to
take into account a set $\{\Ph_a (\xv)\}_{a=1}^{a_{\rm max}}$ of
atomic states. This corresponds to a combination of the LCAO (linear
combination of atomic orbitals) method of molecular physics with
the tight-binding method. Like in molecular physics hybrid orbitals
(like $s$-$d$ orbitals) may appear in the general case.
\end{enumerate}

\mysection{Electrons in a periodic potential}
\subsection{Method of orthogonalized plane waves}
The tight-binding method works for low-energy narrow bands
formed by atomic states in which, in the average, the electrons
are close to the ions. A method for the calculation of more
realistic band structure is obtained by combining the tight
binding method with the method of almost free electrons. This
is called the OPW (orthogonalized plane waves) method.

We assume that the low-lying states are known. They may be, for
instance, sufficiently well described by the tight-binding wave
function (\ref{tbwf}),
\begin{equation}
     H \ph_{a \kv} \simeq E_{\rm tb} (\kv) \ph_{a \kv} \epc \qd
     E_{\rm tb} (\kv) = \sum_{\Rv_{\rm nn} \in B} \re^{\i \<\kv, \Rv\>} t(\Rv) \epp
\end{equation}
Let
\begin{equation}
     P = \sum_{\kv \in BZ} P_\kv \epc
\end{equation}
the projector onto the subspace of the full Hilbert space that
is spanned by the $\ph_{a \kv}$. In order to determine the remaining
part of the spectrum it suffices to consider
\begin{multline}
     H(1 - P) \Ps (\xv) = E(1 - P) \Ps (\xv)\\[1ex] \Leftrightarrow
     H \Ps (\xv) + (E - H) P \Ps (\xv)
     =  H \Ps (\xv) + \sum_{\kv \in BZ} (E - E_{\rm tb} (\kv)) P_\kv \Ps (\xv) \\
     =  \bigg[\frac{p^2}{2} + V(\xv)
              + \sum_{\kv \in BZ} (E - E_{\rm tb} (\kv)) P_\kv \biggr]
	      \Ps (\xv) = E \Ps (\xv) \epp
\end{multline}
Here
\begin{equation}
     W(E,\xv) = V(\xv) + \sum_{\kv \in BZ} (E - E_{\rm tb} (\kv)) P_\kv
\end{equation}
is called a `pseudo potential'. The pseudo potential is not a
potential, since in position representation it is represented
by an integral operator. Note that
\begin{equation}
     W(E,\xv) - V(\xv) > 0 \epc
\end{equation}
since $E > E_{\rm tb}$. We interpret this in such a way that
$W(E, \xv)$ includes the effects of screening as discussed above.
It is therefore a more appropriate starting point for a perturbation
theory for almost free electrons.
\subsection{Other methods}
Within the `augmented plane wave method' the Schrödinger equation
is solved in spheres around the ions and plane waves are fitted into
the space between the spheres (were the potential is assumed to
be negligible).

The KKR (Korringa, Kohn, Rostocker) method is a variant of the
augmented plane wave method, where, in a first step, the Green
function of the Laplace operator is used in order to transform the
Schr\"odinger equation into an integral equation.

\subsection{Summary}
\begin{enumerate}
\item
For the understanding of many of the electronic properties of
solids it suffices to take into account the interaction of the
electrons only in so far as they screen the attractive potential
of the core ions. The remaining problem is the problem of independent
electrons in a periodic potential.
\item
Electrons in a periodic potential are characterized by the branches
$\e_n (\kv)$, $n \in {\mathbb N}$, of their dispersion relation.
As opposed to the spectral problem of phonons the number of
branches for the electrons is infinite. The branches are called
energy bands, their entirety is called the `band structure' of
the solid.
\item
Like for the phonons the $\e_n (\kv)$ become differentiable functions
of $\kv$ in the thermodynamic limit which exhibit the full
translation symmetry of the reciprocal lattice and the full point
group symmetry of the solid.
\end{enumerate}

\subsection{Exercise 15. Electronic band structure in
one-dimensional solids by WKB}
The Schr\"odinger equation for a non-relativistic electron of mass $M$ in a
1d periodic potential of period $L$ reads
\begin{equation} \label{sch}
     \frac{d^2\psi}{d x^2} + [E - V(x)]\psi(x) = 0 \epp
\end{equation}
Here the energy and the length are measured in units of $\frac{\hbar^2}{2 M L^2}$
and $L$, respectively.

We would like to solve equation \eqref{sch} by means of the
WKB-approximation and find a condition which determines the band structure,
i.e.\ all allowed energy values $E$.
\begin{center}
\includegraphics[width=.95\textwidth]{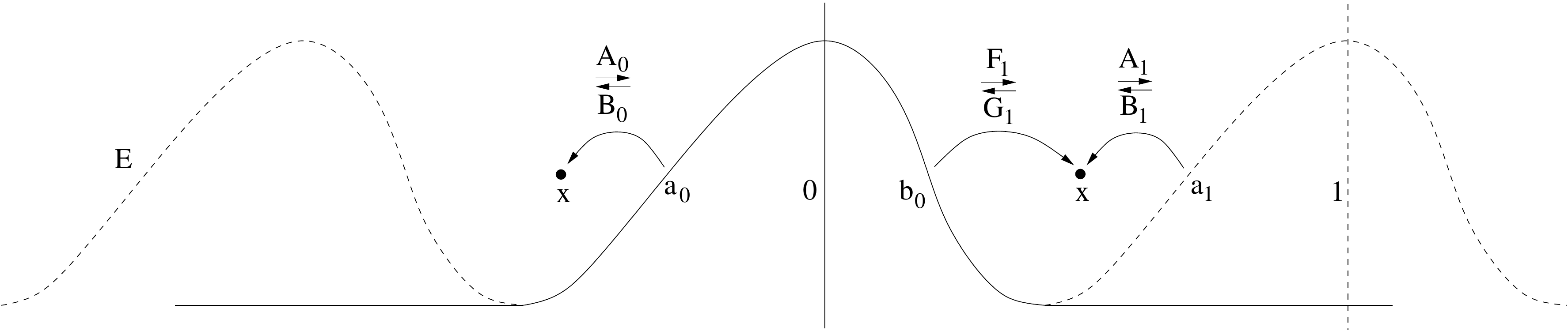}
\end{center}
Setting $p(x) = \sqrt{E-V(x)}$ we can express the general WKB-solution
of the Schr\"odinger equation for \emph{a single} potential barrier $V(x)$
(see figure) in the classically accessible regions left and right of the barrier as
\begin{align*}
   & E < V_{max}: \frac{1}{\sqrt{p(x)}} \exp [ \pm\i \int_{a_0}^x \!\!\!  \rd y p(y)]
   \epc\ x<a_0 \epc \qd
   \frac{1}{\sqrt{p(x)}} \exp [ \pm\i \int_{b_0}^x \!\!\!  \rd y p(y)]
   \epc\ x > b_0\\[1ex]
   & E > V_{max}: \frac{1}{\sqrt{p(x)}} \exp [ \pm\i \int_{0}^x \!\!\!  \rd y p(y)]
   \epc\ x<0 \epc \qd \frac{1}{\sqrt{p(x)}} \exp [ \pm\i \int_{0}^x \!\!\!  \rd y p(y)]
   \epc\ x > 0 \epp
\end{align*} 
\begin{enumerate}
\item
The general solution is in each case a linear combination with coefficients 
$A_0$, $B_0$ or $F_1$, $G_1$ for the left and right classically accessible
region, respectively. The coefficients are related by a $2\times2$ matrix $M$.
Conclude from the time-inversion invariance that $M_1^1 = {M_2^2}^*$,
$M_1^2 = {M_2^1}^*$. For this purpose consider 
\begin{equation}
     M \begin{pmatrix} A_0\\ B_0 \end{pmatrix} = 
        \begin{pmatrix} F_1\\ G_1 \end{pmatrix} \epc \qquad
     M \begin{pmatrix} \overline{B}_0\\ \overline{A}_0 \end{pmatrix} = 
        \begin{pmatrix} \overline{G}_1\\ \overline{F}_1 \end{pmatrix} \epp
\end{equation}
\item
The current conservation $|A_0|^2 - |B_0|^2 = |F_1|^2 - |G_1|^2$ implies $\det M = 1$.
Let $T$ be the transmission coefficient and $R=1-T$ the reflection coefficient with
the corresponding phase shifts $\re^{\i\mu}$ and $\re^{-\i\nu}$. Verify the representation
\begin{equation} \label{wall}
M =
\begin{pmatrix}
 \frac{1}{\sqrt{T}} \re^{\i\mu} & - \sqrt{\frac{R}{T}} \re^{\i(\mu+\nu)} \\
 -  \sqrt{\frac{R}{T}}\re^{-\i(\mu+\nu)}& \frac{1}{\sqrt{T}} \re^{-\i\mu}
\end{pmatrix} \epp
\end{equation}
\item
A periodic continuation of the potential barrier leads to another representation of
the solution in the classically accessible domain $b_0 < x < a_1$,
\begin{align}
E < V_{max}: \quad \frac{A_1}{\sqrt{p(x)}} \exp [ +\i \int_{a_1}^x \!\!\!  \rd y p(y)] &+
\frac{B_1}{\sqrt{p(x)}} \exp [ -\i \int_{a_1}^x \!\!\!  \rd y p(y)] \epc \\[1ex]
E > V_{max}: \quad \frac{A_1}{\sqrt{p(x)}} \exp [ +\i \int_{1}^x \!\!\!  \rd y p(y)] &+
\frac{B_1}{\sqrt{p(x)}} \exp [ -\i \int_{1}^x \!\!\!  \rd y p(y)] \epp
\end{align}
Which matrix connects $A_1$, $B_1$ with $F_1$, $G_1$, if we set $\phi(E) =
\int_{b_0}^{a_1} \!\!  \rd y p(y),\,E < V_{max}$ rsp.\ $\phi(E) =
\int_{0}^{1} \!\!  \rd y p(y),\,E \geq V_{max}$? Calculate the transfer matrix 
$P$ which links the coefficients $A_0$, $B_0$ and $A_1$, $B_1$ of two neighbouring cells.
\item
The full solution for the periodic potential stays bounded as long as $P$ has
eigenvalues of absolute value 1. Show that this fact implies a condition
that determines the band structure,
\begin{equation}
\left| \frac{\cos [\phi(E) + \mu]}{\sqrt{T(E)}} \right| \leq 1 \epp
\end{equation}
\item
For a potential of the form $V(x) = V_0 \cos(2\pi x)$, $V_0 > 0$, the Schr\"odinger
equation \eqref{sch} of the non-relativistic particle is equal to the Mathieu
equations. The transition coefficient $T(E)$ of {\it a single} potential barrier 
in WKB-approximation follows as
\begin{align*}
T(E) &= \frac{1}{1 + \re^{2W}} \epc & W(E) &= \int_{a_0}^{b_0} \!\!\!\! \rd y | p(y) |
\epc \qd E < V_{0} \epc \\[1ex]
T(E) &= \frac{1}{1 + \re^{-2W}} \epc & W(E) &= \bigg| \int_{y_1}^{y_2} \!\!\!\! \rd y p(y) \bigg|
\epc \qd E > V_{0} \epc
\end{align*}
where for $E>V_0$ the integral has to be calculated on the direct line between
both imaginary reversal points. Represent $\phi(E)$ and $W(E)$ by
complete elliptical integrals of the first and second kind $K(m)$ and $E(m)$,
with the dimensionless parameter $m=\frac{E+V_0}{2V_0}$, $E < V_0$ and
$m=\frac{2V_0}{E+V_0}$, $E > V_0$, respectively!
\item
Choosing different parameters $m$ (and thus fixing $E/V_0$) it is possible to
to sketch the regions in the $V_0$,$E$-diagram for $\mu \approx 0$, where
the solutions are bounded. Find such type of diagram in the literature.
\end{enumerate}

\subsection{The Fermi distribution}
Electrons are Fermions. According to the Pauli principle many-electron
wave functions must be totally anti-symmetric under the permutations
of any two electrons. Consequentially, in a system of many non-interacting
electrons (or holes) no two of them can be in the same single-particle
state. When we calculate the grand canonical partition function we
therefore have to count every single-particle state as either unoccupied
or occupied by just one electron (or hole),
\begin{equation} \label{elpartfun}
     Z_{\rm e} = \prod_{n=0}^\infty \prod_{\kv \in BZ}
                 \bigl(1 + \re^{- \frac{\e_n (\kv) - \m}{T}}\bigr) \epp
\end{equation}
Here $1 = \re^0$ stands for an unoccupied state, while
$\re^{- \frac{\e_n (\kv) - \m}{T}}$ stands for an occupied state.
Every factor on the right hand side of (\ref{elpartfun}) 
represents the partition function corresponding to a single-electron
state. Hence, the grand-canonical probability for having this state
occupied is
\begin{equation}
     f(\e_n (\kv) - \m) =
        \frac{\re^{- \frac{\e_n (\kv) - \m}{T}}}
	     {1 + \re^{- \frac{\e_n (\kv) - \m}{T}}} =
        \frac{1}{\re^{\frac{\e_n (\kv) - \m}{T}} + 1} \epp
\end{equation}
As we recall from our lecture on statistical mechanics, this is the
Fermi distribution function.

\subsection{Grand canonical potential of the electron gas}
As for every ideal gas of spin-$\2$ Fermions we can immediately
write down the total particle number $N$, internal energy $E$ and
entropy $S$ of the system as sums involving the Fermi function,
\begin{subequations}
\begin{align}
     N(T,\m) & = 2 \sum_{n=0}^\infty \sum_{\kv \in BZ} f(\e_n (\kv) - \m) \epc \\
     E(T,\m) & = 2 \sum_{n=0}^\infty \sum_{\kv \in BZ} f(\e_n (\kv) - \m) \e_n (\kv) \epc \\
     S(T,\m) & = - 2 \sum_{n=0}^\infty \sum_{\kv \in BZ} \bigl\{
                    f(\e_n (\kv) - \m) \ln \bigl(f(\e_n (\kv) - \m)\bigr)
		    \notag \\[-1ex] & \mspace{136.mu}
		    + (1 - f(\e_n (\kv) - \m)) \ln \bigl(1 - f(\e_n (\kv) - \m)
		    \bigr) \bigr\} \epp
\end{align}
\end{subequations}
Here the factor of 2 accounts for the spin degree of freedom.
The grand canonical potential is obtained as
\begin{multline} \label{grandcanfermi}
     \Om (T, \m) = E(T, \m) - T S(T, \m) - \m N (T, \m) \\
        = 2 T \sum_{n=0}^\infty \sum_{\kv \in BZ}
	    \bigl\{f \ln(1/f -1) + f \ln f + (1 - f)\ln(1 - f)\bigr\} \\[-1ex]
        = - 2 T \sum_{n=0}^\infty \sum_{\kv \in BZ}
	      \ln \bigl(1 + \re^{- \frac{\e_n (\kv) - \m}{T}}\bigr) \epp
\end{multline}

In a similar way as for the phonon gas we can rewrite the sums over
all lattice momenta asymptotically for large volume $V$ first as
an integral over the Brillouin zone and then as an integral over
all energies. For the second step we need to define an electronic
density of states. We proceed as for the phonon gas and first of all
introduce the density of states for a single branch of the dispersion
relation,
\begin{equation} \label{gnfermi}
     g_n (\e) = \frac{1}{(2 \p)^3} \int_{BZ} \rd^3 k \: \de(\e - \e_n (\kv))
              = \frac{1}{(2 \p)^3} \int_{S(\e)} \frac{\rd S}{\|\grad_\kv \e_n (\kv)\|} \epc
\end{equation}
where $S(\e)$ is the surface implicitly determined by the equation
$\e = \e_n (\kv)$. With this the (total) density of states is
defined as
\begin{equation} \label{gfermi}
     g(\e) = 2 \sum_{n=0}^\infty g_n (\e) \epp
\end{equation}
Again a factor of 2 is included to take the spin degrees of freedom
into account.
\begin{remark}
Our statements about van-Hove singularities in section~\ref{sec:vanh}
remain valid for electrons.
\end{remark}

\subsection{Fermi energy and Fermi surface}
Two important notions in solid state physics are the `Fermi
energy' and the `Fermi surface'. Using the density of states
introduced above we may write the particle number as a function
of temperature and chemical potential as
\begin{equation} \label{nofmu}
     N(T, \m) = V \int_{- \infty}^\infty \rd \e \: g(\e) f(\e - \m) \epp
\end{equation}
Since
\begin{equation} \label{dernofmu}
     \6_\m  N(T, \m) = \frac{V}{4T}
          \int_{- \infty}^\infty \rd \e \: \frac{g(\e)}{\ch^2((\e - \m)/2T)} > 0 \epc
\end{equation}
equation (\ref{nofmu}) can be inverted at any $T > 0$ to give
$\m = \m(T, N)$ (the latter fact follows, of course, also
from general arguments on the equivalence of thermodynamic
ensembles in the thermodynamic limit).

The Fermi energy $E_F$ is defined as the chemical potential at $T = 0$,
\begin{equation}
     E_F = \lim_{T \rightarrow 0+} \m (T, N) \epp
\end{equation}
It is clear from (\ref{nofmu}) and (\ref{dernofmu}) that the
Fermi energy depends only on the particle density $N/V$ and
is a monotonically increasing function of the particle density.
In a canonical ensemble description, i.e.\ if we fix the
particle number and consider $\m$ as a function of $N$ and $T$,
we obtain the pointwise pointwise limit,
\begin{equation}
       \lim_{T \rightarrow 0+} f\bigl(\e - \m(T,N)\bigr) = \Th(E_F - \e) \epc
\end{equation}
where $\Th$ is the Heaviside function. This reflects the fact
that in the ground state all single-particle states with
energies up to $E_F$ are occupied, while those with energies
larger than $E_F$ are unoccupied.

For the electronic ground state of solids (within the band model)
exist two alternatives,
\begin{equation}
     \text{(i)}\ g(E_F) > 0 \epc \qd \text{(ii)}\ g(E_F) = 0 \epc
\end{equation}
which come with drastically different phenomenologies. In case (i)
the Fermi energy lies within a band. In case (ii) it is situated
in a band gap (cf.\ Figure~\ref{fig:efermi}). In case (i)
excitations of arbitrarily small energy are possible. In case (ii),
due to the Pauli principle, the smallest possible excitation energy
is equal to the band gap. In case (i) an arbitrarily small electric
field causes a current, in case (ii) this does not happen. This
is our first explanation, within a single-particle picture, of 
the difference between conductors and insulators.
\begin{figure}
\begin{center}
\includegraphics[width=.75\textwidth]{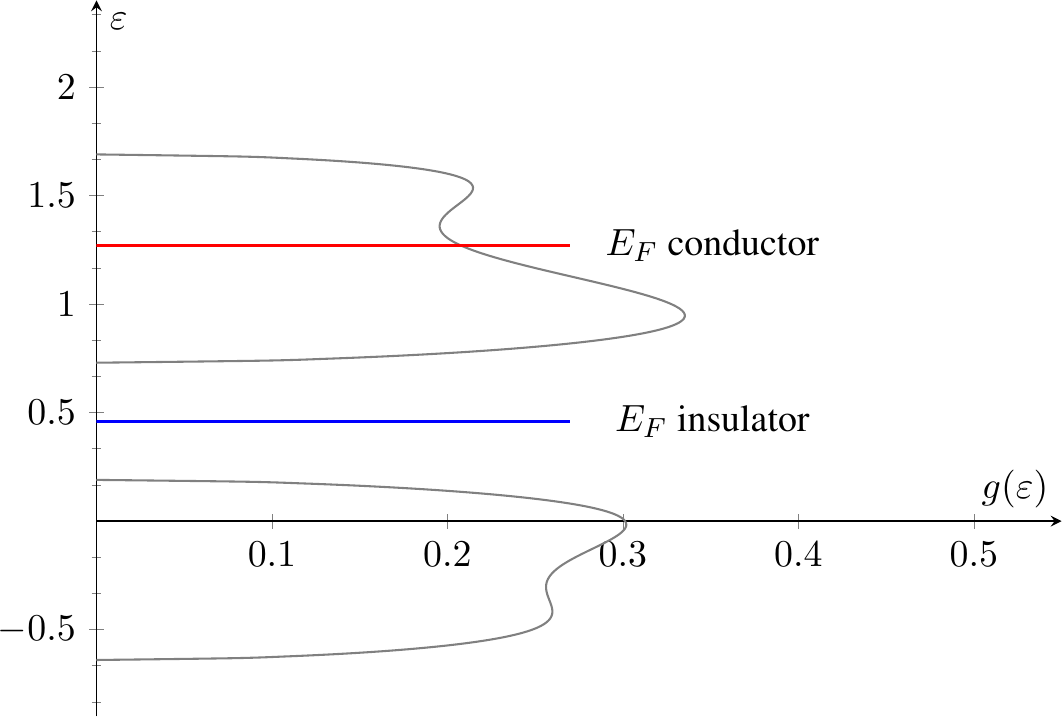}
\caption{\label{fig:efermi} 
Within the single-particle picture or band model a solid is a
conductor or an insulator depending on whether or not the
Fermi energy is situated within a band or in a band gap.
}
\end{center}
\end{figure}

In conductors the Fermi energy is in at least one band, and the
equations
\begin{equation}
     \e_n (\kv) = E_F \qd n \in {\mathbb N}
\end{equation}
determine a surface in reciprocal space which is called the
Fermi surface $S_F$. The volume enclosed by the Fermi surface
\begin{equation}
     \Int S_F = \bigl\{\kv \in {\mathbb R}^3 \Big|
                       \e_n (\kv) \le E_F,\ n \in {\mathbb N}\bigr\}
\end{equation}
is called the Fermi sphere. In general the Fermi surface has
a complicated shape and topology (not necessarily simply connected).
It is of crucial importance for the understanding of the transport
properties of solids, in particular in the presence of a magnetic
field.

\mysection{Low-temperature specific heat of the electron gas}
A first quantitative example, showing that it makes a big difference
whether or not the Fermi energy lies within a band, is the thermodynamics
of band electrons at low temperature. Expressing the grand canonical
potential (\ref{grandcanfermi}) by means of the electronic density
of states (\ref{gnfermi}), (\ref{gfermi}) we obtain
\begin{equation}
     \Om (T, \m) = - TV \int_{-\infty}^\infty \rd \e \:
                        g(\e) \ln\bigl(1 + \re^{- \frac{\e - \m}{T}}\bigr) \epp
\end{equation}
In order to prepare for the low-$T$ analysis we rewrite this as
\begin{equation} \label{omfermiforlowt}
     \Om (T, \m) = V \int_{-\infty}^\m \rd \e \: g(\e) (\e - \m)
                   - TV \int_{-\infty}^\infty \rd \e \:
                        g(\e) \ln\bigl(1 + \re^{- \frac{|\e - \m|}{T}}\bigr) \epp
\end{equation}
Since $g(x)$ is the density of states of an electronic band structure,
there are two interlaced sequences $(a_n)_{n \in {\mathbb N}}$,
$(b_n)_{n \in {\mathbb N}}$ with $a_n < b_n < a_{n+1} < b_{n+1}$ and
$g(x) \ne 0\ \forall\ x \in (a_n, b_n)$, $g(x) = 0\ \forall\ x
\in [b_n, a_{n+1}]$. If $\m$ is in one of the band gaps $(b_n, a_{n+1})$,
then the second integral on the right hand side of (\ref{omfermiforlowt})
vanishes exponentially fast for $T \rightarrow 0$, which is not the case,
if $\m$ is situated inside a band.

\subsection{Specific heat of metals}
Both cases require a separate asymptotic analysis. Let us start with the
metallic case $\m \in (a_n, b_n)$ for some $n \in {\mathbb N}$. We assume
that $g$ is analytic in $(a_n, b_n)$. Then it has a Taylor series expansion
around $\e = \m$ with some finite radius of convergence $\de > 0$,
\begin{equation}
     g(\e) = \sum_{k=0}^\infty \frac{g^{(k)}(\mu)}{k!} (\e - \m)^k \epc
\end{equation}
if $|\e - \m| < \de$. It follows that
\begin{multline} \label{omfermiasy1}
     \Om (T, \m) - V \int_{-\infty}^\m \rd \e \: g(\e) (\e - \m) \\[1ex]
        = - TV \sum_{k=0}^\infty \frac{g^{(k)}(\mu)}{k!}
	       \int_{\m - \de}^{\m + \de} \rd \e \: (\e - \m)^k
	          \ln \bigl(1 + \re^{- \frac{|\e - \m|}{T}}\bigr)
		  + {\cal O} \bigl(T^\infty\bigr) \\[1ex]
        = - V \sum_{k=0}^\infty \frac{g^{(k)}(\mu)}{k!} T^{k+2}
	       \int_{- \de/T}^{\de/T} \rd x \: x^k
	          \ln \bigl(1 + \re^{- |x|}\bigr)
		  + {\cal O} \bigl(T^\infty\bigr) \epp
\end{multline}
Here we have substituted $x = (\e - \m)/T$ in the second equation.
The remaining integral vanishes for symmetry reasons if $k$ is odd.
If $k = 2m$ we obtain
\begin{multline}
     \int_{- \de/T}^{\de/T} \rd x \: x^{2m} \ln \bigl(1 + \re^{- |x|}\bigr)
        = 2 \int_0^\infty \rd x \: x^{2m} \ln \bigl(1 + \re^{-x}\bigr)
	  + {\cal O} \bigl(T^\infty\bigr) \\[1ex]
        = 2 \sum_{n=1}^\infty \frac{(-1)^{n+1}}{n}
	    \int_0^\infty \rd x \: x^{2m} \re^{- nx}
	  + {\cal O} \bigl(T^\infty\bigr)
        = 2 (2m)! \sum_{n=1}^\infty \frac{(-1)^{n+1}}{n^{2m+2}}
	  + {\cal O} \bigl(T^\infty\bigr) \\[1ex]
        = (2m)! (2 - 2^{-2m}) \z (2m + 2) + {\cal O} \bigl(T^\infty\bigr) \epc
\end{multline}
where $\z$ is Riemann's zeta function. Inserting this back
into (\ref{omfermiasy1}) we obtain the low-$T$ asymptotic series
\begin{multline}
     \Om (T, \m) = V \int_{-\infty}^\m \rd \e \: g(\e) (\e - \m) \\[-1ex]
        - V \sum_{k=0}^\infty (2 - 2^{-2k}) \z(2 k + 2) g^{(2k)} (\mu) T^{2k+2}
        + {\cal O} \bigl(T^\infty\bigr)
\end{multline}
for the grand canonical potential.

The corresponding series for the particle number and entropy in
a grand canonical description are
\begin{subequations}
\label{fermilowmetal}
\begin{align}
     N(T, \m) & = - \frac{\6 \Om (T, \m)}{\6 \m}
        = V \int_{-\infty}^\m \rd \e \: g(\e) \notag \\[-.5ex] & \mspace{36.mu}
          + V \sum_{k=0}^\infty (2 - 2^{-2k}) \z(2 k + 2) g^{(2k+1)} (\mu) T^{2k+2}
          + {\cal O} \bigl(T^\infty\bigr) \epc \label{nfermilowmetal} \\[1ex]
     S(T, \m) & = - \frac{\6 \Om (T, \m)}{\6 T} \notag \\ &
        = V \sum_{k=0}^\infty (2 - 2^{-2k}) (2k + 2) \z(2 k + 2) g^{(2k)} (\mu) T^{2k+1}
          + {\cal O} \bigl(T^\infty\bigr) \epp \label{sfermilowmetal}
\end{align}
\end{subequations}

The asymptotic series allow us to calculate the low-temperature
expansion of the specific heat order by order, using
\begin{equation} \label{cvthermo}
     C_V = T \frac{\6 S}{\6 T}\biggr|_N
         = T \biggl(\frac{\6 S}{\6 T}\biggr|_\m + \frac{\6 S}{\6 \m}\biggr|_T
	            \frac{\6 \m}{\6 T}\biggr|_N\biggr) \epp
\end{equation}
Equation (\ref{nfermilowmetal}) can be used iteratively to obtain
$\m$ as a function of $N$ and $T$. The equation
\begin{equation} \label{efnv}
     \frac{N}{V} = \int_{- \infty}^{E_f} \rd \e \: g(\e) \epc
\end{equation}
following from (\ref{nfermilowmetal}) at $T = 0$, determines
$E_F = \m (0, N)$ as a function of the density of particles $N/V$.
Since only $T^2$ enters (\ref{nfermilowmetal}), $\m$ must be even
in $T$,
\begin{equation}
     \m = E_F + \a T^2 + {\cal O} \bigl(T^4\bigr) \epp
\end{equation}
Inserting this into (\ref{nfermilowmetal}) and using (\ref{efnv})
we can calculate $\a$,
\begin{multline} \label{muofnfermi}
     \int_{E_F}^\m \rd \e \: g(\e) + \z(2) g' (\m) T^2
        = (\m - E_F) g(\m) - g' (\m) \frac{(\m - E_F)^2}{2} + \z(2) g' (\m) T^2
	  + {\cal O} \bigl(T^6\bigr) \\[1ex]
        = \bigl(\a g(E_F) + \z(2) g' (E_F)\bigr) T^2
	  + {\cal O} \bigl(T^4\bigr) = {\cal O} \bigl(T^4\bigr) \\[1ex]
        \then\ \a = - \z(2) \frac{g' (E_F)}{g(E_F)}\
        \then\ \m = E_F - \z(2) \frac{g' (E_F)}{g(E_F)} T^2
	          + {\cal O} \bigl(T^4\bigr) \epp
\end{multline}
Furthermore, using (\ref{sfermilowmetal}) we get at once that
\begin{subequations}
\begin{align} \label{sofnfermider}
    T \frac{\6 S}{\6 T}\biggr|_\m & = 2 \z (2) T V g (E_F)
       + {\cal O} \bigl(T^3\bigr) \epc \\
    T \frac{\6 S}{\6 \m}\biggr|_T & = 2 \z (2) T^2 V g' (E_F)
       + {\cal O} \bigl(T^4\bigr) \epp
\end{align}
\end{subequations}
Using (\ref{muofnfermi}) and (\ref{sofnfermider}) in (\ref{cvthermo})
and recalling that $\z(2) = \p^2/6$ we finally arrive at the sought
for low-temperature asymptotics of the specific heat of a metal,
\begin{equation} \label{metallowtcv}
     C_V = \frac{\p^2}{3} TV g (E_F) + {\cal O} \bigl(T^3\bigr) \epp
\end{equation}

Let us add a few comments in conclusion.
\begin{enumerate}
\item
The above low-$T$ asymptotic expansion of the thermodynamic
quantities of a Fermi gas goes back to Sommerfeld \cite{Sommerfeld28}
and is called the Sommerfeld expansion.
\item
Equation (\ref{metallowtcv}) is an important result stating that
the electronic contribution to the specific heat of metals is linear
in $T$ and proportional to the density of states at the Fermi energy.
\item
Taking it the other way round, we see that we can experimentally
determine the density of states of a metal close to its Fermi energy 
by measuring its specific heat.
\item
Since typical electronic energies in solids, like band gaps or band
widths, are of the order of $1\, {\rm eV}$, low-temperature expansions
for the electrons in solids are usually valid even above room temperature.
\end{enumerate}

\subsection{Specific heat of insulators}
Starting once more from equation (\ref{omfermiforlowt}) we consider
the specific heat of an insulator. By definition a band insulator has
$g(E_F) = 0$. The Fermi energy is located inside a band gap, $b_n <
E_F < a_{n+1}$ for some $n \in {\mathbb N}$. In this situation the
closest band below the Fermi energy, the one with index $n$ here, is
called `valence band', the closest band above the Fermi energy,
the one with index $n+1$, `conduction band'. It follows from
(\ref{omfermiforlowt}) that, for $b_n < \m < a_{n+1}$,
\begin{multline} \label{omlowtins}
     \Om (T, \m) - V \int_{-\infty}^\m \rd \e \: g(\e) (\e - \m) \\[1ex]
        \sim - TV \int_{- \infty}^\infty \rd \e \: g(\e) \re^{- \frac{|\e - \m|}{T}}
        \sim - TV \biggl\{\int_{a_n}^{b_n} \rd \e \: g(\e) \re^\frac{\e - \m}{T}
             + \int_{a_{n+1}}^{b_{n+1}} \rd \e \: g(\e) \re^{- \frac{\e - \m}{T}}\biggr\} \\[1ex]
        = - TV \biggl\{\re^\frac{b_n - \m}{T}
	       \int_{a_n}^{b_n} \rd \e \: g(\e) \re^\frac{\e - b_n}{T}
             + \re^{- \frac{a_{n+1} - \m}{T}}
	       \int_{a_{n+1}}^{b_{n+1}} \rd \e \: g(\e) \re^{- \frac{\e - a_{n+1}}{T}}\biggr\} \\[1ex]
        = - T^2 V \biggl\{\re^\frac{b_n - \m}{T} \mspace{-36.mu}
	       \int\displaylimits_{- (b_n - a_n)/T}^{0} \mspace{-36.mu} \rd x \: g(b_n + Tx) \re^x
             + \re^{- \frac{a_{n+1} - \m}{T}} \mspace{-45.mu}
	       \int\displaylimits_0^{(b_{n+1} - a_{n+1})/T} \mspace{-45.mu} \rd x \: g(a_{n+1} + Tx)
	       \re^{- x} \biggr\} \epp
\end{multline}
Here we have neglected contributions that are exponentially smaller
than the displayed terms. In order to further simplify the remaining
integrals we have to recall that for 3d systems there are square-root
type van-Hove singularities at the band edges. This means that there
are $\a_v, \a_c > 0$ such that within the respective bands
\begin{equation}
     g(b_n + Tx) \sim \a_v \sqrt{- Tx} (1 + {\cal O} (Tx)) \epc \qd
     g(a_{n+1} + Tx) \sim \a_c \sqrt{Tx} (1 + {\cal O} (Tx)) \epp
\end{equation}
Substituting this into the integrals on the right hand side of
(\ref{omlowtins}) we obtain, for instance,
\begin{equation}
     \int\displaylimits_{- (b_n - a_n)/T}^{0} \mspace{-36.mu} \rd x \: g(b_n + Tx) \re^x
        \sim \sqrt{T} \a_v \int_0^\infty \rd x \: \sqrt{x} \re^{-x}
	= \sqrt{T} \a_v \G(3/2) = \2 \sqrt{\p T} \a_c
\end{equation}
and a similar expression for the other integral. Altogether we end
up with
\begin{equation}
     \Om (T, \m) \sim  V \int_{-\infty}^\m \rd \e \: g(\e) (\e - \m)
        - \frac{\sqrt{\p}}{2} T^\frac{5}{2} V
	  \biggl\{\a_v \re^{- \frac{\m - b_n}{T}}
	          + \a_c \re^{- \frac{a_{n+1} - \m}{T}}\biggr\} \epc
\end{equation}
which is the low-temperature asymptotics of the grand canonical
potential for an insulator, when $\m \in (b_n, a_{n+1})$.

Again the formulae for particle number and entropy in the grand
canonical ensemble are obtained by taking derivatives (cf.\
(\ref{fermilowmetal})),
\begin{subequations}
\begin{align} \label{nfermilowins}
     N(T,\m) & \sim V \int_{-\infty}^\m \rd \e \: g(\e)
        - \frac{\sqrt{\p}}{2} T^\frac{3}{2} V
	  \biggl\{\a_v \re^{- \frac{\m - b_n}{T}}
	          - \a_c \re^{- \frac{a_{n+1} - \m}{T}}\biggr\} \epc \\[1ex]
     S(T,\m) & \sim \frac{\sqrt{\p}}{2} T^\frac{1}{2} V
	  \biggl\{(\m - b_n) \a_v \re^{- \frac{\m - b_n}{T}}
	          + (a_{n+1} - \m) \a_c \re^{- \frac{a_{n+1} - \m}{T}}\biggr\} \epp
		  \label{sfermilowins}
\end{align}
\end{subequations}

For $T \rightarrow 0+$ in (\ref{nfermilowins}) we still get that
the integral on the right hand side with upper limit $E_F$ is equal
to the total particle number, but now this equality does not fix
$E_F$, since the integral as a function of $\m$ is constant for
$\m \in [b_n, a_{n+1}]$, thus non-invertible. Since $\m$ is continuous
in a vicinity of $T = 0$, we see that the integral exactly equals
$N$ even for small finite temperatures. Hence, the second term
on the right hand side must vanish asymptotically,
\begin{equation}
     \frac{\sqrt{\p}}{2} T^\frac{3}{2} V
        \biggl\{\a_v \re^{- \frac{\m - b_n}{T}}
	          - \a_c \re^{- \frac{a_{n+1} - \m}{T}}\biggr\} \sim 0 \epc
\end{equation}
which determines $\m$ at small $T$ to be
\begin{equation} \label{mufermilowtins}
     \m = \frac{b_n + a_{n+1}}{2} + \frac{T}{2} \ln \Bigl(\frac{\a_v}{\a_c}\Bigr)
          + {\cal O} \bigl(T^\infty\bigr) \epp
\end{equation}
In particular,
\begin{equation} \label{efins}
     E_F = \frac{b_n + a_{n+1}}{2} \epp
\end{equation}
The Fermi energy of an insulator is in the middle of the band gap.

By definition
\begin{equation} \label{defgap}
     \D = a_{n+1} - b_n
\end{equation}
is the `width of the band gap' or simply `the band gap'. Using
(\ref{cvthermo}), (\ref{sfermilowins}) and (\ref{mufermilowtins})-%
(\ref{defgap}) one straightforwardly obtains the expression
\begin{equation} \label{lowtcvins}
     C_V = \frac{\a_v + \a_c}{2} \sqrt{\frac{\p}{T}} V
           \Bigl(\frac{\D}{2}\Bigr)^2 \re^{- \frac{\D}{2T}}
	   \bigl(1 + {\cal O} (T)\bigr)
\end{equation}
for the low-temperature asymptotic behaviour of the specific heat
of a band insulator.

Here again a few concluding remarks are in order.
\begin{enumerate}
\item
The functional form of the low-$T$ specific heat in (\ref{lowtcvins})
is called `thermally activated behaviour'. One says that the
activation barrier equals half of the band gap.
\item
The size of the electronic contribution to the specific heat of an
insulator depends in an extremal way on the band gap. Example:
\begin{subequations}
\begin{align}
     \D & = 0,5\, {\rm eV}\ \then\ \re^{- \frac{\D}{2T}} \approx 10^{-4} \epc \\[1ex]
     \D & = 3,0\, {\rm eV}\ \then\ \re^{- \frac{\D}{2T}} \approx 10^{-24}
\end{align}
\end{subequations}
at $T \approx 300 {\rm K}$. This makes the difference, as far as
the electronic specific heat is concerned, between insulators and
`semi-conductors'.
\item
For an insulator in the low-temperature regime the electrons contribute
almost nothing to the specific heat. The specific heat of insulators
is mostly determined by the phonons. In conductors, on the other hand,
the electrons do contribute to the specific heat and even dominate it
at low enough temperatures. For this reason metals have, in general, a
larger heat capacity than insulators.
\end{enumerate}

\mysection{Electrons in solids -- the second quantized picture}
We recall that the Hamiltonian (\ref{hell}) of the electrons in solids
in adiabatic approximation is
\begin{equation}
     H_{\rm el} = \sum_{j=1}^N \Bigl\{ \2 \|\pv_j\|^2 + V_{\rm I} (\xv_j) \Bigr\}
                  + \sum_{1 \le j < k \le N} V_{\rm C} (\xv_j - \xv_k) \epc
\end{equation}
where $V_{\rm I} (\xv)$ is the periodic potential of the ions in the
crystal, $V_{\rm C} (\xv) = 1/\|\xv\|$ is the Coulomb potential, and
$N$ is the number of electrons we are taking into account.

\subsection{An auxiliary potential}
Our aim in the following lectures is to proceed beyond the single-particle
approximation. To begin with, we remark that we can introduce an
auxiliary potential $V_{\rm A} (\xv)$ without changing too much the
structure of the Hamiltonian. Let
\begin{equation}
     V(\xv) = V_{\rm I} (\xv) + V_{\rm A} (\xv) \epp
\end{equation}
Then
\begin{multline}
     H_{\rm el} - \sum_{j=1}^N \Bigl\{ \2 \|\pv_j\|^2 + V (\xv_j) \Bigr\}
        = \sum_{1 \le j < k \le N} V_{\rm C} (\xv_j - \xv_k)
	  - \sum_{j=1}^N V_{\rm A} (x_j) \\
	= \sum_{1 \le j < k \le N}
	  \biggl\{\underbrace{V_{\rm C} (\xv_j - \xv_k)
	  - \frac{V_{\rm A} (\xv_j) + V_{\rm A} (\xv_k)}{N-1}}_{= U(\xv_j, \xv_k)}\biggr\}
	  \\[1ex] \Leftrightarrow\
     H_{\rm el} = \sum_{j=1}^N \Bigl\{ \2 \|\pv_j\|^2 + V (\xv_j) \Bigr\}
	            + \sum_{1 \le j < k \le N} U(\xv_j, \xv_k) \epp
\end{multline}
Note that the choice of the auxiliary potential is entirely at
our disposal. We may, for instance, choose the mean field potential
defined in section~\ref{sec:reductionone}. If we neglect $U$ we
are applying a single-particle approximation. Good single-particle
approximations are obtained for appropriate choices of $V_{\rm A}$.
A single-particle approximation is good, if the two-particle matrix
elements of $U(\xv, \yv)$ calculated with eigenstates of the
single-particle Hamiltonian
\begin{equation}
     h(\xv, \pv) = \frac{\|\pv\|^2}{2} + V (\xv)
\end{equation}
are small for single-particle energies close to the Fermi surface.

\subsection{Bloch basis and Wannier basis}
In the following calculation we will not fix the auxiliary potential
$V_{\rm A}$. Our only explicit assumption is that it is periodic.
Implicitly we shall also assume that it allows us to take screening
into account. Due to the periodicity the eigenfunctions of $h$ are
`Bloch functions' $\ph_{\a \kv}$ labeled by a band index $\a \in
{\mathbb N} $ and a lattice momentum $\kv \in BZ$,
\begin{equation}
     h \ph_{\a \kv} (\xv) = \e_\a (\kv) \ph_{\a \kv} (\xv) \epp
\end{equation}
We shall call the $\e_\a (\kv)$ the single-particle energies. The
Bloch theorem implies
\begin{equation}
     \ph_{\a \kv} (\xv) = \re^{\i \<\kv, \xv\>} u_{\a \kv} (\xv)
\end{equation}
with a lattice periodic function $u_{\a \kv}$. The set
$\{\ph_{\a \kv}\}_{\a \in {\mathbb N}, \kv \in BZ}$ is a single-particle
orthonormal basis of the electronic Hilbert space called the
`Bloch basis'.

Define
\begin{equation}
     \Ph_\a (\xv) = \frac{1}{\sqrt{L}} \sum_{\kv \in BZ} \ph_{\a \kv} (\xv) \epc
\end{equation}
where $L$ is the number of unit cells.\footnote{Note the change of
notation!} Then $\{\Ph_\a (\xv - \Rv_j)|\a \in {\mathbb N}, \Rv_j \in B\}$
is another orthonormal basis (prove it!) called the Wannier basis.
$\Ph_\a (\xv)$ is called a Wannier function. The Wannier functions
generalize the atomic orbitals in section~\ref{sec:tbmethod}.

Bloch basis and Wannier basis are connected via Fourier transformation,
\begin{multline}
     \frac{1}{\sqrt{L}} \sum_{j=1}^L \re^{\i \<\kv, \Rv_j\>} \Ph_\a (\xv - \Rv_j)
        = \frac{1}{L} \sum_{j=1}^L \sum_{\pv \in BZ} \re^{\i \<\kv, \Rv_j\>}
	  \re^{\i \<\pv, \xv - \Rv_j\>} u_{\a \pv} (\xv - \Rv_j) \\
        = \sum_{\pv \in BZ} \re^{\i \<\pv, \xv\>} u_{\a \pv} (\xv)
          \frac{1}{L} \sum_{j=1}^L \re^{\i \<\kv - \pv, \Rv_j\>}
        = \ph_{\a \kv} (\xv) \epp
\end{multline}
Here we have used that the second sum on the right hand side of
the second equation equals $L \de_{\kv, \pv}$.

\subsection{Hamiltonian in second quantization}
Let $c_{\a \kv, a}^+$ the creation operator of a Bloch electron
of spin $a \in \{\auf, \ab\}$. The operators
\begin{equation}
     c_{\a j, a}^+ = \frac{1}{\sqrt{L}} \sum_{\kv \in BZ} \re^{- \i\<\kv,\Rv_j\>} c_{\a \kv, a}^+
\end{equation}
are then an alternative set of creation operators, creating electrons
in Wannier orbitals. This can be seen by writing down the corresponding
field operators,
\begin{multline}
     \Ps_a^+ (\xv) = \sum_{\a \kv} \ph_{\a \kv}^* (\xv) c_{\a \kv, a}^+
        = \sum_{\a \kv} \ph_{\a \kv}^* (\xv)
	     \frac{1}{\sqrt{L}} \sum_{j=1}^L \re^{\i \<\kv, \Rv_j\>} c_{\a j, a}^+ \\
        = \sum_{j=1}^L \biggl(\frac{1}{\sqrt{L}}
	   \sum_{\a \kv} \ph_{\a \kv}^* (\xv) \re^{\i \<\kv, \Rv_j\>}
	   \biggr)  c_{\a j, a}^+
        = \sum_\a \sum_{j=1}^L \Ph_\a^* (\xv - \Rv_j) c_{\a j, a}^+
\end{multline}
According to the general prescription (see lecture on QM) the Hamiltonian
in `occupation number representation' (`second quantization') can be written as
\begin{multline} \label{scdhell}
     H_{\rm el} = \int \rd^3 x \: \Ps_a^+ (\xv) h \Ps_a (\xv)
        + \2 \int \rd^3 x \int \rd^3 y \: \Ps_a^+ (\xv) \Ps_b^+ (\yv) U(\xv, \yv)
	                                  \Ps_b (\yv) \Ps_a (\xv) \\
        = \sum_{\a, \be; i, j}
	     \underbrace{\int \rd^3 x \: \Ph_\a^* (\xv - \Rv_i)
	                 h \Ph_\be (\xv - \Rv_j)}_{= \de_{\a \be} t_{ij}^\a}
			 c_{\a i, a}^+ c_{\be j, a}^{} \\
          + \2 \sum_{\substack{\a, \be, \g, \de\\i, j, k, \ell}}
	    \underbrace{\int \rd^3 x \int \rd^3 y \:
	                       \Ph_\a^* (\xv - \Rv_i)
			       \Ph_\be^* (\yv - \Rv_j) U(\xv, \yv)
	                       \Ph_\g (\yv - \Rv_k)
			       \Ph_\de (\xv - \Rv_\ell)}_{= U^{\a \be \g \de}_{i j k \ell}}
			       \times \\[-4ex] \mspace{360.mu} \times
            c_{\a i, a}^+ c_{\be j, b}^+ c_{\g k, b}^{} c_{\de \ell, a}^{} \\[3ex]
        = \sum_{\a; i, j}  t_{ij}^\a c_{\a i, a}^+ c_{\a j, a}^{} +
          \2 \sum_{\substack{\a, \be, \g, \de\\i, j, k, \ell}}
	        U^{\a \be \g \de}_{i j k \ell}
                c_{\a i, a}^+ c_{\be j, b}^+ c_{\g k, b}^{} c_{\de \ell, a}^{} \epp
		\mspace{54.mu}
\end{multline}
Here implicit summation over the spin indices $a, b$ is implied. The
$t_{ij}^\a$ are called `transition matrix elements' or `hopping matrix
elements', the $U^{\a \be \g \de}_{i j k \ell}$ are called `interaction
parameters'.

Note that:
\begin{enumerate}
\item
So far the Hamiltonian is only rewritten in `second quantization'.
No other approximation than the adiabatic approximation has been
applied.
\item
In this form it is the starting point of the `theory of strongly
correlated electron systems'.
\item
An optimal choice of the Wannier functions (through an optimal
choice of $V_{\rm A}$) minimizes the strength and the range of
the interaction parameters.
\item
Suppose the auxiliary potential $V_{\rm A}$ can be chosen in such
a way that the interaction parameters are always small. Then they
can be neglected and we are in the realm of band theory.
\item
If the Wannier functions can be constructed in such a way that
they resemble atomic wave function in the sense that they
are localized around the origin and decay sufficiently fast
away from it, then the interaction parameters become short-range,
and it may be justified to consider only on-site or near-neighbour
contributions.
\end{enumerate}
\subsection{Exercise 16. Wannier functions in one dimension}
The dispersion relation of a free non-relativistic particle of mass $m$ is 
$\varepsilon(k) = \hbar^2 k^2 / 2m$. Consider a lattice of $N$ sites and
physical length $L$ under periodic boundary conditions. Then only $N$ discrete
wave vectors of the form $k = 2\pi n/L$, $n\in\mathbb{Z}$, are possible inside
the Brillouin zone. All wave vectors outside will be folded back to the
Brillouin zone, i.e.\ new bands with dispersion $\varepsilon(k + 2\pi m / a)$
develop, where $a=L/N$ is the lattice constant and $m \in \mathbb{Z}$ is the
band index. The corresponding Bloch functions are labeled in the following way,
\begin{equation} \label{bloch}
     \varphi_{m,k}(x) = \frac{1}{\sqrt{L}} \exp\left[\i(k+m\frac{2\pi}{a})x\right] \epp 
\end{equation}
Construct the complementary Wannier basis $\{\phi_m(x-R_i)\}$, $i=1, \dots, N$, in
the thermodynamic limit $(N,L \to \infty, L/N = a = \mathrm{const}.)$. This requires
to calculate 
\begin{equation}
      \phi_m(x) = \frac{1}{\sqrt{N}} \sum_{k \in \mathrm{BZ}} \varphi_{m,k}(x) \epc
\end{equation}
where, in the thermodynamic limit, the summation over the first Brillouin zone can
be replaced by an integration between the zone boundaries $\pm \pi/a$.

\subsection{Exercise 17. Spinless Fermions on the lattice}
Consider creation and annihilation operators $c_m^+$, $c_n^{}$, $m, n = 1, \dots, L$,
of spinless Fermions. They satisfy canonical anti-commutation relations 
\begin{equation} \label{anti}
     \{c_m^{}, c_n^{}\} = \{c_m^+, c_n^+\} = 0 \epc \qd \{c_m^{}, c_n^+\} = \delta_{m, n} \epp
\end{equation}
Let $|0\rangle$ denote the Fock vacuum defined by $c_m |0\rangle = 0$,
$m = 1, \dots, L$. Non-interacting Fermions are described by the Hamiltonian 
\begin{equation} \label{ham}
     H = \sum_{m, n = 1}^L t^m_n c_m^+ c_n^{} \epp
\end{equation}
Here the matrix $t$ with matrix elements $t^m_n$ is called the transition matrix.
It is Hermitian, ${t^m_n}^* = t^n_m$, by definition.
\begin{enumerate}
\item 
Show that the canonical anti-commutation relations (\ref{anti}) are invariant
under transformations of the form
\begin{equation}
     \tilde c_k = \sum_{m=1}^L U^k_m c_m^{} \epc
\end{equation}
if the matrix $U$ with matrix elements $U^k_m$ is unitary.
\item
Show that a unitary transformation $U$ exists which transforms the Hamiltonian
(\ref{ham}) to the form
\begin{equation}
     H = \sum_{m=1}^L \varepsilon_m \tilde c_m^+ \tilde c_m^{} \epp
\end{equation}
Explain why this transformation solves the eigenvalue problem $H|\psi\rangle =
E |\psi \rangle $.
\item
We say that the transformation in (ii) diagonalizes $H$. Diagonalize the
Hamiltonian with transition matrix elements
\begin{equation}
     t^m_n = - t_0 (\delta_{m, n+1} + \delta_{m, n-1})
\end{equation}
explicitly. Here the indices of the right hand side should be understood
modulo $L$. 
\item
Also diagonalize the Hamiltonian with transition matrix elements
\begin{equation}
     t^m_n = \begin{cases} 0 & \text{if $m = n$} \\
                \frac{\pi \i}{L} \, \sin^{-1} \left( \frac{(m - n)\pi}{L}
		\right) & \text{else.}
             \end{cases}
\end{equation}
Hint: Use the canonical gauge transformation $c_n \mapsto e^{\i\pi n/2L}c_n$ as
well as the relation $1/(e^{-\i\pi(m-n)/L}-1) = \frac{1}{L}\sum_{k=0}^{L-1}
k e^{-\i\pi(m-n)k/L}$. The latter can be proved by differentiating
a geometric sum with respect to an appropriate parameter.
\end{enumerate}

\mysection{The Hubbard model}
\subsection{Motivation and definition}
The formalism of second quantization, in which the Hamiltonian of
the electrons takes the form (\ref{scdhell}), gives us a more
intuitive access to the problem. If the Fermi surface is located
within a single conduction band with band index $\a$, say, then
the interaction between different bands can be neglected for small
excitation energies, and we can suppress the band indices (Greek
indices) in (\ref{scdhell}). If, moreover, the `intra-atomic
Coulomb interaction' $U_{iiii}$ is dominant, then a single
effective interaction parameter $U$, say, remains, and $H_{\rm el}$
can be approximated by
\begin{equation} \label{hhubgen}
     H = \sum_{i, j} t_{ij} c_{i a}^+ c_{j a}^{}
         + \frac{U}{2} \sum_i c_{i a}^+ c_{i b}^+ c_{i b}^{} c_{i a}^{}
\end{equation}
which defines the so-called (one-band) Hubbard model \cite{Hubbard63,%
Gutzwiller63}.

\begin{enumerate}
\item
The Hubbard model is a `minimal extension' of the band model in
the sense that as few as possible of the interaction parameters
of the full Hamiltonian (\ref{scdhell}) are taken into account.
\item
In spite of its apparent simplicity the Hubbard model is a true
many-body model. In general it is very hard to deal with by
any of the means of modern theoretical physics as e.g.\ perturbation
theory, renormalization group analysis, quantum Monte-Carlo methods.
\item
As simple it is intuitively, the Hubbard model is the starting
point for extensions. Basically all models of `strongly
correlated electrons' (everything beyond the band model) are
extended Hubbard models which are either obtained by taking
interaction parameters of wider range into account (e.g.\ nearest
neighbours, next-to-nearest neighbours) or by taking into
account a larger number of bands (e.g.\ two-band Hubbard model,
three-band Hubbard model).
\item
As it stands the Hubbard model is believed to give a realistic
description of
\begin{itemize}
\item
the electronic properties of solids with tight bands
\item
band magnetism (iron, copper, nickel)
\item
the interaction-induced metal-insulator transition (Mott transition)
\end{itemize}
\item
If the Hubbard Hamiltonian is supplied with periodic boundary
conditions, the number of lattice sites (which is equal to the
number of Wannier orbitals) is finite. As we are are also dealing
with a finite number of states per site, the model has a
finite-dimensional space of states, and can be thought of as
a `fully regularized' quantum field theory. Its Hamiltonian can
be represented by a finite Hermitian matrix. This makes the
Hubbard model attractive for computer based approaches.
\item
The 1d Hubbard model has the amazing feature of being integrable
\cite{LiWu68,Shastry86b}. Rather much is known about its elementary
excitations and its thermodynamics \cite{thebook}.
\end{enumerate}
\subsection{Tight-binding approximation}
The assumption that the Wannier functions are strongly localized
in the vicinity of the `lattice sites' $\Rv_j$ is compatible with
the restriction of the hopping amplitudes $t_{ij}$ to nearest
neighbours $\<ij\>$ on the lattice, which is called the `tight-binding
approximation'. If we introduce the `density operators' (local
particle-number operators)
\begin{equation}
     n_{i \auf} = c_{i \auf}^+ c_{i \auf}^{} \epc \qd
     n_{i \ab} = c_{i \ab}^+ c_{i \ab}^{}
\end{equation}
and apply the tight-binding approximation to (\ref{hhubgen}) we obtain
\begin{equation} \label{hhub}
     H = - t \sum_{\<ij\>} c_{i a}^+ c_{j a}^{} + U \sum_i n_{i \auf} n_{i \ab} \epp
\end{equation}
Here we have assumed isotropic nearest-neighbour hopping of strength
$- t$ and the vanishing of the on-site energies $t_{ii}$, which for
a homogeneous model can always be assumed, since in this case it is
equivalent to a redefinition of the chemical potential. For the
interaction part we have calculated
\begin{equation}
     c_{i a}^+ c_{i b}^+ c_{i b}^{} c_{i a}^{} =
        c_{i \auf}^+ c_{i \ab}^+ c_{i \ab}^{} c_{i \auf}^{} +
        c_{i \ab}^+ c_{i \auf}^+ c_{i \auf}^{} c_{i \ab}^{}
	= 2 n_{i \auf} n_{i \ab} \epp
\end{equation}

With the Hamiltonians (\ref{hhubgen}) and (\ref{hhub}) the terminology
is not so sharp.  The Hamiltonian (\ref{hhub}) is also called `the Hubbard
Hamiltonian'. And, in fact, the term Hubbard model most commonly refers
to the model described by (\ref{hhub}). 

\subsection{Interpretation of the Hubbard Hamiltonian}
Let us have a closer look at the Hubbard Hamiltonian. For simplicity
we consider the one-dimensional model,
\begin{equation} \label{htu}
     H = - t \sum_{j=1}^L \bigl(c_{j a}^+ c_{j+1 a}^{} + c_{j+1 a}^+ c_{j a}^{}\bigr)
         + U \sum_{j=1}^L n_{j \auf} n_{j \ab} \epc
\end{equation}
subject to periodic boundary conditions, $c_{L+1 a} = c_{1 a}$.
Its space of states will be denoted ${\cal H}^{(L)}$. It is generated
by filling electrons into Wannier states. According to the Pauli
principle, every Wannier state may be unoccupied, occupied with
one electron of spin $\auf$ or $\ab$, or with two electrons with
opposite spins, giving altogether $\dim {\cal H}^{(L)} = 4^L$ states.

Let us construct the basis of Wannier states explicitly. For this
purpose define the row vectors $\xv = (x_1, \dots, x_N)$, $\av =
(a_1, \dots, a_N)$, where $x_j \in \{1, \dots, L\}$, $a_k \in
\{\auf, \ab\}$ for $j, k = 1, \dots N$; $N \in \{1, \dots, 2L\}$.
The state
\begin{equation}
     |\xv; \av\> = c_{x_N, a_N}^+ \dots c_{x_1, a_1}^+ |0\>
\end{equation}
is a Wannier state representing $N$ electrons at sites $x_j$ with
spins $a_j$. The set of all different states of this form,
\begin{equation}
     B_W = \biggl\{|\xv; \av\> \in {\cal H}^{(L)}\bigg|
		   \begin{array}{l}
                   N = 0, \dots, 2L \\
		   x_{j+1} \ge x_j, a_{j+1} > a_j \text{if}\ x_{j+1} = x_j
		   \end{array}
		   \biggr\} \epc
\end{equation}
is a basis of Wannier states, since all such states are linear
independent and since their number is
\begin{equation}
     \sum_{N = 0}^{2L} \binom{2L}{N} = 4^L \epp
\end{equation}

The operators $n_{j, \auf}$, $n_{j, \ab}$ are the local particle number
operators for electrons of spin $\auf$ and $\ab$ at site~$j$. Let us recall
why this name is justified. Using the canonical anti-commutation relations
of the Fermi operators and the fact that the $c_{j, a}$ annihilate the
Fock vacuum $|0\>$ we conclude that 
\begin{equation} \label{nccom}
     [n_{j, \auf}, c_{k, b}^\+] = \de_{jk} \de_{\auf b} c_{k, b}^\+ \epc \qd
	n_{j, \auf} |0\> = 0 \epc
\end{equation}
and therefore
\begin{equation} \label{nact}
   n_{j, \auf} |\xv, \av\> = \sum_{k=1}^N \de_{j, x_k} \de_{\auf, a_k}
			  |\xv, \av\>
\end{equation}
and similarly for $n_{j, \ab}$. Thus, $n_{j, a}|\xv, \av\> = |\xv, \av \>$,
if site $j$ is occupied by an electron of spin $a$, and zero elsewise.

A first interpretation of the Hubbard model can be obtained by
considering separately the two contributions that make up the
Hamiltonian (\ref{htu}). For $t = 0$ or $U = 0$ it can be diagonalized
and understood by elementary means. For $t = 0$ the Hamiltonian
reduces to $H = UD$, where
\begin{equation} \label{do}
     D = \sum_{j=1}^L n_{j \auf} n_{j \ab} \epp
\end{equation}
Using (\ref{nact}) we can calculate the action of $D$ on a state
$|\xv, \av\>$,
\begin{equation} \label{doev}
\begin{split}
     D |\xv, \av\> & = \sum_{k,l=1}^N \de_{x_k, x_l} \de_{\auf, a_k}
                                      \de_{\ab, a_l} |\xv, \av\> \\
                   & = \sum_{1 \le k < l \le N}
			  \de_{x_k, x_l}
			  (\de_{\auf, a_k} \de_{\ab, a_l}
			     + \de_{\ab, a_k} \de_{\auf, a_l})
			  |\xv, \av\> \\
                   & = \sum_{1 \le k < l \le N}
			  \de_{x_k, x_l}
			  (\de_{\auf, a_k} + \de_{\ab, a_k})
			  (\de_{\auf, a_l} + \de_{\ab, a_l})
			  |\xv, \av\> \\
                   & = \sum_{1 \le k < l \le N}
			  \de_{x_k, x_l} |\xv, \av\> \epp
\end{split}
\end{equation}
Here we used $\de_{\auf, a_k} \de_{\ab, a_k} = 0$ in the second equation
and the Pauli principle in the third equation. As we learn from
(\ref{doev}) every state $|\xv, \av\>$ is an eigenstate of the operator
$D$. Thus, $D$ is diagonal in the Wannier basis. The limit
$t \rightarrow 0$ of the Hubbard Hamiltonian (\ref{htu}) is called
the atomic limit, \index{atomic limit} because the eigenstate
$|\xv, \av\>$ describes electrons localized at the sites $x_1, \dots,
x_N$, which are identified with the loci of the atomic orbitals the
electrons may occupy.

The meaning of the operator $D$ is evident from equation (\ref{doev}).
$D$ counts the number of double-occupied sites in the state
$|\xv, \av\>$. The contribution of the term $UD$ to the energy is
non-negative for positive $U$ and increases with the number of double-%
occupied sites. This can be viewed as on-site repulsion among the
electrons. Negative $U$ on the other hand, means on-site attraction.
Hence, it is natural to refer to $D$ as to the operator of the on-site
interaction.

In the other extreme, when $U = 0$, the Hamiltonian (\ref{htu}) turns
into
\begin{equation} \label{htb}
     H_0 = - t \sum_{j=1}^L
	   (c_{j, a}^\+ c^{}_{j+1, a} + c_{j+1, a}^\+ c^{}_{j, a}) \epp
\end{equation}
This is called the tight-binding Hamiltonian. \index{tight-binding
Hamiltonian} Like every translation invariant one-body Hamiltonian
it can be diagonalized by discrete Fourier transformation. Let us
define
\begin{equation} \label{cfourier}
     \tilde c_{k, a}^\+ = \frac{1}{\sqrt{L}}
		         \sum_{j=1}^L \re^{\i \Ph kj} \, c_{j, a}^\+
		         \epc \qd k = 0, \dots, L - 1,
\end{equation}
where $\Ph = 2\p /L$. Then, by Fourier inversion
\begin{equation} \label{cfourierinv}
     c_{j, a}^\+ = \frac{1}{\sqrt{L}}
                   \sum_{k=0}^{L-1} \re^{- \i \Ph jk} \,
		   \tilde c_{k, a}^\+ \epc \qd j = 1, \dots, L \epp
\end{equation}
Equation (\ref{cfourierinv}) is readily verified by inserting
(\ref{cfourier}) into the right hand side and using the geometric sum
formula. Clearly, $\tilde c_{k + L, a}^\+ = \tilde c_{k, a}^\+$.
Insertion of (\ref{cfourierinv}) into (\ref{htb}) leads to
\begin{equation}
     H_0 = - 2t \sum_{k=0}^{L-1} \sum_{a = \auf, \ab}
             \cos (\Ph k) \tilde n_{k, a} \epc
\end{equation}
where $\tilde n_{k, a} = \tilde c_{k, a}^\+ \tilde c^{}_{k, a}$.

The Fourier transformation leaves the canonical anti-commutation
relations invariant,
\begin{subequations} \label{comabt}
\begin{align} \label{comat}
     & \{\tilde c^{}_{j, a}, \tilde c^{}_{k, b}\} =
       \{\tilde c_{j, a}^\+, \tilde c_{k, b}^\+\} = 0 \epc \\
     & \{\tilde c^{}_{j, a}, \tilde c_{k, b}^\+\} = \de_{jk} \de_{ab}
       \epp \label{combt}
\end{align}
\end{subequations}
A transformation with this property is called canonical. Applying
(\ref{cfourier}) to the empty lattice state $|0\>$ (the Fock vacuum),
we obtain
\begin{equation} \label{vacuumt}
     \tilde c_{k, a} |0\> = 0 \epc \qd k = 0, \dots, L - 1 \epc
                                   \qd a = \auf, \ab \epp
\end{equation}
Thus, acting with the creation operators $\tilde c_{k, a}^\+$ on the empty
lattice $|0\>$ we obtain an alternative basis $B_B$. Let us introduce
the row vectors $\qv = (q_1, \dots, q_N) = (k_1, \dots, k_N) \Ph$
and the states
\begin{equation} \label{basisstatet}
     |\qv, \av \> = \tilde c_{k_N, a_N}^\+ \dots
                    \tilde c_{k_1, a_1}^\+ |0\> \epp
\end{equation}
It can be shown that these states are eigenstates of a lattice
momentum operator with eigenvalue $\bigl(\sum_{j=1}^N q_j
\bigr) \mod 2\p$. The set
\begin{equation} \label{blochbasis}
     B_B = \left\{ |\qv, \av \> \in {\cal H}^{(L)} \left|
                       \begin{array}{l}
		       N = 0, \dots, 2L \\
		       q_{j+1} \ge q_j,\ a_{j+1} > a_j\ \text{if}\
		       q_{j+1} = q_j \end{array} \right. \right\}
\end{equation}
is a basis of ${\cal H}^{(L)}$. This basis is sometimes called the
Bloch basis. Electrons in Bloch states $|\qv, \av \>$ are delocalized,
\index{Bloch basis} but have definite momenta $q_1, \dots, q_N$.

By virtue of (\ref{comabt}), the analogues of (\ref{nccom}) and
(\ref{nact}) are satisfied by $\tilde n_{j, a}$ and
$\tilde c_{k, b}^\+$. It follows that
\begin{equation} \label{htbbloch}
     H_0 |\qv, \av\> = - 2t \sum_{j=1}^N \cos (q_j) |\qv, \av \>
                       \epp
\end{equation}
Thus, the tight-binding Hamiltonian $H_0$ is diagonal in the Bloch
basis. It describes non-interacting band electrons in a cosine-shaped
band of width~$4t$.

The tight-binding Hamiltonian $H_0$ and the operator $D$ which counts
the number of double-occupied sites do not commute. Therefore the
Hubbard Hamiltonian can neither be diagonal in the Bloch basis nor in
the Wannier basis. The physics of the Hubbard model may be understood
as arising from the competition between the two contributions, $H_0$
and $D$, to the Hamiltonian (\ref{htu}). The tight-binding contribution
$H_0$ prefers to delocalize the electrons, while the on-site interaction
$D$ favours localization. The ratio
\begin{equation} \label{smallu}
     u = \frac{U}{4t}
\end{equation}
is a measure for the relative contribution of both terms and is the
intrinsic, dimensionless coupling constant of the Hubbard model.

\subsection{Exercise 18. Peierls phases}
\label{sec:peierlsph}
If interacting electrons (charge $-e$, mass $m$) in a periodic
potential $V(\xv)$ are exposed to an external electro-magnetic
field, their one-particle Hamiltonian becomes
\begin{equation}
     h = \frac{1}{2m} \Bigl\| \pv + \frac{e}{c} \Av (\xv, t) \Bigr\|^2
                      + V(\xv) - e \Ph(\xv,t) \epc
\end{equation}
where $\Av (\xv,t)$ and $\Ph(\xv,t)$ are the vector and scalar
potentials of the external field. In the lecture we considered
the many-body Hamiltonian generated by $h$ relative to the Wannier
basis. It was parameterized by hopping matrix elements $t_{jk}$. Here
we would like to find out how the external field modifies the $t_{jk}$.

We start by fixing the gauge such that $\Ph(\xv,t) = 0$.
\begin{enumerate}
\item
Let $\la (\xv,t)$ an arbitrary differentiable function. Using that
$p^k = - \i \6^k$ verify the commutator relation
\begin{equation}
     \bigl[p^k, \re^{- \i e \la(\xv,t)/c}\bigl] =
        - \frac{e}{c} \frac{\6 \la(\xv,t)}{\6 x_k} \re^{- \i e \la(\xv,t)/c} \epp
\end{equation}
\item
Denote by $\Ph (\xv - \Rv_j)$ the Wannier orbital at site $\Rv_j$ and
recall that the hopping matrix elements for vanishing external fields
were
\begin{equation}
     t_{ij} = \int \rd^3 x \: \Ph^* (\xv - \Rv_i)
                           \biggl[\frac{\|\pv\|^2}{2m} + V(\xv)\biggr] \Ph(\xv - \Rv_j) \epp
\end{equation}
Show that in the presence of the external field this has to be
modified to become
\begin{multline}
     t_{ij} = \int \rd^3 x \: \Ph^* (\xv - \Rv_i) \re^{- \i e \la/c}
              \biggl[\frac{1}{2m}
	             \biggl\{p^k + \frac{e}{c} \biggl(A^k - \frac{\6 \la}{\6 x_k}\bigg)\biggr\}^2
		     + V(\xv)\biggr] \re^{\i e \la/c} \\
		     \times \Ph(\xv - \Rv_j) \epc
\end{multline}
where $\la$ is still arbitrary.
\item
Choosing now
\begin{equation}
     \la(\xv,t) = \int_{\xv_0}^\xv \rd y_k \Av^k (\yv,t)
\end{equation}
for an arbitrary fixed point $\xv_0$ and redefining
$\widetilde \Ph (\xv - \Rv_j) = \re^{\i e \la(\xv,t)/c} \Ph (\xv - \Rv_j)$
we obtain
\begin{equation}
     t_{ij} = \int \rd^3 x \: \widetilde \Ph^* (\xv - \Rv_i)
              \biggl[\frac{\|\pv\|^2}{2m} + V(\xv)\biggr] \widetilde \Ph(\xv - \Rv_j) \epp
\end{equation}
Discuss under which conditions the approximation 
$\widetilde \Ph(\xv - \Rv_j) \approx \re^{\i e \la(\Rv_j,t)/c}
\Ph(\xv - \Rv_j)$, the so-called Peierls substitution, should be valid.
\item
Show that the Peierls substitution leads to a modification of
the hopping part of the many body Hamiltonian according to the
rule $t_{ij} \rightarrow t_{ij} \re^{\i \la_{ij}}$, where
\begin{equation}
     \la_{ij} = \frac{e}{c} \int_{\Rv_i}^{\Rv_j} \rd y_k \Av^k (\yv,t) \epp
\end{equation}
\end{enumerate}

\mysection{The strong-coupling limit}
In this lecture we shall consider the limiting case, when the
intra-atomic Coulomb interaction $U$ of the Hubbard model is large
compared to the band width $t$. We define
\begin{equation} \label{defopt}
     T = \sum_{j, k = 1}^L t_{jk} c_{j, a}^+ c_{k, a}^{} \epp
\end{equation}
Here we only assume that $t_{jk} = t_{kj}^*$, guaranteeing Hermiticity
of $T$, and that $t_{jj} = 0$. For fixed particle number the latter
setting merely shifts the energy scale and, for this reason, does
not imply any restriction to generality. As we shall see, however,
this assumption will have several technical advantages in the
calculations below.

Using an appropriate definition of the $t_{jk}$ and an appropriate
enumeration of the lattice sites in (\ref{defopt}), we may write
the general Hubbard Hamiltonian (\ref{hhubgen}) on any finite lattice,
under any kind of boundary conditions and in any dimension in the form
\begin{equation} \label{hforstrong}
     H = T + U D \epp
\end{equation}

We shall assume that $U > 0$. This is natural, since positive
$U$ corresponds to the repulsion of electrons in the same Wannier
orbital which is expected as a consequence of their mutual 
Coulomb repulsion. If $|t_{jk}| \ll U$ we can consider $T$ as
a small perturbation of $UD$. As we have seen in (\ref{doev}),
$D$ counts the number of double-occupied sites. Thus, the
eigenvalues of $UD$ are $0, U, 2U, \dots, LU$. Their number grows
linearly with $L$, while the number of states in ${\cal H}^{(L)}$
grows exponentially like $4^L$. The eigenvalues of $UD$ are therefore
highly degenerate. Let us denote the projection operators onto
the corresponding eigenspaces ${\cal H}_n$ by $P_n$, $n = 0, 1,
\dots, L$. Then
\begin{equation}
     {\cal H}^{(L)} = {\cal H}_0 \oplus {\cal H}_1 \oplus \dots \oplus {\cal H}_L \epc
\end{equation}
and $D$ has the spectral decomposition
\begin{equation}
     D = \sum_{n=0}^L n P_n \epp
\end{equation}
If the particle number $N \le L$, then the ground states of
$UD$ are in ${\cal H}_0$ which has $\dim {\cal H}_0 = 3^L$.

\subsection{Degenerate perturbation theory -- a reminder}
Consider a Hamiltonian $H + \la V$ on a Hilbert space ${\cal H}$.
Assume that $H$ has the spectral decomposition
\begin{equation}
     H = \sum_n E_n P_n
\end{equation}
with mutually distinct eigenvalues $E_n$ and orthogonal
projectors $P_n$ (i.e.\ $P_n P_m = \de_{n m} P_n$), such
that $\dim P_n {\cal H}$ is not necessarily equal to one.
We consider $\la V$, $\la \in {\mathbb R}$, as a small
perturbation of `strength $\la$'.

Define
\begin{equation}
     R_n = \sum_{m, m \ne n} \frac{P_m}{E_n - E_m} \epp
\end{equation}
We assume that the eigenvalues and eigenvectors of the perturbed
Hamiltonian $H + \la V$ are characterized by the quantum numbers
$n$ of the unperturbed problem and sets of additional quantum numbers
$\n_n \in \{1, \dots, \dim P_n {\cal H}\}$. In other words
\begin{equation} \label{pertsplitdeg}
     (H + \la V) |\Ps_{n, \n_n}\> = \e_{n, \n_n} |\Ps_{n, \n_n}\>
\end{equation}
in such a way that
\begin{equation}
     \lim_{\la \rightarrow 0} |\Ps_{n, \n_n}\> =
        |\Ps_{n, \n_n}^{(0)}\> \in P_n {\cal H} \epc \qd
     \lim_{\la \rightarrow 0} \e_{n, \n_n} = E_n \epp
\end{equation}
We rewrite (\ref{pertsplitdeg}) in the form
\begin{equation} \label{pertsplitdegtwo}
     (E_n - H) |\Ps_{n, \n_n}\>
        = \bigl(\la V - (\underbrace{\e_{n, \n_n} - E_n}_{= \D_{n, \n_n}})\bigr)
	  |\Ps_{n, \n_n}\> \epp
\end{equation}
Then
\begin{align} \label{pertevdeg}
     R_n & (E_n - H) |\Ps_{n, \n_n}\>
        = (1 - P_n) |\Ps_{n, \n_n}\> = R_n (\la V - \D_{n, \n_n}) |\Ps_{n, \n_n}\>
	\epc \notag \\[1ex] & \Leftrightarrow\
          |\Ps_{n, \n_n}\> = \underbrace{P_n |\Ps_{n, \n_n}\>}_{= |\ph_{n, \n_n}\>}
	  + R_n (\la V - \D_{n, \n_n}) |\Ps_{n, \n_n}\> \epc
	\notag \\ & \Leftrightarrow\
          |\Ps_{n, \n_n}\> =
	    \sum_{k=0}^\infty
	      \bigl[R_n (\la V - \D_{n, \n_n})\bigr]^k |\ph_{n, \n_n}\> \epp
\end{align}
Here the series converges if ${\cal H}$ is finite dimensional and
$|\la|$ is small enough. Otherwise we may have to interpret it as
an asymptotic series. Applying, on the other hand, $P_n$ to
(\ref{pertsplitdegtwo}) we obtain
\begin{equation}
     P_n V |\Ps_{n, \n_n}\> = \frac{\D_{n, \n_n}}{\la} |\ph_{n, \n_n}\> \epp
\end{equation}
Thus,
\begin{equation} \label{nlspecprob}
     \sum_{k=0}^\infty P_n V \bigl[R_n (\la V - \D_{n, \n_n})\bigr]^k |\ph_{n, \n_n}\>
        = \frac{\D_{n, \n_n}}{\la} |\ph_{n, \n_n}\> \epp
\end{equation}

This is a non-linear spectral problem on $P_n {\cal H}$ describing
the splitting of the energy level $E_n$ under the influence of
the perturbation $\la V$. The operator on the left hand side
is an effective Hamiltonian on $P_n {\cal H}$. Up to second order
($\D_{n, \n_n} = \la \D_{n, \n_n}^{(1)} + \la^2 \D_{n, \n_n}^{(2)}
+ \dots$) it is given by
\begin{multline} \label{htwoabstract}
     H_2 = P_n V P_n + P_n V R_n (\la V - \D_{n, \n_n}) P_n
         = P_n V P_n + \la P_n V R_n V P_n \\[1ex]
	 = P_n V P_n + \la \sum_{m, m \ne n} \frac{P_n V P_m V P_n}{E_n - E_m} \epp
\end{multline}
Thus, up to second order in $\la$ equation (\ref{nlspecprob}) reduces
to a linear spectral problem with an effective Hamiltonian $H_2$.
\begin{remark}
The corresponding eigenstates of the perturbed problem are obtained
from (\ref{pertevdeg}), once the $|\ph_{n, \n_n}\>$ are known.
\end{remark}
\subsection{Application to the Hubbard model at strong coupling}
We now apply (\ref{htwoabstract}) to (\ref{hforstrong}). For this
purpose we divide by $U$. Then $H/U = D + T/U$. Recall that $D$ has
eigenvalues $0, 1, \dots, L$. Inserting this for $n = 0$ with $V = T$,
$\la = 1/U$ into (\ref{htwoabstract}) we obtain
\begin{equation} \label{htwoproject}
     H_2 = P_0 T P_0 - \frac{1}{U} \sum_{m=1}^L \frac{P_0 T P_m T P_0}{m} \epp
\end{equation}

\subsection{Explicit form of the projection operators}
In order to express the projection operators $P_m$ in terms of Fermi
operators, we introduce the function
\begin{equation}
     G(\a) = \prod_{j=1}^L (1 - \a n_{j \auf} n_{j \ab}) \epp
\end{equation}
Its action on the Wannier basis is (see (\ref{nact}))
\begin{equation}
     G(\a) |\xv; \av\> = (1 - \a)^n |\xv; \av\> \epc
\end{equation}
where $n$ is the number of double-occupied sites in $|\xv; \av\>$.
In particular,
\begin{equation}
     G(1) = \prod_{j=1}^L (1 - n_{j \auf} n_{j \ab}) = P_0 \epp
\end{equation}
Moreover,
\begin{align} \label{nodoublegenfun}
     \frac{(-1)^k}{k!} \6_\a^k G(\a)\Bigl|_{\a = 1} |\xv; \av\>
        & = |\xv; \av\>
	  \begin{cases}
	       \binom{n}{k} (1 - \a)^{n-k}\bigl|_{\a = 1} & k \le n \\
	       0 & k > n
	  \end{cases} \notag \\[1ex]
        & = \de_{n, k} |\xv; \av\> \epp
\end{align}
It follows that
\begin{equation}
     P_n = \frac{(-1)^n}{n!} \6_\a^n G(\a)\Bigl|_{\a = 1} \epc
\end{equation}
meaning that $G(\a)$ is a generating function for the projection
operators $P_n$, $n = 0, 1, \dots, L$.

\subsection{\boldmath Application to $H_2$}
We will now use the explicit construction of the projection operators
to express $H_2$ in terms of Fermions. First of all
\begin{equation} \label{pnullnull}
     P_0 n_{j \auf} n_{j \ab}
        = \biggl[\prod_{k=1}^L (1 - n_{k \auf} n_{k \ab}) \biggr]
	                        n_{j \auf} n_{j \ab}
	= 0 = n_{j \auf} n_{j \ab} P_0 \epc
\end{equation}
entailing that
\enlargethispage{3ex}
\begin{multline}
     G(\a) T P_0 = \sum_{j, k = 1}^L t_{jk} \biggl[
                      \prod_{\ell=1}^L (1 - \a n_{\ell \auf} n_{\ell \ab}) \biggr]
		      c_{j, a}^+ c_{k, a}^{} P_0 \\[-1ex]
        = \sum_{j, k = 1}^L t_{jk} (1 - \a n_{j \auf} n_{j \ab})
	                    c_{j, a}^+ c_{k, a}^{} P_0 \epc
\end{multline}
and further, using (\ref{nodoublegenfun}),
\begin{equation}
     \sum_{m=1}^L \frac{P_m T P_0}{m}
        = \sum_{j, k = 1}^L t_{jk} n_{j \auf} n_{j \ab} c_{j, a}^+ c_{k, a}^{} P_0 \epp
\end{equation}
Using the latter equation in (\ref{htwoproject}) we arrive at
\begin{multline} \label{htwotwo}
     H_2 = P_0 \biggl\{\sum_{j, k = 1}^L t_{jk} c_{j, a}^+ c_{k, a}^{}
           - \frac{1}{U} \sum_{j,k,k',\ell=1}^L t_{jk} t_{k'\ell} c_{j, a}^+ c_{k, a}^{}
	     n_{k' \auf} n_{k' \ab} c_{k', b}^+ c_{\ell, b}^{} \biggr\} P_0 \\[1ex]
         = P_0 \biggl\{\sum_{j, k = 1}^L t_{jk} c_{j, a}^+ c_{k, a}^{}
           - \frac{1}{U} \sum_{j,k,\ell=1}^L t_{jk} t_{k\ell} c_{j, a}^+ c_{k, a}^{}
	     n_{k \auf} n_{k \ab} c_{k, b}^+ c_{\ell, b}^{} \biggr\} P_0 \epp
\end{multline}
Here we have used (\ref{pnullnull}) and the fact that $t_{jj} = 0$
in the second equation.

$H_2$ is an effective Hamiltonian describing the splitting of the
lowest level ($n = 0$) of $D$ under the influence of the
perturbation by $T$. Sometimes $H_2$ is called `the $t$-$J$
Hamiltonian'. Using the canonical anti-commutation relations
of the Fermi operators equation (\ref{htwotwo}) can be further
simplified. After a straightforward but slightly cumbersome
calculation we obtain
\begin{multline} \label{htwofinal}
     H_2 = \sum_{\substack{j,k=1\\j \ne k}}^L
           \biggl\{t_{jk} c_{j, a}^+ c_{k, a}^{}(1 - n_j)
                   + \frac{2 |t_{jk}|^2}{U}
		     \Bigl(S_j^\a S_k^\a - \frac{n_j n_k}{4}\Bigr)\biggr\} \\[-2ex]
           - \frac{1}{U} \sum_{\substack{j,k,\ell=1\\j \ne k \ne \ell \ne j}}^L
	     t_{jk} t_{k\ell} \, n_k c_{j, a}^+ c_{k, a}^{}
	     c_{k, b}^+ c_{\ell, b}^{} (1 - n_j) \epp
\end{multline}
Note that the Hamiltonian leaves the space ${\cal H}_0$ invariant
by construction. In (\ref{htwofinal}) we have introduced the
notation
\begin{subequations}
\begin{align}
     n_j & = n_{j \auf} + n_{j \ab} \epc \\[1ex]
     S_j^\a & = \2 \sum_{a, b = \auf, \ab} \bigl(\s^\a)^a_b c_{j, a}^+ c_{j, b}^{} \epc \qd
     \a = x, y, z \epc
\end{align}
\end{subequations}
for the particle density and spin density operators. The matrices
$\s^\a$ are the well-known Pauli matrices.

\subsection{Exercise 19. Strong coupling limit of the Hubbard model}
Obtain (\ref{htwofinal}) from (\ref{htwotwo})!

\mysection{Heisenberg model and Mott transition}
\label{sec:heisenberg}
\subsection{Heisenberg Hamiltonian in the language of Fermi operators}
In the previous lecture we have derived the strong coupling
Hamiltonian $H_2$ by considering the `hopping' $T$ as small
perturbation to the atomic limit $UD$ of the Hubbard Hamiltonian
on the subspace ${\cal H}_0 \subset {\cal H}^{(L)}$ with no
double-occupied sites. The particle number operator on
${\cal H}^{(L)}$ is
\begin{equation}
     \hat N = \sum_{j=1}^L n_j \epp
\end{equation}
Since $[\hat N, P_n] = [\hat N, T] = 0$ we see from (\ref{htwoproject})
that $H_2$ preserves the particle number, $[\hat N, H_2] = 0$. This
implies that $H_2$ leaves the subspaces ${\cal H}_{0, N} \subset
{\cal H}_0$, $N = 0, \dots, L$, of fixed particle numbers $N$ invariant.
The case $N = L$ is called the case of `half-filling'. The corresponding
subspace ${\cal H}_{0, L}$ is spanned by all states of the form
\begin{equation}
     |\av\>_s = c_{L, a_L}^+ c_{L-1, a_{L-1}}^+ \dots c_{1, a_1}^+ |0\> \epp
\end{equation}
These are states for which every site is occupied by exactly one
electron. Thus,
\begin{equation} \label{onlyone}
     (1 - n_j) |\av\>_s = 0
\end{equation}
for $j = 1, \dots, L$ and for all $|\av\>_s \in {\cal H}_{0, L}$.

Hence, we can conclude with (\ref{htwofinal}) and (\ref{onlyone})
that the action of $H_2$ on the space ${\cal H}_{0, L}$ reduces
to the action of the Hamiltonian
\begin{equation}
     H_{\rm spin} =
        \sum_{\substack{j, k = 1\\j \ne k}} \frac{2 |t_{jk}|^2}{U}
	   \Bigl(S_j^\a S_k^\a - \4 \Bigr)
\end{equation}
which is called the (isotropic) Heisenberg Hamiltonian with exchange
integrals
\begin{equation}
     J_{jk} = \frac{2 |t_{jk}|^2}{U} \epp
\end{equation}
If we start from the Hubbard model with nearest-neighbour hopping
\begin{equation}
     t_{jk} = \begin{cases}
                 - t & \text{for nearest-neighbour sites} \\
                 0 & \text{else} 
              \end{cases} \epc
\end{equation}
the resulting `spin Hamiltonian' at half-filling is
\begin{equation} \label{heisenberg}
     H_{\rm spin} = J \sum_{\<jk\>} \Bigl(S_j^\a S_k^\a - \4\Bigr) \epc
                    \qd \text{where}\ J = \frac{2 t^2}{U} \epp
\end{equation}
This is called the Heisenberg-Hamiltonian with nearest-neighbour
exchange interaction. The Heisenberg model is \emph{the} model
for the antiferromagnetism of insulators (which is ubiquitous in
nature).

\subsection{Heisenberg model in the language of spin operators}
The space ${\cal H}_{0, L}$, spanned by the states $|\av\>_s$
with $a_j \in \{\auf, \ab\}$, $j = 1, \dots, L$, is $2^L$ dimensional,
hence isomorphic to $\bigl({\mathbb C}^2\bigr)^{\otimes L}$. This
makes it possible to describe the action of $H_{\rm spin}$ directly
by certain matrices acting on $\bigl({\mathbb C}^2\bigr)^{\otimes L}$.

The vectors space $\bigl({\mathbb C}^2\bigr)^{\otimes L}$ has the
canonical basis vectors
\begin{equation}
     |\av\> = \ev_{a_1} \otimes \dots \otimes \ev_{a_L} \epc
               \qd \ev_\auf = \binom{1}{0}\epc\ \ev_\ab = \binom{0}{1} \epp
\end{equation}
We would like to identify these vectors with the vectors $|\av\>_s$.
Then, on the one hand,
\begin{align}
     S_j^\a |\av\>_s & = \2 \bigl(\s^\a\bigr)^a_b c_{L, a_L}^+ \dots c_{j+1, a_{j+1}}^+
			  \mspace{-30.mu}
			  \underbrace{c_{j, a}^+ c_{j, b}^{} c_{j, a_j}^+}_{
			     = c_{j, a_j}^+ c_{j, a}^+ c_{j, b}^{} + \de^b_{a_j} c_{j, a}^+}
			  \mspace{-30.mu} \dots c_{1, a_1}^+ |0\> \notag \\
                      & = \2 \bigl(\s^\a\bigr)^a_{a_j}
		          |(a_1, \dots, a_{j-1}, a, a_{j+1}, \dots, a_L)\>_s \epc
\end{align}
while, on the other hand,
\begin{multline} \label{spinctwoell}
     \2 \bigl(\s^\a\bigr)^a_{a_j} |(a_1, \dots, a_{j-1}, a, a_{j+1}, \dots, a_L)\> \\
        = \2 \ev_{a_1} \otimes \dots \otimes \ev_{a_{j-1}} \otimes
	     \underbrace{\bigl(\s^\a\bigr)^a_{a_j} \ev_a}_{\s^\a \ev_{a_j}}
	     \otimes \ev_{a_{j+1}} \otimes \dots \ev_{a_L} \\
        = \underbrace{\2 \Bigl(I_2^{\otimes (j-1)} \otimes
	              \s^\a \otimes I_2^{\otimes (L-j)}\Bigr)}_{= S_j^\a} |\av\> \epp
\end{multline}
Thus, the isomorphism ${\cal H}_{0, L} \cong \bigl({\mathbb C}^2\bigr)^{\otimes L}$
induces the identification of operators
\begin{equation} \label{idspinfermi}
     \2 I_2^{\otimes (j-1)} \otimes \s^\a \otimes I_2^{\otimes (L-j)} \mapsto
        \2 \bigl(\s^\a\bigr)^a_b c_{j, a}^+ c_{j, b}^{} \epp
\end{equation}
Since the operators $\s^\a$, $\a = x, y, z$, and $I_2$ form a basis of
$\End \bigl({\mathbb C}^2\bigr)$, the operators $S_j^\a$, $j = 1, \dots, L$,
as defined in (\ref{spinctwoell}), together with the identity, generate
a basis of $\End \bigl({\mathbb C}^2\bigr)^{\otimes L}$. With the identification
(\ref{idspinfermi}) we can interpret the Heisenberg model on `spin space'
$\bigl({\mathbb C}^2\bigr)^{\otimes L}$ with Hamiltonian (\ref{heisenberg}),
where now the $S_j^\a$ are the spin matrices defined in (\ref{spinctwoell}).
\begin{remark}
Such an identification is not possible for the $t$-$J$ model. The operator
$H_2$ acts on ${\cal H}_0$ which contains ${\cal H}_{0, L}$ as a proper
subspace. We have e.g.\ $S_j^\a|0\> = 0$ (where $|0\>$ is the vacuum for
Fermions).
\end{remark}

\subsection{Interpretation of the exchange interaction}
Define
\begin{equation} \label{deftransposition}
     P = \2 \bigl(1 + \s^\a \otimes \s^\a\bigr)
       = \begin{pmatrix}
            1 &&& \\
	    && 1 & \\
	    & 1 && \\
	    &&& 1
         \end{pmatrix} \epp
\end{equation}
Then
\begin{equation}
     P \xv \otimes \yv =
        \begin{pmatrix} 1 &&& \\ && 1 & \\ & 1 && \\ &&& 1 \end{pmatrix}
        \begin{pmatrix} x_1 y_1 \\ x_1 y_2 \\ x_2 y_1 \\ x_2 y_2 \end{pmatrix}
	= \yv \otimes \xv \epp
\end{equation}
This means that $P$ is a permutation (transposition) matrix. In particular,
\begin{equation}
     P |\auf \ab\> = |\ab \auf\> \epp
\end{equation}
Define `symmetrizer' and `antisymmetrizer'
\begin{equation} \label{defsymantisym}
     P^\pm = \2 (P \pm 1) \epp
\end{equation}
They satisfy
\begin{subequations}
\begin{align}
     & \bigl(P^\pm\bigr)^2 = \4 (P \pm 1)^2 = \2 (1 \pm P) = \pm P^\pm \epc \\
     & P^+ P^- = P^- P^+ = 0 \epp
\end{align}
\end{subequations}
This means that $\pm P^\pm$ are orthogonal projectors onto the 
symmetric and antisymmetric subspaces of ${\mathbb C}^2 \otimes
{\mathbb C}^2$. The subspace $P^+ {\mathbb C}^2 \otimes {\mathbb C}^2$
has a basis
\begin{equation}
     B_t = \biggl\{|1,1\> = |\auf \auf\>,\ |1,0\> = \frac{1}{\sqrt{2}} \bigl(|\auf \ab\> +
                  |\ab \auf\>\bigr),\ |1,-1\> = |\ab \ab\>\biggr\} \epc
\end{equation}
while the 1d subspace $P^- {\mathbb C}^2 \otimes {\mathbb C}^2$
is generated by
\begin{equation}
     B_s = \biggl\{|0,0\> = \frac{1}{\sqrt{2}} \bigl(|\auf \ab\> -
                  |\ab \auf\>\bigr)\biggr\} \epp
\end{equation}
These bases are called `spin triplet' and `spin singlet'.

Comparing (\ref{deftransposition}), (\ref{defsymantisym}) and
(\ref{heisenberg}) we see that for $L = 2$
\begin{equation}
     H_{\rm spin} = J P^- \epp
\end{equation}
It follows that $H_{\rm spin}$ has eigenvalue $0, -J$. The corresponding
eigenvectors are $|1, s\>$, $s = 0, \pm 1$, and $|0,0\>$. If $J > 0$,
then the singlet $|0,0\>$ is the ground state. This is called
antiferromagnetism, since
\begin{equation}
     \s^z \otimes \s^z |0,0\> = - |0, 0\> \epc\ \then
     \<0,0|\s^z \otimes \s^z |0,0\> = - 1 \epp
\end{equation}
One says that the ground state has antiferromagnetic correlations.

In the general case
\begin{equation}
     H_{\rm spin} = J \sum_{\<jk\>} P_{jk}^-
\end{equation}
is a sum of projectors onto `local singlets'. An antiferromagnet
($J > 0$) may therefore be characterized as a system which prefers
local singlets. By contrast, a ferromagnet ($J < 0$) prefers
triplets. Local singlets are incompatible with the global $SU(2)$
invariance of the Hamiltonian. For this reason the ground state
of a macroscopic Heisenberg antiferromagnet is a complicated
`many-body state'. On the other hand, the state
\begin{equation}
     |\auf \dots \auf\> = \ev_\auf^{\otimes L}
\end{equation}
may be seen as a tensor product of states from local triplets.
It is annihilated by $H_{\rm spin}$ and is one of its ground
states in the ferromagnetic case. The full ground state subspace
can be constructed using the global $SU(2)$ symmetry.

\subsection{Mott transition}
In an external electromagnetic field the hopping term in the
Hubbard Hamiltonian is modified by so-called Peierls phases
\begin{equation}
     t_{jk} \rightarrow t_{jk} \re^{\i \la_{jk}} \epc \qd
        \la_{jk} \in {\mathbb R} \epp
\end{equation}
(cf.\ section~\ref{sec:peierlsph}). The Hubbard interaction
$U$, on the other hand remains, unchanged. It follows that
the corresponding Heisenberg model (strong-coupling limit
at half-filling) is \emph{not} modified, since it depends only
on $|t_{jk}|^2$. In other words, to leading order perturbation
theory the Hubbard model at half-filling does not couple
to an external field. The system is an insulator, although
it has an odd number of electrons per unit cell, and therefore
is a conductor at $U = 0$. This suggests that there might
be an interaction induced metal-insulator transition
(`Mott transition') somewhere in between, at a finite value
$U_c$ of the interaction. It is believed that such a transition
occurs indeed in any spatial dimension. A proof only
exists in $d = 1$, where $U_c = 0$.

\subsection{Exercise 20: A short XXX chain}
Consider the periodic Heisenberg Hamiltonian (XXX model) on a four-site 1d lattice
\begin{equation}
     H = P_{12} + P_{23} + P_{34} + P_{41} \epp
\end{equation}
Here $P_{jk}$ is the transposition, interchanging spin states on sites $j$ and $k$
of the chain. Recall that the space of states of the model is the tensor
product $\mathcal{H} = {\mathbb C}^2 \otimes {\mathbb C}^2 \otimes {\mathbb C}^2
\otimes {\mathbb C}^2$. The vectors $e_\uparrow = \left(\begin{smallmatrix} 1 \\0
\end{smallmatrix}\right)$ and $e_\downarrow = \left(\begin{smallmatrix} 0 \\1
\end{smallmatrix}\right)$ form a basis of ${\mathbb C}^2$. Hence,
\begin{equation}
     \mathcal{B} = \Bigl\{ |a_1 a_2 a_3 a_4\> = e_{a_1} \otimes e_{a_2}
                           \otimes e_{a_3} \otimes e_{a_4} \in \mathcal{H}\
			   \big|\ a_1, a_2, a_3, a_4 = \uparrow, \downarrow \Bigr\}
\end{equation}
is a basis of $\mathcal{H}$. 

Denote the embeddings of the Pauli matrices into $\End (\mathcal{H})$ by
$\sigma_j^\alpha$, $j=1,2,3,4$, $\alpha=x,y,z$. Then the transpositions can
be expressed as
\begin{equation}
     P_{jk} = \2 \bigl(\id + \sigma_j^\alpha\sigma_k^\alpha\bigr) \epp
\end{equation}
The operator of the total spin has components $S^\alpha = \frac{1}{2}
(\sigma_1^\alpha + \sigma_2^\alpha + \sigma_3^\alpha + \sigma_4^\alpha)$ which
can be used to define the ladder operators $S^\pm = S^x \pm \i S^y$. The shift
operator for the chain of length four is defined as
\begin{equation}
     \hat U = P_{12}P_{23}P_{34} \epp
\end{equation}

\begin{enumerate}
\item \label{i}
Show that the operators $H$, $S^2$, $S^z$ and $\hat U$ can be simultaneously
diagonalized. For this purpose verify the commutation relations
\begin{equation}
     [H, S^\alpha] = [\hat U, S^\alpha] = [H, \hat U] =0 \epc \quad
     [S^\alpha, S^\beta] = \i \varepsilon^{\alpha\beta\gamma} S^\gamma
\end{equation}
for $\a, \be = x, y, z$.

\item
Construct a basis of common eigenvectors and obtain the corresponding
eigenvalues of the above operators. Sketch the spectrum of $H$.
Hint: Use, for instance, that $\hat U^4=\mathrm{id}$, and use the
angular momentum algebra.
\end{enumerate}

\addtocontents{toc}{\protect\pagebreak}
\mysection{Linear response theory}
In this lecture we study the response of a quantum many-body system,
in thermal equilibrium with a heat bath of temperature $T$, to 
a small external perturbation. A solid is responding, for instance,
with a current to an external voltage, with a thermal current to
a temperature gradient, or with a deformation to external mechanic
stress. An example which will be considered in some detail is the
absorption of microwaves by a spin system in a homogeneous magnetic
field (ESR experiment). This will allow us to take a glimpse at our
own work \cite{BGKKW11,Zeisner_etal17}.

\subsection{Time evolution of the statistical operator}
\renewcommand{\CH}{H}
Consider a quantum system with Hamiltonian $\CH$ possessing a
discrete spectrum $(E_n)_{n=0}^\infty$ with corresponding eigenstates
$\{|n\>\}_{n=0}^\infty$. At time $t_0$ a time-dependent perturbation
$V(t)$ is adiabatically switched on. We are interested in the time
evolution of the system, assuming it initially, at times $t < t_0$,
in an equilibrium state described by the statistical operator
\begin{equation}
     \r_0 = \frac1Z \sum_{n=0}^\infty \re^{- \frac{E_n}T} |n\>\<n|
\end{equation}
of the canonical ensemble. Here $T$ denotes the temperature and $Z$
the canonical partition function.

Let $U(t)$ the time evolution operator of the perturbed system. It
satisfies the Schr\"odinger equation
\begin{equation} \label{ut}
     \i \6_t U(t) = \bigl( \CH + V(t) \bigr) U(t)
\end{equation}
with initial condition $U(t_0) = \id$ (note that we have set
$\hbar = 1$ as before). Under the influence of the perturbation
the eigenstate state $|n\>$ evolves into $|n, t\> = U(t) |n\>$,
and the statistical operator at time $t$ becomes
\begin{equation} \label{rhot}
     \r (t)  = \frac1Z \sum_{n=0}^\infty \re^{- \frac{E_n}T} |n, t\>\<n, t|
             = U (t) \r_0 U^{-1} (t) \epp
\end{equation}

We are interested in an approximation to $\r (t)$ linear in the
strength of the perturbation $V(t)$. In order to derive this
approximation we define
\begin{subequations}
\begin{align}
     R(t) & = \re^{\i \CH t} \bigl(\r(t) - \r_0\bigr) \re^{- \i \CH t} \epc \\[1ex]
     W(t) & = \re^{\i \CH t} V(t) \re^{- \i \CH t} \epp
\end{align}
\end{subequations}
Then
\begin{multline}
	\i \6_t R(t) = \i \6_t \re^{\i \CH t} U (t) \r_0
	                 \bigl( \re^{\i \CH t} U(t) \bigr)^{-1} \\[1ex]
	             = [ W(t), \re^{\i \CH t} \r(t) \re^{- \i \CH t}]
		     = [ W(t), R(t) + \r_0] \epp
\end{multline}
Since $R(t_0) = 0$ by construction, we obtain
\begin{equation}
     R(t) = - \i \int_{t_0}^t \rd t' \: [W(t'), R(t') + \r_0] \epp
\end{equation}
This Volterra equation is an appropriate starting point for a
perturbation theory. Assuming $|W(t)|$ to be small we conclude that
\begin{equation}
     R(t) = - \i \int_{t_0}^t \rd t' \: [W(t'), \r_0] + \CO \bigl(|W|^2\bigr) \epc
\end{equation}
and therefore, to lowest order in $W$,
\begin{equation} \label{rhoborn}
     \r (t) = \r_0 - \i \re^{- \i \CH t} \int_{t_0}^t \rd t' \: [W(t'), \r_0]
              \re^{\i \CH t} \epp
\end{equation}
This is the statistical operator in so-called Born approximation.
In the following $t_0$ will be sent to $ - \infty$.

\subsection{Time evolution of expectation values}
Using \eqref{rhoborn} we can calculate the time evolution of the
expectation value of an operator $A$ under the influence of the
perturbation. We shall denote the canonical ensemble average by
$\<\ \cdot\ \>_T = \tr \{ \r_0\ \cdot\ \}$ and write $A(t) = \re^{\i \CH t}
A \re^{- \i \CH t}$ for the Heisenberg time evolution of $A$.
Then
\begin{multline}
  \de \< A\>_T = \tr \{ (\r (t) - \r_0) A\} 
               = -\i\int_{- \infty}^t \rd t' \: \tr \bigl\{ [W(t'), \r_0] \re^{\i \CH t}
                   A \re^{- \i \CH t} \bigr\} \\[1ex]
               = - \i \int_{- \infty}^t \rd t' \: \bigl\< [A(t), W(t')] \bigr\>_T
               = - \i \int_{- \infty}^t \rd t' \: \bigl\< [A(t - t'), V(t')]
	           \bigr\>_T \epp
\end{multline}
Here we have used the cyclic invariance of the trace in the third
equation and the fact that $H$ commutes with $\r_0$ in the fourth
equation.

A typical example of a perturbation, which will be relevant for our
discussion below, is a classical time-dependent field $h^\a (t)$ coupling
linearly to operators $X^\a$,
\begin{equation} \label{classpert}
     V(t) = h^\a (t) X^\a \epp
\end{equation}
In this case
\begin{equation} \label{aborn}
     \de \< A\>_T = - \i \int_{- \infty}^t \rd t' \: h^\a (t')
                      \bigl\< [A(t - t'), X^\a ] \bigr\>_T \epp
\end{equation}

\subsection{Absorption of energy}
The absorbed energy per unit time is
\begin{multline}
     \frac{d E}{dt} = \frac{d}{dt} \tr \{ \r (t) (\CH + V(t)) \} \\[1ex]
                    = - \i \tr \{ [\CH + V(t), \r (t)] (\CH + V(t)) \}
		      + \tr \{ \r (t) \dot V(t) \} \\[1ex]
		    = \<\dot V(t)\>_T + \de \< \dot V(t) \>_T \epp
\end{multline}
Here we used \eqref{ut},~\eqref{rhot} in the second equation and the cyclic
invariance of the trace in the third equation. Assuming that $V(t)$ is of the
form \eqref{classpert} and using \eqref{aborn} we obtain
\begin{equation} \label{esrabsorbenergy}
     \frac{d E}{dt} = \dot h^\a (t) \<X^\a\>_T
                      - \i \int_{- \infty}^t \rd t' \: \dot h^\a (t) h^\be (t')
			\bigl\< [X^\a (t - t'), X^\be ] \bigr\>_T \epp
\end{equation}

\subsection{Application to quantum spin chains}
Let us now apply the above formalism to the Heisenberg-Ising
(alias XXZ) spin chain in a longitudinal static magnetic field
of strength $h$. The Hamiltonian of this model is defined as
\begin{equation} \label{hamapp}
     \CH_0 = J \sum_{j=1}^L \bigl(s_{j-1}^x s_j^x + s_{j-1}^y s_j^y
                                  + \D s_{j-1}^z s_j^z \bigr) \epp
\end{equation}
Here we have switched from Pauli matrices $\s^\a$ to spin
operators $s^\a = \s^\a/2$. We shall assume periodic boundary
conditions $s_0^\a = s_L^\a$, $\a = x, y, z$. The real parameter
$\D$ is called the anisotropy parameter. For $\D = 1$ the
Hamiltonian $H_0$ turns into the Heisenberg Hamiltonian considered
in lecture~\ref{sec:heisenberg}. Values of $\D$ different from
one may be needed for a more accurate modeling of real magnetic
materials and are due to a combination of crystal symmetry,
spin-orbit interactions and dipole-dipole interactions.

We define the operator of the total spin as
\begin{equation}
     \Sv = \begin{pmatrix} S^x \\ S^y \\ S^z \end{pmatrix} \epc \qd
     S^\a = \2 \sum_{j=1}^L \s^\a \qd \text{for $\a = x, y, z$.}
\end{equation}
If a spin system is exposed to a homogeneous magnetic field $\hv$
a so-called `Zeeman term' $- \<\hv, \Sv\>$ must be added to the
Hamiltonian. Assuming a field in $z$-direction our Hamiltonian
takes the form
\begin{equation}
     H = H_0 - h S^z \epc
\end{equation}
Note that the $z$-direction is special in that $[H_0, S^z] = 0$.

We perturb the spin chain by a circular polarized electro-magnetic
wave propagating in $z$-direction. We assume that the wave length is 
large compared to the length of the spin chain\footnote{The wavelength
of microwaves is of the order of $10\, {\rm cm}$.} and idealize this
assumption by setting the wave number $k=0$. Then the magnetic field
component of the wave is
\begin{equation} \label{extfield}
     \hv (t) = A \begin{pmatrix}
	        \cos (\om t) \\ - \sin(\om t) \\ 0
             \end{pmatrix} \epc \qd
	     A > 0 \epp
\end{equation}
It couples to the system by another Zeeman term 
\begin{equation}
     V(t) = h^\a (t) S^\a \epp
\end{equation}

Using (\ref{esrabsorbenergy}) we obtain
\begin{multline} \label{dedt}
     \frac{d E}{dt}
        = \<S^\a\>_T \, \dot h^\a (t) - \i \int_{- \infty}^t \rd t' \:
	         \bigl\< [S^\a (t - t'), S^\be ] \bigr\>_T \,
		          \dot h^\a (t) h^\be (t') \\
        = \frac{A^2 \om}{4} \int_0^\infty \rd t'
	         \bigl\{ \re^{\i \om (2t - t')} \bigl\< [S^+ (t'), S^+] \bigr\>_T
		 - \re^{- \i \om (2t - t')} \bigl\< [S^- (t'), S^-] \bigr\>_T \\
		 + \re^{\i \om t'} \bigl\< [S^+ (t'), S^-] \bigr\>_T
		 - \re^{- \i \om t'} \bigl\< [S^- (t'), S^+] \bigr\>_T \bigr\} \epp
\end{multline}
Here we have used that
\begin{equation}
     \<S^\a\>_T \, \dot h^\a (t) = \<S^z\>_T \, \dot h^z (t) = 0 \epc
\end{equation}
since $\<S^\pm\>_T = \<S^x \pm \i S^y\>_T = \pm \<[S^z, S^\pm]\>_T = 0$
which holds, in turn, because $H$ commutes with $S^z$. Under the integral
we have used (\ref{extfield}),
\begin{align}
     & \bigl\< [S^\a (t - t'), S^\be ] \bigr\>_T \, \dot h^\a (t) h^\be (t')
        \notag \\[1ex] & \mspace{36.mu}
      = - A^2 \om \bigl\< [\sin(\om t) S^x (t - t') + \cos(\om t) S^y (t - t'),
                           \cos(\om t') S^x - \sin(\om t') S^y]\bigr\>_T
        \notag \\[1ex] & \mspace{36.mu}
      = \frac{\i A^2 \om}{4}
        \bigl\< [\re^{\i \om t} S^+ (t - t') - \re^{- \i \om t} S^- (t - t'),
	         \re^{\i \om t'} S^+ + \re^{- \i \om t'} S^-]\bigr\>_T
        \notag \\[1ex] & \mspace{36.mu}
      = \frac{\i A^2 \om}{4}
        \bigl\< \re^{\i \om (t + t')} [S^+ (t - t'), S^+]
	        - \re^{- \i \om (t + t')} [S^- (t - t'), S^-]
        \notag \\ & \mspace{126.mu}
                + \re^{\i \om (t - t')} [S^+ (t - t'), S^-]
	        - \re^{- \i \om (t - t')} [S^- (t - t'), S^+]\bigr\>_T \epp
\end{align}

The ability to absorb radiation is a material property. Hence, we generally
expect the absorbed energy per unit time to be proportional to the number of
constituents of a physical system and to diverge in the thermodynamic limit.
In order to define a quantity that truly characterizes the material and is
finite in the thermodynamic limit we should therefore normalize by the average
intensity $A^2$ of the incident wave and by the number of lattice sites $L$.
Further averaging the normalized absorption rate over a half-period $\pi/\om$ of the applied
field, we obtain the normalized absorbed intensity
\begin{align}
     I(\om, h) & = \frac{\om}{L A^2 \p} \int_0^{\frac\p\om} \rd t \: \frac{d E}{dt}
            \notag \\[1ex]
            & = \frac{\om}{4 L} \int_0^\infty \rd t \: \bigl\{
	      \re^{\i \om t} \bigl\< [S^+ (t), S^-] \bigr\>_T
            + \re^{- \i \om t} \bigl\< [S^+,S^- (t)] \bigr\>_T \bigr\}
	      \notag \\[1ex]
            & = \frac{\om}{4 L} \int_{- \infty}^\infty \rd t \:
	        \re^{\i \om t} \bigl\< [S^+ (t), S^-] \bigr\>_T \epp
\end{align}
Introducing the function
\begin{equation} \label{defchi}
     \chi_{+-}'' (\om, h) = \frac{1}{2L} \int_{- \infty}^\infty \rd t \:
        \re^{\i \om t} \bigl\< [S^+ (t), S^-] \bigr\>_T \epc
\end{equation}
the normalized absorbed intensity can be written as
\begin{equation}
     I (\om, h) = \frac\om2 \chi_{+-}'' (\om, h) \epp
\end{equation}
The function $\chi_{+-}'' (\om, h)$ is called the (imaginary part
of) the dynamic susceptibility. This function is a typical
`response function'. Note that it is calculated as Fourier transform
of the dynamical correlation function $\<[S^+ (t), S^-]\>_T$.

\mysection{Microwave absorption by the Heisenberg-Ising chain}
\subsection{The isotropic chain}
In the general case $\chi_{+-}''$ cannot be calculated. For $\D = 1$,
however, the situation simplifies drastically.
\begin{equation}
     [H, \Sv] = - h [S^z, \Sv] \epc
\end{equation}
and the Heisenberg equation of motion for $\Sv$ can be solved,
\begin{equation} \label{spmxxxt}
     \dot S^\pm = \i [H, S^\pm] = - \i h [S^z, S^\pm] = \mp \i h S^\pm \epc
        \qd \then S^\pm (t) = \re^{\mp \i ht} S^\pm \epp
\end{equation}
Also $S^z (t) = S^z$. Hence, the total spin behaves as
\begin{equation} \label{rots}
     \Sv (t) = \begin{pmatrix}
                  \cos(ht) & \sin(ht) & \\
		  - \sin(ht) & \cos(ht) & \\
		  && 1
               \end{pmatrix} \Sv \epp
\end{equation}
It is precessing clockwise about the $z$ axis.

On the other hand, inserting (\ref{spmxxxt}) into (\ref{defchi}) we obtain
\begin{equation} \label{chixxx}
     \chi_{+-}'' (\om) = \frac{1}{2L} \int_{- \infty}^\infty \rd t \:
                         \re^{\i (\om - h) t} \bigl\< [S^+, S^-] \bigr\>_T
                       = 2\p \de(\om - h) m(T, h) \epp
\end{equation}
where $m(T, h) = \<S^z\>_T/L$ is the magnetization per lattice site. The
corresponding normalized absorbed intensity is
\begin{equation} \label{intxxx}
     I (\om) = \p \de(\om - h) h\, m(T, h)
\end{equation}
and is proportional to the magnetic energy $h\, m(T,h)$ per lattice site.
This case includes the familiar paramagnetic resonance (Zeeman effect)
for which the magnetization is known more explicitly, namely, $m(T,h)
= \2 \tgh \bigl( \frac{h}{2T} \bigr)$ for $J = 0$. In general, an
exact calculation of the magnetization of the isotropic Heisenberg
chain at any finite temperature is not elementary and requires
the machinery of Bethe Ansatz and quantum transfer matrix \cite{Kluemper93}.

Comparing (\ref{extfield}), (\ref{rots}) and (\ref{intxxx}) we interpret
the absorption of energy as a resonance between the rotating field of the
incident wave and the precessing total spin of the chain. Both are
rotating clockwise with angular velocity $\om = h$. If we deviate
from the isotropic point $\D = 1$ of the Hamiltonian (\ref{ham}) we
expect that energy is transferred form the `coherent motion of the
total spin' to `other modes', causing a damping of the spin precession
and hence a shift and a broadening of the $\de$-function shaped spectral
line (\ref{intxxx}).

\subsection{Resonance shift and line width in the anisotropic case}
At the present state of the art dynamical correlation functions, such as
$\chi_{+-}''$, cannot be calculated exactly, not even for models as
simple as the XXZ chain. The interaction induced shift of the resonance
frequency and the width of the spectral line, on the other hand, are
more simple quantities which need less information to be calculated.

For every finite $L$ and for every $n \in {\mathbb N}$ the integrals
\begin{equation}
     I_n = \int_{- \infty}^\infty \rd \om \: \om^n I (\om)
\end{equation}
exist. Since $I (\om)$ is everywhere non-negative and since $I_0 > 0$
(see below), we may interpret $I (\om)/I_0$ as a probability
distribution with moments $I_n/I_0$. $I_1/I_0$ is the mean value
of the distribution and $\bigl(I_2/I_0 - (I_1/I_0)^2\bigr)^\frac{1}{2}$
its variance. The quantities may be used to define the `resonance
frequency' and the line width.

Instead of the moments of the normalized absorption intensity
we shall use the `shifted moments' of the dynamical susceptibility,
\begin{equation}
	m_n = \int_{- \infty}^\infty
	      \frac{\rd \om}{2 \p J^n} (\om - h)^n \chi_{+-}'' (\om) \epp
\end{equation}
As we shall see these quantities are more natural from a theoretical
point of view as they can be more easily calculated. Using the binomial
formula we can relate them to the moments of the normalized intensity,
\begin{equation}
     \frac{I_n}{I_0} =
        \frac{\int_{- \infty}^\infty \rd \om \: \om^{n+1} \chi_{+-}'' (\om)}
             {\int_{- \infty}^\infty \rd \om \: \om \chi_{+-}'' (\om)} =
        \frac{\sum_{k=0}^{n+1} h^k J^{n+1-k} m_{n+1-k}} {J m_1 + h m_0} \epp
\end{equation}

Let us now define
\begin{equation} \label{rshiftmom}
     \de \om = \<\om\> - h = \frac{I_1}{I_0} - h
        = J \frac{J m_2 + h m_1}{J m_1 + h m_0} \epp
\end{equation}
$\de \om$ is the `resonance shift', i.e., the deviation of the resonance
frequency from the resonance frequency in the isotropic case. Similarly,
\begin{equation} \label{widthmom}
     \D \om^2 = \<\om^2\> - \<\om\>^2 = \frac{I_2}{I_0} - \frac{I_1^2}{I_0^2} =
        J^2 \frac{J m_3 + h m_2}{J m_1 + h m_0} - \de \om^2
\end{equation}
is a measure of the line width. Hence, in order to calculate the
resonance shift and the line width, we need to know the first
four shifted moments $m_0$, $m_1$, $m_2$, $m_3$ of the dynamic
susceptibility $\chi_{+-}''$.

\subsection{The shifted moments of the dynamic susceptibility}
In the following we shall employ the notation $\ad_X \cdot =
[X,\ \cdot\ ]$ for the adjoint action of an operator $X$.
Then, using that $\ad_H = \ad_{H_0} - h \ad_{S^z}$, that
$[H_0, S^z] = 0$ and that $[S^z, S^+] = S^+$, we see that
\begin{equation}
     S^+ (t) = \re^{\i t \ad_H} S^+
             = \re^{\i t \ad_{H_0}} \re^{- \i h t \ad_{S^z}} S^+
	     = \re^{- \i ht} \re^{\i t \ad_{H_0}} S^+ \epp
\end{equation}
Thus,
\begin{multline} \label{chiad}
     \chi_{+-}'' (\om) = \frac{1}{2L} \int_{- \infty}^\infty \rd t \:
        \re^{\i (\om - h) t} \bigl\< [\re^{\i t \ad_{H_0}} S^+, S^-] \bigr\>_T \\[1ex]
        = \frac{1}{2L} \int_{- \infty}^\infty \rd t \:
        \re^{\i (\om - h) t} \bigl\< [S^+, \re^{- \i t \ad_{H_0}} S^-] \bigr\>_T
	\epp
\end{multline}
It follows that
\begin{multline} \label{calcallmoments}
     (\om - h)^n \chi_{+-}'' (\om)
        = \frac{(- \i)^n}{2L} \int_{- \infty}^\infty \rd t \:
          \bigl( \6_t^n \re^{\i (\om - h) t} \bigr)
	  \bigl\< [S^+, \re^{- \i t \ad_{H_0}} S^-] \bigr\>_T \\
        = \frac{1}{2L} \int_{- \infty}^\infty \rd t \: \re^{\i (\om - h) t}
	  \bigl\< [S^+, \ad_{H_0}^n \re^{- \i t \ad_{H_0}} S^-] \bigr\>_T \epc
\end{multline}
entailing that
\begin{equation} \label{allmoments}
     m_n = \frac1{2L} \bigl\< [S^+, \ad_{H_0/J}^n S^-] \bigr\>_T \epp
\end{equation}

The latter formula shows that the moments $m_n$ are static correlation
functions whose range and complexity grows with growing $n$. The first
few of them can be easily calculated by hand. The most basic one is
\begin{equation} \label{m0}
     m_0 = \frac1{2L} \bigl\< [S^+, S^-] \bigr\>_T
         = \frac1{L} \bigl\< S^z \bigr\>_T = m(T, h) \epc
\end{equation}
which is the magnetization per lattice site. The subsequent moments
vanish in the isotropic point $\D = 1$. It turns our that they
are polynomials in
\begin{equation}
     \de = \D - 1 \epp
\end{equation}
Unlike the magnetization they do not have an immediate interpretation.
Using (\ref{allmoments}) they can be calculated one by one. We obtain, for
instance,
\begin{subequations}
\label{m13}
\begin{align} \label{m1}
     m_1 & = \de \< s_1^+ s_2^- - 2 s_1^z s_2^z \>_T \epc \\[1ex]
     m_2 & = \2 \de^2 \< s_1^z + 4 s_1^z s_2^z s_3^z - 4 s_1^z s_2^+ s_3^- \>_T
               \epc \\[1ex]
     m_3 & = \4 \de^2 \bigl\< 2 s_1^+ s_2^+ s_3^- s_4^-
                          + 4 s_1^+ s_2^- s_3^+ s_4^-
                          - 2 s_1^+ s_2^- s_3^- s_4^+
                          - 8 s_1^z s_2^z s_3^+ s_4^-
                          - 4 s_1^z s_2^+ s_3^z s_4^- \notag \\ & \mspace{90.mu}
                          + 8 s_1^z s_2^+ s_3^- s_4^z
			  - 4 s_1^+ s_2^- - s_1^+ s_3^-
			  + 8 s_1^z s_2^z s_3^z s_4^z
			  + 2 s_1^z s_3^z - 4 s_1^z s_2^z
			    \notag \\[1ex] & \mspace{90.mu}
			  + \de ( 8 s_1^z s_2^+ s_3^- s_4^z + 2 s_1^+ s_2^-
			  - 8 s_1^z s_2^z ) \bigr\>_T \epp
\end{align}
\end{subequations}
The moments are certain combinations of static short-range correlation
functions. This implies, in particular, that they all exist in the
thermodynamic limit. Substituting (\ref{m13}) in to (\ref{rshiftmom})
and (\ref{widthmom}) we have expressed our measures for the resonance
shift and line width in terms of short-range static (rather than
dynamical) correlation functions.

We close this subject with a number of comments.
\begin{enumerate}
\item
Short range static correlation functions of the XXZ chain can be
calculated exactly at all values of temperatures and magnetic
fields.
\item
Our formulae show that special combinations of short-range static
correlation functions can be measured, at least in principle,
by macroscopic experiments.
\item
In practice such measurements are difficult due to limitations
of the experimental accuracies and limitations to experimental
techniques, or due to the fact, that $\de$ in typical spin
chain materials is too small.
\item
It follows from the existence of the moments that the shape of
the ESR absorption lines cannot be Lorentzian, as is assumed by
many experimentalists and in most of the more conventional
theoretical approaches.
\item
Based on the expressions (\ref{m0}), (\ref{m13}) for the moments in terms of
correlation functions we can prove and specify our claim that $I (\om)$ is
generally positive, even for vanishing magnetic field $h$. Using (\ref{m0}) and
(\ref{m1}) in $I_0 = \p (J m_1 + h m_0)$ and noting that $\< s_1^+ s_2^- \>_T =
2 \< s_1^x s_2^x \>_T$ we obtain
\begin{equation}
     I_0 = \p \< h s_1^z + 2 \de J ( s_1^x s_2^x - s_1^z s_2^z)\>_T \epp
\end{equation}
Here the first term in the brackets is positive if $h \ne 0$ as the magnetization
is positive for positive $h$ and negative for negative $h$. Note, however, that
it becomes extremely small for small temperatures in the massive phase $\de > 0$.
As for the second term, for small enough $h$ and $\de > -1$ the neighbour correlators
are negative, and the $zz$-correlations are weaker than the $xx$-correlations for
negative $\de$ and stronger than the $xx$-correlations for positive $\de$. Hence,
the second term is positive even for vanishing $h$ as long as $\de$ is non-zero.
\item
The resonance shift and line width as determined by (\ref{rshiftmom}) and
(\ref{widthmom}) show a simple scaling behaviour as functions of the exchange
interaction $J$. Namely $\de \om/J$ and $\D \om/J$ depend on $J$ only through the
ratios $T/J$ and $h/J$. This is true for the statistical operator $\r$, hence also
for all spin correlation functions, and then by our formulae (\ref{m0}), (\ref{m13})
for $\de \om/J$ and $\D \om/J$ as well. Note also that $\de \om/J$ and $\D \om/J$
both vanish proportional to $\de$ as we approach the isotropic point $\de = 0$.
\end{enumerate}

\subsection{Exercise 21: Shifted moments}
Obtain $m_1$ and $m_2$ from (\ref{allmoments}).

\clearpage

\bibliographystyle{amsplain}
\bibliography{hub}

\end{document}